\newcommand{\fxn}[3]{$ #1 \colon #2 \rightarrow #3 $}
\newcommand{\Real}{\mathbb{R}}
\newcommand{\Comp}{\mathbb{C}}
\newcommand{\seq}{\subseteq}
\newcommand{\pspace}{\Real^3 \setminus \{0\}}
\newcommand{\LittleGroup}[1]{H_{#1}}
\newcommand{\Lightcone}{\mathcal{L}_+}
\newcommand{\xSpace}[1]{\boldsymbol{\mathcal{#1}}}
\newcommand{\kOmegaSpace}[1]{\tilde{\boldsymbol{#1}}}
\newcommand{\poincare}{\mathrm{ISO}^+(3,1)}
\newtheorem{theorem}{Theorem}
\newtheorem{lemma}[theorem]{Lemma}
\newtheorem{corollary}[theorem]{Corollary}
\newtheorem{proposition}[theorem]{Proposition}
\newtheorem{definition}[theorem]{Definition}
\begin{document}

% Author information
\author{Eric Palmerduca}
\email{ep11@princeton.edu}
\affiliation{Department of Astrophysical Sciences, Princeton University, Princeton, New Jersey 08544}
\affiliation{Plasma Physics Laboratory, Princeton University, Princeton, NJ 08543,
U.S.A}

\author{Hong Qin}
\email{hongqin@princeton.edu}
\affiliation{Department of Astrophysical Sciences, Princeton University, Princeton, New Jersey 08544}
\affiliation{Plasma Physics Laboratory, Princeton University, Princeton, NJ 08543,
U.S.A}

% Date/Title
\title{Photon topology}
\date{\today}

%%%%%%%%%%%%%%%%%% Abstract %%%%%%%%%%%%%%
\begin{abstract}
The topology of photons in vacuum is interesting because there are no photons with $\boldsymbol{k}=0$, creating a hole in momentum space. We show that while the set of all photons forms a trivial vector bundle $\gamma$ over this momentum space, the $R$- and $L$-photons form topologically nontrivial subbundles $\gamma_\pm$ with first Chern numbers $\mp2$. In contrast, $\gamma$ has no linearly polarized subbundles,  and there is no Chern number associated with linear polarizations. It is a known difficulty that the standard version of Wigner's little group method produces singular representations of the Poincar\'{e} group for massless particles. By considering representations of the Poincar\'{e} group on vector bundles we obtain a version of Wigner's little group method for massless particles which avoids these singularities. We show that any massless bundle representation of the Poincar\'{e} group can be canonically decomposed into irreducible bundle representations labeled by helicity, which in turn can be associated to smooth irreducible Hilbert space representations. This proves that the $R$- and $L$-photons are globally well-defined as particles and that the photon wave function can be uniquely split into $R$- and $L$-components. This formalism offers a method of quantizing the EM field without invoking discontinuous polarization vectors as in the traditional scheme. We also demonstrate that the spin-Chern number of photons is not a purely topological quantity. Lastly, there has been an extended debate on whether photon angular momentum can be split into spin and orbital parts. Our work explains the precise issues that prevent this splitting. Photons do not admit a spin operator; instead, the angular momentum associated with photons' internal degree of freedom is described by a helicity-induced subalgebra corresponding to the translational symmetry of $\gamma$.
\end{abstract}

\maketitle

%%%%%%%%%% Section 1: Introduction %%%%%%%%%%%%%%%%%%
\section{Introduction}

Following the breakthrough discoveries of the quantum hall effect, the Berry connection, and the existence of topological insulators in condensed matters with periodic lattice structures, topology has played an increasingly important role in physics. It is now understood that there is inherent topology in wave physics, and researchers have studied the topological properties of waves in various continuous media such as fluids \cite{Souslov2017,Delplace2017,Tauber2019,Souslov2019,Perrot2019,Faure2022} and plasmas \cite{Yang2016,Gao2016,Parker2020a,Fu2021,Fu2022,Qin2023}. Here, we examine the topology of a simpler system, that of Maxwell's equations in vacuum. That this system can exhibit topologically nontrivial behavior is due to the fact that there are no photons with wave vector $\boldsymbol{k}=0$, creating a hole in the momentum space $M$. We will show that the collection of all wave solutions in Fourier space form a vector bundle $\gamma$ over $M$, which we call the photon bundle.

Using this framework, we give an extensive study of the topological properties of photons by examining the topological properties of $\gamma$ and its subbundles. We give two proofs that the photon bundle $\gamma$ is topologically trivial. The first is an abstract proof using Chern numbers and results from low-dimensional topology. The second proof involves an explicit construction of two linearly-independent, globally defined polarization vector fields via the clutching construction from algebraic topology \cite{HatcherVBKT}. These are complex polarization vectors perpendicular to the wave vector $\boldsymbol{k}$, and are mathematically equivalent to sections of the complexification of $TS^2$, the tangent bundle of the 2-sphere. It has been claimed in some quantum field theory textbooks \cite{Tong2006, Woit2017} that such global polarization vectors do not exist as they would violate the hairy ball theorem, which states that the tangent bundle of the sphere $TS^2$ is trivial (\cite{Frankel2011}, Cor. 16.10). However, as we show, the hairy ball theorem only applies to $TS^2$, not its complexification, which resolves the apparent paradox.

While the total photon bundle is trivial, we show that the right ($R$)- and left ($L$)-circularly polarized photons form globally well-defined nontrivial subbundles with first Chern numbers $\mp2$. These Chern numbers were also obtained by Bliokh \emph{et al.} in their study of the quantum spin Hall effect \cite{Bliokh2015}. They calculated the Chern numbers via the Berry connection, which is a geometric quantity, but the concepts of  vector bundle and subbundle were not defined or discussed. We emphasize that it is crucial to establish the topological structures of vector bundles before calculating their Chern numbers. Due to the Chern-Weil homomorphism, a deep result from differential geometry,  the Berry connection can be used to calculate the topologically invariant Chern numbers \cite{Tu2017differential}. Generally though, quantities derived from the Berry connection will depend on the geometry as well as topology of the underlying vector bundle. Indeed, while Bliokh \emph{et al.} \cite{Bliokh2015} also define the spin Chern number of light, we show that this quantity is not a true topological invariant like the ordinary Chern numbers.

To further illustrate the importance of vector bundle structures, we investigate the familiar linearly polarized photons. As circularly polarized photons have nontrivial topology characterized by nonvanishing Chern numbers, one might guess the same is true of linear polarizations. We show that this question is actually ill-defined as there are no linearly polarized subbundles of $\gamma$. This implies that linear polarization is only a locally defined concept in momentum space, and it is meaningless to discuss its topology or Chern number. 

In addition to these purely topological results, we study the interplay of this topology with the geometry of light, that is, with Poincar\'{e} symmetry. Poincar\'{e} symmetry plays a fundamental role in the theory of photons as elementary particles, both massive and massless,  are defined as unitary irreducible representations of the Poincar\'{e} group. Conventionally, these are assumed to be vector space (in particular, Hilbert space) representations, and are classified via Wigner's little group method \cite{Wigner1939,Weinberg1995}. This theory works well for massive particles, however, a topological singularity occurs when the mass is taken to zero. In this limit, the momentum space jumps from being a topologically trivial mass hyperboloid to the nontrivial lightcone. This results in non-smooth massless representations generated by the standard little group method.  This difficulty has been noted by others \cite{Flato1983,Dragon2022}. Here, we present a solution to this problem by considering vector bundle representations rather than vector space representations. Such representations are known to mathematicians under the name of equivariant vector bundles, but they are rarely used in physics. As such, we will develop the necessary theory here. Using these vector bundle representations, we present a version of the little group theory for vector bundle representations that does not encounter discontinuities. We prove that massless irreducible unitary vector bundle representations of the Poincar\'{e} group induce corresponding unitary irreducible representations on the Hilbert space of bundle sections, and thus correspond to particles by the conventional definition. We apply this method in the case of photons to show that $R$- and $L$-polarized photons are massless unitary irreducible vector bundle representation of the Poincar\'{e} group labeled by helicity, and thus correspond to globally well-defined particles. 

This equivariant bundle formalism also lends itself to other problems in the theory of photons. The standard quantization of the electromagnetic field in the Coulomb gauge involves expanding the field in global $R$- and $L$-polarization bases which cannot be continuous \cite{Tong2006,Folland2008,Woit2017}. While continuous global $R$- and $L$-polarization vectors do not exist, the $R$- and $L$-photon bundles are well-defined. Using projections onto these bundles, we show that the electromagnetic field can be quantized to obtain the usual QED creation and annihilation operators without using discontinuous bases. 

Lastly, there has been a long debate as to whether photon angular momentum can be split into spin and orbital parts \cite{Akhiezer1965, VanEnk1994,Bliokh2010, Bialynicki-Birula2011, Leader2013, Bliokh2015, Leader2016, Leader2019}. It is known that the most commonly proposed ``spin and orbital angular momentum operators'' satisfy peculiar commutation relations, different from the $\mathfrak{so}(3)$ commutation relations expected of angular momentum operators \cite{VanEnk1994, Bliokh2010, Leader2019}. Our theoretical study of the massless vector bundle representation confirms that photons do not admit a spin operator. We show that the proposed ``spin angular momentum operator'' is neither spin nor angular momentum, but a three-dimensional commuting subalgebra corresponding to the $\mathbb{R}^3$ translational symmetry of the photon bundle characterizing the photons' internal degree of freedom.

This paper is organized as follows. In section \ref{sec:TopologyPhotonBundle}, we construct the total photon bundle $\gamma$ and give two proofs that it is trivial. The second proof is used to explicitly construct two linearly independent global polarization vector fields. In section \ref{sec:TopologyRL}, we show that $R$- and $L$-photons form well-defined subbundles of $\gamma$ that are topologically nontrivial. We also show that no analogous linearly polarized subbundles exist. In section \ref{sec:BundleReps}, we define vector bundle representations and develop a modified little group construction. These are used to classify photons as unitary irreducible vector bundle representations of the Poincar\'{e} group. We prove that massless irreducible vector bundle representations of the Poincar\'{e} group naturally induce unitary irreducible Hilbert space representations, and thus correspond to particle via the conventional definition. In section \ref{sec:applications}, we apply our theory to three problems in the theory of photons: the geometric nature of the spin Chern number, the quantization of the electromagnetic field, and the decomposition of photon angular momentum into spin and orbital parts.

%%%%%%%%% Section 2: Topology of Photon Bundle %%%%%%%
\section{Topology of the photon bundle}\label{sec:TopologyPhotonBundle}
Consider photons in the vacuum, defined as the eigenmodes of Maxwell equations,
\begin{gather}
    \partial_t \boldsymbol{\mathcal{E}} = \nabla \times \boldsymbol{\mathcal{B}} \\
    \partial_t \boldsymbol{\mathcal{B}} = -\nabla \times \boldsymbol{\mathcal{E}} \\
    \nabla \cdot \boldsymbol{\mathcal{E}} = 0 \\
    \nabla \cdot \boldsymbol{\mathcal{B}} = 0,
\end{gather}
where $t$ has been normalized by $1/c$ and has the unit of length. Via the Fourier transform, the fields can be expressed as
\begin{gather}
\xSpace{E}(x) = \int \frac{d^4k}{(2\pi)^4}e^{-i k_\mu x^\mu}\kOmegaSpace{E}(k) \\
\xSpace{B}(x) = \int \frac{d^4k}{(2\pi)^4}e^{-i k_\mu x^\mu}\kOmegaSpace{B}(k),
\end{gather}
where we have used the four-vector notation $x^\mu = (t,\boldsymbol{x})$ and $k^\mu = (\omega,\boldsymbol{k})$, along with the flat spacetime metric $\eta_{\mu \nu}=\text{diag}(-1,+1,+1,+1)$. As Maxwell's equations have a real discrete spectrum, the momentum space fields can be written as
\begin{gather}
    \kOmegaSpace{E}(k) \doteq 2\pi \sum_j \delta\big(\omega - \omega_j(\boldsymbol{k})\big) \boldsymbol{E}_j(\boldsymbol{k}) \label{eq:E_Fourier}\\
    \kOmegaSpace{B}(k) \doteq 2\pi \sum_j \delta\big(\omega - \omega_j(\boldsymbol{k})\big) \boldsymbol{B}_j(\boldsymbol{k})
\end{gather}
where the $\omega_j$ and $\boldsymbol{E}_j$ are solutions of the eigenvalue problem
\begin{gather}
    H\left(\begin{array}{c}
    \boldsymbol{E}_j\\
    \boldsymbol{B}_j
    \end{array}\right)=\omega_j \left(\begin{array}{c}
    \boldsymbol{E}_j\\
    \boldsymbol{B}_j
    \end{array}\right) \\
    H=\left(\begin{array}{cc}
    \boldsymbol{0} & \boldsymbol{-k}\times\\
    \boldsymbol{k}\times & \boldsymbol{0}
    \end{array}\right)
\end{gather}
subject to the constraints
\begin{align}
    \boldsymbol{k} \cdot \boldsymbol{E}_j &= 0 \label{eq:k_dot_E}\\
    \boldsymbol{k} \cdot \boldsymbol{B}_j &= 0 \label{eq:k_dot_B}.
\end{align}

The spectrum of $H$ has 6 eigenmodes (photons) divided into three two-fold degenerate groups: 

1. Two transverse photons with $\omega_{1}=|\boldsymbol{k}| \equiv\sqrt{k_{x}^{2}+k_{y}^{2}+k_{z}^{2}}$ ; 

2. Two transverse anti-photons with $\omega_{-1}=-|\boldsymbol{k}|$; 

3. Two longitudinal photons with $\omega_{0}=0$. 

The longitudinal photons are forbidden by the transversality conditions (\ref{eq:k_dot_E}) and (\ref{eq:k_dot_B}), although they can appear as virtual photons in QED \cite{Gupta1950}. Furthermore, since the
spacetime fields $\xSpace{E}$ and $\xSpace{B}$ are real, the momentum space fields must satisfy $\boldsymbol{E}_{-1}(\boldsymbol{k}) = \boldsymbol{E}_1^*(\boldsymbol{-k})$ and $\boldsymbol{B}_{-1}(\boldsymbol{k}) = \boldsymbol{B}_1^*(\boldsymbol{-k})$, and thus the transverse anti-photons are not independent of the transverse photons. Therefore, we need only analyze the transverse photons, which will henceforth be referred to just as photons with $\boldsymbol{E}\doteq \boldsymbol{E}_1$ and $\boldsymbol{B}\doteq \boldsymbol{B}_1$. The condition $\omega = |\boldsymbol{k}|$ says that photons reside on the forward lightcone $\mathcal{L}_+$ in 4-momentum space. Note that the origin is not contained in $\mathcal{L}_+$. Physically, this corresponds to the fact that $k^\mu=0$ modes are stationary and do not propagate at the speed of light, and therefore are not photons. Mathematically, the origin must be removed in order for the forward lightcone to be a regular submanifold of Minkowski space. $\mathcal{L}_+$ is a 3 dimensional manifold homeomorphic to $\pspace$ with coordinates given by $\boldsymbol{k} = \{k_x,k_y,k_z \}\neq 0$. At each $\boldsymbol{k}$, the two-fold degenerate photon eigenvectors form a $\mathbb{C}^{2}$ vector space consisting of vectors $\boldsymbol{E}$ satisfying $\boldsymbol{E} \cdot \boldsymbol{k} = 0 $. Note that $\boldsymbol{B}=  \hat{\boldsymbol{k}} \times \boldsymbol{E}$, so these vector spaces can be described purely in terms of the electric field. Since $H$ is smooth and the dimension of the eigenspaces are constant, the eigenspaces for all values of $\boldsymbol{k}$ fit together to form a rank-2 complex vector bundle $\pi:\gamma \rightarrow \mathcal{L}_+$ over the lightcone, which we term the photon bundle. Here, $\pi$ is the projection of the bundle $\gamma$ onto the base manifold and $\gamma(\boldsymbol{k})$ will denote the fiber at $\boldsymbol{k}$. $\gamma$ is a Hermitian bundle with respect to the Hermitian product inherited by embedding $\boldsymbol{E}$ in $\Comp^3$. That is, for two vectors vectors $\boldsymbol{E}_a, \boldsymbol{E}_b \in \gamma(\boldsymbol{k})$,
\begin{equation}\label{eq:Hermitian_product_E}
    \langle \boldsymbol{E}_a , \boldsymbol{E}_b \rangle = \boldsymbol{E}_a^* \cdot \boldsymbol{E}_b.
\end{equation}

We are concerned here with the topology of photons, which is equivalent to the topology of the bundle $\gamma$. If the base manifold $\mathcal{L}_+$ were contractible, then $\gamma$ would necessarily be trivial (\cite{HatcherVBKT}, Cor. 1.8). Recall that a rank-$r$ vector bundle over $M$ is trivial if it is isomorphic to the trivial bundle $M \times \mathbb{C}^r$, or equivalently, if there exist $r$ nonvanishing sections of the bundle which are pointwise linearly independent. For example, for massive particles, the momenta $k^\mu$ form a mass hyperboloid $(k^0)^2 - \boldsymbol{k}\cdot \boldsymbol{k} = m^2$ which is homeomorphic to $\mathbb{R}^3$ and thus contractible. As a result, massive particles in vacuum do not allow for nontrivial topology. However, for massless photons, the base manifold $\mathcal{L}_+ \cong \pspace$ is not topologically trivial in the sense of homotopy, so it is not obvious if $\gamma$ is a trivial bundle. The following theorem is one of the main results of this paper:

\begin{theorem}\label{thm:gamma_trivial}
The photon bundle $\gamma \rightarrow \pspace$ is trivial.
\end{theorem}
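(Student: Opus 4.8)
The plan is to reduce the problem to the classification of complex vector bundles over the $2$-sphere. First I would note that the radial retraction $\boldsymbol{k} \mapsto \boldsymbol{k}/|\boldsymbol{k}|$ is a deformation retraction of $\pspace$ onto $S^2$. Since isomorphism classes of vector bundles pull back isomorphically under homotopy equivalences of the base (\cite{HatcherVBKT}), the bundle $\gamma$ is trivial if and only if its restriction $\gamma|_{S^2}$ is trivial, so it suffices to work over the sphere.

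Next I would identify $\gamma|_{S^2}$ concretely. At a unit vector $\hat{\boldsymbol{k}} \in S^2$ the fiber is $\{\boldsymbol{E} \in \Comp^3 : \boldsymbol{E} \cdot \hat{\boldsymbol{k}} = 0\}$, where the dot denotes the complex-bilinear extension of the Euclidean product (legitimate here because $\hat{\boldsymbol{k}}$ is real). This is exactly the complexification $T_{\hat{\boldsymbol{k}}} S^2 \otimes \Comp$ of the real tangent plane, so $\gamma|_{S^2} \cong TS^2 \otimes \Comp$. The key input is then the clutching classification: rank-$r$ complex bundles over $S^2$ are classified by $\pi_1(U(r)) \cong \mathbb{Z}$, with the isomorphism realized by the first Chern number $c_1$ (\cite{HatcherVBKT}). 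Hence $\gamma|_{S^2}$ is trivial precisely when $c_1(\gamma|_{S^2}) = 0$.

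To evaluate $c_1$ I would exploit that a complexification carries a real structure: complex conjugation gives $\overline{TS^2 \otimes \Comp} \cong TS^2 \otimes \Comp$, while in general $c_1(\bar F) = -c_1(F)$. Combining these forces $2\, c_1(TS^2 \otimes \Comp) = 0$, and since $H^2(S^2;\mathbb{Z}) \cong \mathbb{Z}$ is torsion-free we conclude $c_1(TS^2 \otimes \Comp) = 0$. Therefore $\gamma|_{S^2}$, and with it $\gamma$, is trivial.

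The conceptual heart of the argument --- and the resolution of the apparent conflict with the hairy ball theorem raised in the introduction --- is that complexification destroys the topological obstruction carried by $TS^2$. Although $TS^2$ has Euler number $2$, its real $SO(2)$ clutching function (rotation by $2\theta$ across the equator) becomes, after a constant diagonalizing change of basis, $\mathrm{diag}(e^{2i\theta}, e^{-2i\theta}) \in U(2)$, whose determinant has winding number $2 - 2 = 0$. I expect the most delicate points to be verifying that the bilinear condition $\boldsymbol{E}\cdot\hat{\boldsymbol{k}}=0$ genuinely reproduces the complexified tangent bundle rather than a Hermitian variant, and confirming that $c_1$ truly vanishes rather than doubling. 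As a more constructive alternative --- which also supplies the explicit global polarization fields used later in the paper --- one can instead diagonalize this clutching function and read off two pointwise linearly independent global complex sections directly, bypassing the Chern-number computation entirely.
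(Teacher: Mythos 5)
Your main argument is correct and follows the same skeleton as the paper's first proof: reduce to $S^2$ by the deformation retraction, identify $\gamma|_{S^2}$ with $T^{\mathbb{C}}S^2$, and kill $c_1$ by the conjugation symmetry $\overline{T^{\mathbb{C}}S^2}\cong T^{\mathbb{C}}S^2$ together with $c_1(\overline{F})=-c_1(F)$ and torsion-freeness of $H^2(S^2;\mathbb{Z})$. Where you differ is in the final step, ``$c_1=0$ implies trivial.'' The paper gets this from symplectic vector bundle theory (McDuff--Salamon: symplectic bundles over a closed oriented surface are classified by rank and $c_1$, plus the equivalence of symplectic and complex bundles with compatible structures), whereas you invoke the clutching classification $\mathrm{Vect}^2_{\mathbb{C}}(S^2)\cong\pi_1(\mathrm{U}(2))\cong\mathbb{Z}$ with the integer realized by $c_1$. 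Your route is more direct and uses more standard machinery; its one extra input is the identification of the clutching integer with the first Chern number (equivalently, the winding number of $\det f$), which is standard but should be cited or checked, since the bijection of Hatcher's Proposition 1.11 by itself only gives an abstract $\mathbb{Z}$-classification. The paper's second proof sidesteps $c_1$ entirely by exhibiting the explicit clutching function of $T^{\mathbb{C}}S^2$ landing in the simply connected group $\mathrm{SU}(2)$.

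One aside in your last paragraph is wrong, though it does not affect the main proof. Diagonalizing the rotation clutching function to $\mathrm{diag}(e^{2i\phi},e^{-2i\phi})$ does \emph{not} let you read off two global nonvanishing sections: that diagonalization splits $T^{\mathbb{C}}S^2$ into the two eigenline bundles of the fiberwise rotation, which are precisely the circular-polarization bundles $\gamma_\pm|_{S^2}$ with $c_1=\mp2$. Each summand is nontrivial and admits no global nonvanishing section, so no frame can be obtained summand-by-summand. To extract an explicit global frame one must use a genuinely nondiagonal null-homotopy of the clutching function inside $\mathrm{SU}(2)$, which is exactly the extra work the paper carries out in its explicit construction.
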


We will give two proofs of this result in the next two sections. The first is purely abstract using the machinery of Chern classes and results from low-dimensional topology. The second proof utilizes clutching functions which has the advantage of giving an explicit construction of linearly independent sections of $\gamma$.

We give some remarks about Theorem \ref{thm:gamma_trivial} before presenting the proofs. It is often more convenient to work with $\gamma|_{S^2}$, the bundle obtained by restricting the base manifold of $\gamma$ to the unit sphere $S^2 \in \pspace$, rather than with $\gamma$ itself. Since $\pspace$ deformation retracts onto $S^2$, the bundles $\gamma$ and $\gamma|_{S^2}$ are essentially topologically equivalent. In particular, the rank-$r$ vector bundles over $\pspace$ and $S^2$ are in bijective correspondence (\cite{HatcherVBKT}, Cor 1.8). Bundles over $\pspace$ map to bundles over $S^2$ by restricting the base manifold. If we denote by $r:\pspace \rightarrow S^2$ the projection $\boldsymbol{k} \rightarrow \hat{\boldsymbol{k}}$, then the inverse mapping sends a bundle $\xi$ over $S^2$ to the pullback bundle $r^* \xi$ over $\pspace$.

There are two main advantages to working with $S^2$, the first being the immediate observation that since the electric field is transverse to $\hat{\boldsymbol{k}}$, $\gamma|_{S^2}$ is isomorphic to $T^\mathbb{C} S^2 \cong TS^2 \times_\mathbb{R} \mathbb{C}$, the complexification of the tangent bundle $TS^2$ of the sphere. Thus,
\begin{lemma}\label{lm:restriction_to_sphere}
    The photon bundle $\gamma$ is trivial if and only if $\gamma|_{S^2} \cong T^\mathbb{C} S^2$ is trivial.
\end{lemma}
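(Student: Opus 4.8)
The plan is to deduce the equivalence directly from the homotopy invariance of vector bundles that underlies the bijective correspondence between rank-$2$ bundles over $\pspace$ and over $S^2$ recalled above, checking that this correspondence carries trivial bundles to trivial bundles in both directions. Recall that the correspondence sends $\gamma \mapsto \gamma|_{S^2}$ with inverse $\xi \mapsto r^* \xi$, where $r\colon \pspace \to S^2$ is the retraction $\boldsymbol{k} \mapsto \hat{\boldsymbol{k}}$ and $i\colon S^2 \hookrightarrow \pspace$ is the inclusion.

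First I would dispose of the forward direction, which is immediate: if $\gamma \cong \pspace \times \Comp^2$, then restricting the base to $S^2 \subset \pspace$ gives $\gamma|_{S^2} \cong S^2 \times \Comp^2$, so $\gamma|_{S^2}$ is trivial. Equivalently, the pullback of a trivial bundle along any map is trivial.

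The substantive direction is the converse. Suppose $\gamma|_{S^2} \cong S^2 \times \Comp^2$ is trivial. Since $\pspace$ deformation retracts onto $S^2$, the composite $i \circ r$ is homotopic to $\mathrm{id}_{\pspace}$, and homotopic maps pull back a given bundle to isomorphic bundles (\cite{HatcherVBKT}, Cor. 1.8). Hence $\gamma \cong (i \circ r)^* \gamma = r^*(i^* \gamma) = r^*(\gamma|_{S^2})$. Because the pullback of a product bundle is again a product bundle, $r^*(\gamma|_{S^2}) \cong r^*(S^2 \times \Comp^2) \cong \pspace \times \Comp^2$, so $\gamma$ is trivial. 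This is exactly the statement that the inverse correspondence $\xi \mapsto r^* \xi$ preserves triviality. The identification $\gamma|_{S^2} \cong T^\mathbb{C} S^2$ needs no separate argument, as it was already obtained above from the transversality condition $\boldsymbol{E} \cdot \hat{\boldsymbol{k}} = 0$.

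I expect no genuine obstacle, since the argument is purely formal once homotopy invariance is available. The only point meriting care is the converse, where $\gamma$ must be \emph{reconstructed} from its restriction through the pullback isomorphism $\gamma \cong r^*(\gamma|_{S^2})$; this reconstruction is what certifies that no topological information is lost in passing to the sphere, and it is precisely what licenses the computation of Chern numbers on $\gamma|_{S^2}$ in the proofs that follow.
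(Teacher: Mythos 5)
Your argument is correct and is essentially the paper's own: the lemma there is presented as an immediate consequence of the bijective correspondence between bundles over $\pspace$ and $S^2$ induced by the deformation retraction (restriction one way, pullback along $r$ the other, citing Hatcher Cor.~1.8), which is exactly the mechanism you spell out. You have merely made explicit the two directions that the paper leaves implicit.
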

This lemma allows the use of specialized techniques that apply to bundles over spheres such as the clutching construction. Unlike $\pspace$, $S^2$ also has the advantage of being even-dimensional and thus possessing Chern numbers in addition to Chern classes. This is useful because Chern numbers are simply integers rather than cohomology classes.

By Lemma \ref{lm:restriction_to_sphere}, the question of whether or not $\gamma$ is trivial can be rephrased as asking if it is possible to smoothly choose polarization vectors $\boldsymbol{E}$ for each direction $\hat{\boldsymbol{k}}$. It can also be interpreted as asking is it possible to have photons traveling in all directions simultaneously. This issue shows up in many treatments of QED when quantizing the electromagnetic field \cite{Tong2006, Folland2008, Woit2017}. Just prior to quantization, the Coulomb gauge vector potential is usually expressed in the form
\begin{equation}\label{eq:vector_potential}
    \boldsymbol{\mathcal{A}}(x) = \int \sum_{j=1}^2 \frac{1}{\sqrt{2|\boldsymbol{k}|}} [e^{-ik_\mu x^\mu}\boldsymbol{\epsilon}_j (\boldsymbol{k}) a_j(\boldsymbol{k}) +e^{i k_\mu x^\mu}\boldsymbol{\epsilon}^*_j (\boldsymbol{k})a^*_j(\boldsymbol{k})]\frac{d^3\boldsymbol{k}}{(2\pi)^3}.
\end{equation}
Here, $\boldsymbol{\epsilon}_j(\boldsymbol{k})$ are polarization unit vectors which are supposed to satisfy the transversality condition $\boldsymbol{\epsilon}_j(\boldsymbol{k})\cdot \boldsymbol{k}=0$ imposed by the Coulomb gauge condition $\nabla \cdot \boldsymbol{\mathcal{A}} = 0$. This gauge condition imposes the same transversality constraint on the $\boldsymbol{\epsilon}_j$ that Gauss's law imposes on $\boldsymbol{E}$, and we see that $a_j(\boldsymbol{k})\boldsymbol{\epsilon}_j(\boldsymbol{k})$ and its conjugate can be considered as sections of $\gamma$. Despite using this expansion, Tong \cite{Tong2006} and Woit \cite{Woit2017} claim that no continuous functions $\boldsymbol{\epsilon}_j(\boldsymbol{k})$ actually exist. Indeed, this is true if one requires the $\boldsymbol{\epsilon}_j$ to be real, that is, if one considers only linear polarizations. In particular, if the $\boldsymbol{\epsilon}_j$ are real, then they are sections of $TS^2$ when restricted to $S^2$. By the hairy ball theorem, there are no continuous nonvanishing sections of $TS^2$. Despite this fact, Staruszkiewicz attempted to construct real, global polarization vectors \cite{Staruszkiewicz1973a}. These vector fields are not actually global though as they are only defined on stereographic coordinates and thus exclude one of the poles of the momentum sphere. Since we are working in momentum space, it is natural to work with complex polarization vectors, not just linear polarizations. R- and L-circular polarizations play a particularly important role due to their invariance under Lorentz transformations. The question can then be asked, is it possible to comb complex hairs on a ball flat? By explicit construction, it is indeed possible.
\begin{theorem}
There exists at least one nonvanishing section of the bundles $\gamma$ and $\gamma|_{S^2} \cong T^\Comp S^2$.
\end{theorem}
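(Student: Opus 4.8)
The plan is to exploit exactly the feature that the excerpt emphasizes: although the hairy ball theorem forbids a nonvanishing \emph{real} section of $TS^2$, a complex section $\boldsymbol{E} = \boldsymbol{V}_1 + i\boldsymbol{V}_2$ of $T^\Comp S^2$ can vanish at a point only if \emph{both} real vector fields $\boldsymbol{V}_1$ and $\boldsymbol{V}_2$ vanish there. Hence it suffices to produce two real tangent vector fields on $S^2$ whose zero sets are disjoint; the Poincar\'{e}--Hopf theorem forces each to vanish somewhere, but places no obstruction on the union of their zeros being empty. By Lemma~\ref{lm:restriction_to_sphere} and the identification $\gamma|_{S^2}\cong T^\Comp S^2$, such a field is precisely a nonvanishing smooth choice of complex polarization $\boldsymbol{E}(\hat{\boldsymbol{k}})\perp\hat{\boldsymbol{k}}$.

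First I would take the two rotation-generated vector fields $\boldsymbol{V}_1(\boldsymbol{r}) = \hat{z}\times\boldsymbol{r}$ and $\boldsymbol{V}_2(\boldsymbol{r}) = \hat{x}\times\boldsymbol{r}$ for $\boldsymbol{r}\in S^2$. Each is automatically tangent to the sphere, since a cross product $\boldsymbol{a}\times\boldsymbol{r}$ is orthogonal to $\boldsymbol{r}$, and $\boldsymbol{V}_1$ vanishes only at the poles $\boldsymbol{r}=\pm\hat{z}$ while $\boldsymbol{V}_2$ vanishes only at $\boldsymbol{r}=\pm\hat{x}$. These two pairs of points are disjoint, so the complex combination never vanishes.

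Next I would define the candidate section directly by $\boldsymbol{E}(\boldsymbol{k}) = (\hat{z}+i\hat{x})\times\hat{\boldsymbol{k}}$, valid for every $\boldsymbol{k}\in\pspace$. Transversality $\boldsymbol{E}\cdot\hat{\boldsymbol{k}}=0$ is immediate from the cross product, so $\boldsymbol{E}$ is a genuine section of $\gamma$. To confirm it is nowhere zero, I would observe that $\boldsymbol{E}=0$ forces its real and imaginary parts $\hat{z}\times\hat{\boldsymbol{k}}$ and $\hat{x}\times\hat{\boldsymbol{k}}$ to vanish simultaneously, which by the previous paragraph cannot occur on $S^2$. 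Restricting to $S^2$ then gives the desired nonvanishing section of $T^\Comp S^2$, and using the same formula on all of $\pspace$ (equivalently, pulling back along $r:\boldsymbol{k}\mapsto\hat{\boldsymbol{k}}$) yields one for $\gamma$.

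I do not expect a serious obstacle here, since the construction is fully explicit; the only conceptual point --- and the one the hairy ball discussion is designed to clarify --- is recognizing that complexification supplies a second independent real field ``for free,'' so that one needs merely to \emph{separate} the two families of zeros rather than eliminate them. The single genuine choice is picking the two rotation axes so that their zero sets are disjoint, and any two non-parallel, non-antipodal axes (here $\hat{z}$ and $\hat{x}$) serve equally well.
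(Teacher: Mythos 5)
Your proof is correct and is essentially the paper's own argument: the paper takes $\boldsymbol{u}=\boldsymbol{E}_1+i\boldsymbol{E}_2$ with $\boldsymbol{E}_1=(0,k_z,-k_y)=\boldsymbol{k}\times\hat{x}$ and $\boldsymbol{E}_2=(-k_z,0,k_x)=\boldsymbol{k}\times\hat{y}$, i.e.\ a complex combination of two rotation-generated tangent fields whose zero sets (the punctured coordinate axes) are disjoint, so that $|\boldsymbol{u}|^2=k_x^2+k_y^2+2k_z^2>0$. Your choice of axes $\hat{z},\hat{x}$ instead of $\hat{x},\hat{y}$ and the normalization by $|\boldsymbol{k}|$ are immaterial differences.
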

\begin{proof}
    Consider the sections of $\gamma$:
    \begin{align}
        \boldsymbol{E}_1(\boldsymbol{k}) = (0,k_z,-k_y), \\ \boldsymbol{E}_2(\boldsymbol{k}) = (-k_z,0,k_x).
    \end{align}
    Then
    \begin{equation}
        \boldsymbol{u} = \boldsymbol{E}_1 + i\boldsymbol{E}_2
    \end{equation}
    is a nonvanishing section of $\gamma$ since
    \begin{equation}
        |\boldsymbol{u}| = \sqrt{k_x^2 + k_y^2 + 2k_z^2} > 0.
    \end{equation}
    By restricting $\boldsymbol{u}$ to $S^2$ one obtains a nonvanishing section of $\gamma|_{S^2} \cong T^\Comp S^2$.
\end{proof}
Thus, it is possible to consistently choose at least one of the $\boldsymbol{\epsilon}_j$. However, to prove Theorem \ref{thm:gamma_trivial}, one would need to construct a second linearly independent polarization, which is a much harder task. For example, while one might guess that $\boldsymbol{u}' = \boldsymbol{E}_1 -i \boldsymbol{E}_2$ may be such a vector field, $\boldsymbol{u}$ and $\boldsymbol{u}'$ are linearly dependent at any $\boldsymbol{k}$ where $\boldsymbol{E}_1$ or $\boldsymbol{E}_2$ vanishes. We note that Folland \cite{Folland2008} states without proof that two such complex polarizations do exist. However, to our knowledge, there is neither an explicit construction of such polarizations nor a proof of their existence in the literature. We fill in these gaps in the remainder of this section.

%%%%%%%%%%%%% Triviality Proof 1 %%%%%%%%%%%%%%
\subsection{Proof of Theorem \ref{thm:gamma_trivial} via Chern numbers}\label{subsec:proof_1}
The first proof of Theorem \ref{thm:gamma_trivial} relies on Chern classes, the characteristic classes of complex vector bundles. An advantage of this approach is that Chern classes are commonly used in physical applications of topology, such as in the study of topological insulators and the topological properties of waves.

By Lemma \ref{lm:restriction_to_sphere}, it is sufficient to show $\gamma|_{S_2}$ is trivial. We begin by proving that all of the Chern classes $c_j$ of $\gamma|_{S_2}$ are 0.

\begin{definition}[Conjugate vector bundle]
Let \fxn{\pi}{E}{M} be an $n$-dimensional complex vector bundle. On each fiber $E_p$, $p\in M$, define multiplication of a complex number $a+ib \in \Comp$ with a vector $v \in E_p$ by 
\begin{equation}
    (a+ib)v = av - ibv
\end{equation}
which forms a new complex vector space $\overline E_p$. Then
\begin{equation}
    \overline E = \bigcup \overline E_p
\end{equation}
is defined to be the conjugate bundle of $E$.
\end{definition}
In general, complex vector bundles are not isomorphic to their conjugates. However,
\begin{lemma}\label{lm:conjugate_bundle_lemma}
The complexification $E^\Comp = E^\Real \otimes_\Real \Comp$ of a real vector bundle $E^\Real$ is isomorphic to its own conjugate bundle $\overline{E^\Comp} = \overline{E^\Real \otimes_\Real \Comp}$.
\end{lemma}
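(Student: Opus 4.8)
The plan is to exhibit an explicit isomorphism given by fiberwise complex conjugation, rather than to argue abstractly. Recall that the complexification $E^\Comp = E^\Real \otimes_\Real \Comp$ has fibers $E^\Comp_p = E^\Real_p \oplus i E^\Real_p$, so every element of a fiber can be written uniquely as $u + iw$ with $u,w \in E^\Real_p$. On each fiber I would define the map $c(u + iw) = u - iw$. This is well-defined, smooth, and covers the identity on $M$, because it is built from the canonical real structure carried by any complexification: in a local trivialization of $E^\Comp$ induced from a local trivialization of $E^\Real$, the map $c$ is simply standard complex conjugation on $\Comp^n$, which plainly varies smoothly with the base point.

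First I would check the scalar behavior of $c$. A direct computation shows that for $a + ib \in \Comp$,
\[
c\bigl((a+ib)(u + iw)\bigr) = (a - ib)\, c(u + iw),
\]
so $c$ is conjugate-linear (antilinear) as a self-map of $E^\Comp$. The decisive step is to reinterpret this once the target is taken to be the conjugate bundle $\overline{E^\Comp}$: by the definition of the conjugate bundle, scalar multiplication there satisfies $(a+ib)\cdot_{\overline{E^\Comp}} v = (a - ib) v$, where the right-hand side uses the original multiplication in $E^\Comp$. Substituting this identity into the displayed equation, the antilinearity of $c$ is exactly the statement that $c$ is complex-linear when regarded as a map $E^\Comp \to \overline{E^\Comp}$.

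It then remains to verify bijectivity on each fiber, which is immediate since $c$ is an involution, $c \circ c = \mathrm{id}$, so it is invertible with $c^{-1} = c$. Being a smooth, fiberwise $\Comp$-linear bijection covering $\mathrm{id}_M$, the map $c$ is a vector bundle isomorphism, establishing $E^\Comp \cong \overline{E^\Comp}$.

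The argument is essentially formal, so I do not expect any analytic obstacle. The one point requiring genuine care — and the conceptual crux of the proof — is keeping the two scalar multiplications straight, so that what superficially looks like mere antilinearity of the conjugation map is correctly recognized as honest $\Comp$-linearity into the conjugate bundle. All the content of the lemma is packaged into that reinterpretation; everything else (smoothness, invertibility, covering the identity) is routine.
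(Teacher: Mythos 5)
Your proof is correct; the paper does not supply its own argument but simply cites Milnor--Stasheff (Lemma 15.1), and the proof there is exactly your fiberwise conjugation map $u+iw \mapsto u-iw$, reinterpreted as a $\Comp$-linear isomorphism onto the conjugate bundle. You have in effect reproduced the standard argument the paper defers to, including the one genuinely important point of keeping the two scalar multiplications straight.
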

See Ref. \cite{Milnor1974}, Lemma 15.1 for a proof. This helps us calculate Chern classes by the following lemma.

\begin{lemma}\label{lm:complexification_lemma}
Let $E$ be a complex vector bundle. The Chern classes $c_j$ of $E$ and $\overline{E}$ are related by
\begin{equation}
    c_j(\overline{E}) = (-1)^jc_j(E).
\end{equation}
\end{lemma}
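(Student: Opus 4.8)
The plan is to reduce the statement to the case of line bundles via the splitting principle and then track how the Chern roots transform under conjugation. Recall that the total Chern class $c(E) = 1 + c_1(E) + \cdots + c_n(E)$ is controlled by the splitting principle: there is a space $F$ and a map $f \colon F \to M$ such that $f^*$ is injective on cohomology and the pullback $f^* E$ splits as a direct sum of line bundles $L_1 \oplus \cdots \oplus L_n$. Writing $x_i = c_1(L_i)$ for the Chern roots, one has $c(f^* E) = \prod_{i=1}^n (1 + x_i)$, so that $f^* c_j(E)$ equals the $j$-th elementary symmetric polynomial $e_j(x_1, \ldots, x_n)$. Since the fiberwise conjugation operation commutes with pullback and with direct sums, $f^* \overline{E} \cong \overline{f^* E} \cong \overline{L_1} \oplus \cdots \oplus \overline{L_n}$, so the whole problem collapses to understanding $c_1(\overline{L})$ for a single line bundle $L$.

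The key step, and the main obstacle, is the line-bundle identity $c_1(\overline{L}) = -c_1(L)$. The cleanest route uses a Hermitian metric $h$ on $L$, which exists since $M$ is paracompact. For fixed $v$ the map $w \mapsto h(v,w)$ is complex-linear, while $v \mapsto h(v,\cdot)$ is conjugate-linear; hence $v \mapsto h(v,\cdot)$ defines a complex-linear bundle isomorphism $\overline{L} \cong L^*$. Because $L \otimes L^*$ is trivial, $c_1(L) + c_1(L^*) = c_1(L \otimes L^*) = 0$, and therefore $c_1(\overline{L}) = c_1(L^*) = -c_1(L)$. Equivalently, via Chern--Weil theory one may take a unitary connection with imaginary-valued curvature $\Omega$; the induced connection on $\overline{L}$ has connection form $\overline{\omega} = -\omega$ and curvature $\overline{\Omega} = -\Omega$, so $c_1(\overline{L}) = [\tfrac{i}{2\pi}\overline{\Omega}] = -[\tfrac{i}{2\pi}\Omega] = -c_1(L)$.

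With this in hand the Chern roots of $\overline{E}$ are $-x_1, \ldots, -x_n$, so $f^* c_j(\overline{E}) = e_j(-x_1, \ldots, -x_n) = (-1)^j e_j(x_1, \ldots, x_n) = (-1)^j f^* c_j(E)$. Since $f^*$ is injective on cohomology, the relation $c_j(\overline{E}) = (-1)^j c_j(E)$ already holds on $M$, which completes the argument. I expect the only genuinely nontrivial input to be the line-bundle case $c_1(\overline{L}) = -c_1(L)$; the reduction to line bundles and the passage from $e_j(x_i)$ to $e_j(-x_i)$ are formal consequences of the functoriality of Chern classes and the elementary algebra of symmetric polynomials.
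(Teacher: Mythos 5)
Your proof is correct, but it follows a different route from the paper, which does not prove the lemma at all: it simply cites Milnor and Stasheff, Lemma 14.9. The proof in that reference proceeds by a different mechanism: it observes that $E$ and $\overline{E}$ share the same underlying oriented real bundle up to a fiberwise orientation change of $(-1)^n$, deduces $c_n(\overline{E}) = e(\overline{E}_{\Real}) = (-1)^n e(E_{\Real}) = (-1)^n c_n(E)$ from the identification of the top Chern class with the Euler class, and then descends to the lower Chern classes through the inductive Gysin-sequence definition on the projectivized bundle. Your argument instead reduces everything to the line-bundle identity $c_1(\overline{L}) = -c_1(L)$ via the splitting principle and elementary symmetric polynomials, and you supply two valid proofs of that identity (the Hermitian-metric isomorphism $\overline{L} \cong L^*$ together with triviality of $L \otimes L^*$, and the Chern--Weil computation with a unitary connection). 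All the supporting steps you invoke --- that conjugation commutes with pullbacks and direct sums, that $f^*$ is injective on cohomology, and that $e_j(-x_1,\ldots,-x_n) = (-1)^j e_j(x_1,\ldots,x_n)$ --- are sound. What your approach buys is a self-contained argument using only the axiomatic properties of Chern classes, independent of the particular Euler-class-based construction in Milnor and Stasheff; the Chern--Weil variant of your line-bundle step is arguably the most natural one in the context of this paper, which computes $C_1(\gamma_\pm|_{S^2})$ by exactly that kind of curvature integral.
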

See Ref. \cite{Milnor1974}, Lemma 14.9 for a proof.

\begin{theorem}\label{thm:Chern0}
The Chern classes and Chern numbers of $\gamma|_{S^2}$ are 0.
\end{theorem}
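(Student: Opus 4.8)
The key observation is that $\gamma|_{S^2} \cong T^\Comp S^2$ is the complexification of a real vector bundle, namely the real tangent bundle $TS^2$. The plan is to exploit this fact together with the two lemmas just established. First I would invoke Lemma \ref{lm:conjugate_bundle_lemma}: since $T^\Comp S^2 = TS^2 \otimes_\Real \Comp$ is a complexification, it is isomorphic to its own conjugate bundle, so $\overline{\gamma|_{S^2}} \cong \gamma|_{S^2}$. Because Chern classes are invariants of the isomorphism class of a bundle, this immediately gives $c_j(\overline{\gamma|_{S^2}}) = c_j(\gamma|_{S^2})$ for every $j$.

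Next I would combine this with Lemma \ref{lm:complexification_lemma}, which asserts $c_j(\overline{E}) = (-1)^j c_j(E)$ for any complex bundle $E$. Setting $E = \gamma|_{S^2}$ and comparing the two expressions for $c_j(\overline{\gamma|_{S^2}})$ yields
\begin{equation}
    c_j(\gamma|_{S^2}) = (-1)^j c_j(\gamma|_{S^2}).
\end{equation}
For odd $j$ this forces $2 c_j(\gamma|_{S^2}) = 0$, and since the Chern classes live in integral cohomology $H^{2j}(S^2;\mathbb{Z})$ (or, as the relevant torsion-free groups here, can be identified with integers on the Chern-number level), this means $c_j(\gamma|_{S^2}) = 0$ for all odd $j$. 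The only potentially nonzero Chern class would therefore be an even-index one.

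The dimensional constraint then finishes the argument. Since $S^2$ is a $2$-dimensional manifold, the cohomology groups $H^{2j}(S^2;\mathbb{Z})$ vanish for $j \geq 2$, so every Chern class $c_j$ with $j \geq 2$ is automatically zero. Combined with the vanishing of the odd-index classes, the only class left to check is $c_1 \in H^2(S^2;\mathbb{Z})$, which has odd index $j=1$ and is therefore zero by the argument above. Hence all Chern classes of $\gamma|_{S^2}$ vanish, and since the first Chern number is obtained by evaluating $c_1$ on the fundamental class of $S^2$, the Chern numbers vanish as well.

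The argument is essentially formal once the identification $\gamma|_{S^2} \cong T^\Comp S^2$ is in hand, so there is no serious obstacle in the computation itself. The one point requiring care is the step $c_1 = 2c_1 \Rightarrow c_1 = 0$: this is valid because $H^2(S^2;\mathbb{Z}) \cong \mathbb{Z}$ is torsion-free, so multiplication by $2$ is injective. I would make sure to note this, since on a general base with $2$-torsion in $H^2$ the vanishing of odd Chern classes would not follow from Lemmas \ref{lm:conjugate_bundle_lemma} and \ref{lm:complexification_lemma} alone. I emphasize that this computation only shows the Chern classes vanish; triviality of the bundle is a strictly stronger statement that does not follow from vanishing Chern classes in general, and will require the additional low-dimensional topology input deferred to the completion of the proof of Theorem \ref{thm:gamma_trivial}.
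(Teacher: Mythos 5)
Your proposal is correct and follows essentially the same route as the paper: both use Lemma \ref{lm:conjugate_bundle_lemma} and Lemma \ref{lm:complexification_lemma} to deduce $c_1 = -c_1$ and the dimensional constraint on $S^2$ to kill $c_j$ for $j>1$. Your explicit remark that the step $c_1 = -c_1 \Rightarrow c_1 = 0$ relies on $H^2(S^2;\mathbb{Z})\cong\mathbb{Z}$ being torsion-free is a point the paper leaves implicit, and is a worthwhile clarification.
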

\begin{proof}
Since $\gamma |_{S^2}$ is a vector bundle over the $2$-dimensional manifold $S^2$ and $c_j$ is represented by a $2j$-form on $S^2$, $c_j(\gamma |_{S^2}) = 0$ for all $j>1$. By Lemmas \ref{lm:conjugate_bundle_lemma} and \ref{lm:complexification_lemma}, 
\begin{equation}
\begin{split}
    c_1(\gamma |_{S^2}) &= c_1(T^\Comp S^2) = c_1(\overline{T^\Comp S^2}) \\
    &= -c_1(T^\Comp S^2) = -c_1(\gamma |_{S^2}),
\end{split}    
\end{equation}
so $c_1(\gamma |_{S^2}) = 0$. The Chern numbers are integrals of products of Chern classes, so it follows that the Chern numbers are also 0.
\end{proof}

While the Chern classes do not completely classify general complex vector bundles, this is true for certain low-dimensional, low-rank vector bundles. Here, we show that the vanishing of the Chern classes is sufficient to prove the triviality of the bundle $\gamma$  by using results about symplectic vector bundles, relying heavily on results from Ref. \cite{McDuff2017}. A symplectic vector bundle is a real vector bundle $\pi:E\rightarrow M$ equipped with a symplectic 2-form $\omega$. Complex vector bundles and symplectic vector bundles have the same characteristic classes, namely the Chern classes \cite{McDuff2017}. We have the following classification theorem:
\begin{theorem}\label{thm:symp}
Two symplectic vector bundles $E$ and $E'$ over a closed (\emph{i.e.}, compact without boundary) oriented $2$-manifold are isomorphic if and only if they have the same rank and the same first Chern number.
\end{theorem}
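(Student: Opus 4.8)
The plan is to prove the nontrivial implication—that equal rank and equal first Chern number force an isomorphism—since the converse is immediate from the fact that rank and Chern numbers are isomorphism invariants. First I would reduce the symplectic problem to a complex one: every symplectic vector bundle admits a compatible almost complex structure $J$, and the space of such compatible structures on a fixed symplectic bundle is contractible, so the associated complex bundle is well-defined up to isomorphism; equivalently, $Sp(2n,\Real)$ deformation retracts onto its maximal compact subgroup $U(n)$, so symplectic and complex bundles of matching rank share the same classifying space $BU(n)$ and hence the same isomorphism theory. It therefore suffices to classify complex vector bundles of a fixed complex rank $n$ (symplectic rank $2n$) over a closed oriented surface $\Sigma$.

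The central step is a splitting argument driven by dimension. I would show that any complex vector bundle $E \rightarrow \Sigma$ of rank $n \ge 2$ admits a nowhere-vanishing section. Viewing $E$ as a real bundle of rank $2n$, the fiber-minus-zero is homotopy equivalent to $S^{2n-1}$, which is $(2n-2)$-connected, so the primary (and only) obstruction to a global section lives in $H^{2n}(\Sigma;\mathbb{Z})$. Since $\dim \Sigma = 2 < 2n$, this group vanishes and a section exists. Equipping $E$ with a Hermitian metric, the line subbundle spanned by the section is trivial and splits off orthogonally, giving $E \cong \underline{\Comp} \oplus E_1$ with $\operatorname{rank} E_1 = n-1$. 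Iterating until the rank drops to one yields $E \cong \underline{\Comp}^{\,n-1} \oplus L$ for some complex line bundle $L$.

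To finish I would invoke the classification of line bundles: since $BU(1) = \mathbb{CP}^\infty = K(\mathbb{Z},2)$, complex line bundles over any space are classified by $c_1 \in H^2(\,\cdot\,;\mathbb{Z})$, and over $\Sigma$ one has $H^2(\Sigma;\mathbb{Z}) \cong \mathbb{Z}$, with $c_1(L)$ reading off precisely the first Chern number. By the Whitney sum formula, $c_1(E) = c_1(\underline{\Comp}^{\,n-1} \oplus L) = c_1(L)$, so the line bundle appearing in the splitting carries exactly the first Chern number of $E$. Thus two bundles $E, E'$ of equal rank $n$ and equal first Chern number have $L \cong L'$, whence $E \cong \underline{\Comp}^{\,n-1} \oplus L \cong \underline{\Comp}^{\,n-1} \oplus L' \cong E'$.

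I expect the splitting step to be the main obstacle: the clean vanishing of obstructions relies on the numerical coincidence $\dim \Sigma < 2n$, and I would need to confirm both that there are no secondary obstructions (automatic here, since $H^{k}(\Sigma;\mathbb{Z}) = 0$ for $k > 2$) and that the orthogonal-complement splitting is a genuine bundle isomorphism rather than merely a stable one. The reduction from symplectic to complex bundles and the line-bundle classification are standard, so the entire argument hinges on converting the low dimension of the base into the existence of enough pointwise-independent sections.
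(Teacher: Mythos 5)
Your proposal is correct, but it cannot be compared line-by-line with the paper's proof because the paper does not actually prove this theorem: it simply cites McDuff--Salamon, Theorem 2.7.1. What you have written is, in effect, the standard proof from that reference, so you have supplied the argument the paper outsources. The three ingredients are all sound: the reduction from symplectic to complex bundles via a compatible almost complex structure (equivalently the homotopy equivalence $\mathrm{Sp}(2n,\Real)\simeq \mathrm{U}(n)$) is exactly the content of the paper's Theorem~\ref{thm:equiv}; the obstruction-theoretic splitting $E\cong \underline{\Comp}^{\,n-1}\oplus L$ works because the fiber $S^{2n-1}$ of the unit-sphere bundle is $(2n-2)$-connected, so every obstruction group $H^{i}(\Sigma;\pi_{i-1}(S^{2n-1}))$ with $i\leq 2$ vanishes for $n\geq 2$ (and the rank-one case needs no splitting at all); and the identification of line bundles over a closed oriented surface with $H^2(\Sigma;\mathbb{Z})\cong\mathbb{Z}$ via $c_1$, combined with the Whitney formula $c_1(E)=c_1(L)$, closes the argument. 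The only cosmetic point worth tightening is that the section over the $2$-skeleton exists because the relevant coefficient groups $\pi_{i-1}(S^{2n-1})$ vanish for $i\leq 2$, not merely because $H^{2n}(\Sigma;\mathbb{Z})=0$; both statements are true, but the former is the reason the construction never meets an obstruction in the first place. Your proof also makes visible why orientability and compactness are needed: they are what give $H^2(\Sigma;\mathbb{Z})\cong\mathbb{Z}$ and hence a well-defined integer Chern number.
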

\begin{proof}
Ref. \cite{McDuff2017}, Theorem 2.7.1.
\end{proof}
The relationship between symplectic and complex vector bundles is given by the following theorem:
\begin{theorem}\label{thm:equiv}
For $j=1,2$ let $(E_j,\omega_j)$ be a symplectic vector bundle over a manifold $M$ and let $J_j$ be a complex structure on $E_j$ satisfying $\omega_j(v,J_j v)>0$ for all $v\in E_j$. Then the symplectic vector bundles $(E_1,\omega_1)$ and $(E_2, \omega_2)$ are isomorphic if and only if the complex vector bundles $(E_1,J_1)$ and $(E_2,J_2)$ are isomorphic. 
\end{theorem}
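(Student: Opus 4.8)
The plan is to deduce both implications from a single structural fact: once a symplectic form is fixed, the choice of taming complex structure is unique up to homotopy, and dually once a complex structure is fixed, the choice of taming symplectic form is unique up to homotopy. I would isolate two lemmas. Lemma (A): if $J_0,J_1$ are complex structures on a bundle $E$ both tamed by the same symplectic form $\omega$ (i.e.\ $\omega(v,J_iv)>0$), then $(E,J_0)\cong(E,J_1)$ as complex bundles. Lemma (B): if $\omega_0,\omega_1$ are symplectic forms on $E$ both tamed by the same complex structure $J$, then $(E,\omega_0)\cong(E,\omega_1)$ as symplectic bundles. Granting these, the theorem follows formally, since the isomorphism class of $(E,J)$ depends only on the symplectic bundle $(E,\omega)$, and vice versa.

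For the implication assuming a symplectic isomorphism, let $\Phi\colon E_1\to E_2$ satisfy $\Phi^*\omega_2=\omega_1$. Push $J_1$ forward to $\Phi J_1\Phi^{-1}$ on $E_2$; this is a complex structure tamed by $\omega_2$ because $\omega_2(w,\Phi J_1\Phi^{-1}w)=\omega_1(\Phi^{-1}w,J_1\Phi^{-1}w)>0$, and $\Phi$ is tautologically a complex-bundle isomorphism $(E_1,J_1)\to(E_2,\Phi J_1\Phi^{-1})$. Lemma (A) applied to the pair $\Phi J_1\Phi^{-1},J_2$ on $(E_2,\omega_2)$ then gives $(E_2,\Phi J_1\Phi^{-1})\cong(E_2,J_2)$, and composing yields $(E_1,J_1)\cong(E_2,J_2)$. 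For the converse, let $\Psi\colon E_1\to E_2$ satisfy $\Psi J_1=J_2\Psi$ and set $\omega'=\Psi^*\omega_2$. Then $\omega'(v,J_1v)=\omega_2(\Psi v,J_2\Psi v)>0$, so $\omega'$ is tamed by $J_1$; Lemma (B) applied to $\omega_1,\omega'$ gives $(E_1,\omega_1)\cong(E_1,\omega')$, while $\Psi$ is by construction a symplectic isomorphism $(E_1,\omega')\to(E_2,\omega_2)$. Composing gives the symplectic isomorphism.

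Lemma (B) is the easy one: the linear interpolation $\omega_t=(1-t)\omega_0+t\omega_1$ satisfies $\omega_t(v,Jv)=(1-t)\omega_0(v,Jv)+t\,\omega_1(v,Jv)>0$ for all $t\in[0,1]$ and $v\neq0$, so each $\omega_t$ is tamed by $J$, hence nondegenerate and symplectic. This smooth path of forms equips $E\times[0,1]\to M\times[0,1]$ with a symplectic structure, and homotopy invariance of pullback bundles (valid in the symplectic category, since $\mathrm{Sp}(2n,\mathbb{R})$ is a Lie group) identifies the restrictions over the two endpoints. Lemma (A) cannot be handled this way, because a convex combination of complex structures need not square to $-\mathrm{id}$. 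This is where the real work lies, and I expect it to be the main obstacle. The key input is that for a symplectic vector space the space of $\omega$-tamed complex structures is nonempty and contractible: nonemptiness follows from polar decomposition of the form relative to an auxiliary inner product, and contractibility follows from the deformation retraction of the tamed structures onto the $\omega$-compatible ones, whose model space $\mathrm{Sp}(2n,\mathbb{R})/U(n)$ is the contractible Siegel domain. Fiberwise over $M$ this realizes the tamed complex structures on $(E,\omega)$ as sections of a bundle with contractible fibers, so the section space is path-connected; thus $J_0$ and $J_1$ are joined by a smooth path $J_t$ of genuine complex structures, and homotopy invariance applied to the resulting complex bundle over $M\times[0,1]$ gives $(E,J_0)\cong(E,J_1)$. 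The statements about nonemptiness and contractibility of tamed and compatible complex structures are precisely the technical heart, for which I would cite the corresponding results in Ref.~\cite{McDuff2017}; everything else is bookkeeping.
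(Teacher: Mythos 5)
Your argument is correct. The paper does not actually prove this theorem itself --- its ``proof'' is a bare citation to Theorem 2.6.3 of Ref.~\cite{McDuff2017} --- and your proposal is essentially a faithful reconstruction of the standard argument behind that citation: the two taming lemmas (convexity of the set of symplectic forms tamed by a fixed $J$, and nonemptiness plus contractibility of the space of tamed/compatible complex structures for a fixed $\omega$, which you rightly identify as the technical heart and source from the same reference), combined with pushing structures across the given isomorphism and homotopy invariance of bundles over $M\times[0,1]$.
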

\begin{proof}
Ref. \cite{McDuff2017}, Theorem 2.6.3.
\end{proof}
In the context of $\gamma|_{S^2}$, the complex structures are just multiplication by $i$. We combine the preceding three theorems to prove the main result.
\begin{theorem}
$\gamma|_{S^2}$ is trivial.
\end{theorem}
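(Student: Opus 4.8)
The plan is to realize both $\gamma|_{S^2}$ and the trivial rank-$2$ bundle $S^2 \times \Comp^2$ as symplectic vector bundles equipped with compatible complex structures, invoke Theorem \ref{thm:symp} to show they are isomorphic as symplectic bundles, and then use Theorem \ref{thm:equiv} to upgrade this to an isomorphism of complex vector bundles. Since $S^2$ is a closed oriented $2$-manifold, all the hypotheses of Theorem \ref{thm:symp} are in place once the symplectic structures have been built, and by Lemma \ref{lm:restriction_to_sphere} triviality of $\gamma|_{S^2}$ is exactly what we want.

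First I would manufacture the symplectic form on $\gamma|_{S^2}$ directly from its Hermitian metric. Writing the Hermitian product (\ref{eq:Hermitian_product_E}) as $\langle u, v\rangle = g(u,v) + i\,\omega(u,v)$, the real part $g$ is a fiber metric while the imaginary part $\omega = \mathrm{Im}\langle\cdot,\cdot\rangle$ is a smooth, fiberwise-nondegenerate, skew-symmetric real $2$-form, hence a symplectic structure on the underlying real bundle. Taking $J$ to be multiplication by $i$, the compatibility condition of Theorem \ref{thm:equiv} holds because $\omega(v, Jv) = \mathrm{Im}\langle v, iv\rangle = |v|^2 > 0$ for $v \neq 0$. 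The identical construction applied to $S^2 \times \Comp^2$ with its standard Hermitian metric and standard complex structure $J_0$ produces a symplectic bundle with compatible $J_0$.

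Then I would compare the two symplectic bundles. Both have the same rank (complex rank $2$, equivalently real rank $4$), and by Theorem \ref{thm:Chern0} we have $c_1(\gamma|_{S^2}) = 0 = c_1(S^2 \times \Comp^2)$, so their first Chern numbers agree. Theorem \ref{thm:symp} then yields a symplectic bundle isomorphism between $(\gamma|_{S^2}, \omega)$ and the trivial symplectic bundle $(S^2 \times \Comp^2, \omega_0)$. Applying Theorem \ref{thm:equiv} to this isomorphism, using the compatible complex structures $J$ and $J_0$ constructed above, gives an isomorphism of the complex vector bundles $(\gamma|_{S^2}, J) \cong (S^2 \times \Comp^2, J_0)$, which is precisely the statement that $\gamma|_{S^2}$ is trivial.

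The conceptual content is carried entirely by the cited classification theorems, so the only genuine work lies in the reduction of the second paragraph: verifying that $\mathrm{Im}\langle\cdot,\cdot\rangle$ is a bona fide symplectic form compatible with multiplication by $i$, and confirming that the notion of rank in Theorem \ref{thm:symp} is matched correctly across the two bundles. I expect the main (and rather minor) subtlety to be the sign and nondegeneracy check in the compatibility condition $\omega(v,Jv)>0$, which fixes the convention for $\omega$ and guarantees that Theorem \ref{thm:equiv} applies; beyond that, the argument is a direct chaining of the three preceding theorems.
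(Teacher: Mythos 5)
Your proposal is correct and follows essentially the same route as the paper: extract the symplectic form as the imaginary part of the Hermitian product, verify the compatibility $\omega(v,iv)=|v|^2>0$, and chain Theorem \ref{thm:Chern0}, Theorem \ref{thm:symp}, and Theorem \ref{thm:equiv} to conclude triviality. No gaps.
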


\begin{proof}
For $j=1,2$, let $w_j = \alpha_j + i\beta_j \in \gamma|_{S^2}(\boldsymbol{k})$ for some $\boldsymbol{k}$, where $\alpha_j,\beta_j \in \Real^3$. $\gamma|_{S_2}$ has the Hermitian structure
\begin{align}\label{eq:Hermitian}
    \langle w_1, w_2 \rangle = w_1^* \cdot w_2 = (\alpha_1 \cdot \alpha_2 + \beta_1 \cdot \beta_2) + i(\alpha_1 \cdot \beta_2 - \beta_1 \cdot \alpha_2)
\end{align}
Define the bilinear form $\omega$ on $\gamma|_{S^2}$ by
\begin{equation}
    \omega(w_1,w_2) = \alpha_1 \cdot \beta_2 - \beta_1 \cdot \alpha_2.
\end{equation}
Eq.\,(\ref{eq:Hermitian}) can be written as
\begin{equation}
    \langle w_1, w_2 \rangle = \omega(w_1,i w_2) + i\omega(w_1,w_2).
\end{equation}
Since $\omega$ is nondegenerate and skew-symmetric, it defines a symplectic form on each fiber $\gamma|_{S_2}(\boldsymbol{k})$. Since the Hermitian form is smooth in $\boldsymbol{k}$, and since $\omega$ is the imaginary part of the Hermitian form, $\omega$ is also smooth in $\boldsymbol{k}$. Thus, $(\gamma|_{S^2},\omega)$ is a symplectic vector bundle. Furthermore,
\begin{equation}
    \omega(w_1,i w_1) = |w_1|^2 > 0.
\end{equation}
The definitions of the Chern classes for complex vector bundles and symplectic vector bundles agree \cite{McDuff2017},  so by Theorems \ref{thm:Chern0} and \ref{thm:symp}, $(\gamma|_{S^2},\omega)$ is trivial as a symplectic bundle. By Theorem \ref{thm:equiv}, $\gamma|_{S^2}$ is trivial as a complex vector bundle.
\end{proof}

%%%%%%%%%%%%%%%% Triviality Proof 2 %%%%%%%%%%%%%%%%%%%
\subsection{Proof of Theorem \ref{thm:gamma_trivial} via the clutching construction}
Here we give a second proof that $\gamma$ is trivial via the clutching construction \cite{HatcherVBKT}. Again, we proceed by showing $T^\Comp S^2$ is trivial. One advantage of this method is that it can be adapted to explicitly furnish two independent vector fields on $T^\Comp S^2$. 

The clutching construction is a method for determining the isomorphism type of a real or complex vector bundle over $S^n$; here, we only consider the case of rank-2 complex bundles over $S^2$. Consider a rank-2 complex vector bundle $\pi: E \rightarrow S^2$, and decompose $S^2$ into its upper and lower hemispheres, $S^2 = D^2_+ \cup D^2_-$. Since each hemisphere is contractible, the restricted bundles $E|_{D^2_+}$ and $E|_{D^2_-}$ are trivial. Fix a trivialization for each of these bundles, and let $[v_{1\pm},v_{2\pm}]$ be an orthonormal frame for $D^2_\pm$. These frames will generally not agree on the equator $D^2_+\cap D^2_-=S^1$ where the hemispheres overlap, and there is a function \fxn{f}{S^1}{\mathrm{U}(2)} which rotates one frame into the other on $S^1$:
\begin{equation}
    [v_{1+},v_{2+}](\phi) = f(\phi)[v_{1-},v_{2-}](\phi).
\end{equation}
$f$ is referred to as a clutching function for E. 

Conversely, given an arbitrary function \fxn{f}{S^1}{\mathrm{U}(2)}, one can construct a rank-2 complex vector bundle $E_f$ (see Ref. \cite{HatcherVBKT}, p. 22). For smooth manifolds $M$ and $N$, let $[M,N]$ denote the set of homotopy classes of maps from $M$ to $N$, and let $\text{Vect}_\Comp^r(S^2)$ be the set of isomorphism classes of rank-$r$ complex vector bundles over $S^2$. The following fundamental theorem says that the homotopy type of a clutching function determines the isomorphism class of a vector bundle.

\begin{theorem}\label{thm:clutching_theorem}
The map $\Phi:[S^{1},\mathrm{U}(2)]\rightarrow \mathrm{Vect}_\Comp^2(S^2)$ which sends a clutching function $f$ to the vector bundle $E_f$ is a bijection.
\end{theorem}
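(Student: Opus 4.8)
The plan is to show $\Phi$ is a bijection by verifying the three standard properties in turn: that it is well-defined on homotopy classes, that it is surjective, and that it is injective. Throughout, the essential inputs are the homotopy invariance of vector bundles (a bundle over $S^2 \times [0,1]$ restricts to isomorphic bundles on the two ends $S^2 \times \{0\}$ and $S^2 \times \{1\}$), the triviality of any bundle over a contractible base such as a hemisphere $D^2_\pm$, and the path-connectedness of $\mathrm{U}(2)$. I would first fix, once and for all, the decomposition $S^2 = D^2_+ \cup D^2_-$ with equator $S^1$ and the orthonormal frames used to define the clutching construction, so that every comparison below is made against the same reference trivializations.

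For well-definedness, suppose $f_0$ and $f_1$ are homotopic clutching functions via a path $f_t \colon S^1 \to \mathrm{U}(2)$. I would feed the family $\{f_t\}$ into the clutching construction over the base $S^2 \times [0,1]$, obtaining a rank-$2$ bundle whose restriction to $S^2 \times \{0\}$ is $E_{f_0}$ and to $S^2 \times \{1\}$ is $E_{f_1}$. Homotopy invariance of vector bundles then forces $E_{f_0} \cong E_{f_1}$, so $\Phi$ descends to homotopy classes. For surjectivity, I would take an arbitrary $E \to S^2$ and restrict it to each hemisphere; since $D^2_\pm$ is contractible, each restriction is trivial. Equipping $E$ with a Hermitian metric and applying Gram--Schmidt produces orthonormal (hence unitary) frames on each hemisphere, and the change of frame on the overlap $S^1$ is precisely a map $f \colon S^1 \to \mathrm{U}(2)$ with $E \cong E_f$, so $\Phi$ hits every isomorphism class.

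The main obstacle is injectivity. Suppose $E_{f_0} \cong E_{f_1}$ via a bundle isomorphism $\Psi$. Restricting $\Psi$ to each hemisphere and expressing it in the fixed trivializations yields maps $g_\pm \colon D^2_\pm \to \mathrm{GL}_2(\Comp)$, and comparing the two clutching relations on the equator produces a compatibility identity of the form $f_1 = g_+ \, f_0 \, g_-^{-1}$ on $S^1$. The point is that each $g_\pm|_{S^1}$ extends over the disk $D^2_\pm$ and is therefore null-homotopic, hence homotopic to a constant map; using path-connectedness of $\mathrm{GL}_2(\Comp)$ (equivalently $\mathrm{U}(2)$, via the Gram--Schmidt deformation retraction $\mathrm{GL}_2(\Comp) \simeq \mathrm{U}(2)$, which also identifies $[S^1,\mathrm{GL}_2(\Comp)]$ with $[S^1,\mathrm{U}(2)]$) I can deform those constants to the identity. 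Multiplying these deformations pointwise into the relation $f_1 = g_+ f_0 g_-^{-1}$ gives a homotopy from $f_1$ to $f_0$, establishing injectivity.

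I expect the genuine work to lie in the injectivity step, specifically in showing cleanly that an abstract bundle isomorphism decomposes as the pair $(g_+, g_-)$ relative to the reference frames and yields exactly the conjugation-type relation on the equator; the bookkeeping of conventions (orientation of $S^1$, which hemisphere carries the factor) and the reduction from $\mathrm{GL}_2(\Comp)$-valued to $\mathrm{U}(2)$-valued transition data is where care is needed. The remaining two parts are essentially formal consequences of homotopy invariance and contractibility. Since this is the classical clutching classification, at the level of detail needed here I would present the argument in the form above and defer the fully rigorous construction of $E_f$ to Ref.~\cite{HatcherVBKT}.
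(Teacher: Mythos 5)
Your proposal is correct and takes essentially the same approach as the paper, whose entire proof is a citation to Hatcher's Proposition 1.11: your three-step outline (well-definedness via homotopy invariance over $S^2\times[0,1]$, surjectivity via triviality over the contractible hemispheres, injectivity via the equatorial relation $f_1=g_+f_0g_-^{-1}$ with $g_\pm|_{S^1}$ null-homotopic) together with the reduction from $\mathrm{GL}_2(\Comp)$- to $\mathrm{U}(2)$-valued data by the Gram--Schmidt deformation retraction is exactly the argument behind that cited proposition.
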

\begin{proof}
This is a special case of proposition 1.11 in \cite{HatcherVBKT}.
\end{proof}

$[S^1,\mathrm{U}(2)] = \pi_1(\mathrm{U}(2))$, where $\pi_1$ denotes the fundamental group, and $\pi_1(\mathrm{U}(2))\cong \mathbb{Z}$, so the complex vector bundles over $S^2$ can be labeled by the integers, with the trivial bundle corresponding to $0$. It is well known that $\mathrm{SU}(2)$ is simply connected (for example, by the fact that it is diffeomorphic to $S^3$), so we get the following corollary.

\begin{corollary}\label{cor:cor_SU2}
If a rank-2 vector bundle \fxn{\pi}{E}{S^2} has a clutching function $f$ which factors through $\mathrm{SU}(2)$, that is, if $f:S^1\rightarrow \mathrm{SU}(2) \seq \mathrm{U}(2)$, then $E$ is trivial.
\end{corollary}
\begin{proof}
Since $\text{SU}(2)$ is simply connected, $f$ is homotopic to the constant map $\mathds{1}:S^1\rightarrow \mathrm{U}(2)$ which sends every element of $S^1$ to the identity of $\mathrm{U}(2)$. $\mathds{1}$ is the clutching function for the trivial bundle, so the corollary follows by Theorem \ref{thm:clutching_theorem}.
\end{proof}

\begin{figure}
    \centering
    \includegraphics[width=8.6cm]{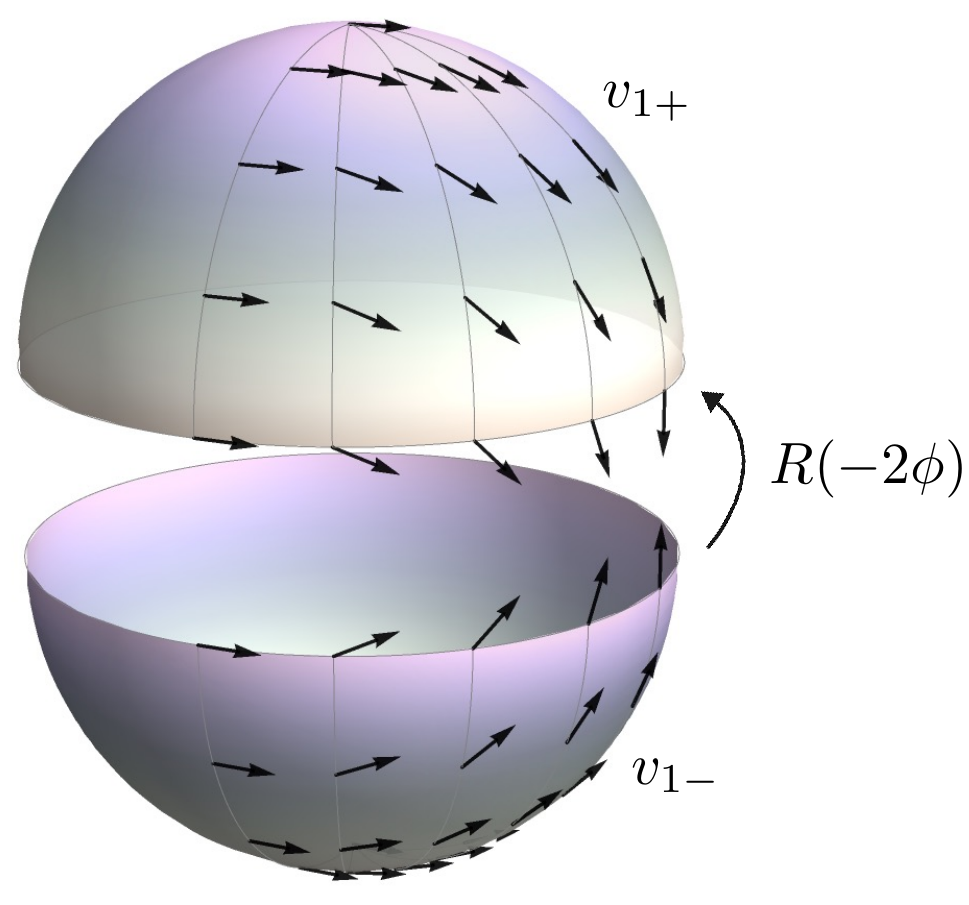}
    \caption{Vector fields $v_{1+}$ on $D^2_+$ and $v_{1-}$ on $D^2_-$. The clutching function relating them along the equator is a $-2\phi$ rotation.}
    \label{fig:clutching_sphere}
\end{figure}

We now use this corollary to prove $T^\Comp S^2$ is trivial. We construct a clutching function for $T^\Comp S^2$ by adapting a construction of a clutching function of $T^\Real S^2$ given by Hatcher (\cite{HatcherVBKT}, p. 22). We start by constructing orthonormal frames for the two hemispheres $D^2_\pm$ of $S^2$. Choose a vector field $v_{1+}$ by first picking some real tangent vector at the north pole and then extending it along each meridian so that it maintains a constant angle with the meridian (see Figure \ref{fig:clutching_sphere}). Let $v_{1-}$ be the vector field on the lower hemisphere $D^2_-$ obtained by reflecting $v_{1+}$ across the equatorial plane. We can produce a second real vector field $v_{2\pm}$ on each hemisphere by rotating $v_{1\pm}$ by 90 degrees counterclockwise when viewed from the exterior of the sphere. Then $[v_{1+},v_{2+}]$ and $[v_{1-},v_{2-}]$ are orthonormal trivializations of $D^2_+$ and $D^2_-$, respectively. Explicitly, by using Cartesian $(x,y,z)$ and spherical $(r,\theta,\phi)$ coordinates related in the standard way, and by choosing $v_{1+}$ to be $\hat y$ at the north pole, we have
\begin{align}
    v_{1+} &= \sin(\phi) \hat \theta + \cos (\phi) \hat \phi \label{eq:v1p}\\
    v_{2+} &= -\cos(\phi)\hat \theta + \sin(\phi) \hat \phi \\
    v_{1-} &= -\sin(\phi) \hat \theta + \cos(\phi) \hat\phi \\
    v_{2-} &= -\cos(\phi)\hat\theta - \sin(\phi)\hat \phi. \label{eq:v2m}
\end{align}
On the equator, the two frames are related by rotating the lower frame by $-2\phi$:
\begin{equation}
    [v_{1+},v_{2+}] = \begin{pmatrix}
    \cos 2\phi && \sin 2\phi \\
    -\sin 2\phi && \cos 2\phi
    \end{pmatrix}[v_{1-},v_{2-}] \doteq f(\phi)[v_{1-},v_{2-}].
\end{equation}
where \fxn{f}{S^1}{\mathrm{SU}(2)\seq \mathrm{U}(2)} is the clutching function and the matrix of $f$ and the column vectors are expressed in the $(\hat \theta, \hat \phi)$ basis. Since $f$ factors through $\mathrm{SU}(2)$, Corollary \ref{cor:cor_SU2} proves the following.
\begin{theorem}
$T^\Comp S^2$ is trivial.
\end{theorem}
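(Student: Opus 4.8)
The plan is to recognize that the real orthonormal frames $[v_{1\pm},v_{2\pm}]$ already constructed on the two hemispheres serve simultaneously as complex frames for $T^\Comp S^2$, and then to read off the resulting complex clutching function. Complexifying a real vector bundle amounts to retaining the same real trivializing frames but now regarding them as a $\Comp$-basis of each (same-dimensional) complexified fiber. Consequently the clutching function for $T^\Comp S^2$ is precisely the transition matrix $f(\phi)$ computed from Eqs.\,(\ref{eq:v1p})--(\ref{eq:v2m}), only now interpreted as a map $S^1 \to \mathrm{U}(2)$ rather than into $\mathrm{O}(2)$. So no new construction is needed; the work is in identifying which subgroup of $\mathrm{U}(2)$ this shared $f$ lands in.

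The decisive step is to observe that each $f(\phi)$ is an $\mathrm{SO}(2)$ rotation and that $\mathrm{SO}(2)\subseteq \mathrm{SU}(2)$. Indeed $f(\phi)$ is a real orthogonal matrix, hence unitary, with $\det f(\phi)=\cos^2 2\phi+\sin^2 2\phi = 1$, so $f$ factors through $\mathrm{SU}(2)\subseteq\mathrm{U}(2)$. Corollary \ref{cor:cor_SU2} then applies verbatim and yields triviality of $T^\Comp S^2$. This is the only genuine verification required; everything else is supplied by the clutching formalism of Theorem \ref{thm:clutching_theorem}.

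I do not expect a computational obstacle here — the difficulty is entirely conceptual and lies in appreciating \emph{why} the same $f$ gives opposite answers for the real and complex bundles. As a map $S^1\to \mathrm{SO}(2)\cong \mathrm{U}(1)$, $f$ has winding number $-2$, matching the Euler number $\chi(S^2)=2$ that obstructs a global real frame (the hairy ball theorem). But $\pi_1(\mathrm{U}(1))\cong \Z$ whereas $\pi_1(\mathrm{SU}(2))=0$, so the inclusion $\mathrm{SO}(2)\hookrightarrow \mathrm{SU}(2)$ kills this homotopy class: the loop that is essential for the real bundle becomes null-homotopic once we enlarge the structure group to the simply connected $\mathrm{SU}(2)$. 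Thus the complexification washes out the topological obstruction, which is exactly the mechanism resolving the apparent paradox raised earlier. The essential insight to highlight in the write-up is therefore this change of effective structure group from $\mathrm{SO}(2)$ to $\mathrm{SU}(2)$ upon complexification, rather than any intricate calculation.
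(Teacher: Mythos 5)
Your proof is correct and follows essentially the same route as the paper: build the hemisphere frames of Eqs.\,(\ref{eq:v1p})--(\ref{eq:v2m}), identify the clutching function $f(\phi)$ as a rotation landing in $\mathrm{SO}(2)\subseteq\mathrm{SU}(2)$, and invoke Corollary \ref{cor:cor_SU2}. Your closing remark explaining why the same $f$ is essential in $\pi_1(\mathrm{U}(1))$ but null-homotopic in the simply connected $\mathrm{SU}(2)$ is a nice articulation of the mechanism the paper only gestures at, but it is not a different proof.
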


%%%%%%%%%%%%%%%%% Explicit construction of frame %%%%%%%%%%%%%%%%
\subsection{Explicit construction of a frame on \texorpdfstring{$T^\Comp S^2$}{TCS2} and \texorpdfstring{$\gamma$}{g}}

Using methods related to the clutching construction, we can explicitly construct two independent vector fields on $T^\Comp S^2$, from which one can immediately obtain corresponding independent vector fields on $\gamma$. We do this in three steps. First, we find an explicit homotopy in $\mathrm{SU}(2)$ between the clutching function $f:S^1\rightarrow \mathrm{SU}(2)$ and the identity $\mathds{1}$. Second, we use this homotopy to extend the domain of $f$ to all of $D^2_-$. From this we will obtain a continuous, piecewise smooth frame for $T^\Comp S^2$. The last step is to modify the construction so that the frame is everywhere smooth.

%%%%%%%%%%%%%% \mathrm{SU}(2) Homotopy %%%%%%%%%%
\subsubsection{Homotopy from \texorpdfstring{$f$}{f} to the identity}
From the fact that $f$ factors through $\mathrm{SU}(2)$, we showed that $[f]=0$, meaning that there exists a homotopy between
\begin{equation*}
f(\phi) = 
    \begin{pmatrix}
    \cos 2\phi && \sin 2\phi \\
    -\sin 2\phi && \cos 2\phi
    \end{pmatrix}
    \;\;\; \text{ and } \;\;\;
    \mathds{1} = \begin{pmatrix}
    1 && 0 \\
    0 && 1
    \end{pmatrix}.
\end{equation*}

\begin{figure}
    \centering
    \includegraphics[width=8.6cm]{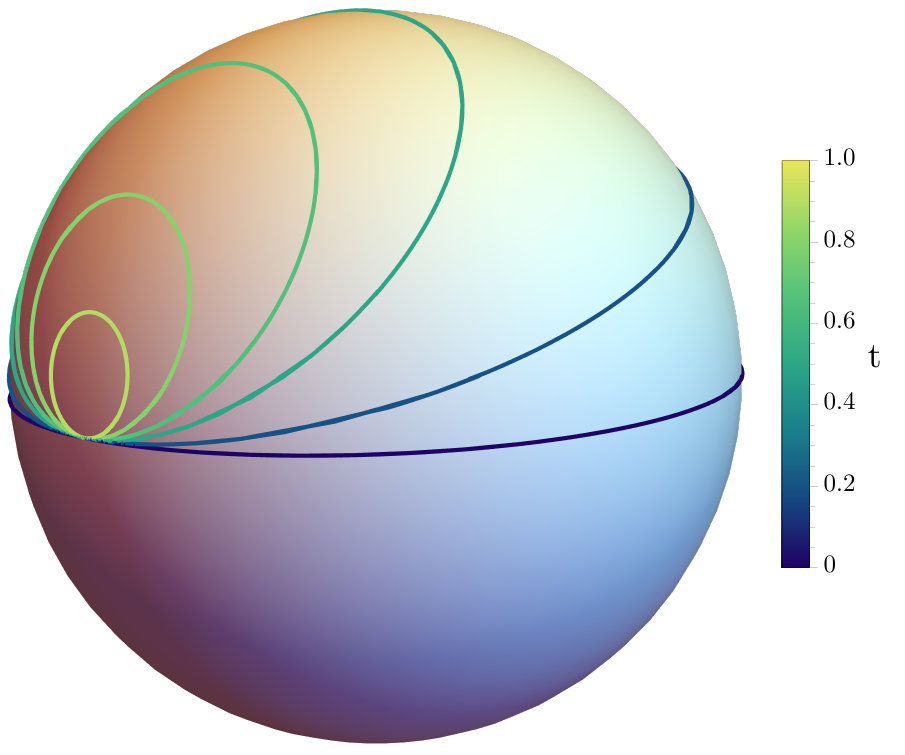}
    \caption{Illustration of the homotopy $F(t,\phi):[0,1]\times S^1 \rightarrow \mathrm{SU}(2)$ from $f(\phi)$ to the identity $\mathds{1}$. The image of $F$ resides on an embedding of $S^2$ in $\mathrm{SU}(2)$. The equator at $t=0$ corresponds to $f(\phi)$ and shrinks down to the identity $(1,0,0)$ at $t=1$.}
    \label{fig:homotopy}
\end{figure}
We construct one such homotopy as follows. $\mathrm{SU}(2)$ is homeomorphic to the unit 3-sphere $S^3$ via
\begin{equation}
    (w,x,y,z) \in S^3 \mapsto \begin{pmatrix}
    x+iz && y+wi \\
    -y+wi && x-iz
    \end{pmatrix} \in \mathrm{SU}(2).
\end{equation}
By setting $w=0$, we obtain a map from $S^2$ into $\mathrm{SU}(2)$:
\begin{equation}
    (x,y,z) \in S^2 \mapsto \begin{pmatrix}
    x+iz && y \\
    -y && x-iz
    \end{pmatrix} \in \mathrm{SU}(2).
\end{equation}
Under this map, $f(\phi)$ corresponds to a loop along the equator in the $xy$-plane traversed twice, while $\mathds{1}$ corresponds to the fixed point $(x,y,z)=(1,0,0)$. By shrinking the equator down to $(1,0,0)$ while remaining on $S^2$, we obtain the desired homotopy as illustrated in Figure \ref{fig:homotopy}. Explicitly, choose
\begin{align}
    x(t,\phi) &= \cos(2\phi) \cos^2\Big(\frac{\pi t}{2}\Big) + \sin^2 \Big(\frac{\pi t}{2} \Big) \label{eq:x}\\
    y(t,\phi) &= \sin(2\phi)\cos\Big(\frac{\pi t}{2}\Big) \label{eq:y}\\
    z(t,\phi) &= \sin^2 (\phi) \, \sin(\pi t). \label{eq:z}
\end{align}
Note that
\begin{equation}
    x(t,\phi)^2 + y(t,\phi)^2 + z(t,\phi)^2 = 1,
\end{equation}
so these points reside on $S^2$. Define the homotopy $F(t,\phi):[0,1]\times S^1\rightarrow \mathrm{SU}(2)$ by
\begin{equation}\label{eq:F_homo}
    F(t,\phi) = \begin{pmatrix}
        x(t,\phi) + iz(t,\phi) && y(t,\phi) \\
        -y(t,\phi) && x(t,\phi) - iz(t,\phi).
    \end{pmatrix}
\end{equation}
Since $F(t,\phi+2\pi) = F(t,\phi)$, $F$ is well-defined as a function on $[0,1]\times S^1$. 

%%%%%%%%%%%% Piecewise smooth frame %%%%%%%%%%%%%%%
\subsubsection{Constructing a piecewise smooth frame for \texorpdfstring{$T^\Comp S^2$}{TCS2}}
On each hemisphere we have a frame, namely, $[v_{1+},v_{2+}]$ and $[v_{1-},v_{2-}]$. We can transform $[v_{1-},v_{2-}]$ along the equator via $f$ so that it agrees with $[v_{1+},v_{2+}]$. If we can extend $f$ to a function $\tilde F(\theta,\phi)$ on all of $D^2_-$, then $\tilde F [v_{1-},v_{2-}]$ would extend $[v_{1+},v_{2+}]$ to a frame on all of $S^2$. The primary difficulty is ensuring that $\tilde F$ is defined at the coordinate singularity $\theta = \pi$. Note that even if the matrix representation of $\tilde F$ is independent of $\phi$ when $\theta= \pi$, it may still be ill-defined at this point since the matrix representation is written in the $(\hat \theta, \hat \phi)$ basis and $\hat \theta$ is not defined at the singularity. One way of avoiding this issue is to require that $\tilde F(\pi,\phi) = \mathds{1}$ since the matrix representation of the identity is basis independent. Thus, we can interpret the desired function $\tilde F(\theta, \phi)$ as a homotopy between $f(\phi)$ and $\mathds{1}$ with $\theta \in [\pi/2,\pi]$ as the homotopy parameter. By reparameterizing from $t \in [0,1]$ to $\theta\in[\pi/2,\pi]$ in Eq.\,(\ref{eq:F_homo}), we obtain the desired function:
\begin{align}\label{eq:rough_homotopy}
    \tilde F(\theta,\phi) &= 
    \begin{pmatrix}
        x(\theta,\phi) + iz(\theta,\phi) && y(\theta,\phi) \\
        -y(\theta,\phi) && x(\theta,\phi) - iz(\theta,\phi)
    \end{pmatrix}, \\
    x(\theta,\phi) &= \cos(2\phi)\sin^2(\theta) + \cos^2(\theta), \\
    y(\theta,\phi) &= \sin(2\phi)\sin(\theta), \\
    z(\theta,\phi) &= -\sin^2(\phi)\sin(2\theta).
\end{align}
Thus, the two vector fields $[v_1,v_2]$ given by $[v_{1+},v_{2+}]$ on $D^2_+$ and $\tilde F(\theta,\phi)[v_{1-},v_{2-}]$ on $D^2_-$ are independent and nonvanishing on all of $S^2$. While $[v_1,v_2]$ are continuous everywhere, they are not smooth at $\theta=\pi/2$, and thus form a continuous, piecewise smooth frame. This is sufficient to prove that $T^\Comp S^2$ is trivial as a topological vector bundle, but not as a smooth vector bundle.

%%%%%%%%%%%%%%% Smoothing the frame %%%%%%%%%%%%%%%%%%%%
\subsubsection{Smoothing the frame}
The last step is to modify the previous construction to form smooth vector fields $[\tilde v_1, \tilde v_2]$. This is done using a smooth step function as follows. Let $g\in C^{\infty}(\mathbb{R})$ be any smooth monotonic function such that $g(t\leq \pi/2)=\pi/2$, $g(t\geq \pi)=\pi$, and $g^{(n)}(\pi/2)=g^{(n)}(\pi)=0$ for all $n\geq 1$. For concreteness, we can define
\begin{equation}
    h(t)= \begin{cases}
        e^{-1/t} & t > 0 \\
        0 & t \leq 0,
    \end{cases}
\end{equation}
and then set
\begin{equation}
    g(\theta) = \frac{\pi}{2}\Big(1 + \frac{h(\theta - \pi/2)}{h(\theta - \pi/2) + h(\pi - \theta)} \Big).
\end{equation}
The desired smooth homotopy is obtained by replacing $\theta$ by $g(\theta)$ in Eq.\,(\ref{eq:rough_homotopy}):
\begin{equation}
    \tilde F_s(\theta,\phi) = \begin{pmatrix}
    x\big( g(\theta),\phi \big) + iz\big( g(\theta),\phi \big) && y\big( g(\theta),\phi \big) \\
    -y\big( g(\theta)\phi \big) && x\big( g(\theta),\phi \big) - iz\big(g(\theta),\phi \big)
    \end{pmatrix}
\end{equation}
Since $g$ is smooth, this replacement preserves smoothness for $\theta \neq \pi/2$. At $\theta = \pi/2$ we must show that
\begin{equation}\label{eq:equitorial_smoothing}
    \partial_{\theta}^{\alpha}\partial_{\phi}^{\beta}[v_{1+},v_{2+}]|_{\theta\rightarrow \pi/2^-} = \partial_{\theta}^{\alpha}\partial_{\phi}^{\beta}\tilde F_s[v_{1-},v_{2-}]|_{\theta\rightarrow \pi/2^+}.
\end{equation}
Since $\hat{\theta}$ and $\hat{\phi}$ are smooth at $\pi/2$, it is sufficient to show Eq. $(\ref{eq:equitorial_smoothing})$ holds when $v_{1,\pm}$ and $v_{2,\pm}$ are expressed as column vectors in the $(\hat{\theta}, \hat{\phi})$ basis. From Eqs.\,(\ref{eq:v1p}-\ref{eq:v2m}), all $\theta$ derivatives of the column vectors $v_{1\pm}$ and $v_{2\pm}$ vanish, so the LHS vanishes for $\alpha \geq 1$. Since also $g^{(\alpha)}(\pi/2)=0$ for $\alpha\geq1$, all $\theta$ derivatives of $\tilde F(\theta, \phi)$ vanish as $\theta\rightarrow \pi/2^+$, so the RHS is also 0 for $\alpha \geq 1$. In the $\alpha = 0$ case, all $\phi$ derivatives agree since
\begin{equation}
    [v_{1+},v_{2+}]_{\theta=\pi/2,\phi} = \tilde F_s[v_{1-},v_{2-}]\big|_{\theta=\pi/2,\phi}.
\end{equation}
This completes the proof that the two vector fields $[v_1,v_2]$ given by $[v_{1+},v_{2+}]$ on $D^2_+$ and $\tilde F_s(\theta,\phi)[v_{1-},v_{2-}]$ form a smooth global frame on $T^\Comp S^2$.

We can use $v_1$ and $v_2$ to write the trivialization of $\gamma$.
\begin{equation}
        \tilde v_{1,2}(\boldsymbol{k}) = v_{1,2} \Big(\frac{\boldsymbol{k}}{|\boldsymbol{k}|}\Big)
\end{equation}
are global nonvanishing sections of $\gamma$, and as such generate trivial line subbundles $\tau_1$ and $\tau_2$ of $\gamma$. By construction
\begin{equation}
    \gamma = \tau_1 \oplus \tau_2.
\end{equation}

\subsection{Vector Potential formulation}
We briefly outline the technical advantages of working with the electric field $\boldsymbol{\mathcal{E}}$ rather than the electromagnetic 4-potential $\mathcal{A}^\mu = (\mathcal{A}^0,\boldsymbol{\mathcal{A}})$ in the vector bundle formulation. Here, $\mathcal{A} = \mathcal{A}(t,\boldsymbol{x})$ and $A = A(k)$ refer to the potential in spacetime and its Fourier transform in 4-momentum space, respectively. We will discuss the two most commonly used gauges, the Lorentz gauge and the Coulomb gauge. In the Lorentz gauge, in which $\partial_\mu \mathcal{A}^\mu = 0$, the potential satisfies the wave equation $\partial_\nu \partial^\nu \mathcal{A}^\mu = 0$ \cite{Tong2006}. Fourier transforming gives $k_\mu A^\mu = 0$ and the dispersion relation $k_\nu k^\nu = 0$, \emph{i.e.}, $k$ is on the lightcone $\Lightcone$. Note that if $A$ were real, the former equation would imply that $(k,A)$ is in the tangent bundle of the lightcone $T\Lightcone$. Indeed, $(k, \epsilon) \in T\Lightcone(k)$ if and only if $(k + \epsilon) \in \Lightcone$ to order $\epsilon$, that is, if
\begin{equation}
    (k_\nu + \epsilon_\nu)(k^\nu + \epsilon ^\nu) = 2 k_\nu\epsilon^\nu = 0
\end{equation}
where we have neglected terms of order $O(\epsilon^2)$. Since $A$ is complex, the modes $(k,A)$ actually form the complexified tangent bundle of the lightcone $T^\Comp \Lightcone$. This is a rank-3 bundle despite there only being two physical wave modes, reflecting the fact that the Lorenz gauge is not complete, and has a remaining degree of redundancy \cite{Tong2006}. This is problematic because two linearly independent modes can represent the same physical wave, making it difficult to draw meaningful conclusions from the topology of this bundle. One can remove the residual gauge freedom by enforcing the Coulomb gauge condition $\boldsymbol{k} \cdot \boldsymbol{A} = 0$, which then implies $A^0 = 0$ \cite{Tong2006}. We then see that in the Coulomb gauge $\boldsymbol{A}$ and $\boldsymbol{E}$ are defined by the same equations, and thus form the same bundle $\gamma$. The analysis in the preceding sections thus also describes the topology of the vector potential bundle in the Coulomb gauge. However, a difficulty arises in discussing the Poincar\'{e} symmetry of Maxwell's equation. The Coulomb gauge is not Lorentz invariant \cite{Tong2006}, and generally a Lorentz transformation will map a mode $(k,A)\in \gamma$ out of the bundle $\gamma$. In fact, applying a Lorentz transformation to all $(k,A) \in \gamma$ will produce a new bundle, which will be isomorphic to $\gamma$. Working directly with this formulation is cumbersome since one must keep track of the different isomorphisms that result from each Lorentz transformation. We thus use the simpler alternative of working with the physically observable field $\boldsymbol{E}$ since Lorentz transformations must keep $(k,\boldsymbol{E}) \in \gamma$ on the bundle $\gamma$, as will be discussed at length in Section \ref{sec:BundleReps}. 

%%%%%%%%%%%%%%%% R- and L- Subbundles %%%%%%%%%%%%%%%%%%%%%%%%%
\section{Topology of circularly polarized subbundles and the nonexistence of linearly polarized subbundles}
\label{sec:TopologyRL}
We have shown that the total photon bundle is trivial, possessing global nonvanishing sections. This decomposes $\gamma$ into two trivial subbundles $\tau_{1,2}$. However, because the base manifold $\mathcal{L}_+$ is not contractible, $\gamma$ can possess topologically nontrivial subbundles. Indeed, we will show that $\gamma$ also decomposes into $R$- and $L$-circularly polarized subbundles which are nontrivial. This decomposition is important because, unlike the trivial decomposition, it is Lorentz invariant, and therefore plays a special role in particle physics.

We discuss three important types of photon polarizations. There are the trivial polarizations constructed in the previous sections. In physics, one frequently encounters linear and circular polarizations as well. We will show that these three types of polarizations are significantly different in terms of their global properties. The trivial polarizations form subbundles $\tau_{1,2}$ with global nonvanishing sections. We will show that the $R$- and $L$-circularly polarized bundles also form well-defined subbundles of $\gamma$, however, they are topologically nontrivial, and therefore do not admit global nonvanishing sections. Linear polarizations, on the other hand, do not even form vector bundles, and are thus not globally well-defined. 

\begin{definition}[Circularly polarized subbundles]\label{def:circ_pol_sub}
    A vector $v \in \gamma(\boldsymbol{k})$ is defined to be $R$-circularly polarized if $v = \alpha (\boldsymbol{e}_1 + i\boldsymbol{e}_2)$ where $\alpha \in \Comp$ and $\boldsymbol{e}_1, \boldsymbol{e}_2$ are real unit vectors in $\gamma(\boldsymbol{k})$ such that $(\boldsymbol{e}_1, \boldsymbol{e}_2, \boldsymbol{k})$ form a right-handed orthogonal coordinate system. The $R$-circularly polarized vectors in $\gamma(\boldsymbol{k})$ form a vector space denoted $\gamma_+(\boldsymbol{k})$. Collected together, these vector spaces form the $R$-circularly polarized vector bundle $\gamma_+ = \amalg_{\boldsymbol{k}} \gamma_+(\boldsymbol{k})$. The $L$-circularly polarized bundle $\gamma_-$ is defined analogously, with $(\boldsymbol{e}_1,\boldsymbol{e}_2,\boldsymbol{k})$ left-handed.
\end{definition}
We show that $\gamma_+$ and $\gamma_-$ are in fact well-defined line subbundles of $\gamma$ via the subbundle criterion (\cite{Tu2017differential}, Theorem 20.4):
\begin{lemma}[Subbundle criterion] \label{lm:subbundle_criterion}
    Let $\pi:E\rightarrow M$ be a smooth rank-$r$ vector bundle and $F=\amalg_{p\in M} F_p$  a subset of $E$ such that for every $p\in M$, the set $F_p$ is a $k$-dimensional vector subspace of the fiber $E_p$. If for every $p\in M$, there exist a neighborhood $U$ of $p$ and $m\geq k$ smooth sections $s_1, \ldots, s_m$ of $E$ over $U$ that span $F_q$ at every point $q \in U$, then $F$ is a smooth subbundle of $E$. 
\end{lemma}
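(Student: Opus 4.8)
The plan is to verify the defining local-triviality condition for a subbundle directly: around each point $p\in M$ I must produce a trivialization of $E$ that carries $F$ onto a fixed $k$-dimensional coordinate subspace. The whole argument rests on one elementary but crucial fact, namely that pointwise linear independence of a finite collection of continuous sections is an \emph{open} condition, so a local frame can always be shrunk to a neighborhood on which independence persists.

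First I would fix $p\in M$ and invoke the hypothesis to obtain a neighborhood $U$ of $p$ together with smooth sections $s_1,\dots,s_m$ of $E$ over $U$ whose span equals $F_q$ for every $q\in U$; in particular each $s_i(q)\in F_q$. Because $s_1(p),\dots,s_m(p)$ span the $k$-dimensional space $F_p$, I can extract $k$ of them that form a basis of $F_p$, and after relabeling I assume these are $s_1(p),\dots,s_k(p)$. Expressing $s_1,\dots,s_k$ in a local frame of $E$ yields a matrix of coefficient functions possessing some nonvanishing $k\times k$ minor at $p$; this minor is a smooth function of $q$, so it remains nonzero on a smaller neighborhood $U'\subseteq U$. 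On $U'$ the sections $s_1,\dots,s_k$ are therefore pointwise linearly independent, and being $k$ independent vectors inside the $k$-dimensional fiber $F_q$ they span $F_q$. Thus $(s_1,\dots,s_k)$ is a smooth frame for $F$ over $U'$.

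To convert this into a trivialization of $E$ adapted to $F$, I would complete the frame: choose sections $t_{k+1},\dots,t_r$ from any local frame of $E$ near $p$ whose values complete $s_1(p),\dots,s_k(p)$ to a basis of $E_p$, then shrink once more by the same open-minor argument so that $s_1,\dots,s_k,t_{k+1},\dots,t_r$ is a frame for $E$ over some $U''\ni p$. This adapted frame defines a local trivialization $\phi\colon E|_{U''}\to U''\times\mathbb{C}^r$ sending $\sum_{i\le k}a^i s_i+\sum_{j>k}b^j t_j$ to its coefficient vector, and by construction $\phi$ maps $F|_{U''}$ precisely onto $U''\times(\mathbb{C}^k\times\{0\})$. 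Since $p$ was arbitrary, these trivializations cover $M$ and exhibit $F$ as a smooth rank-$k$ subbundle of $E$.

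The main obstacle is bookkeeping rather than conceptual. The delicate point is that the particular $k$ sections forming a basis of $F_p$ need not continue to do so on all of $U$, so I must commit to a fixed choice valid near $p$ and then repeatedly shrink the neighborhood to preserve both the independence of the chosen $s_i$ and their extendability to a full frame of $E$. Every such shrinkage reduces to the openness of the nonvanishing-minor condition, so no genuine analysis is required; the care lies entirely in keeping the successive neighborhoods and relabelings consistent.
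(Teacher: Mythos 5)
Your proof is correct. The paper does not actually prove this lemma itself --- it quotes it as Theorem 20.4 of Tu's \emph{Differential Geometry} --- and your argument is essentially the standard one given there: extract $k$ of the $m$ sections that are independent at $p$, use the openness of a nonvanishing $k\times k$ minor to get a local frame for $F$, complete it to an adapted local frame of $E$, and read off the trivialization carrying $F$ onto a fixed coordinate subspace. The one hypothesis you rely on implicitly but correctly is that ``span $F_q$'' forces $s_i(q)\in F_q$, which you note; no gaps remain.
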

\begin{theorem}
    $\gamma_+$ and $\gamma_-$ are complex line subbundles of $\gamma$ such that $\gamma = \gamma_+ \oplus \gamma_-$. 
\end{theorem}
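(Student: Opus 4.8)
The plan is to verify the hypotheses of the subbundle criterion (Lemma \ref{lm:subbundle_criterion}) for each of $\gamma_+$ and $\gamma_-$ and then check the fiberwise splitting. The first thing I would establish is that each $\gamma_\pm(\boldsymbol{k})$ really is a well-defined one-dimensional complex subspace of $\gamma(\boldsymbol{k})$, independent of the auxiliary real frame $(\boldsymbol{e}_1,\boldsymbol{e}_2)$ appearing in Definition \ref{def:circ_pol_sub}. Any two right-handed orthonormal frames spanning the plane perpendicular to $\boldsymbol{k}$ differ by a rotation through some angle $\theta$ about $\hat{\boldsymbol{k}}$, and a direct computation gives $\boldsymbol{e}_1' + i\boldsymbol{e}_2' = e^{-i\theta}(\boldsymbol{e}_1 + i\boldsymbol{e}_2)$. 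Hence the complex line $\Comp(\boldsymbol{e}_1 + i\boldsymbol{e}_2)$ is independent of the frame, so $\gamma_+(\boldsymbol{k})$ is a genuine one-dimensional subspace; the same argument applied to a left-handed frame (equivalently, the replacement $\boldsymbol{e}_2 \to -\boldsymbol{e}_2$) shows $\gamma_-(\boldsymbol{k}) = \Comp(\boldsymbol{e}_1 - i\boldsymbol{e}_2)$.

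Next I would produce the smooth local spanning sections demanded by the criterion. Fix $\boldsymbol{k}_0$. Because $TS^2$ is locally trivial, on a neighborhood $U$ of $\hat{\boldsymbol{k}}_0$ I can choose a smooth right-handed orthonormal frame $(\boldsymbol{e}_1(\hat{\boldsymbol{k}}),\boldsymbol{e}_2(\hat{\boldsymbol{k}}))$ of the tangent plane to $S^2$ (for instance by applying Gram--Schmidt to the projections of two fixed coordinate directions and then fixing the orientation), and pull this back to a conical neighborhood of $\boldsymbol{k}_0$ in $\mathcal{L}_+$ via $\boldsymbol{k} \mapsto \hat{\boldsymbol{k}}$. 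The sections $s_\pm = \boldsymbol{e}_1 \pm i\boldsymbol{e}_2$ are then smooth, nonvanishing, and span $\gamma_\pm$ respectively at every point of $U$. Applying Lemma \ref{lm:subbundle_criterion} with $k=m=1$ shows that $\gamma_+$ and $\gamma_-$ are smooth complex line subbundles of $\gamma$.

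Finally I would check the splitting. At each $\boldsymbol{k}$, the vectors $s_+ = \boldsymbol{e}_1 + i\boldsymbol{e}_2$ and $s_- = \boldsymbol{e}_1 - i\boldsymbol{e}_2$ are linearly independent over $\Comp$: since $\boldsymbol{e}_1,\boldsymbol{e}_2$ are real and orthonormal they are $\Comp$-independent in $\Comp^3$, so a relation $\lambda s_+ + \mu s_- = 0$ forces $\lambda + \mu = 0$ and $\lambda - \mu = 0$, hence $\lambda = \mu = 0$. As $s_\pm$ lie in the rank-two fiber $\gamma(\boldsymbol{k})$, they form a basis, giving $\gamma(\boldsymbol{k}) = \gamma_+(\boldsymbol{k}) \oplus \gamma_-(\boldsymbol{k})$ fiberwise; since both summands are smooth subbundles this upgrades to the bundle isomorphism $\gamma = \gamma_+ \oplus \gamma_-$.

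The only genuinely delicate point is the well-definedness in the first paragraph: one must confirm that the fibers of Definition \ref{def:circ_pol_sub} do not secretly depend on the chosen frame and are honest complex lines rather than larger sets. Once the phase identity $\boldsymbol{e}_1' + i\boldsymbol{e}_2' = e^{-i\theta}(\boldsymbol{e}_1 + i\boldsymbol{e}_2)$ is in hand, everything else is routine. I would also remark that no global smooth frame $(\boldsymbol{e}_1,\boldsymbol{e}_2)$ exists over all of $S^2$ by the hairy ball theorem, so the spanning sections are necessarily only local; this is consistent with, and indeed anticipates, the later result that $\gamma_\pm$ are topologically nontrivial even though $\gamma$ itself is trivial.
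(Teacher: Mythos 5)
Your proof is correct and follows essentially the same route as the paper: local smooth right-handed orthonormal frames give spanning sections $\boldsymbol{e}_1 \pm i\boldsymbol{e}_2$, the rotation-by-$\theta$ phase identity shows these span the fibers independently of the chosen frame, the subbundle criterion with $k=m=1$ yields the line subbundles, and fiberwise linear independence gives the splitting. The extra care you take with well-definedness of the fibers is implicit in the paper's spanning computation, and the sign in your phase identity depends only on the rotation convention, so nothing of substance differs.
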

\begin{proof}
    We first prove that $\gamma_+$ is a subbundle of $\gamma$ using the subbundle criterion with $m=k=1$. For $\boldsymbol{k}_0 \in \pspace$, let $U$ be small ball centered around $\boldsymbol{k}_0$ which does not contain $\boldsymbol{0}$. $U$ is contractible, so there exists a smooth right-handed orthonormal frame $(\boldsymbol{f}_1,\boldsymbol{f}_2, \hat{\boldsymbol{k}})$ of $\gamma|_U$. $s=\boldsymbol{f}_1 + i \boldsymbol{f}_2$ is then a smooth section of $\gamma|_U$ consisting of right-circularly polarized vectors. To show that $s$ spans $\gamma_+(\boldsymbol{k})$ for each $\boldsymbol{k}\in U$, let $\alpha(\boldsymbol{e}_1 + i\boldsymbol{e}_2) \in \gamma(\boldsymbol{k})$ with $\alpha \in \Comp$. Since $(\boldsymbol{e}_1,\boldsymbol{e}_2, \hat{\boldsymbol{k}})$ and $(\boldsymbol{f}_1,\boldsymbol{f}_2, \hat{\boldsymbol{k}})$ are both right-handed, they are related by a rotation by some angle $\theta$ about $\boldsymbol{k}$:
    \begin{equation}
        [\boldsymbol{e}_1, \boldsymbol{e}_2] = [\boldsymbol{f}_1, \boldsymbol{f}_2]\begin{pmatrix}
            \cos \theta & \sin \theta \\
            -\sin \theta & \cos \theta
        \end{pmatrix}
    \end{equation}
    Then at $\boldsymbol{k}$ we have
    \begin{align}
    \begin{split}
        \alpha e^{i\theta}s &= \alpha e^{i\theta}(\boldsymbol{f}_1+i\boldsymbol{f}_2) \\
        &= \alpha[(\cos \theta \, \boldsymbol{f}_1 - \sin \theta\, \boldsymbol{f}_2) + i(\sin \theta\, \boldsymbol{f}_1 + \cos \theta\, \boldsymbol{f}_2)] \\
        &= \alpha(\boldsymbol{e}_1 + i\boldsymbol{e}_2),
    \end{split}
    \end{align}
    showing that $s$ spans $\gamma_+(\boldsymbol{k})$. Thus, $\gamma_+$ is a rank-$1$ subbundle of $\gamma$ by the subbundle criterion, as is $\gamma_-$ by an analogous argument. For fixed $\boldsymbol{k}$, $R$- and $L$-polarizations form a basis for all polarization states. Allowing $\boldsymbol{k}$ to vary then shows that $\gamma = \gamma_+ \oplus \gamma_-$.
\end{proof}

An intuitive way to see that the complex line bundles $\gamma_\pm$ are nontrivial is to consider the underlying rank-2 real vector bundles $\gamma_\pm^\Real$ obtained by forgetting the complex structure on $\gamma_\pm$. That is, $\gamma_\pm^\Real$ consists of the same base space and fibers as $\gamma_\pm$, but scalar multiplication is restricted to $\Real$. 
\begin{theorem}
    The real vector bundles $\gamma_\pm^\Real|_{S^2}$ and $TS^2$ are isomorphic, which implies that $\gamma_+$ and $\gamma_-$ are nontrivial.
\end{theorem}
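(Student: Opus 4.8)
The plan is to produce an explicit isomorphism of \emph{real} vector bundles $\Phi\colon\gamma_+^\Real|_{S^2}\to TS^2$ given by taking the real part of the embedded polarization vector, and then to read off nontriviality from the hairy ball theorem. First I would regard $\gamma_+|_{S^2}$ as a subbundle of the trivial bundle $S^2\times\Comp^3$ via the embedding $\boldsymbol{E}\in\Comp^3$ used to define the Hermitian structure, and identify the fiber $T_{\hat{\boldsymbol{k}}}S^2$ with the real plane $\{\boldsymbol{w}\in\Real^3:\boldsymbol{w}\cdot\hat{\boldsymbol{k}}=0\}$ inside $S^2\times\Real^3$. On each fiber I would then set $\Phi(v)=\mathrm{Re}(v)$. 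Because taking the real part is an intrinsic linear operation on $\Comp^3$, $\Phi$ is manifestly independent of any local choice of frame $(\boldsymbol{e}_1,\boldsymbol{e}_2,\hat{\boldsymbol{k}})$, is smooth as the restriction of the smooth linear map $S^2\times\Comp^3\to S^2\times\Real^3$, covers the identity on $S^2$, and is $\Real$-linear on fibers (though not $\Comp$-linear, which is exactly why it is adapted to the \emph{underlying real} bundle).

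Next I would verify that $\Phi$ lands in $TS^2$ and is a fiberwise isomorphism. Since any $v\in\gamma(\boldsymbol{k})$ satisfies $v\cdot\boldsymbol{k}=0$ and $\boldsymbol{k}$ is real, $\mathrm{Re}(v)\cdot\boldsymbol{k}=\mathrm{Re}(v\cdot\boldsymbol{k})=0$, so $\Phi(v)\in T_{\hat{\boldsymbol{k}}}S^2$. Writing $v=\alpha(\boldsymbol{e}_1+i\boldsymbol{e}_2)$ with $\alpha=a+ib$ gives $\Phi(v)=a\boldsymbol{e}_1-b\boldsymbol{e}_2$; as $\boldsymbol{e}_1,\boldsymbol{e}_2$ are real and orthonormal, $\Phi(v)=0$ forces $a=b=0$, hence $v=0$, so $\Phi$ is injective on each fiber. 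Both $\gamma_+^\Real(\hat{\boldsymbol{k}})$ and $T_{\hat{\boldsymbol{k}}}S^2$ are two-dimensional real vector spaces, so a fiberwise injective bundle map is automatically a fiberwise isomorphism, and $\Phi$ is a bundle isomorphism. The identical construction applies to $\gamma_-$, the only change being the handedness of the frame, which is immaterial since we compare unoriented real bundles.

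Finally I would deduce nontriviality. A trivial complex line bundle has trivial underlying real rank-$2$ bundle; equivalently, a nonvanishing section $s$ of $\gamma_+|_{S^2}$ would produce two pointwise $\Real$-independent sections $s,\,is$ of $\gamma_+^\Real|_{S^2}$. Transporting these through $\Phi$ would yield two pointwise independent vector fields on $S^2$, contradicting the hairy ball theorem, which already asserts that $TS^2$ has no nonvanishing section. Hence $\gamma_\pm^\Real|_{S^2}$, and therefore $\gamma_\pm|_{S^2}$, is nontrivial, and by the bijective correspondence between rank-$r$ bundles over $\pspace$ and over $S^2$ under restriction and pullback (the deformation retraction underlying Lemma \ref{lm:restriction_to_sphere}), this nontriviality transfers to $\gamma_\pm$ itself.

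I expect the only delicate points to be confirming that the naive real-part map is genuinely well-defined and fiberwise bijective — in particular that it does not collapse the fiber, which holds precisely because multiplication by the complex scalar $\alpha$ rotates the vector $\boldsymbol{e}_1+i\boldsymbol{e}_2$ rather than rescaling a fixed real direction — and making the implication ``nontrivial underlying real bundle $\Rightarrow$ nontrivial complex bundle'' airtight. The remaining verifications are routine.
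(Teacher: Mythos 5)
Your proof is correct and takes essentially the same approach as the paper: the paper constructs the isomorphism in the opposite direction, via $h_\pm(\boldsymbol{k},v)=\bigl(\boldsymbol{k},\,v\pm iR_{\boldsymbol{k}}(\tfrac{\pi}{2})v\bigr)$ from $TS^2$ to $\gamma_\pm^\Real|_{S^2}$, and your real-part map $\Phi$ is precisely its inverse, so the two arguments differ only in which direction the fiberwise bijection is exhibited (your injectivity-plus-dimension-count in place of the paper's explicit surjectivity check). The concluding appeal to the hairy ball theorem is the same in both.
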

\begin{proof}
    Let $(\boldsymbol{k},v) \in TS^2$ and let $R_{\boldsymbol{k}}(\frac{\pi}{2})$ denote a $\pi/2$ rotation about $\boldsymbol{k}$ in the positive sense. Define the functions $h_\pm: TS^2 \rightarrow \gamma_\pm^\Real|_{S^2}$ by
    \begin{equation}
        h_\pm(\boldsymbol{k},v) = \Big(\boldsymbol{k}, v \pm iR_{\boldsymbol{k}}\Big(\frac{\pi}{2}\Big)v\Big).
    \end{equation}
    $h_\pm$ is a clearly a smooth, injective bundle map of real vector bundles over $S^2$. To show that it is also surjective, and thus an isomorphism, note that by Definition \ref{def:circ_pol_sub} any vector  $u_\pm \in \gamma_\pm^\Real|_{S^2}(\boldsymbol{k})$ can be expressed as 
    \begin{equation}
        u_\pm = \alpha\Big(w \pm iR_{\boldsymbol{k}}\Big(\frac{\pi}{2}\Big)w \Big)
    \end{equation}
    for some $w \in TS^2(\boldsymbol{k})$ and $\alpha = (a+ib) \in \mathbb{C}$. This can be rewritten as
    \begin{align}
        u_\pm &= w' \pm iR_{\boldsymbol{k}}w' = h_\pm(w')\\
        w' &\doteq aw \mp b R_{\boldsymbol{k}}\Big(\frac{\pi}{2}\Big) w \in TS^2(\boldsymbol{k}),
    \end{align}
    showing that $h_\pm$ is surjective, and thus $\gamma_\pm^\Real|_{S^2} \cong TS^2$.

    If $\gamma_\pm$ were trivial there would exist a nonvanishing section of $s$ of $\gamma_\pm|_{S^2}$, which can then be considered as a nonvanishing section of $\gamma_\pm^\Real|_{S^2} \cong TS^2$ which contradicts the hairy ball theorem.
\end{proof}

While the previous theorem shows that $\gamma_\pm$ are nontrivial, it remains to classify them as complex line bundles. As with $\gamma$, it is easier to work with the restrictions $\gamma_{\pm} |_{S^2}$. $\gamma_+|_{S^2}$ and $\gamma_-|_{S^2}$ are quite similar as they are conjugate bundles $\gamma_+|_{S^2} = \bar{\gamma}_-|_{S^2}$. Furthermore, they are isomorphic when considered as real vector bundles by the previous theorem. However, they are not actually isomorphic as complex vector bundles, as the complex structure detects their different orientations. To see this, note that by Lemma \ref{lm:complexification_lemma}, the first Chern number of a line bundle and its conjugate differ by a sign:
\begin{equation}\label{eq:conjugate_chern}
    C_1(\gamma_+|_{S^2}) = -C_1(\gamma_-|_{S^2}).
\end{equation}
Complex line bundles over $S^2$ are completely classified by their first Chern number $C_1$ \cite{McDuff2017}, so $C_1(\gamma_\pm|_{S^2}) \neq 0$ as these are nontrivial bundles. Eqn. (\ref{eq:conjugate_chern}) then implies these bundles are not isomorphic. The next result fully classifies $\gamma_\pm|_{S^2}$. By the correspondence of bundles over $S^2$ and $\pspace$, this also classifies $\gamma_\pm$.

\begin{theorem}
     $C_1(\gamma_\pm|_{S_2}) = \mp 2$.
\end{theorem}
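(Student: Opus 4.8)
The plan is to compute the Chern numbers via the clutching construction, reusing the hemisphere frames $[v_{1\pm},v_{2\pm}]$ of $T^\Comp S^2\cong\gamma|_{S^2}$ built in the previous subsection. Complex line bundles over $S^2$ are classified by their first Chern number, and $C_1$ of such a line bundle equals the winding number (degree) of its $\mathrm{U}(1)$-valued clutching function $S^1\rightarrow\mathrm{U}(1)\cong S^1$. So it suffices to exhibit clutching functions for $\gamma_\pm|_{S^2}$ and count their windings.

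First I would produce a frame for each of $\gamma_+|_{S^2}$ and $\gamma_-|_{S^2}$ on the two hemispheres. On $D^2_\pm$ the real orthonormal frames $[v_{1\pm},v_{2\pm}]$ satisfy the right-handedness condition of Definition \ref{def:circ_pol_sub} (with respect to $\hat{\boldsymbol{k}}$), so $s^\pm_+\doteq v_{1\pm}+iv_{2\pm}$ and $s^\pm_-\doteq v_{1\pm}-iv_{2\pm}$ are nonvanishing local sections spanning $\gamma_+$ and $\gamma_-$, respectively, over each hemisphere. Next I would restrict to the equator and substitute the already-established clutching relation $[v_{1+},v_{2+}]=f(\phi)[v_{1-},v_{2-}]$ with $f(\phi)$ the $-2\phi$ rotation. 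Collecting terms yields $s^+_+=e^{-2i\phi}s^-_+$ and $s^+_-=e^{2i\phi}s^-_-$, so the $\mathrm{U}(1)$ clutching functions are $g_\pm(\phi)=e^{\mp 2i\phi}$, with winding numbers $\mp2$. Hence $C_1(\gamma_\pm|_{S^2})=\mp2$.

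I would include two independent cross-checks. Abstractly, the rank-$2$ clutching matrix $f=R(-2\phi)$ is diagonalized in the basis $(s_+,s_-)$ with eigenvalues $e^{\mp 2i\phi}$, so $\gamma|_{S^2}=\gamma_+\oplus\gamma_-$ has total first Chern number $-2+2=0$, consistent with Theorem \ref{thm:gamma_trivial} and with Eq.\,(\ref{eq:conjugate_chern}). Alternatively, via the Euler number: under the isomorphism $h_\pm\colon TS^2\rightarrow\gamma_\pm^\Real|_{S^2}$ of the preceding theorem, multiplication by $i$ pulls back to rotation by $\mp\pi/2$ on the tangent plane, so the complex orientation on $\gamma_+^\Real$ is opposite to, while that on $\gamma_-^\Real$ agrees with, the standard orientation of $TS^2$. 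Since the first Chern number of a complex line bundle equals the Euler number of its underlying oriented real plane bundle and $e(TS^2)=\chi(S^2)=2$, this again gives $C_1(\gamma_\pm|_{S^2})=\mp2$.

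The main obstacle is sign bookkeeping: the orientation of $S^2$, the sense of the rotation $R_{\boldsymbol{k}}(\pi/2)$, and the convention relating the clutching winding number to the sign of $C_1$ must all be fixed coherently. I would pin the convention down once (for instance, declaring that the clutching function $z\mapsto z$ produces the bundle with $C_1=+1$, i.e.\ $\mathcal{O}(1)$ over $\mathbb{CP}^1$), after which every sign is forced. The consistency conditions $C_1(\gamma_+|_{S^2})=-C_1(\gamma_-|_{S^2})$ from Eq.\,(\ref{eq:conjugate_chern}) and $C_1(\gamma_+|_{S^2})+C_1(\gamma_-|_{S^2})=0$ from triviality of $\gamma$ then serve as a guard against a sign error, leaving only the \emph{magnitude} $2$ to be genuinely computed, which the winding-number (equivalently Euler-number) count supplies.
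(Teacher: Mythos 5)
Your proposal is correct in substance but takes a genuinely different route from the paper. The paper computes $C_1(\gamma_\pm|_{S^2})$ geometrically: it picks the local frames $e_\pm=\tfrac{1}{\sqrt2}(\hat\theta\pm i\hat\phi)$ away from the poles, evaluates the Berry connection $\omega_\pm=\mp i\cos\theta\,d\phi$ and curvature $\Omega_\pm=\pm i\,dA$, and integrates $\tfrac{i}{2\pi}\int_{S^2}\Omega_\pm=\mp2$. You instead extract the answer purely topologically from data the paper has already built: the hemisphere frames $[v_{1\pm},v_{2\pm}]$ and the $-2\phi$ clutching rotation diagonalize in the circular basis, giving $\mathrm{U}(1)$ clutching functions of winding $\pm2$, and your Euler-number argument ($C_1$ of a line bundle equals the Euler number of the underlying oriented plane bundle, combined with the isomorphism $h_\pm\colon TS^2\to\gamma_\pm^{\Real}|_{S^2}$ and $e(TS^2)=\chi(S^2)=2$) is by itself a complete and convention-transparent proof, since you correctly observe that multiplication by $i$ pulls back to $R_{\boldsymbol{k}}(\mp\pi/2)$ and hence reverses the orientation on $\gamma_+^{\Real}$ but not on $\gamma_-^{\Real}$. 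What each approach buys: the paper's curvature computation produces the connection and curvature forms themselves, which it reuses in the discussion of the Berry phase and the spin Chern number; your route needs no connection at all, ties the answer directly to the clutching data of Section II and to the hairy ball theorem via $\chi(S^2)=2$, and makes manifest why the two helicities get opposite signs (opposite complex orientations on isomorphic real bundles).

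One concrete caution on the clutching branch of your argument: with the paper's explicit frames (\ref{eq:v1p})--(\ref{eq:v2m}), the active rotation taking $v_{j-}$ to $v_{j+}$ on the equator is $R(-2\phi)$ in the right-handed sense about $\hat{\boldsymbol{k}}$, and a right-handed rotation by $\alpha$ acts on $\boldsymbol{e}_1+i\boldsymbol{e}_2$ as multiplication by $e^{-i\alpha}$; so the equatorial relation is $s^+_+=e^{+2i\phi}s^-_+$, not $e^{-2i\phi}s^-_+$. This does not change the winding magnitude, and since the map from winding number to the sign of $C_1$ carries its own convention (which chart is expressed in terms of which, and the orientation of the equator), the consistency checks you list cannot by themselves distinguish $C_1(\gamma_+)=-2$ from $+2$. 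Your Euler-number computation does pin the sign down unambiguously, so you should treat that as the primary argument for the sign and the clutching count as the argument for the magnitude.
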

\begin{proof}
We calculate the Chern numbers using the Berry curvature. The Hermitian structure along with the identification of the electric fields of the fibers with vectors in $\mathbb{C}^3$ gives a natural Berry connection. For any local frame $e_\pm(\boldsymbol{k})$ of $\gamma_\pm|_{S^2}$ in some neighborhood $U$, the Berry connection form is \cite{Frankel2011}
\begin{equation}
    \omega_\pm = \langle e_\pm, de_\pm \rangle.
\end{equation}
On the neighborhood $U$ consisting of all unit $\hat{\boldsymbol{k}}$ not on the $z$-axis, we can choose the frames for $\gamma_\pm|_{S^2}$ to be
\begin{equation}\label{eq:polarization_basis}
    e_\pm = \frac{1}{\sqrt{2}}(\hat \theta \pm i \hat \phi).
\end{equation}
Then,
\begin{equation}\label{eq:berry_connection_RL}
    \omega_\pm = \mp i \cos \theta \, d\phi.
\end{equation}
and the curvature form is given on $U$ by
\begin{equation}\label{eq:berry_curvature_RL}
    \Omega_\pm = d\omega_\pm = \pm i \sin \theta \, d\theta \wedge d\phi = \pm i\, dA
\end{equation}
where $dA$ is the Euclidean area form on $S^2$. Unlike the connection forms, the curvature forms are always globally defined. Since $\Omega_\pm = \pm i dA$ on a dense neighborhood $U$, this relation holds on all of $S^2$. The first Chern number is then given by \cite{Tu2017differential}
\begin{equation}
    C_1(\gamma_\pm|_{S^2}) = \frac{i}{2\pi}\int_{S^2} \Omega_\pm = \mp 2.
\end{equation}
Since $C_1$ is nonzero, $\gamma_\pm$ are nontrivial.
\end{proof}
One consequence of the nontriviality of $\gamma_\pm$ can be found in the momentum space representation of the photon wavefunction \cite{Bialynicki1996,Bialynicki-Birula2011}. In this representation, the wavefunction of a photon is written as a two component wavefunction, with an $R$- and $L$-part
\begin{equation}
    \boldsymbol{f}(\boldsymbol{k}) = \begin{pmatrix}
        \boldsymbol{f}_+(\boldsymbol{k}) \\
        \boldsymbol{f}_-(\boldsymbol{k})
    \end{pmatrix}
     = \begin{pmatrix}
         f_+(\boldsymbol{k})e_+(\boldsymbol{k}) \\
         f_-(\boldsymbol{k})e_-(\boldsymbol{k})
     \end{pmatrix}
\end{equation}
where $e_+(\boldsymbol{k})$ is a choice of unit $R$-polarization vector and $e_-(\boldsymbol{k}) \doteq e_+^*(\boldsymbol{k})$ is a corresponding $L$-polarization. Frequently, the vectors $e_\pm$ are left implicit, and the wavefunction is written in terms of the two scalar functions $f_\pm$ as
\begin{equation}
    \boldsymbol{f} = \begin{pmatrix}
        f_+(\boldsymbol{k}) \\
        f_-(\boldsymbol{k})
    \end{pmatrix}.
\end{equation}
However, the $e_\pm$ are sections of the nontrivial bundles $\gamma_\pm$, and thus no continuous $e_\pm$ exist! If $S_a$ denotes the sphere of radius $a>0$ in momentum space $\pspace$, then $e_+$ must be discontinuous for at least one point $\boldsymbol{k}_0(a)$ on each $S_a$. For $\boldsymbol{f}(\boldsymbol{k})$ to be continuous at $\boldsymbol{k}_0(a)$, the component functions must satisfy $f_\pm(\boldsymbol{k}_0(a)) = 0$. Thus, the photon wavefunction components $f_\pm$ are not free scalar functions. They must obey the topological constraint of having a zero on every closed surface enclosing $\boldsymbol{k}=0$. 

Our calculation of $C_1(\gamma_\pm)$ is very similar to one done by Bliokh \emph{et al.}  \cite{Bliokh2015}, although they use a different sign convention for the Berry connection and Chern number, resulting in an overall sign difference. Our approach differs in that it emphasizes the underlying vector bundle structures, and in particular, shows that $\gamma_\pm$ are in fact well-defined bundles. This is important, for example, because by analogy one might assume that linearly polarized photons also possess some topological structure. However, we show that unlike the circularly polarized subbundles $\gamma_\pm$ or the trivial subbundles $\tau_{1,2}$, there are no linearly polarized subbundles. 
\begin{definition}[Linearly polarized subbundle]
    Linearly polarized vectors in $\gamma$ are those of the form $\alpha e$ where $\alpha \in \Comp$ and $e \in \mathbb{R}^3$; a rank 1 subbundle is linearly polarized if it only contains linearly polarized vectors.
\end{definition}
\begin{theorem}\label{thm:no_linear_subbundles}
    There are no linearly polarized subbundles of $\gamma$. 
\end{theorem}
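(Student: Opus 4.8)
The plan is to argue by contradiction and to reduce the statement to the hairy ball theorem, mirroring the strategy already used to establish the nontriviality of $\gamma_\pm$. Suppose a linearly polarized rank-$1$ subbundle $\lambda \seq \gamma$ existed. As with the other bundles, it suffices to work with the restriction $\lambda|_{S^2} \seq \gamma|_{S^2} \cong T^\Comp S^2$ (Lemma \ref{lm:restriction_to_sphere}). The key structural observation is that a complex line in the fiber $T^\Comp_{\hat{\boldsymbol{k}}}S^2 = T_{\hat{\boldsymbol{k}}}S^2 \otimes_\Real \Comp$ consists \emph{entirely} of linearly polarized vectors if and only if it is the complexification $\Comp\, e(\hat{\boldsymbol{k}})$ of a single real direction $e(\hat{\boldsymbol{k}}) \in T_{\hat{\boldsymbol{k}}}S^2$. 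Thus a linearly polarized subbundle is secretly a real line field on $S^2$, and the obstruction will come from the topology of $S^2$.

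First I would pin down the pointwise characterization. Writing a fiber vector as $w = a + ib$ with $a,b \in \Real^3$, one checks that $w = \alpha e$ for some $\alpha \in \Comp$ and real $e$ precisely when $a$ and $b$ are parallel, equivalently when $w \times \bar{w} = -2i\,(a\times b) = 0$. Hence if every vector of the line $\lambda(\hat{\boldsymbol{k}})$ is linearly polarized, then $\lambda(\hat{\boldsymbol{k}}) = \Comp\, e(\hat{\boldsymbol{k}})$ for a real unit vector $e(\hat{\boldsymbol{k}})$, unique up to sign, so the real axis $\ell(\hat{\boldsymbol{k}}) \doteq \Real\, e(\hat{\boldsymbol{k}}) \seq T_{\hat{\boldsymbol{k}}}S^2$ is canonically determined. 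To see that these axes assemble into a smooth real line subbundle $\ell \seq TS^2$, I would apply the subbundle criterion (Lemma \ref{lm:subbundle_criterion}): on a contractible neighborhood $U$ choose a smooth nonvanishing section $s$ of $\lambda$; then the bilinear (non-Hermitian) product $s\cdot s = \sum_i s_i^2 = \alpha^2$ is nonvanishing, so it admits a smooth square root $\alpha$ on $U$, and $e_U \doteq s/\alpha$ is a smooth real nonvanishing local section spanning $\ell$.

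Finally I would derive the contradiction. A smooth real line subbundle $\ell \seq TS^2$ is a real line bundle over $S^2$; since real line bundles are classified by $H^1(S^2;\mathbb{Z}_2) = 0$ (equivalently, since $S^2$ is simply connected the sign ambiguity in $e$ carries no monodromy), $\ell$ is orientable and hence trivial. A global trivialization of $\ell$ is then a nonvanishing vector field on $S^2$ lying in $TS^2$, contradicting the hairy ball theorem. I expect the main obstacle to be the smoothness and globality bookkeeping of this middle step rather than the endpoints: because the real axis $e$ is only defined up to a sign, $\ell$ is a priori a line field rather than an honest vector field, so one must invoke the triviality of real line bundles over $S^2$ (or the simple connectivity of $S^2$) to upgrade the local sections $e_U$ into a genuine global nonvanishing vector field before the hairy ball theorem applies. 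The pointwise characterization via $w\times\bar w$ and the local square-root construction of $e_U$ are routine by comparison.
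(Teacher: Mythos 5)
Your proposal is correct and follows essentially the same route as the paper: reduce the hypothetical linearly polarized line subbundle to a real line subbundle of $TS^2$, invoke the triviality of real line bundles over the simply connected $S^2$ to extract a global nonvanishing section, and contradict the hairy ball theorem. Your additional details (the pointwise characterization via $w\times\bar w$ and the local square-root construction showing the real line field is smooth) are correct and in fact fill in a step the paper's proof leaves implicit.
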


To prove Theorem \ref{thm:no_linear_subbundles} we use the following known result.
\begin{lemma}\label{lm:no_line_bundles}
    Any real line bundle over a simply connected space is trivial. In particular, all real line bundles over $S^2$ are trivial.
\end{lemma}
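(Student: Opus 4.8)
The plan is to reduce the triviality question to a single cohomology computation via the standard classification of real line bundles, and then to give an independent geometric argument that exhibits the trivializing section explicitly. Real line bundles over a paracompact space $X$ are classified up to isomorphism by homotopy classes of maps into the classifying space $BO(1)$. Since the structure group $O(1) = \{\pm 1\} \cong \mathbb{Z}/2$ is discrete, $BO(1) = B\mathbb{Z}/2 = \mathbb{RP}^\infty$ is an Eilenberg--MacLane space $K(\mathbb{Z}/2,1)$, so
\begin{equation*}
    \mathrm{Vect}^1_\Real(X) \cong [X,\mathbb{RP}^\infty] \cong H^1(X;\mathbb{Z}/2),
\end{equation*}
with the trivial bundle corresponding to $0$; under this correspondence a line bundle maps to its first Stiefel--Whitney class $w_1$. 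Hence a real line bundle is trivial if and only if $w_1=0$, and it suffices to prove $H^1(X;\mathbb{Z}/2)=0$ whenever $X$ is simply connected.

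For that computation I would invoke Hurewicz and universal coefficients. Simple connectedness gives $\pi_1(X)=0$, so the Hurewicz theorem yields $H_1(X;\mathbb{Z})=0$. The universal coefficient theorem then gives
\begin{equation*}
    H^1(X;\mathbb{Z}/2) \cong \mathrm{Hom}\big(H_1(X;\mathbb{Z}),\mathbb{Z}/2\big) \oplus \mathrm{Ext}\big(H_0(X;\mathbb{Z}),\mathbb{Z}/2\big).
\end{equation*}
The first summand vanishes because $H_1(X;\mathbb{Z})=0$, and the second vanishes because $H_0(X;\mathbb{Z})\cong\mathbb{Z}$ is free. Thus $H^1(X;\mathbb{Z}/2)=0$, forcing $w_1=0$ for every real line bundle over $X$, which is therefore trivial. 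The special case of $S^2$ is then immediate, since $S^2$ is simply connected (and indeed $H^1(S^2;\mathbb{Z}/2)=0$ can be read directly off its cell structure).

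As a more hands-on alternative that produces the section explicitly, I would use the orientation double cover. Equipping the bundle $L\to X$ with a fiber metric (possible by paracompactness) makes its unit-sphere bundle $S(L)\to X$ a two-sheeted covering space. Two-fold covers of $X$ (assumed connected and well-behaved enough for covering-space theory to apply) are classified by $\mathrm{Hom}(\pi_1(X),\mathbb{Z}/2)$, so when $\pi_1(X)=0$ only the trivial homomorphism survives and $S(L)\cong X\sqcup X$ is disconnected. Selecting one sheet yields a global nonvanishing unit section $s$ of $L$, and the map $(x,t)\mapsto t\,s(x)$ is a bundle isomorphism $X\times\Real \xrightarrow{\sim} L$.

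The main obstacle is not any individual computation---both the Hurewicz/universal-coefficient step and the covering-space classification are routine---but rather the structural input that $w_1$ is a \emph{complete} invariant of real line bundles, equivalently that $\mathbb{RP}^\infty$ classifies them. This rests on representability of the line-bundle functor together with the identification $BO(1)\simeq K(\mathbb{Z}/2,1)$, which is standard but constitutes the real content behind the lemma; once it is granted, triviality collapses to the vanishing of a single cohomology group, and the double-cover argument provides a self-contained check that bypasses the classifying-space machinery entirely.
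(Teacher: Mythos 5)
Your proof is correct, and it is considerably more self-contained than the paper's, which disposes of the lemma in two sentences by citation: a real line bundle is trivial if and only if it is orientable, and every real line bundle over a simply connected space is orientable (Bott--Tu, Prop.\ 11.5). The underlying invariant is the same in both treatments --- orientability of a line bundle is exactly the vanishing of $w_1$ --- so your classifying-space computation $\mathrm{Vect}^1_{\Real}(X)\cong [X,\mathbb{RP}^\infty]\cong H^1(X;\mathbb{Z}/2)$ followed by Hurewicz and universal coefficients amounts to proving the two facts the paper outsources to references, rather than invoking a different theorem. What your write-up buys is, first, an actual argument in place of citations, at the modest cost of assuming $X$ paracompact (harmless, and implicitly needed by the cited results as well), and second, in the double-cover variant, a genuinely elementary route that bypasses the classifying-space machinery and exhibits the trivializing section explicitly: the unit-sphere bundle $S(L)$ is a two-sheeted cover, simple connectedness forces every such cover to be $X\sqcup X$, and a choice of sheet is a global nonvanishing unit section. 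That second argument is closest in spirit to how the orientability statement in Bott--Tu is actually proved, and for this paper's purposes (the lemma is applied only to $S^2$) it is arguably the cleanest standalone proof. Two small points of care: the covering-space classification requires $X$ connected, locally path-connected, and semi-locally simply connected, which you correctly flag; and in the Hurewicz step only the dimension-one statement $H_1(X;\mathbb{Z})\cong \pi_1(X)^{\mathrm{ab}}$ is needed, which holds for every path-connected space, so no further connectivity hypotheses enter.
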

\begin{proof}
    A real line bundle is trivial if and only if it is orientable \cite{Encyc2006}. The result then follows from the fact that every real line bundle over a simply connected space is orientable (\cite{Bott2013}, Prop. 11.5).
\end{proof}

We can now prove Theorem \ref{thm:no_linear_subbundles} via a contradiction:
\begin{proof}[Proof of Theorem \ref{thm:no_linear_subbundles}]
    Suppose $\gamma_{lin}$ is a linearly polarized line subbundle of $\gamma$. Then $\tilde \gamma_{lin} \doteq  \gamma_{lin}|_{S^2}$ is a subbundle of $\gamma|_{S^2} \cong T^\Comp S^2$. Each fiber $\tilde \gamma_{lin}(\hat{\boldsymbol{k}})$ contains a $1$D real subspace $\gamma_\mathbb{R}(\boldsymbol{k})$. The collection of these subspaces $\gamma_\mathbb{R} = \amalg_{\hat{\boldsymbol{k}}}\gamma_\mathbb{R}(\hat{\boldsymbol{k}})$ forms a real line bundle over $S^2$ which is a subbundle of $TS^2$. By Corollary \ref{lm:no_line_bundles}, $\gamma_\mathbb{R}$ is trivial and must therefore have a nonvanishing section $s$. However, $s$ would also then be a nonvanishing section of $TS^2$, violating the hairy ball theorem.
\end{proof}

Although the the total photon bundle $\gamma$ is topologically trivial, the $R$- and $L$-circularly polarized photons form nontrivial subbundles. This nontriviality embedded within $\gamma$ can be detected 
in the associated frame bundle $\mathcal{F}(\gamma)$ of $\gamma$ by reducing the structure group. $\mathcal{F}(\gamma)$ is a principal bundle over $\pspace$ where the local sections consist of local frames of $\gamma$. The most general structure group of $\mathcal{F}(\gamma)$ is $\mathrm{GL}(2,\Comp)$. By considering only orthonormal frames, the structure group can be reduced to $\mathrm{U}(2)$. We showed in the previous section that $[v_1,v_2]$ forms a global orthonormal frame of $\gamma$, so the $\mathrm{U}(2)$-frame bundle is trivial. However, we can further reduce the structure group by considering local frames of the form $[w_+,w_-]$ where $w_\pm$ are local orthonormal sections of $\gamma_\pm$. The linear transformations preserving such frames have the form
\begin{equation}
    \begin{pmatrix}
        e^{i\theta_1} & 0 \\
        0 & e^{i\theta_2}
    \end{pmatrix}
\end{equation}
for real $\theta_1, \theta_2$, that is, the structure group is reduced to $\mathrm{U}(1) \times \mathrm{U}(1)$. There are no global sections of this form, so the $\mathrm{U}(1) \times \mathrm{U}(1)$-frame bundle is nontrivial. 

We have shown that $\gamma$ decomposes into both the trivial bundles $\tau_{1,2}$ and the nontrivial subbundles $\gamma_\pm$. These are not, however, the only possible decompositions. In fact, $\gamma$ has an infinite number of nontrivial decompositions by the following observation. The complex line bundles over $S^2$ are labeled by the first Chern number \cite{McDuff2017}. By the correspondence of bundles over $S^2$ and $\pspace$, the same is true of line bundles over $\pspace$. Denote the complex line bundle over $\pspace$ with Chern number $j \in \mathbb{Z}$ by $\ell_j$.
\begin{theorem}\label{thm:infinte_subbundles}
    $\gamma \cong \ell_j \oplus \ell_{-j}$ for every $j \in \mathbb{Z}$. In particular, $\gamma$ has a subbundle isomorphic to $\ell_j$ for every $j$.
\end{theorem}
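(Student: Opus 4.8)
The plan is to reduce everything to bundles over $S^2$ via the bijective correspondence between rank-$r$ bundles over $\pspace$ and over $S^2$, and then to exhibit $\ell_j \oplus \ell_{-j}$ as a \emph{trivial} rank-$2$ bundle, which by Theorem \ref{thm:gamma_trivial} places it in the same isomorphism class as $\gamma$. Concretely, I would define $\ell_j$ over $\pspace$ as the pullback along $r:\pspace \rightarrow S^2$ of the unique line bundle over $S^2$ with first Chern number $j$, so that it suffices to prove $(\ell_j \oplus \ell_{-j})|_{S^2} \cong \gamma|_{S^2}$ and then pull back.

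The cleanest route uses the clutching machinery already developed. Line bundles over $S^2$ are classified by $\pi_1(\mathrm{U}(1)) \cong \mathbb{Z}$, and $\ell_j|_{S^2}$ is represented by a clutching function $g_j:S^1\rightarrow \mathrm{U}(1)$ of winding number $j$, say $g_j(\phi)=e^{ij\phi}$. Since the clutching function of a Whitney sum is the block-diagonal sum of the individual clutching functions, $(\ell_j \oplus \ell_{-j})|_{S^2}$ has clutching function
\begin{equation*}
    \phi \mapsto \begin{pmatrix} e^{ij\phi} & 0 \\ 0 & e^{-ij\phi} \end{pmatrix} \in \mathrm{U}(2).
\end{equation*}
This matrix has determinant $1$, hence factors through $\mathrm{SU}(2) \seq \mathrm{U}(2)$. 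By Corollary \ref{cor:cor_SU2}, $(\ell_j \oplus \ell_{-j})|_{S^2}$ is therefore trivial, and since $\gamma|_{S^2}$ is also trivial by Theorem \ref{thm:gamma_trivial}, the two are isomorphic. Pulling back along $r$ gives $\gamma \cong \ell_j \oplus \ell_{-j}$. The final statement about subbundles is then immediate: the inclusion of the first summand exhibits $\ell_j$ as a subbundle of $\gamma$ for each $j \in \mathbb{Z}$.

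As an alternative that avoids explicit clutching functions, I could instead compute first Chern numbers directly. By the Whitney sum formula $c_1(\ell_j \oplus \ell_{-j}) = c_1(\ell_j) + c_1(\ell_{-j}) = j - j = 0 = c_1(\gamma)$, and one then invokes that rank-$2$ complex bundles over the closed oriented surface $S^2$ are classified by their first Chern number. This last fact follows by equipping each bundle with the symplectic form given by the imaginary part of a Hermitian metric and applying Theorems \ref{thm:symp} and \ref{thm:equiv}, exactly as in the first proof of Theorem \ref{thm:gamma_trivial}.

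The only points requiring care, and thus the main (though mild) obstacle, are the two structural facts about clutching functions: that the clutching function of a Whitney sum is the block sum of the summands' clutching functions, and that a line bundle with winding-number-$j$ clutching function has Chern number $\pm j$. The sign convention in the latter is irrelevant here, since $\ell_j \oplus \ell_{-j}$ is symmetric under flipping the convention; whatever the sign, the determinant of the block-diagonal clutching function is identically $1$, which is all the argument needs.
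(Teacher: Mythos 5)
Your proposal is correct, and it actually contains the paper's own proof as its ``alternative'' route: the paper proves this theorem exactly by the Whitney sum formula $C_1(\ell_j\oplus\ell_{-j})=j-j=0$ together with the fact (established earlier via the symplectic classification, Theorems \ref{thm:symp} and \ref{thm:equiv}) that rank-2 complex bundles over $S^2$, and hence over $\pspace$, are classified by their first Chern number. Your primary argument via clutching functions is a genuinely different and equally valid route: representing $\ell_{\pm j}|_{S^2}$ by winding-number-$\pm j$ clutching functions, observing that the Whitney sum has block-diagonal clutching function $\mathrm{diag}(e^{ij\phi},e^{-ij\phi})$ with determinant $1$, and invoking Corollary \ref{cor:cor_SU2}. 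This has the advantage of being self-contained within the clutching machinery of Section \ref{sec:TopologyPhotonBundle} and of not needing the Chern-number classification of rank-2 bundles at all; it only needs simple connectivity of $\mathrm{SU}(2)$. Its mild cost is the two structural facts you flag --- that clutching functions of Whitney sums are block sums, and that the winding number of a line bundle's clutching function equals its Chern number up to a sign --- both standard, and the second is not even needed in full since, as you note, only the determinant condition matters. Both arguments are complete; the paper simply chose the Chern-number route because that classification had already been set up for the proof of Theorem \ref{thm:gamma_trivial}.
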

\begin{proof}
    By the Whitney product formula \cite{Tu2017differential}, the first Chern number is additive, meaning that if $E_1$ and $E_2$ are vector bundles, then
    \begin{equation}
        C_1(E_1 \oplus E_2) = C_1(E_1) + C_1(E_2).
    \end{equation}
    Thus,
    \begin{equation}
        C_1(\ell_j \oplus \ell_{-j}) = 0.
    \end{equation}
    As we previously argued, rank-2 vector bundles over $\pspace$ are also labeled by their first Chern number. Thus, $\ell_j \oplus \ell_{-j}$ is trivial and therefore isomorphic to $\gamma$.
\end{proof}
Despite the existence of these other nontrivial subbundles, only the circularly polarized subbundles $\gamma_\pm \cong \ell_{\mp2}$ are typically used in applications. From a practical standpoint this is easy to understand, as $\gamma_\pm$ have simple descriptions, while finding explicit expressions for subbundles of $\gamma$ isomorphic to $\ell_j$ is a nontrivial task. However, from a purely topological perspective, there is no reason to prefer the circularly polarized subbundles over the other $\ell_j$. We will show in the next section that $\gamma_\pm$ are special from a geometric perspective when we consider the role of Poincar\'{e} symmetry.

%%%%%%%%% Section 4: Classification of photons by vector bundle representations of the Poincar\'{e} group %%%%%%%%%%%%%
\section{Classification of photons and other massless particles by vector bundle representations of the Poincar\'{e} group}
\label{sec:BundleReps}

In this section we show how Poincar\'{e} symmetries of vector bundles can be used to classify massless particles such as photons. In quantum field theory, particles are unitary irreducible representations of the Poincar\'{e} group on a Hilbert space. We restrict our discussion to non-projective representations since this includes the photon case. The conventional method for constructing and classifying such representations is via Wigner's little group method \cite{Weinberg1995}. While this method produces smooth representations of massive particles, obstructions form in the massless case with the space of single-particle states becoming discontinuous \cite{Flato1983,Dragon2022}, resulting in non-smooth representations on the Hilbert space of wavefunctions. By this we mean Poincar\'{e} transformations generally map smooth wavefunctions to non-smooth wavefunctions. We will see that this issue traces back to a topological singularity that occurs in the $m\rightarrow 0$ limit as the mass hyperboloid of momentum space becomes the topologically nontrivial lightcone. We will show that by considering vector bundle representations on the single-particle states rather than vector space representations, one can obtain globally well-defined representations on the single-particle states. We prove that massless unitary irreducible vector bundle representations of the Poincar\'{e} group naturally induce unitary irreducible representations on the Hilbert space of $L^2$ bundle sections, and therefore such bundle representations correspond to particles by the usual definition. We show that Wigner's little group method generalizes to this bundle formalism, and can be used to decompose bundle representations into irreducible representations and classify these irreducible representations. In the case of the photon bundle, this induces the decomposition $\gamma = \gamma_+ \oplus \gamma_-$, showing that $R$- and $L$-photons are globally well-defined particles. An important implication of this research is that a given global photon wave function can be uniquely decomposed into $R$- and $L$-photon components, even though there exist no global bases for the $R$- and $L$-photons.  

Our results complement the body of work which has further developed Wigner's original version of the little group method. Mackey \cite{Mackey1952,Mackey1953,Mackey1968} developed the mathematician's version of this theory which generalized the little group method from the Poincar\'{e} group to any locally compact group. Simms \cite{Simms1968} showed that Mackey's theory had an underlying vector bundle structure, using what he called $G$-Hilbert bundles which are very similar to the modern notion of equivariant vector bundles which we use. However, Simms worked with only topological vector bundles, not smooth vector bundles, and as such his theory did not resolve the non-smoothness problem with Wigner's representations. Indeed, Simms's bundle formalism did not offer obvious advantages over Mackey's theory, and Mackey referred to the bundle formulation as a ``digression'' \cite{Mackey1968}. However, we show that our bundle formalism produces smooth representations for massless particles. Flato \emph{et al.} \cite{Flato1983} also noted singularities in the standard massless representations of the Poincar\'{e} group. They remedied this issue by developing a theory of twisted delta (``twelta'') functions, which they use to represent single-particle states. This produced smooth vector space representations on the functionals of sections of vector bundles. They primarily use field-theoretic techniques, with line bundles appearing implicitly via transformation properties of functions. Dragon \cite{Dragon2022} also recently approached this problem, finding massless field representations with underlying vector bundle structures. The formalism we develop is based on the notion of equivariant vector bundles and differs from previous methods in that it considers Poincar\'{e} actions directly on smooth vector bundles. This is useful, as we have seen in the case of Maxwell's equations that single-particle states (eigenmodes) naturally form a vector bundle $\gamma$.

This section is organized as follows. In part \ref{subsec:geomtric_notation} we define our notation and conventions. In part \ref{subsec:Representations_on_vector_bundles}, we discuss vector bundle representations of symmetry groups. In mathematics, these bundle representations are known as equivariant vector bundles. Since equivariant vector bundles are rarely used in physics, we will develop the necessary theory here to discuss massless particles.  In part \ref{subsec:Little_group_singularity}, we discuss the issues with the conventional little group method for constructing irreducible representations for massless particles. In part \ref{subsec:bundle_little_group}, we show that these issues can be overcome by developing a modified version of the little group method for bundle representations of massless particles. In particular, we will see that $R$- and $L$-photons are globally well-defined as irreducible bundle representations of the Poincar\'{e} group, and are thus globally well-defined. In part \ref{subsec:sectional_reps} we prove that unitary irreducible massless bundle representations induce unitary irreducible Hilbert space representations, and thus correspond to particles in the usual sense. 

%%%%%%%%% Geoemtric notation %%%%%%%%%%%%
\subsection{Geometric notation and conventions}\label{subsec:geomtric_notation}
We use the $(-,+,+,+)$ signature for Minkowski space, with the minus sign on the time component. The spacetime metric is denoted by 
\begin{equation}
    \eta^{\mu \nu} = \eta_{\mu \nu} = 
    \begin{pmatrix}
        -1 & 0 & 0 & 0 \\
         0 & 1 & 0 & 0 \\
         0 & 0 & 1 & 0 \\
         0 & 0 & 0 & 1
    \end{pmatrix}.
\end{equation} 
The Poincar\'{e} group, consisting of all isometries of Minkowski space, is denoted by $\mathrm{IO}(3,1)$, where the ``I'' indicates the inclusion of inhomogeneous transformations. Likewise, the Euclidean group $\mathrm{ISO}(n) \cong \mathbb{R}^n \rtimes \mathrm{SO}(n)$ is the group of all isometries of $n$-dimensional Euclidean space and $\rtimes$ denotes the semidirect product. The Lorentz group $\mathrm{O}(3,1)$ is the subset of $\mathrm{IO}(3,1)$ consisting of only homogeneous isometries. 
A generic element of $\mathrm{IO}(3,1)$ can be expressed uniquely as $\mathsf{a} \circ \Lambda$, where $\mathsf{a}$ is a spacetime translation and $\Lambda \in \mathrm{O}(3,1)$. The orthochronous Poincar\'{e} and Lorentz groups, $\mathrm{IO}^+(3,1)$ and $\mathrm{O}^+(3,1)$, consist of transformations preserving the orientation of time:
\begin{align}
    \mathrm{IO}^+(3,1) = \{\mathsf{a} \circ \Lambda \in \mathrm{IO}(3,1)| {\Lambda^0}_0 \geq +1 \} \\
    \mathrm{O}^+(3,1) = \{\Lambda \in \mathrm{O}(3,1)| {\Lambda^0}_0 \geq +1 \}.
\end{align}
The proper orthochronous Poincar\'{e} and Lorentz groups, $\mathrm{ISO}^+(3,1)$ and $\mathrm{SO}^+(3,1)$, are the further restrictions of $\mathrm{IO}^+(3,1)$ and $\mathrm{O}^+(3,1)$ to transformations preserving the orientation of space, that is, with $\mathrm{det} \, \Lambda = 1$. The parity and time inversion operators are the Lorentz transformations defined by their action on the 4-vector $x = (x^0, \boldsymbol{x})$ by
\begin{align}
\mathsf{P}(x^0,x^i) &= (x^0,-x^i) \\
\mathsf{T}(x^0,x^i) &= (-x^0, x^i). 
\end{align}
$\mathrm{SO}^+(3,1)$ is the connected identity component of $\mathrm{O}(3,1)$ and contains neither $\mathsf{P}$ nor $\mathsf{T}$. $\mathrm{O}^+(3,1)$ contains $\mathsf{P}$ but not $\mathsf{T}$.

%%%%%%%%% Representations on vector bundles %%%%%%%%%%%%%%
\subsection{Representations on vector bundles}\label{subsec:Representations_on_vector_bundles}
Representation theory is the method of representing groups by their actions on a linear space. Typically this linear space is taken to be a vector space. However, vector bundles also possess a linear structure, albeit  one that is more complicated than that of a vector space. We now define a representation of a Lie group on a vector bundle:
\begin{definition}[Equivariant Vector Bundle]
Let $G$ be a Lie group. A $G$-equivariant vector bundle is a vector bundle $\pi :E\rightarrow M$ equipped with a smooth group action $\Sigma: G\times E \rightarrow E$ such that for each $g\in G$, $\Sigma_g=\Sigma(g,\cdot):E\rightarrow E$ is a vector bundle isomorphism. In particular, there is an induced diffeomorphism $\tilde \Sigma_g:M\rightarrow M$ such that
\begin{equation} \begin{tikzcd}
    E \arrow[r,"\Sigma_g"] \arrow[d, swap, "\pi"] & E \arrow[d,"\pi"] \\
M \arrow[r,"\tilde{\Sigma}_g"] & M
\end{tikzcd} \end{equation}
commutes. We also say that $(E,\Sigma)$ is a representation of $G$ on the vector bundle E. The action of $g$ on a vector $v \in E$ or basepoint $m$ is sometimes written with the shorthand $gv \doteq \Sigma_g v$  or $gm \doteq \tilde \Sigma_g m$ when there is no chance of ambiguity. We also frequently use the notation $\Sigma (g) \doteq \Sigma_g$.

$E$ is said to be a homogeneous bundle representation if the action $\tilde \Sigma$ on $M$ is transitive. If E is a Hermitian bundle, then we say the representation is unitary if the restriction of $\Sigma_g$ to any fiber is unitary, \emph{i.e.}, if  $\Sigma_g|_{E_p}:E_p\rightarrow E_{\tilde{\Sigma}_g(p)}$ is unitary. If $F$ is a subbundle of $E$ with nonzero rank, then we say $(F,\Sigma)$ is a subrepresentation of $(E, \Sigma)$ if $\Sigma$ restricts to an action of $G$ on $F$. $E$ is irreducible if it has no proper subrepresentations.
\end{definition}

\begin{definition}[Isomorphism of equivariant vector bundles]\label{def:equivariant_iso}
    An isomorphism between two $G$-equivariant vector bundles $\pi_1:E_1 \rightarrow M$ and $\pi_2:E_2 \rightarrow M$ with group actions $\Sigma_1$ and $\Sigma_2$ is a vector bundle isomorphism $h: E_1 \rightarrow E_2$ which preserves base points,
    \begin{equation}
        h(E_1(m)) = E_2(m)
    \end{equation}
    for $m \in M$, and which satisfies the equivariance property
    \begin{equation}
        h\big(\Sigma_1(g)(m,v)\big) = \Sigma_2(g)h(m,v)
    \end{equation}
    for any $g \in G$ and $(m,v) \in E_1(m)$. $h$ is said to be a unitary isomorphism if the actions $\Sigma_1$ and $\Sigma_2$ are unitary, and if $h$ is unitary.
    If such an isomorphism exists, the representations are said to be equivalent or unitarily equivalent.
\end{definition}
These definitions are analogous to those for ordinary representations, with vector spaces and linear maps replaced by vector bundles and bundle maps. As for ordinary representations, a representation of a Lie group $G$ on a vector bundle $E$ describes a $G$-symmetry of the system represented by $E$. We show that the photon bundle $\gamma$ is unitary and equivariant with respect to the orthochronous Poincar\'{e} group, $\text{IO}^+(3,1)$, describing the Poincar\'{e} symmetry of the Maxwell system.  We note that it is possible to construct a bundle representation of the full Poincar\'{e} group $\mathrm{IO}(3,1)$ if one includes the backward light cone in the base manifold of $\gamma$. However, little is gained by doing so and it introduces two complications. First, the full light cone is not connected while the forward light $\mathcal{L}_+$ cone is. Second, the representation would no longer be unitary since the action of time-inversion is anti-unitary \cite{Bargmann1954}.

Let $(k,\boldsymbol{E}) \in \gamma$ where $k= (|\boldsymbol{k}|,\boldsymbol{k})$ is a lightlike 4-vector. Both $k$ and $\boldsymbol{E}$ have well-known transformation laws under Lorentz transformations. For $\Lambda \in \mathrm{O}(3,1)$, $k$ transforms like a 4-vector: $k\rightarrow k'= (|\boldsymbol{k}'|,\boldsymbol{k}') = \Lambda k$. As $\Lightcone$ can be parameterized by $\boldsymbol{k}$, we sometimes write $\boldsymbol{k}' = \Lambda \boldsymbol{k}$ for the spatial part of $k'$. For equivariant bundles with base manifold a subset of 4-momentum space, we will always assume that the action $\tilde \Sigma$ on the base manifold is the 4-vector action. Under the subset of $\mathrm{O}(3,1)$ consisting of rotations in $3$-space, $\boldsymbol{E}$ transforms like a 3-vector, since 3D rotations and the Fourier transform commute. Let $\Lambda_{\boldsymbol{v}}$ denote a boost by velocity $\boldsymbol{v}$. Under this transformation, the electric field $\xSpace{E}(x)$ becomes \cite{Jackson1999}
\begin{align}
\begin{split}
    \xSpace{E}'(\Lambda_{\boldsymbol{v}}x) &= \gamma_{Lor}[\xSpace{E}(x) + \boldsymbol{v} \times \xSpace{B}(x)] - \frac{\gamma_{Lor}^2}{\gamma_{Lor} + 1} (\xSpace{E}(x)\cdot \boldsymbol{v})\boldsymbol{v}, \\
    \gamma_{Lor} &= \frac{1}{\sqrt{1-v^2}}.
\end{split}
\end{align}
Then,
\begin{align}
\begin{split}
    \kOmegaSpace{E}'(\Lambda_{\boldsymbol{v}} k) &= \int e^{i(\Lambda_{\boldsymbol{v}} k)^\mu x'_\mu}\xSpace{E}'(x')\, d^4x' \\
    &= \int e^{i (\Lambda_{\boldsymbol{v}}k)^\mu (\Lambda_{\boldsymbol{v}}x)_\mu} \xSpace{E}'(\Lambda_{\boldsymbol{v}} x) \det (\Lambda_{\boldsymbol{v}}) \,d^4 x \\
    &= \int e^{ik^\mu x_\mu}\xSpace{E}'(\Lambda_{\boldsymbol{v}} x)\,d^4 x \\
    &= \gamma_{Lor}[\kOmegaSpace{E}(k) + \boldsymbol{v} \times (\boldsymbol{\hat{k}} \times \kOmegaSpace{E}(k))] - \frac{\gamma_{Lor}^2}{\gamma_{Lor} + 1} (\kOmegaSpace{E}(k)\cdot \boldsymbol{v})\boldsymbol{v}. \label{eq:E_tilde_transform}
\end{split}
\end{align}
Note that for two $4$-vectors $a^\mu = (a^0,\boldsymbol{a})$ and $b^\mu = (b^0,\boldsymbol{b})$, both $\delta^4(a^\mu -b^\mu)=\delta(a^0-b^0)\delta^{3}(\boldsymbol{a}-\boldsymbol{b})$ and $a^0\delta^3(\boldsymbol{a}-\boldsymbol{b})$ are Lorentz invariant \cite{Weinberg1995}, and therefore so is $(a^0)^{-1}\delta(a^0-b^0)$:
\begin{equation}\label{eq:delta_transform}
    \frac{\delta\Big((\Lambda a)^0 - (\Lambda b)^0\Big)}{(\Lambda a)^0} = \frac{\delta(a^0 - b^0)}{a^0}
\end{equation}
for any Poincar\'{e} transformation $\Lambda$. It then follows from  Eqs. (\ref{eq:E_Fourier}), (\ref{eq:E_tilde_transform}), and (\ref{eq:delta_transform}) that
\begin{equation}
    \boldsymbol{E}(\Lambda_{\boldsymbol{v}}\boldsymbol{k}) = \frac{|\boldsymbol{k}|}{|\Lambda_{\boldsymbol{v}} \boldsymbol{k}|}\Big[\gamma_{Lor}[\boldsymbol{E}(\boldsymbol{k}) + \boldsymbol{v} \times (\boldsymbol{\hat{k}} \times \boldsymbol{E}(\boldsymbol{k}))] - \frac{\gamma_{Lor}^2}{\gamma_{Lor} + 1} (\boldsymbol{E}(\boldsymbol{k})\cdot \boldsymbol{v})\boldsymbol{v} \Big],
\end{equation}
and thus the action of $\Lambda_{\boldsymbol{v}}$ on $(\boldsymbol{k},\boldsymbol{E}) \in \gamma$ is given by
\begin{gather}
    \Sigma_{\Lambda_{\boldsymbol{v}}}(\boldsymbol{k},\boldsymbol{E}) = (\Lambda_{\boldsymbol{v}}k,\boldsymbol{E}') \label{eq:boost}\\
    \boldsymbol{E}' \doteq \frac{|\boldsymbol{k}|}{|\Lambda_{\boldsymbol{v}} \boldsymbol{k}|}\Big[\gamma_{Lor}[\boldsymbol{E} + \boldsymbol{v} \times (\boldsymbol{\hat{k}} \times \boldsymbol{E})] - \frac{\gamma_{Lor}^2}{\gamma_{Lor} + 1} (\boldsymbol{E}\cdot \boldsymbol{v})\boldsymbol{v} \Big].\label{eq:E_prime}
\end{gather}
Similar calculations show that spatial inversion $\mathsf{P}$ and spacetime translations $\mathsf{a}=(\mathsf{a}^0,\boldsymbol{\mathsf{a}})$ act via
\begin{gather}
    \Sigma_\mathsf{P}(\boldsymbol{k},\boldsymbol{E})=(-\boldsymbol{k},-\boldsymbol{E}), \\
    \Sigma_\mathsf{a}(\boldsymbol{k},\boldsymbol{E}) = (\boldsymbol{k},e^{i(\boldsymbol{k}\cdot \boldsymbol{\mathsf{a}}- |\boldsymbol{k}|\mathsf{a}^0)}\boldsymbol{E}).
\end{gather}

Any element of $\mathrm{IO}^+(3,1)$ can be expressed as a combination of boosts, spatial rotations, spacetime translations, and spatial inversion, so together these relations define an action of $\mathrm{IO}^+(3,1)$ on $\gamma$. An essential property is that $k$ transforms independently of $\boldsymbol{E}$, and thus the Poincar\'{e} action preserves fibers: $\Sigma(\Lambda)\gamma(k) \subseteq \gamma({\tilde{\Sigma}(\Lambda)k}) = \gamma(\Lambda k)$. This ensures that elements of $\mathrm{IO}^+(3,1)$ are represented by bundle maps, and therefore this action gives an equivariant bundle structure on $\gamma$. We have thus proved most of the following theorem.
\begin{theorem}\label{thm:gamma_equivariant}
    $\gamma$ is a homogeneous $\mathrm{IO}^+(3,1)$-equivariant vector bundle under the usual Poincar\'{e} transformations of the electric field described above. This representation is unitary with respect to the Hermitian product (\ref{eq:Hermitian_product_E}). 
\end{theorem}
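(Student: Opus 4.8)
The plan is to establish the two properties of Theorem~\ref{thm:gamma_equivariant} not yet settled by the discussion preceding the statement, namely that the base action is transitive (homogeneity) and that each $\Sigma_g$ is fiberwise unitary, taking the equivariant bundle structure as already in hand. For completeness I would first note that the formulas (\ref{eq:boost})--(\ref{eq:E_prime}), together with the parity and translation actions, are smooth in the group parameters and in $(\boldsymbol{k},\boldsymbol{E})$: the only potentially singular factor is $|\boldsymbol{k}|/|\Lambda_{\boldsymbol{v}}\boldsymbol{k}|$, which is smooth and nonvanishing because $\boldsymbol{0}\notin\Lightcone$ and boosts map $\Lightcone$ to itself. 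Each $\Sigma_g$ acts $\Comp$-linearly on $\boldsymbol{E}$ (the coefficients are real and act identically on the real and imaginary parts) and covers the $4$-vector action of $\Lambda$ on the base, so it is a vector bundle isomorphism; that these maps assemble into a genuine group action can be read off the cited derivation, or alternatively seen from the fact that $\Sigma$ is intertwined by the Fourier transform with the standard Poincar\'{e} action on solutions of Maxwell's equations, which is a bona fide representation.

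For homogeneity, I would observe that the base action $\tilde\Sigma_g$ is simply the $4$-vector action on $\Lightcone$, and that the orthochronous Lorentz group already acts transitively on the forward lightcone: given any two forward null vectors, a spatial rotation aligns their directions and a boost along that common direction adjusts the frequency, carrying one to the other. Hence $\Lightcone$ is a single orbit and the representation is homogeneous. Spacetime translations act trivially on the base and are not needed here.

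For unitarity it suffices to check each generating transformation separately. Rotations act on $\boldsymbol{E}$ by a real orthogonal matrix and parity by $\boldsymbol{E}\mapsto-\boldsymbol{E}$, both of which preserve $\boldsymbol{E}_a^*\cdot\boldsymbol{E}_b$; translations multiply $\boldsymbol{E}$ by the unit-modulus phase $e^{i(\boldsymbol{k}\cdot\boldsymbol{\mathsf{a}}-|\boldsymbol{k}|\mathsf{a}^0)}$, which likewise preserves the Hermitian product (\ref{eq:Hermitian_product_E}). The remaining case, boosts, is the crux. Because the boost acts $\Comp$-linearly on the fiber, the polarization identity reduces unitarity to norm preservation $|\boldsymbol{E}'|^2=|\boldsymbol{E}|^2$, and since the defining matrix is real this reduces further to the statement for real transverse $\boldsymbol{E}$. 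I would then expand $|\boldsymbol{E}'|^2$ from (\ref{eq:E_prime}) using the identity $\boldsymbol{v}\times(\hat{\boldsymbol{k}}\times\boldsymbol{E})=(\boldsymbol{v}\cdot\boldsymbol{E})\hat{\boldsymbol{k}}-(\boldsymbol{v}\cdot\hat{\boldsymbol{k}})\boldsymbol{E}$, the transversality condition (\ref{eq:k_dot_E}), and the null-vector relation $|\Lambda_{\boldsymbol{v}}\boldsymbol{k}|=\gamma_{Lor}(|\boldsymbol{k}|+\boldsymbol{v}\cdot\boldsymbol{k})$, verifying that the longitudinal $\tfrac{\gamma_{Lor}^2}{\gamma_{Lor}+1}(\boldsymbol{E}\cdot\boldsymbol{v})\boldsymbol{v}$ term and the Doppler prefactor conspire so that $|\boldsymbol{E}'|=|\boldsymbol{E}|$.

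The main obstacle is precisely this last computation: tracking the several $\gamma_{Lor}$-dependent cross terms and confirming that the normalization factor $|\boldsymbol{k}|/|\Lambda_{\boldsymbol{v}}\boldsymbol{k}|$ --- which originates from the Lorentz-invariant delta-function normalization (\ref{eq:delta_transform}) --- is exactly what cancels the frequency rescaling of $|\boldsymbol{E}|^2$. Conceptually this is the statement that the chosen momentum-space normalization of the field renders the fiber norm Lorentz invariant, but carrying it out demands care with signs and the longitudinal term. Once norm preservation is established for boosts, unitarity on all of $\mathrm{IO}^+(3,1)$ follows, since boosts, rotations, parity, and translations generate the group and each is fiberwise unitary.
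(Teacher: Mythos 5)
Your proposal follows essentially the same route as the paper's proof: homogeneity from transitivity of the orthochronous group on $\Lightcone$, and unitarity reduced via the polarization identity to norm preservation, checked trivially for rotations, parity, and translations, with boosts handled by direct computation from Eq.\,(\ref{eq:E_prime}). The paper likewise leaves the boost computation as a direct (unwritten) verification, so your sketch matches it in both structure and level of detail.
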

\begin{proof}
    The representation is homogeneous since $\mathrm{IO}^+(3,1)$ acts transitively on the base manifold $\mathcal{L}_+$. It remains to prove unitarity. The Hermitian product induces a norm on each fiber $\gamma(\boldsymbol{k})$ given by $\boldsymbol{E}| = \sqrt{\boldsymbol{E}^*\cdot \boldsymbol{E}}$. The inner product of two arbitrary vectors can be expressed purely in terms of the norm via the polarization identity \cite{Schechter1996}:
    \begin{equation}
        \langle \boldsymbol{E}_1, \boldsymbol{E}_2 \rangle = \frac{1}{4}\big(|\boldsymbol{E}_1 + \boldsymbol{E}_2|^2 - |\boldsymbol{E}_1 - \boldsymbol{E}_2|^2 - i|\boldsymbol{E}_1 + i \boldsymbol{E}_2|^2 + i|\boldsymbol{E}_1 - i\boldsymbol{E}_2|^2 \big)
    \end{equation}
    It is thus sufficient to check that
    \begin{equation}\label{eq:isometry}
        \langle \Sigma_\Lambda \boldsymbol{E}, \Sigma_\Lambda \boldsymbol{E} \rangle = \langle \boldsymbol{E}, \boldsymbol{E} \rangle
    \end{equation}
    for all $\Lambda \in \mathrm{IO}^+(3,1)$. Furthermore, it is sufficient to check the cases when $\Lambda$ is a spatial rotation, spatial inversion, spacetime translation, or boost. The first three are trivial, and for the latter one can show directly from Eq. (\ref{eq:E_prime}) that
    \begin{equation}
        \boldsymbol{E}'^* \cdot \boldsymbol{E}' = \boldsymbol{E}^* \cdot \boldsymbol{E}
    \end{equation}
    which completes the proof.
\end{proof}

As in the case of vector space representations, a representation $\Sigma$ of a Lie group $G$ on a vector bundle $\pi: E\rightarrow M$ induces an action $\sigma$ of the Lie algebra $\text{Lie}(G)$ on that bundle in the following sense. For $\mathfrak{g}\in\text{Lie}(G)$, the generator $\sigma_\mathfrak{g}$ associates to each vector $(m,\boldsymbol{v}) \in E$ a tangent vector in $T_{(m,\boldsymbol{v})}E$ describing the infinitesimal group action. Explicitly,
\begin{equation}
\sigma_\mathfrak{g}(m,\boldsymbol{v}) \doteq \frac{d}{dt}\Big|_{t=0}\Sigma_{\text{Exp}(t\mathfrak{g})}(m,\boldsymbol{v}) \in T_{(m,\boldsymbol{v})}E.
\end{equation}
Thus, $\sigma_\mathfrak{g}\in \mathfrak{X}(E)$ where $\mathfrak{X}(E)$ is the set of vector fields on $E$. The next result says that $\sigma_\mathfrak{g}$ respects the vector bundle structure of $E$ in the sense that it is a lift of the corresponding vector field on the base manifold $M$ induced by $\tilde \Sigma$. Define the vector field $\tilde \sigma \in \mathfrak{X}(M)$ by 
\begin{equation}
    \tilde \sigma_\mathfrak{g}(m) \doteq \frac{d}{dt}\Big|_{t=0}\tilde \Sigma_{\text{Exp}(t\mathfrak{g})}(m) \in T_{m}M.
\end{equation}
\begin{proposition}
\begin{equation}
    \tilde \sigma_\mathfrak{g} \circ \pi  = \pi_* \circ \sigma_\mathfrak{g}
\end{equation}
where $\pi_*$ is the pushforward of $\pi$.  That is, the diagram
\begin{equation} \begin{tikzcd}
    E \arrow[r,"\sigma_\mathfrak{g}"] \arrow[d, swap, "\pi"] & TE \arrow[d,"\pi_*"] \\
M \arrow[r,"\tilde{\sigma}_\mathfrak{g}"] & TM
\end{tikzcd} \end{equation}    
commutes. 
\end{proposition}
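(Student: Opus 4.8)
The plan is to prove the identity by a direct computation that uses the equivariance square defining the induced diffeomorphisms $\tilde\Sigma_g$, together with the naturality of the pushforward with respect to curves. The only ingredient beyond the definitions already given is the elementary fact that the differential $\pi_*$ of a smooth map, applied to the velocity vector of a smooth curve, returns the velocity vector of the image curve; this is simply the chain rule in its curve-based formulation, and it is what lets us move $\pi_*$ past a $t$-derivative.

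First I would fix an arbitrary $(m,\boldsymbol{v}) \in E$ and consider the smooth curve $t \mapsto \Sigma_{\text{Exp}(t\mathfrak{g})}(m,\boldsymbol{v})$ in $E$, whose velocity at $t=0$ is by definition $\sigma_\mathfrak{g}(m,\boldsymbol{v})$. Applying $\pi_*$ and commuting it past the $t$-derivative via the fact just stated, one obtains
\begin{equation}
    \pi_*\big(\sigma_\mathfrak{g}(m,\boldsymbol{v})\big) = \frac{d}{dt}\Big|_{t=0}\pi\big(\Sigma_{\text{Exp}(t\mathfrak{g})}(m,\boldsymbol{v})\big).
\end{equation}
Next I would invoke the commuting square from the definition of an equivariant vector bundle, which gives $\pi \circ \Sigma_g = \tilde\Sigma_g \circ \pi$ for every $g \in G$. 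Taking $g = \text{Exp}(t\mathfrak{g})$ and using $\pi(m,\boldsymbol{v}) = m$ rewrites the right-hand side as $\frac{d}{dt}\big|_{t=0}\tilde\Sigma_{\text{Exp}(t\mathfrak{g})}(m)$, which is exactly $\tilde\sigma_\mathfrak{g}(m) = \tilde\sigma_\mathfrak{g}(\pi(m,\boldsymbol{v}))$. Since $(m,\boldsymbol{v})$ was arbitrary, this establishes $\pi_* \circ \sigma_\mathfrak{g} = \tilde\sigma_\mathfrak{g} \circ \pi$, which is the commutativity of the asserted diagram.

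There is no serious obstacle here: the statement is the infinitesimal shadow of the defining equivariance square, and once the velocity-vector interpretations of $\sigma_\mathfrak{g}$ and $\tilde\sigma_\mathfrak{g}$ are made explicit the proof reduces to a single chain-rule line. The only point requiring a moment's care is the justification for moving $\pi_*$ inside the $t$-derivative, but this is immediate from the definition of the pushforward as the induced map on tangent vectors represented by curves; no convergence or extra smoothness hypotheses are needed beyond the smoothness of the action $\Sigma$ that is already assumed for an equivariant bundle.
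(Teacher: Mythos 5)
Your proof is correct and follows essentially the same route as the paper's: interpret $\sigma_\mathfrak{g}(m,\boldsymbol{v})$ as the velocity of the curve $t\mapsto \Sigma_{\text{Exp}(t\mathfrak{g})}(m,\boldsymbol{v})$, push forward by $\pi$ using the chain rule, and apply the defining equivariance square $\pi\circ\Sigma_g = \tilde\Sigma_g\circ\pi$. No differences worth noting.
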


\begin{proof}
    $\Sigma_{\text{Exp}(t\mathfrak{g})}(m,\boldsymbol{v})$ is a curve through $(m,\boldsymbol{v}) \in E$ with tangent vector $\sigma_\mathfrak{g}(m,\boldsymbol{v})$ at $t=0$. So
    \begin{align}
    \begin{split}
    \pi_* \circ \sigma_\mathfrak{g}(m,\boldsymbol{v}) &= \frac{d}{dt}\Big|_{t=0}\pi \circ \Sigma_{\text{Exp}(t\mathfrak{g})}(m,\boldsymbol{v}) \\
    &= \frac{d}{dt}\Big|_{t=0}\tilde \Sigma_{\text{Exp}(t\mathfrak{g})}\pi (m,\boldsymbol{v}) \\
    &= \tilde \sigma_{\mathfrak{g}}\circ \pi (m,\boldsymbol{v}).
    \end{split}
    \end{align}
\end{proof}

$\mathfrak{X}(E)$ is a Lie algebra with respect to the Jacobi-Lie bracket
\begin{equation}
    [X,Y]_{-} = -(XY-YX).
\end{equation}
The minus sign in this definition is needed for the next result and accounts for the fact that $\Sigma$ and $\sigma$ act on the left rather than on the right. The following proposition justifies calling $\sigma$ a Lie algebra bundle representation.
\begin{proposition}
$\sigma:\emph{Lie}(G)\rightarrow \mathfrak{X}(E)$ is a Lie algebra homomorphism. 
\end{proposition}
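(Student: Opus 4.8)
The plan is to identify $\sigma_\mathfrak{g}$ as the fundamental vector field of the left action $\Sigma$ and to deduce the homomorphism property from the naturality of the Jacobi--Lie bracket under $\phi$-related vector fields, rather than by a direct computation with flows. Two facts must be established: that $\sigma$ is $\Real$-linear, and that it intertwines brackets in the sense $\sigma_{[\mathfrak{g},\mathfrak{h}]} = [\sigma_\mathfrak{g},\sigma_\mathfrak{h}]_-$. Linearity is immediate: fixing $(m,\boldsymbol{v}) \in E$ and writing $\Sigma^{(m,\boldsymbol{v})} \colon G \to E$ for the orbit map $g \mapsto \Sigma_g(m,\boldsymbol{v})$, one has $\sigma_\mathfrak{g}(m,\boldsymbol{v}) = (\Sigma^{(m,\boldsymbol{v})})_{*,e}\,\mathfrak{g}$, since $\tfrac{d}{dt}\big|_{t=0}\mathrm{Exp}(t\mathfrak{g}) = \mathfrak{g}$; as $(\Sigma^{(m,\boldsymbol{v})})_{*,e}$ is a fixed linear map on $T_eG = \mathrm{Lie}(G)$, the dependence of $\sigma_\mathfrak{g}(m,\boldsymbol{v})$ on $\mathfrak{g}$ is linear.

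First I would prove the central relatedness claim: for each basepoint $(m,\boldsymbol{v})$, the field $\sigma_\mathfrak{g} \in \mathfrak{X}(E)$ is $\Sigma^{(m,\boldsymbol{v})}$-related to the right-invariant vector field $\mathfrak{g}^R$ on $G$ determined by $\mathfrak{g}$, where $\mathfrak{g}^R(g) = \tfrac{d}{dt}\big|_{t=0}\mathrm{Exp}(t\mathfrak{g})\,g$. This is exactly where the left-action identity $\Sigma_g \circ \Sigma_h = \Sigma_{gh}$ is used: from $\Sigma^{(m,\boldsymbol{v})}(\mathrm{Exp}(t\mathfrak{g})\,g) = \Sigma_{\mathrm{Exp}(t\mathfrak{g})}\big(\Sigma^{(m,\boldsymbol{v})}(g)\big)$, differentiating at $t=0$ gives $(\Sigma^{(m,\boldsymbol{v})})_* \mathfrak{g}^R(g) = \sigma_\mathfrak{g}\big(\Sigma^{(m,\boldsymbol{v})}(g)\big)$, which is the relatedness condition.

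Next I would apply two standard facts. The first is naturality of the bracket: if $\mathfrak{g}^R, \mathfrak{h}^R$ are $\Sigma^{(m,\boldsymbol{v})}$-related to $\sigma_\mathfrak{g},\sigma_\mathfrak{h}$, then $[\mathfrak{g}^R,\mathfrak{h}^R]$ is $\Sigma^{(m,\boldsymbol{v})}$-related to the ordinary commutator $\sigma_\mathfrak{g}\sigma_\mathfrak{h} - \sigma_\mathfrak{h}\sigma_\mathfrak{g}$. The second is that $\mathfrak{g} \mapsto \mathfrak{g}^R$ is an \emph{anti}-homomorphism, $[\mathfrak{g}^R,\mathfrak{h}^R] = -[\mathfrak{g},\mathfrak{h}]^R$. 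Since $[\mathfrak{g},\mathfrak{h}]^R$ is in turn $\Sigma^{(m,\boldsymbol{v})}$-related to $\sigma_{[\mathfrak{g},\mathfrak{h}]}$, evaluating all relatedness identities at $g = e$ (so that $\Sigma^{(m,\boldsymbol{v})}(e) = (m,\boldsymbol{v})$) and pushing forward the equal vectors $[\mathfrak{g}^R,\mathfrak{h}^R](e) = -[\mathfrak{g},\mathfrak{h}]^R(e)$ yields $\big(\sigma_\mathfrak{g}\sigma_\mathfrak{h} - \sigma_\mathfrak{h}\sigma_\mathfrak{g}\big)(m,\boldsymbol{v}) = -\sigma_{[\mathfrak{g},\mathfrak{h}]}(m,\boldsymbol{v})$. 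As $(m,\boldsymbol{v})$ was arbitrary, this is an identity of vector fields on $E$, and rewriting it with the paper's convention $[X,Y]_- = -(XY-YX)$ gives precisely $[\sigma_\mathfrak{g},\sigma_\mathfrak{h}]_- = \sigma_{[\mathfrak{g},\mathfrak{h}]}$.

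The main obstacle is purely one of sign bookkeeping. The subtlety is that a \emph{left} action produces an anti-homomorphism into $\mathfrak{X}(E)$ equipped with the ordinary commutator; the minus sign this introduces is exactly what the minus in the definition of $[\,\cdot\,,\cdot\,]_-$ is designed to absorb, turning $\sigma$ into a genuine homomorphism. To get the signs right it is essential to use right-invariant vector fields in the relatedness step, since it is the left-equivariance $\Sigma^{(m,\boldsymbol{v})} \circ L_g = \Sigma_g \circ \Sigma^{(m,\boldsymbol{v})}$ of the orbit map (with $L_g$ denoting left translation) that forces $\mathfrak{g}^R$ rather than the left-invariant field $\mathfrak{g}^L$ to appear; substituting left-invariant fields would correspond to the opposite group and reverse the sign. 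The remaining standard facts — naturality of the bracket and the anti-homomorphism property of right-invariant fields — I would cite rather than reprove.
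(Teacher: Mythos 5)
Your proposal is correct: the paper does not prove this proposition itself but simply cites Lee, \emph{Introduction to Smooth Manifolds}, Theorem 20.18(a), and your argument is precisely the standard proof of that cited result — identifying $\sigma_\mathfrak{g}$ as the fundamental vector field, establishing that it is orbit-map-related to the right-invariant field $\mathfrak{g}^R$, and invoking naturality of the bracket together with the anti-homomorphism property of $\mathfrak{g}\mapsto\mathfrak{g}^R$. Your sign bookkeeping is also right: the left action yields an anti-homomorphism into $(\mathfrak{X}(E), XY-YX)$, which the paper's Jacobi--Lie convention $[X,Y]_- = -(XY-YX)$ converts into a genuine homomorphism.
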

\begin{proof}
    See Ref. \cite{lee_smooth_manifolds}, Theorem 20.18 (a).
\end{proof}

Following the notation of Weinberg \cite{Weinberg1995}, the infinitesimal Poincar\'{e} transformations with homogeneous and translational parts
\begin{align}
    {\Lambda^\mu}_\nu &= {\delta^\mu}_\nu + {\omega^\mu}_\nu \\
    \mathsf{a}^\mu &= \epsilon^\mu 
\end{align}
define the generators of the Poincar\'{e} group $iJ^{\mu \nu}$ and $iP^\mu$ by 
\begin{equation}\label{eq:Poincar\'{e}_generators}
    \Sigma(I+\omega,\epsilon) = 1 + \frac{1}{2} \omega_{\sigma \eta} (iJ^{\sigma \eta}) - \epsilon_\rho (iP^\rho) + O(\omega^2, \epsilon^2, \omega \epsilon)
\end{equation}
where $\omega_{\sigma\eta}$ is anti-symmetric. The generators are $\omega$- and $\epsilon$-independent and go by standard names: the Hamiltonian $H = P^0$ generating time translation, momentum $\boldsymbol{P}=\{P^1,P^2,P^3\}$ generating spatial translations, angular momentum $\boldsymbol{J}=\{J^{23},J^{31},J^{12}\}\doteq \{J_1,J_2,J_3\}$ generating spatial rotations, and boost $\boldsymbol{K}=\{J^{01},J^{02},J^{03}\} \doteq \{K_1,K_2,K_3\}$ generating the boost transformations. Given $\boldsymbol{v}\in \mathbb{R}^3$, these also define the generators $J_{\boldsymbol{v}}=\boldsymbol{v}\cdot \boldsymbol{J}$ and $K_{\boldsymbol{v}}=\boldsymbol{v}\cdot \boldsymbol{K}$. The generators satisfy the commutation relations
\begin{align}
    [J_a,J_b] &= i\epsilon_{abc}J_c \,, \;\;  [J_a, K_b] = i\epsilon_{abc} K_c \label{eq:comm_1}\\
    [K_a,K_b] &= -i\epsilon_{abc}J_c \,, \;\; [J_a,P_b] = i\epsilon_{abc}P_c \label{eq:comm_2}\\
    [K_a,P_b] &= iH\delta_{ab} \,, \;\; [K_a,H] = iP_a \label{eq:comm_3}\\
    [J_a,H] &= [P_a,H] = [P_a,P_b] = [H,H] = 0 \label{eq:comm_4}.
\end{align}

%%%%%%%%%%%%%%% Singularities in massless vector space representations %%%%%%%%%%%%%%
\subsection{Singularities in massless vector space representations}\label{subsec:Little_group_singularity}
In this section we illustrate how the conventional form of the little group method \cite{Wigner1939,Weinberg1995} produces vector space representations of the Poincar\'{e} group with singularities. We will discuss this method in its modern form, as presented by Weinberg \cite{Weinberg1995}, and in its original and slightly more rigorous form as developed by Wigner \cite{Wigner1939}. In the modern formulation, one constructs unitary representations $\Sigma$ of the Poincar\'{e} group on the space of single-particle states, that is, on the (generalized) eigenvectors $\Psi_{k,a}$ of the momentum operator:
\begin{equation}
    P^\mu\Psi_{k,a} = k^\mu \Psi_{k,a}.
\end{equation}
Here, $a$ labels the internal degrees of freedom. It is assumed that $a$ is a discrete index, reflecting the fact that no particles with continuous internal degrees of freedom have been experimentally observed \cite{Weinberg1995}. The little group method seeks to construct all possible representations $\Sigma$ as follows. One begins by fixing some momentum $\bar{k}$, and considering the little group $\LittleGroup{\bar{k}}$ consisting of the elements in $\mathrm{SO}^+(3,1)$ stabilizing $\bar{k}$:
\begin{equation}
    \LittleGroup{\bar k} \doteq \{\Lambda\in \mathrm{SO}^+(3,1)|\Lambda (\bar k)=\bar k\}.
\end{equation}
The little groups for different choices of $\bar{k}$ are conjugate to each other since if $\Lambda$ is any Poincar\'{e} transformation taking $k_1$ to $k_2$, then
\begin{equation}
    \LittleGroup{k_2} = \Lambda \LittleGroup{k_1} \Lambda^{-1}.
\end{equation}
Thus, up to isomorphism, $\LittleGroup{\bar{k}}$ is independent of $\bar{k}$. The single-particle states at $\bar{k}$ form a representation of the little group, since for $\Lambda \in \LittleGroup{\bar{k}}$
\begin{equation}\label{eq:little_group_action}
    \Sigma(\Lambda)\Psi_{\bar{k},a} = \sum_b D_{ab}(\Lambda)\Psi_{\bar{k},j}
\end{equation}
for some scalars $D_{ab}(\Lambda)$. It is much easier to classify the little group representations since the little group is smaller than the Poincar\'{e} group. For massive particles, the momentum space is a mass hyperboloid, and one can choose the reference momentum $\bar{k} = (1,0,0,0)$ as that of a particle at rest. The little group is then clearly $\mathrm{SO}(3)$. The finite-dimensional representations are $2s+1$ dimensional and labeled by the spin $s$. For non-projective representations, s is restricted to positive integer values. The situation is considerably different for massless particles. In this case the momenta are lightlike and there is no preferred reference momentum as there is no rest frame. One typically chooses the reference momentum $k = (1,0,0,1)$. One can show that the little group is $\mathrm{ISO}(2)$ \cite{Weinberg1995}. In contrast to $\mathrm{SO}(3)$, the finite-dimensional representations of $\mathrm{ISO}(2)$ are all one-dimensional. They are labeled not by spin but by the helicity $h$ which, for non-projective representations, can take on all integer values. The change in the momentum space at $m=0$ accounts for the fact that massless particles such as photons are characterized by helicity rather than by spin.

To construct representations of $\mathrm{SO}^+(3,1)$ from representations of the little group, one must relate the states with momentum $\bar{k}$ to those with other momenta. The conventional method for doing so is to assign to each $k$ in the momentum space $M$ a Lorentz transformation $L(k)$ such that
\begin{equation}\label{eq:L_function}
    L(k)\bar{k} = k.
\end{equation}
Thus, $L:M\rightarrow \mathrm{SO}^+(3,1)$. Up to normalization, one then \emph{defines} the single-particle states of momentum $k$ in terms of those of the reference momentum $\bar k$ by \cite{Weinberg1995}
\begin{equation}\label{eq:single_particle_space}
    \Psi_{k,a} = \Sigma(L(k))\Psi_{\bar k, a}.
\end{equation}
Then, the action of an arbitrary $\Lambda \in \mathrm{SO}^+(3,1)$ is given by
\begin{equation}\label{eq:arb_action}
    \Sigma(\Lambda)\Psi_{k,a} = \Sigma(L(\Lambda k)) \Sigma(L(\Lambda k)^{-1}\Lambda L(k))\Psi_{\bar k,a}.
\end{equation}
Since $L(\Lambda k)^{-1}\Lambda L(k) \in \LittleGroup{\bar{k}}$, the action in Eq.\,(\ref{eq:arb_action}) is completely determined by Eqs.\,(\ref{eq:little_group_action}) and  (\ref{eq:single_particle_space}). Equation (\ref{eq:arb_action}) thus extends the little group action to a Poincar\'{e} action on all single-particle states. However, the validity of this procedure relies on being able to smoothly define $L:M\rightarrow \mathrm{SO}^+{(3,1)}$ satisfying Eq.\,(\ref{eq:L_function}). If $L$ is not smooth, then the space of single-particle states constructed in Eq.\,(\ref{eq:single_particle_space}) does not have a smooth structure. It is then not possible to say that the Poincar\'{e} group acts smoothly on the single-particle states, and therefore the formal representations described by Eq.\,(\ref{eq:arb_action}) are not actually Lie group representations.

For massive particles this is not an issue. For the reference momentum $\bar k = (1,0,0,0)$, one can smoothly choose $L(k)$ to be the unique boost taking $\bar k$ to $k$. However, for massless particles we prove there are no smooth choices of $L$.

\begin{theorem}\label{thm:no_go}
    Let $\bar{k} \in \mathcal{L}_+$. There exists no smooth function $L:\mathcal{L}_+ \rightarrow \mathrm{SO}^+(3,1)$ satisfying Eq.\,(\ref{eq:L_function}) for all $k$. 
\end{theorem}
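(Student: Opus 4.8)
The plan is to recognize the desired map $L$ as a continuous section of an orbit map and then rule out its existence with a single functorial homotopy argument. Define $p:\mathrm{SO}^+(3,1)\rightarrow\Lightcone$ by $p(\Lambda)=\Lambda\bar{k}$; this is smooth and surjective because $\mathrm{SO}^+(3,1)$ acts transitively on $\Lightcone$. Condition (\ref{eq:L_function}) states exactly that $p\circ L=\mathrm{id}_{\Lightcone}$, i.e. $L$ is a section of $p$. Since a smooth section is in particular continuous, it suffices to show that no continuous section exists.

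The key step is that a continuous section forces $p_*$ to be surjective on homotopy. Fixing the basepoint $\Lambda_0=L(\bar{k})$ (which satisfies $\Lambda_0\bar{k}=\bar{k}$, so $\Lambda_0\in\LittleGroup{\bar{k}}$) and applying the functor $\pi_2$ to $p\circ L=\mathrm{id}$ gives $p_*\circ L_*=\mathrm{id}$ on $\pi_2(\Lightcone,\bar{k})$; hence $p_*:\pi_2(\mathrm{SO}^+(3,1),\Lambda_0)\rightarrow\pi_2(\Lightcone,\bar{k})$ must be onto. No fibration or long exact sequence is needed for this step --- it is pure functoriality of $\pi_2$.

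It then remains to compute the two groups and observe their incompatibility. On the base, $\Lightcone\cong\pspace$ deformation retracts onto $S^2$, so $\pi_2(\Lightcone)\cong\pi_2(S^2)\cong\mathbb{Z}$. On the total space, the Cartan (polar) decomposition exhibits a deformation retraction of $\mathrm{SO}^+(3,1)$ onto its maximal compact subgroup $\mathrm{SO}(3)$, and the second homotopy group of any Lie group vanishes, so $\pi_2(\mathrm{SO}^+(3,1))\cong\pi_2(\mathrm{SO}(3))=0$. A surjection $0\twoheadrightarrow\mathbb{Z}$ is impossible, contradicting the existence of a section; thus no continuous, and hence no smooth, $L$ satisfies (\ref{eq:L_function}).

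The main obstacle is establishing $\pi_2(\mathrm{SO}^+(3,1))=0$, which is where essentially all of the content lies: it rests on the standard retraction of a connected Lorentz group onto $\mathrm{SO}(3)$ together with the general vanishing of $\pi_2$ for Lie groups, while everything else is formal. As a remark, feeding the same data into the long exact sequence of the principal $\LittleGroup{\bar{k}}\cong\mathrm{ISO}(2)$-bundle $p$ shows that the connecting map $\partial:\pi_2(\Lightcone)\cong\mathbb{Z}\rightarrow\pi_1(\LittleGroup{\bar{k}})\cong\mathbb{Z}$ is multiplication by $\pm2$, so its nonvanishing is the precise topological obstruction, dovetailing with the Chern numbers $\mp2$ found in the previous sections.
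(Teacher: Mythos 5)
Your proof is correct, and it takes a genuinely different route from the paper's. The paper argues via the photon bundle itself: it forward-references Theorem \ref{thm:RL_equivariant} (equivariance of $\gamma_\pm$) to observe that a smooth $L$ would turn a fixed nonzero $R$-polarized vector at $\bar{k}$ into a global nonvanishing section $k\mapsto\Sigma_{L(k)}(\bar{k},\boldsymbol{E}_+)$ of $\gamma_+$, contradicting the nontriviality of $\gamma_+$ established earlier via the hairy ball theorem and $C_1(\gamma_+)=-2$. You instead ignore the representation entirely and work with the orbit map $p:\mathrm{SO}^+(3,1)\rightarrow\Lightcone$, $p(\Lambda)=\Lambda\bar{k}$, noting that $L$ would be a continuous section of $p$ and hence force a surjection $\pi_2(\mathrm{SO}^+(3,1))\twoheadrightarrow\pi_2(\Lightcone)$, which is impossible since the left side vanishes (retraction onto $\mathrm{SO}(3)$, whose universal cover $S^3$ has trivial $\pi_2$) while the right side is $\mathbb{Z}$. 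Each step checks out: transitivity of the $\mathrm{SO}^+(3,1)$-action on $\Lightcone$ gives surjectivity of $p$, the functoriality argument needs no fibration structure, and the homotopy computations are standard. What your approach buys is self-containedness and generality --- the obstruction is exhibited as intrinsic to the homogeneous-space structure $\Lightcone\cong\mathrm{SO}^+(3,1)/\mathrm{ISO}(2)$, independent of helicity or of any particular equivariant bundle, and with no forward dependency within the paper. What the paper's approach buys is a direct physical link between the failure of Wigner's construction and the nontriviality of the $R$- and $L$-photon bundles, which is the narrative the section is building. Your closing remark is also correct and worth keeping: the long exact sequence of the $\mathrm{ISO}(2)$-bundle reads $0\rightarrow\mathbb{Z}\xrightarrow{\partial}\mathbb{Z}\rightarrow\mathbb{Z}/2\rightarrow0$, forcing $\partial$ to be multiplication by $\pm2$, which identifies the precise integer obstruction and matches the Chern numbers $\mp2$ of $\gamma_\pm$.
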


\begin{proof}
    Suppose there were such a function $L(k)$. We will prove in Theorem \ref{thm:RL_equivariant} that $R$- and $L$-polarization states are preserved under the action $\Sigma$ of $\mathrm{SO}^+(3,1)$. Additionally, a photon will appear as a propagating wave in any reference frame, so under a Lorentz transformation a wave $(k,\boldsymbol{E}) \in \gamma_k$ with $\boldsymbol{E} \neq 0$ will transform into another wave with nonzero electric field. This can also be seen by taking the dot and cross product of Eq. (\ref{eq:boost}) with $\boldsymbol{v}$. If one fixes a choice of $(\bar{k},\boldsymbol{E}_+) \in \gamma_+(\bar{k})$ with $\boldsymbol{E}_+ \neq 0$ then $\Sigma_{L(k)}(\bar{k}, \boldsymbol{E}_+)$, considered as a function of $k$, is a continuous nonvanishing section of $\gamma_+$, contradicting the fact that $\gamma_+$ is a nontrivial bundle. Thus, no continuous function $L(k)$ exists.
\end{proof}

As an example, Weinberg \cite{Weinberg1995} chooses
\begin{equation}
    L(k) = \exp(i\phi J_3) \exp(i \theta J_2) B(|\boldsymbol{k}|)
\end{equation}
where
\begin{equation}
    k = |\boldsymbol{k}| (1,\sin \theta \, \cos \phi, \sin \theta \, \sin \phi, \cos \theta)
\end{equation}
and $B$ is a boost in the $z$ direction. However, $L(k)$ is discontinuous 
at $\theta = 0$ and $\theta = \pi$ since
\begin{align}
\begin{split}
    &\lim_{\theta\rightarrow 0} \exp(i\phi J_3) \exp(i \theta J_2) = \exp(i\phi J_3), \\
    &\lim_{\theta\rightarrow \pi} \exp(i\phi J_3) \exp(i \theta J_2) = -\exp(i\phi J_3)
\end{split}
\end{align}
both depend on the value of $\phi$. This illustrates that the modern formulation of the little group method \cite{Weinberg1995,Maggiore2005} fails for massless particles. 

This issue presents slightly differently in Wigner's original and more rigorous formulation on the little group method \cite{Wigner1939}. The single-particle states used by Weinberg do not themselves form a Hilbert space (for example, they are are delta-function normalized \cite{Weinberg1995}). Weinberg instead builds representations on the Hilbert space of $L^2$ wavefunctions $\psi(k,a)$ where again $a$ is a finite index. Given a unitary matrix representation of the little group $H_{\bar{k}}$ in which $\Lambda \mapsto D_{ab}(\Lambda)$, Wigner defines the Hilbert space representation
\begin{equation}\label{eq:Wigner_rep}
    [\Sigma(\Lambda)\psi](k,a) = \sum_b D_{ab}\Big(L(k)^{-1}\Lambda L(\Lambda^{-1}k) \Big) \psi(\Lambda^{-1}k,b).
\end{equation}
In the massless case, $L$ is discontinuous for at least one $k'$. Thus, if $\psi$ is smooth, the transformed wavefunction $\Sigma(\Lambda)\psi$ is discontinuous at $k'$ and $\Lambda k'$. The trivial exceptions to this are when $\Lambda$ is the identity or when $D_{ab}$ is the trivial representation with $D_{ab}(\Lambda)=\mathds{1}$ for every $\Lambda$. The latter is interesting because it shows that the massless helicity $0$ representation is smooth. However, all the nonzero helicity representations are not smooth. We do note that such discontinuous wavefunctions are still elements of the $L^2$ Hilbert space, and Wigner showed that these representations can be regarded as continuous in the technical sense that if $\Lambda_\alpha \rightarrow \Lambda$, then $|\Lambda_\alpha \psi| \rightarrow |\Lambda \psi|$ in the $L^2$ norm. Nevertheless, such non-smooth representations are both practically awkward and physically unnatural.

One way to understand this issue is that the conventional little group method constructs representations on the ``wrong'' Hilbert space for massless particles. From a mathematical standpoint, there is a single infinite-dimensional Hilbert space since between any two Hilbert spaces $\mathcal{H}_1$ and $\mathcal{H}_2$, there exists a (non-unique) unitary isomorphism $F:\mathcal{H}_1\rightarrow \mathcal{H}_2$. However, there are many concrete manifestations of this Hilbert space, for example, the spaces $L^2(\mathbb{R}^n)$ of $\Comp$-valued square-integrable functions on $\mathbb{R}^n$. For a massive particle with spin $s$, Eq. (\ref{eq:Wigner_rep}) gives a smooth representation on the $2s+1$ component wavefunctions over the mass hyperboloid, that is, on the Hilbert space $\mathcal{H} = \bigoplus_{2s+1} L^2(\mathbb{R}^3)$. However, for any other Hilbert space, say $L^2(\mathbb{R})$, there exists a unitary isomorphism $F:\mathcal{H} \rightarrow L^2(\mathbb{R})$, which then induces a representation on $L^2(\mathbb{R})$. Of course, there is no guarantee that $F$ will map smooth wavefunctions into smooth wavefunctions, and thus, the induced representation on $L^2(\mathbb{R})$ is generally not smooth. This is to say, it is possible to describe a spin $s$ particle by a single wavefunction over the real line, but it would behave pathologically under Poincar\'{e} transformations, and is thus unnatural from a physical standpoint. Indeed, such representations are never used in physics. In this sense, one might regard $L^2(\mathbb{R})$ as the ``wrong'' Hilbert space to represent spin $s$ particles. That the conventional little group method produces non-smooth massless representation on $L^2(\mathcal{L}_+)$ suggests that $L^2(\mathcal{L}_+)$ is not a well-suited Hilbert space for massless particles (except when $h=0$). We will show in section \ref{subsec:sectional_reps} that $L^2$ sections of vector bundles over $\mathcal{L}_+$ support smooth representations, and thus form natural Hilbert spaces for massless particles.

%%%%%%%%%%%% Vector bundle little group %%%%%%%%%
\subsection{The little group method for massless vector bundle representations}\label{subsec:bundle_little_group}

We can resolve global non-smoothness issues in the massless case by considering vector bundle representations of the Poincar\'{e} group over the lightcone. We will show that a vector bundle version of the little group method can be used to canonically decompose any unitary $\mathrm{ISO}^+(3,1)$-equivariant vector bundle $\pi:E\rightarrow \mathcal{L}_+$ into irreducible bundle representations labeled by helicity.

We begin by defining bundle representations of the little group.
\begin{definition}[Stabilizing vector bundle representation]
    A $G$-equivariant vector bundle $\pi:E\rightarrow M$ with group action $(\Sigma,\tilde \Sigma)$ is said to be stabilizing if $\tilde \Sigma_g$ is the identity for every $g\in G$, that is, if
    \begin{equation}
        \Sigma_g(E_k) \seq E_k
    \end{equation}
    for every $k \in M$.
\end{definition}
\begin{definition}[Little group of a vector bundle representation]
    Let $\pi:E\rightarrow M$ be a homogeneous $G$-equivariant vector bundle. The little group at $k \in M$ is defined by
    \begin{equation}
        \LittleGroup{k} = \{g \in G | \tilde \Sigma_g(k) = k\}.
    \end{equation}    
    The little group $H$ of the representation is defined as the isomorphism type of $\LittleGroup{k}$, which is independent of $k$ because $\LittleGroup{\Tilde{\Sigma}_g k} = g\LittleGroup{k} g^{-1}$ and $E$ is homogeneous. 
    
    In the case that $G$ is the Poincar\'{e} group, $H$ and $\LittleGroup{k}$ are restricted to elements of $\mathrm{SO}^+(3,1)$, that is, they are defined in terms of the corresponding proper orthochronous Lorentz action.
\end{definition}

\begin{theorem}[Little group representation]\label{thm:little_group_rep}
    Let $\pi:E\rightarrow M$ be a homogeneous $G$-equivariant vector bundle with action $(\Sigma, \tilde \Sigma)$ and little group H. Let $\LittleGroup{k}$ denote the little group at $k\in M$. Suppose $f:H\times M \rightarrow G$ is a smooth map such that $f(h,k) \in \LittleGroup{k}$ for all $h,k$, and for each fixed $k$, $f(h,k)$ is a group homomorphism from $H$ to $G$. Such an $f$ induces a stabilizing bundle representation of $H$ on $E$ with action $\Pi$ given by
    \begin{equation}\label{eq:little_group_rep_general}
        \Pi_h(k,v) = \Sigma_{f(h,k)}(k,v)
    \end{equation}
    for $(k,v) \in E$. $\Pi$ is unitary if $\Sigma$ is unitary. A bundle representation of H induced by such an $f$ is called a little group representation. 
\end{theorem}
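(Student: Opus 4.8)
The plan is to verify directly that the prescription $\Pi_h(k,v) = \Sigma_{f(h,k)}(k,v)$ satisfies all the axioms of a unitary stabilizing bundle representation of $H$: (i) that $\Pi$ is a smooth action of $H$ on $E$; (ii) that each $\Pi_h$ is a vector bundle isomorphism covering the identity on $M$; and (iii) that $\Pi$ is unitary whenever $\Sigma$ is. Since $\Sigma$ is already a known $G$-equivariant bundle action, most of these properties should be inherited from $\Sigma$ once we understand how the $k$-dependence of $f$ interacts with the fiber structure.

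First I would establish the stabilizing property, since it underpins everything else. Because $f(h,k) \in \LittleGroup{k}$ by hypothesis, the induced base map satisfies $\tilde{\Sigma}_{f(h,k)}(k) = k$, so $\Pi_h$ sends the fiber $E_k$ into itself; equivalently the base map $\tilde{\Pi}_h$ is the identity for every $h$. This is what permits us to regard $\Pi_h$ as a family of fiberwise linear maps $\Sigma_{f(h,k)}|_{E_k}\colon E_k \to E_k$.

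Next I would check the group-action axioms, which is the crux of the argument. For the identity, the fact that $f(\cdot,k)$ is a homomorphism gives that $f(e,k)$ is the identity of $G$, whence $\Pi_e = \Sigma_{\mathrm{id}} = \mathrm{id}$. For composition I would compute $\Pi_{h_1}\Pi_{h_2}(k,v)$. The essential point, and the main subtlety, is that $\Pi_{h_2}(k,v) = \Sigma_{f(h_2,k)}(k,v)$ still lies over the base point $k$, precisely because $f(h_2,k)$ stabilizes $k$. Consequently applying $\Pi_{h_1}$ evaluates $f(h_1,\cdot)$ at the \emph{same} base point $k$, so that
\begin{equation}
  \Pi_{h_1}\Pi_{h_2}(k,v) = \Sigma_{f(h_1,k)}\Sigma_{f(h_2,k)}(k,v) = \Sigma_{f(h_1,k)\,f(h_2,k)}(k,v),
\end{equation}
using that $\Sigma$ is a $G$-action. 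Then the homomorphism property of $f(\cdot,k)$ collapses $f(h_1,k)\,f(h_2,k) = f(h_1 h_2,k)$, giving $\Pi_{h_1}\Pi_{h_2} = \Pi_{h_1 h_2}$. Had the base point moved under $\Pi_{h_2}$, this collapse would fail, so the stabilizing property is doing genuine work here; this is the one place where both hypotheses on $f$ are combined.

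Finally, the remaining regularity and unitarity claims follow routinely. Smoothness of $\Pi\colon H \times E \to E$ holds because it is the composite of the smooth maps $(h,(k,v)) \mapsto (f(h,k),(k,v))$ and $\Sigma$. Each $\Pi_h$ is then a diffeomorphism with smooth inverse $\Pi_{h^{-1}}$ by the action property just proved, is fiber-preserving and fiberwise linear (each $\Sigma_{f(h,k)}$ being a bundle isomorphism), and is therefore a vector bundle isomorphism covering the identity. If $\Sigma$ is unitary, then $\Sigma_{f(h,k)}|_{E_k}\colon E_k \to E_k$ is unitary, so $\Pi_h$ is fiberwise unitary and $\Pi$ is a unitary representation. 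Together these establish that $(E,\Pi)$ is the desired stabilizing little group representation.
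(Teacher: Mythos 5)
Your proposal is correct and follows essentially the same direct verification as the paper: smoothness from the smoothness of $\Sigma$ and $f$, the composition law from the homomorphism property of $f(\cdot,k)$ together with the fact that $\Sigma$ is a $G$-action, the stabilizing property from $f(h,k)\in \LittleGroup{k}$, and unitarity inherited fiberwise from $\Sigma$. Your explicit remark that the stabilizing property is what keeps the base point fixed under $\Pi_{h_2}$, so that $\Pi_{h_1}$ evaluates $f(h_1,\cdot)$ at the same $k$, makes precise a step the paper's chain of equalities passes over silently; it is a worthwhile clarification but not a different argument.
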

\begin{proof}
    $\Pi$ is smooth because $\Sigma$ and $f$ are. It is a vector bundle representation because
    \begin{align}
        \Pi_{h_1h_2}(k,v) &= \Sigma_{f(h_1 h_2,k)}(k,v) = \Sigma_{f(h_1,k)f(h_2,k)}(k,v) \\ &= \Sigma_{f(h_1,k)}\Sigma_{f(h_2,k)}(k,v) = \Pi_{h_1}\Pi_{h_2}(k,v).
    \end{align}
    It is stabilizing since $f(h,k) \in \LittleGroup{k}$. Since $\Pi$ can be written in terms of $\Sigma$, if $\Sigma$ is unitary then so is $\Pi$.
\end{proof}
By the definition of the little group $H$, for each $k$ there exists an isomorphism $f_k:H\rightarrow \LittleGroup{k}$. Note that $f_k$ is not typically unique. One can then define
\begin{equation}
    f(h,k) = f_k(h).
\end{equation}
However, this $f$ is not generally smooth since there is no guarantee that the choices of $f_k$ fit together smoothly. Our first goal is construct a canonical smooth $f$ for an arbitrary $\mathrm{ISO}^+(3,1)$-equivariant vector bundle $\pi:E\rightarrow\mathcal{L}_+$, which will then furnish a canonical little group representation. Note that any such bundle $E$ is homogeneous since $\mathrm{ISO}^+(3,1)$ acts transitively on $\mathcal{L}_+$.

For a fixed $\bar{k} \in \mathcal{L}_+$, any $\Lambda \in \LittleGroup{\bar{k}}$ satisfies $\Sigma(\Lambda)E_{\bar k} \subseteq E_{\bar k}$, so $\Sigma$ gives a finite-dimensional vector space representation of $\LittleGroup{\bar k}\cong \mathrm{ISO}(2)$ on the fiber $\gamma_{\bar{k}}$; we will use $ \Sigma^{\bar k}$ to denote this restriction of $\Sigma$ to $E_{\bar k}$.

Consider first the special case of $\bar{k} = (1,0,0,1)$. Weinberg \cite{Weinberg1995} showed that the little group of this $\bar{k}$ can be described by a function of three parameters $W^{\bar{k}}(\theta, \alpha, \beta):\mathrm{ISO}(2)\rightarrow \mathrm{ISO}^+(3,1)$ and is represented by
\begin{align}\label{eq:iso_rep_k_vary}
\begin{split}
    {\Sigma}(W^{\bar{k}}(\theta, \alpha, \beta)) &= \exp{(i \alpha A + i\beta B + i\theta J_3)}, \\
    A&=J_2 + K_1, \\
    B&=-J_1+K_2.
\end{split}
\end{align}
Under the isomorphism with $\mathrm{ISO}(2)$, $J_3$ generates 2D rotations while $A$ and $B$ generate translations. They satisfy the commutation relations
\begin{align}
    [J_3,A] &= +iB, \label{eq:little_group_comm_1}\\
    [J_3,B] &= -iA,\label{eq:little_group_comm_2} \\
    [A,B] &= 0 \label{eq:little_group_comm_3}.
\end{align}
It is known that all finite-dimensional irreducible representations of $\mathrm{ISO}(2)$ are one-dimensional. The general case was proved by Schwarz \cite{Schwarz1971}; simpler proofs for unitary representations of $\mathrm{ISO}(2)$ are presented by Weinberg \cite{Weinberg1995} and Maggiore \cite{Maggiore2005}. Therefore, on each such irreducible representation,  $J_3$, $A$, and $B$ must all be multiplication operators and thus commute. Eqs. (\ref{eq:little_group_comm_1}) and (\ref{eq:little_group_comm_2}) then imply $A=B=0$ on each irreducible representation, which in turn implies $A=B=0$ on all of $E_{\bar k}$. Thus, 
\begin{align}
    J_2 + K_1 = 0 \label{eq:JK_1}, \\
    -J_1 + K_2 = 0\label{eq:JK_2}
\end{align}
when restricted to $\gamma_{\bar k}$. We note two subtleties here. First, the relations (\ref{eq:JK_1}) and (\ref{eq:JK_2}) hold only on the fiber $E_{\bar k}$. Second, $J_2$ and $K_1$ are not independently operators in the little group representation since neither generate transformations leaving $\bar{k}$ invariant. However, since $J_2 + K_1 = 0$ in the little group representation, this relation holds also in the bundle representation when restricted to $E_{\bar{k}}$, and in this sense the perpendicular boosts and angular momenta are related by $J_2 = -K_1$ and $J_1 = K_2$ on $E_{\bar{k}}$.

Since $A=B=0$, the irreducible representations are completely determined by the action of the generator $J_3$. In fact, they are just the eigenspaces of $J_3$. Let $\Psi_{\bar k,h}$ be the eigenvectors with eigenvalues $h$:
\begin{equation}
    J_3\Psi_{\bar k,h} = h \Psi_{\bar k, \sigma}. 
\end{equation}
$h$ defines the helicity of each irreducible representation of $\LittleGroup{\bar{k}}$. Since
\begin{equation}
    e^{i2\pi J_3}\Psi_{\bar k,h} = e^{2\pi i h}\Psi_{\bar k,h} = \Psi_{\bar k,h},
\end{equation}
$h$ must be an integer. Note that if one allows projective representations, $h$ may also be a half-integer \cite{Weinberg1995}.

These results generalize easily to arbitrary $k = (|k|,\boldsymbol{k}) \in \mathcal{L}_+$ as there is nothing special about $\bar{k} = (1,0,0,1)$. Choose any $\boldsymbol{f}_1,\boldsymbol{f}_2 \in \mathbb{R}^3$ such that $(\boldsymbol{f}_1, \boldsymbol{f}_2, \boldsymbol{\hat k})$ form a right-handed orthonormal coordinate system. Then the little group $\LittleGroup{k}$ is given by a function $W^k(\theta, \alpha, \beta)$ represented by
\begin{align}\label{eq:iso_rep_k_fixed}
    \Sigma(W^k(\theta, \alpha, \beta)) &= \exp{[i \alpha A^k + i\beta B^k + i\theta (\boldsymbol{\hat k} \cdot \boldsymbol{J})]}, \\
    A^k &= \boldsymbol{f}_2 \cdot \boldsymbol{J} + \boldsymbol{f}_1 \cdot \boldsymbol{K} = 0 \label{eq:JK_general_1}, \\
    B^k &= - \boldsymbol{f}_1 \cdot \boldsymbol{J} + \boldsymbol{f}_2 \cdot \boldsymbol{K} = 0 \label{eq:JK_general_2}.
\end{align}
The latter two equations say that the boosts and perpendicular angular momentum are related on $E_k$ by
\begin{equation} \label{eq:JK}
    \boldsymbol{J}_\perp = -\boldsymbol{\hat k}\times \boldsymbol{K}_\perp = -\boldsymbol{\hat k}\times \boldsymbol{K}
\end{equation}
where
\begin{align}
    \boldsymbol{J}_\perp &\doteq \boldsymbol{J} - \boldsymbol{\hat k}(\boldsymbol{\hat k} \cdot \boldsymbol{J}) \label{eq:J_perp}, \\
    \boldsymbol{K}_\perp &\doteq \boldsymbol{K} - \boldsymbol{\hat k}(\boldsymbol{\hat k} \cdot \boldsymbol{K}) \label{eq:k_perp}.
\end{align}
The only nontrivial generator of the little group is $\boldsymbol{\hat k} \cdot \boldsymbol{J}$, and its eigenspaces are again labeled by integer helicities. For systems with a well-defined spatial-inversion symmetry, such as the photon system, it can further be shown that the eigenvalues must come in $\pm h$ pairs \cite{Weinberg1995}.

We would like to show that the fiber-wise little group representations fit together to form a bundle representation of the little group  $\mathrm{ISO}(2)$ via Theorem \ref{thm:little_group_rep}. However, $W^k(\theta,\alpha,\beta)$ considered as a function from $\mathrm{ISO}(2) \times \mathcal{L}_+$ to $\mathrm{ISO}^+(3,1)$ is not smooth because it is not possible to smoothly choose $(\boldsymbol{f}_{1}(k),\boldsymbol{f}_2(k))$ by the hairy ball theorem. However, the fiber-wise little group representations have the important property that the translations $A^k$ and $B^k$ act trivially. Thus, $f:\mathrm{ISO}(2) \times \mathcal{L}_+ \rightarrow \mathrm{SO}^+(3,1)$ given by \begin{equation}
    f\big((\theta,\alpha,\beta),k\big) = W^k(\theta,0,0)
\end{equation}
is smooth and produces precisely the same action as $W^k(\theta,\alpha,\beta)$ under $\Sigma$:
\begin{equation}\label{eq:f_little_group}
    \Sigma\big(f((\theta,\alpha,\beta),k)\big) = \Sigma(W^k(\theta,\alpha,\beta)).
\end{equation}
Applying Theorem \ref{thm:little_group_rep} to this $f$ then gives a canonical little group action.
\begin{theorem}\label{thm:associated_little_group}
    Every $\mathrm{ISO}^+(3,1)$-equivariant vector bundle $\pi:E\rightarrow \mathcal{L}_+$ with group action $\Sigma$ has an associated little group representation of $\mathrm{ISO}(2)$ with action $\Pi$ given by
    \begin{equation}
        \Pi(\theta,\alpha,\beta)(k,v) = (k,e^{i\theta \chi}v) = (k,e^{i\theta (\boldsymbol{\hat k} \cdot \boldsymbol{J})}v)
    \end{equation}
    where
    \begin{equation}
        \chi = \boldsymbol{\hat{k}} \cdot \boldsymbol{J}
    \end{equation}
    is the helicity operator. If $\Sigma$ is unitary, then so is $\Pi$.
\end{theorem}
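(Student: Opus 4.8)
The plan is to construct $\Pi$ by invoking Theorem \ref{thm:little_group_rep} applied to the explicit map
\begin{equation*}
    f\big((\theta,\alpha,\beta),k\big) = W^k(\theta,0,0),
\end{equation*}
and then to check that this $f$ satisfies the three hypotheses of that theorem: it takes values in $\LittleGroup{k}$, it is a group homomorphism in its $\mathrm{ISO}(2)$ argument for each fixed $k$, and --- the essential point --- it is smooth on $\mathrm{ISO}(2)\times\Lightcone$.

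First I would check that $f\big((\theta,\alpha,\beta),k\big)\in\LittleGroup{k}$. The group element $W^k(\theta,0,0)$ is the spatial rotation about the axis $\boldsymbol{\hat k}$ by angle $\theta$, so it fixes both the spatial part $\boldsymbol{k}$ and the energy $|\boldsymbol{k}|$, and hence fixes $k\in\Lightcone$. Next I would verify the homomorphism property: the map $(\theta,\alpha,\beta)\mapsto\theta$ is the quotient homomorphism $\mathrm{ISO}(2)\to\mathrm{SO}(2)$ that forgets the $\mathbb{R}^2$ of translations, and $\theta\mapsto W^k(\theta,0,0)$ embeds $\mathrm{SO}(2)$ into $\mathrm{SO}^+(3,1)$ as the rotations about $\boldsymbol{\hat k}$. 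Because rotations about a common axis compose by adding their angles, $f(\cdot,k)$ is a composition of homomorphisms and is itself a homomorphism.

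The crux is smoothness, which is precisely where the full three-parameter $W^k(\theta,\alpha,\beta)$ fails. Here I would use the fact, already established in Eqs. (\ref{eq:JK_general_1}) and (\ref{eq:JK_general_2}), that the translation generators $A^k$ and $B^k$ vanish on each fiber $E_k$, so by Eq. (\ref{eq:f_little_group}) the truncated map $W^k(\theta,0,0)$ induces exactly the same $\Sigma$-action as $W^k(\theta,\alpha,\beta)$. Unlike the latter, $W^k(\theta,0,0)$ depends only on the axis $\boldsymbol{\hat k}$ and the angle $\theta$; it requires no choice of perpendicular frame $(\boldsymbol{f}_1(k),\boldsymbol{f}_2(k))$, whose global smooth existence is obstructed by the hairy ball theorem. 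A rotation about a smoothly varying axis by a smoothly varying angle is given by a closed-form (Rodrigues) expression, so $f$ is smooth.

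With the hypotheses in place, Theorem \ref{thm:little_group_rep} immediately produces a stabilizing bundle representation $\Pi$ with $\Pi_h(k,v)=\Sigma_{f(h,k)}(k,v)=\Sigma_{W^k(\theta,0,0)}(k,v)$, which by Eq. (\ref{eq:iso_rep_k_fixed}) acts on the fiber by $e^{i\theta(\boldsymbol{\hat k}\cdot\boldsymbol{J})}=e^{i\theta\chi}$, yielding the stated formula; the same theorem gives unitarity of $\Pi$ whenever $\Sigma$ is unitary. I expect the only real obstacle to be arguing cleanly that the axis-angle rotation varies smoothly over $\Lightcone$ and that discarding the translational parameters leaves the $\Sigma$-action unchanged --- both of which reduce to results already assembled above, so the proof is essentially the application of Theorem \ref{thm:little_group_rep} to this particular $f$.
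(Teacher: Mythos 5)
Your proposal is correct and follows essentially the same route as the paper: the paper also defines $f\big((\theta,\alpha,\beta),k\big)=W^k(\theta,0,0)$, observes that discarding the translational parameters leaves the $\Sigma$-action unchanged because $A^k=B^k=0$ on each fiber, notes that this truncated map is smooth precisely because it needs no perpendicular frame $(\boldsymbol{f}_1,\boldsymbol{f}_2)$, and then applies Theorem \ref{thm:little_group_rep}. The only difference is one of presentation: you spell out the homomorphism check via the quotient $\mathrm{ISO}(2)\to\mathrm{SO}(2)$ and the smoothness via an axis-angle formula, which the paper leaves implicit.
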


Note that the helicity operator can be written as
\begin{equation}\label{eq:helicity}
    \chi = \boldsymbol{\hat{k}} \cdot \boldsymbol{J} = \frac{\boldsymbol{P}}{|\boldsymbol{k}|} \cdot \boldsymbol{J} = \frac{1}{H} \boldsymbol{P} \cdot \boldsymbol{J}.
\end{equation}
This operator is smooth on the vector bundle since $H = |\boldsymbol{k}|\neq 0$ on the lightcone and is thus invertible. Note that $H^{-1}$ commutes with $\boldsymbol{P}$ and $\boldsymbol{J}$. Furthermore, for each $a\in \{1,2,3\}$, $P_a$ and $J_a$ commute, so $\boldsymbol{P} \cdot \boldsymbol{J} = \boldsymbol{J} \cdot \boldsymbol{P}$. Thus, the terms in Eq.\,(\ref{eq:helicity}) can be rearranged, and there is no ordering ambiguity in the definition of $\chi$. The following is an important property of $\chi$.
\begin{theorem}
    The $\mathrm{ISO}^+(3,1)$ action $\Sigma$ and its associated little group action $\Pi$ commute.
\end{theorem}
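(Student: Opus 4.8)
The plan is to reduce the commutation of $\Sigma$ and $\Pi$ to a single fact: that $\Sigma$ intertwines the helicity operator, i.e.\ that
\[
\Sigma_g\,\chi_k = \chi_{gk}\,\Sigma_g \qquad\text{as maps } E_k\to E_{gk}
\]
for every $g\in\mathrm{ISO}^+(3,1)$ and $k\in\mathcal{L}_+$, where $\chi_k$ is the restriction of $\chi=\boldsymbol{\hat k}\cdot\boldsymbol{J}$ to the fiber $E_k$. This is precisely the statement that helicity is a Poincar\'{e} invariant. Granting it and exponentiating, for $h=(\theta,\alpha,\beta)\in\mathrm{ISO}(2)$ and $(k,v)\in E$ one has
\[
\Sigma_g\Pi_h(k,v)=\big(gk,\Sigma_g e^{i\theta\chi_k}v\big)=\big(gk,e^{i\theta\chi_{gk}}\Sigma_g v\big)=\Pi_h\Sigma_g(k,v),
\]
which is the desired commutation. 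So the whole problem reduces to the intertwining relation.

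First I would recast this relation group-theoretically using the construction of $\Pi$. By Theorem~\ref{thm:associated_little_group}, $\Pi_h$ acts on $E_k$ as $\Sigma_{r_k(\theta)}$, where $r_k(\theta)\doteq W^k(\theta,0,0)$ is the spatial rotation about $\boldsymbol{\hat k}$ by angle $\theta$; the translation parameters $\alpha,\beta$ play no role since $A^k$ and $B^k$ act trivially. Using that $\Sigma$ is a homomorphism,
\[
\Sigma_g\Pi_h(k,v)=\Sigma_{g\,r_k(\theta)}(k,v)=\Sigma_{g\,r_k(\theta)\,g^{-1}}\,\Sigma_g(k,v).
\]
The element $g'\doteq g\,r_k(\theta)\,g^{-1}$ stabilizes $gk$, so it lies in $\LittleGroup{gk}\cong\mathrm{ISO}(2)$ and can be written $g'=W^{gk}(\theta',\alpha',\beta')$. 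Restricting $\Sigma_{g'}$ to $E_{gk}$ and invoking $A^{gk}=B^{gk}=0$ from Eqs.~\eqref{eq:JK_general_1}--\eqref{eq:JK_general_2}, the translation parts drop out, leaving $\Sigma_{g'}|_{E_{gk}}=e^{i\theta'\chi_{gk}}$. Everything then comes down to showing that the rotation angle of the conjugate satisfies $\theta'=\theta$.

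The hard part will be establishing $\theta'=\theta$, i.e.\ that conjugation by $g$ preserves the $\mathrm{SO}(2)$ rotation angle of the little group. The structure I would exploit is that the translation subgroup $\mathbb{R}^2\subset\mathrm{ISO}(2)$ is characteristic, so the conjugation isomorphism $\LittleGroup{k}\to\LittleGroup{gk}$ descends to an automorphism of the quotient $\mathrm{SO}(2)\cong\mathrm{U}(1)$; since the only such automorphisms are the identity and inversion, $\theta'=\pm\theta$. To fix the sign I would argue that $g$ carries the two-dimensional ``screen'' $k^\perp/\langle k\rangle$ of $k$ to that of $gk$ as an orientation-preserving map, each screen being oriented continuously by the right-hand rule relative to its outward normal $\boldsymbol{\hat k}$; an orientation-preserving conjugation sends a planar rotation by $\theta$ to a rotation by the same $\theta$, giving $\theta'=\theta$. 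Equivalently, I would note that $g\mapsto\theta'/\theta\in\{\pm1\}$ is continuous, hence locally constant, on the connected group $\mathrm{ISO}^+(3,1)$ and equals $+1$ at $g=e$, so it is identically $+1$.

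With $\theta'=\theta$ the chain of equalities closes and the theorem follows. As a consistency check, the invariance can be seen infinitesimally: on each fiber $\boldsymbol{P}$ and $H$ act as scalars, so $\chi=H^{-1}\boldsymbol{P}\cdot\boldsymbol{J}$ reduces to $\boldsymbol{\hat k}\cdot\boldsymbol{J}$ as in Eq.~\eqref{eq:helicity}, and the little-group relation $\boldsymbol{J}_\perp=-\boldsymbol{\hat k}\times\boldsymbol{K}$ of Eq.~\eqref{eq:JK} is exactly what prevents this combination from mixing with the boost generators under $\Sigma$. I expect the orientation bookkeeping of the third paragraph to be the only genuine subtlety; the remainder is formal use of the homomorphism property together with the already-established vanishing of $A^k$ and $B^k$.
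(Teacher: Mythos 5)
Your proof is correct in substance but takes a genuinely different route from the paper's. The paper works infinitesimally: it verifies $[\chi,G]=0$ for each generator $G\in\{H,\boldsymbol{P},\boldsymbol{J},\boldsymbol{K}\}$ directly from the commutation relations (\ref{eq:comm_1})--(\ref{eq:comm_4}), with the massless relation $\boldsymbol{J}_\perp=-\boldsymbol{\hat k}\times\boldsymbol{K}$ of Eq.\,(\ref{eq:JK}) entering only in the boost computation --- exactly where you predict the key cancellation must occur. You instead argue at the group level: writing $\Pi_h|_{E_k}=\Sigma_{r_k(\theta)}$ and conjugating, the whole problem becomes showing that $g\,r_k(\theta)\,g^{-1}\in\LittleGroup{gk}$ has rotation parameter $\theta'=\theta$, which you settle by observing that conjugation descends to an automorphism of $\mathrm{SO}(2)\cong\LittleGroup{gk}/\mathbb{R}^2$ (the translations being the commutator subgroup, hence characteristic) and that the resulting sign $\theta'/\theta\in\{\pm1\}$ is $+1$ by continuity on the connected group. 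Both proofs rest on the same essential input, $A^k=B^k=0$ on fibers; yours buys a more conceptual explanation --- commutation is conjugation-invariance of the little-group rotation angle --- at the cost of orientation and connectedness bookkeeping, while the paper's is a short mechanical Lie-algebra computation. Two points to tighten in your version: (i) if $g$ contains a spacetime translation, then $g\,r_k(\theta)\,g^{-1}$ acquires a translational part and so does not lie in $\LittleGroup{gk}$ as the paper defines it (inside $\mathrm{SO}^+(3,1)$); dispose of translations separately first, which is immediate since $\Sigma_{\mathsf{a}}$ acts on each fiber by the scalar phase $e^{i(\boldsymbol{k}\cdot\boldsymbol{\mathsf{a}}-|\boldsymbol{k}|\mathsf{a}^0)}$ and $\Pi_h$ preserves fibers; (ii) the continuity argument for $\theta'/\theta=+1$ requires reading off the rotation angle with respect to a continuously varying orientation of the screen planes, which the outward right-hand-rule convention implicit in $\chi=\boldsymbol{\hat k}\cdot\boldsymbol{J}$ does supply, but this should be stated rather than left implicit.
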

\begin{proof}
    We prove this by showing that $\chi$ commutes with all generators of the $\mathrm{ISO}^+(3,1)$ action. In the following calculations, we implicitly sum over all repeated indices. That $\chi$ commutes with the spacetime translation generators and the Hamiltonian $H$ is trivial. For the generators $P_b$, we have
    \begin{align}
    \begin{split}
        [\chi, P_b] &= [\frac{P_a}{H}J_a,P_b] = \frac{P_a}{H}[J_a,P_b] \\
        &= i\epsilon_{abc}H^{-1}P_aP_c 
        = 0.
    \end{split}
    \end{align}
    For $J_b$,
    \begin{align}
    \begin{split}
        [\chi,J_b] &= H^{-1}[P_aJ_a,J_b] \\
        &= H^{-1}(P_a[J_a,J_b] + [P_a,J_b]J_a) \\
        &= iH^{-1}(\epsilon_{abc}P_aJ_c + \epsilon_{abc}P_cJ_a) \\
        &= 0.
    \end{split}
    \end{align}
    To show that $\chi$ commutes with boosts, note that
    \begin{align}
    \begin{split}
        [H^{-1},K_b] &= H^{-1}K_b - K_bH^{-1} \\
        &= H^{-1}K_bHH^{-1} - H^{-1}HK_b H^{-1}  \\
        &= H^{-1}[K_b,H]H^{-1} = iH^{-1}P_bH^{-1} \\
        &= iH^{-2}P_b.
    \end{split}
    \end{align}
    Thus,
    \begin{align}
    \begin{split}
        [\chi, K_b] &= [H^{-1}P_aJ_a,K_b] \\
        &= H^{-1}P_a[J_a,K_b] + [H^{-1}P_a,K_b]J_a \\
        &= i\epsilon_{abc}H^{-1}P_aK_c + H^{-1}[P_a,K_b]J_a + [H^{-1},K_b]P_aJ_a \\
        &= -iH^{-1}(\boldsymbol{P}\times \boldsymbol{K})_b - i\delta_{ab}J_a + iH^{-2}P_bP_aJ_a \\
        &= i(-J_b - (\boldsymbol{\hat k} \times\boldsymbol{K})_b + \hat{k}_b(\boldsymbol{k}\cdot \boldsymbol{J}))
    \end{split}
    \end{align}    
    By the relation (\ref{eq:JK}) between the boost and rotation generators for massless representations, these terms cancel giving
    \begin{equation}
        [\chi, K_b] = 0.
    \end{equation}
\end{proof}
We can now show that the massless unitary irreducible bundle representations of $\mathrm{ISO}^+(3,1)$ are the constant helicity representations.
\begin{theorem}\label{thm:bundle_decomp}
    Let $\pi:E\rightarrow \mathcal{L}_+$ be a unitary $\mathrm{ISO}^+(3,1)$-equivariant vector bundle of rank $r$. Then $E$ can be decomposed as
    \begin{equation}
        E = E_1 \oplus \cdots \oplus E_r
    \end{equation}
    where the $E_j$ are unitary irreducible $\mathrm{ISO}^+(3,1)$-equivariant line subbundles of $E$. Each $E_j$ has definite helicity $h_j$ in the sense that every element of $E_j$ is an eigenvector of $\chi$ with helicity $h_j$.
\end{theorem}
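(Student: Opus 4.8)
The plan is to build the decomposition out of the helicity operator $\chi=\boldsymbol{\hat k}\cdot\boldsymbol{J}$ from Theorem \ref{thm:associated_little_group}, in two stages: first split $E$ into the eigenbundles of $\chi$, which will be the pure-helicity pieces, and then refine each eigenbundle into equivariant line subbundles. The whole argument rests on two facts already in hand, namely that $\chi$ is a smooth, fiberwise Hermitian bundle endomorphism (it generates the unitary stabilizing action $\Pi$, so $e^{i\theta\chi}$ is fiberwise unitary) whose eigenvalues are integers, and that $\chi$ commutes with the full action $\Sigma$ (the theorem immediately preceding this one).

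For the first stage, I would regard $\chi$ as a smooth field $k\mapsto\chi_k$ of Hermitian operators on the fibers $E_k$, with spectrum contained in $\mathbb{Z}$. Because distinct integer eigenvalues are separated by gaps of size at least $1$, I can enclose each candidate eigenvalue $n$ in a fixed circle $C_n$ of radius $1/2$ in the complex plane and form the Riesz spectral projection $P_n(k)=\tfrac{1}{2\pi i}\oint_{C_n}(z-\chi_k)^{-1}\,dz$. These projections are smooth in $k$, and $\operatorname{rank}P_n(k)=\operatorname{tr}P_n(k)$ is an integer-valued continuous function, hence locally constant; since $\mathcal L_+\cong\pspace$ is connected, each rank is globally constant. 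The nonzero projections therefore pick out finitely many helicities $h_1,\dots,h_m$ of constant multiplicity, and their images $E_{h_j}=\operatorname{im}P_{h_j}$ are smooth, mutually orthogonal subbundles with $E=\bigoplus_j E_{h_j}$. (Equivalently one may verify the hypotheses of the subbundle criterion, Lemma \ref{lm:subbundle_criterion}, using local eigenframes.) Equivariance of each $E_{h_j}$ is immediate: since $\chi$ commutes with $\Sigma_g$, if $\chi_k v=h_j v$ then $\chi_{gk}(\Sigma_g v)=\Sigma_g(\chi_k v)=h_j\,\Sigma_g v$, so $\Sigma_g$ carries the helicity-$h_j$ eigenspace at $k$ to the one at $gk$. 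Each $E_{h_j}$ is thus a unitary $\mathrm{ISO}^+(3,1)$-equivariant subbundle on which $\chi\equiv h_j$, i.e.\ a pure-helicity piece.

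For the second stage I would split each pure-helicity bundle $E_{h}$ into line subbundles. The key observation is that the stabilizer of a base point $\bar k$ inside $\mathrm{ISO}^+(3,1)$ is $\mathbb{R}^4\rtimes\LittleGroup{\bar k}\cong\mathbb{R}^4\rtimes\mathrm{ISO}(2)$ (translations act trivially on the base, and the homogeneous part must lie in the little group), and that this entire stabilizer acts on the fiber $(E_h)_{\bar k}$ by scalars: translations act by the uniform phase $e^{i(\boldsymbol{k}\cdot\boldsymbol{\mathsf a}-|\boldsymbol{k}|\mathsf a^0)}$, while $W^{\bar k}(\theta,\alpha,\beta)$ acts by $\exp(i\alpha A+i\beta B+i\theta J_3)=e^{i\theta h}$ because $A=B=0$ and $J_3=h$ on this fiber. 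Consequently \emph{every} line in $(E_h)_{\bar k}$ is stabilizer-invariant. Choosing any orthonormal basis $e_1,\dots,e_{r_h}$ of $(E_h)_{\bar k}$, I would transport each $\Comp e_i$ around the base by setting $\ell_i(k)=\Comp\,\Sigma_{g_k}e_i$ for any $g_k\in\mathrm{ISO}^+(3,1)$ with $g_k\bar k=k$; the scalar action of the stabilizer makes $\ell_i(k)$ independent of the choice of $g_k$, and local smooth sections of the submersion $\mathrm{ISO}^+(3,1)\to\mathcal L_+$ make $\ell_i$ a smooth line subbundle. By construction $\Sigma_g\ell_i(k)=\ell_i(gk)$, so each $\ell_i$ is equivariant, and since $\Sigma_{g_k}$ is a fiber isomorphism the lines $\ell_1(k),\dots,\ell_{r_h}(k)$ stay independent and span, giving $E_h=\bigoplus_i\ell_i$. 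A line subbundle has no proper nonzero subbundle, so each $\ell_i$ is automatically irreducible; relabelling the $r=\sum_h r_h$ lines as $E_1,\dots,E_r$ completes the decomposition.

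The main obstacle I anticipate is the global, smooth control in the first stage: one must leverage the integrality of the helicity eigenvalues together with the connectedness of $\mathcal L_+$ to rule out the multiplicities of $\chi$ jumping from fiber to fiber, and confirm that the spectral projections (hence the eigenbundles) are genuinely smooth — this is exactly where the spectral gap afforded by $\mathbb{Z}$-valued eigenvalues is essential. The only genuinely new verification in the second stage is that the \emph{full} stabilizer, including the residual translations, acts by scalars on a pure-helicity fiber; once that is in place the line-bundle splitting is a routine homogeneous-bundle (associated-bundle) construction.
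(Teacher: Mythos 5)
Your proof is correct, and it reaches the theorem by a somewhat different route than the paper. The paper skips your first stage entirely: it fixes a reference point $\bar k$, takes a basis $(v_1,\dots,v_r)$ of $E_{\bar k}$ consisting of $\chi$-eigenvectors, and defines each $E_j$ directly as the orbit of $(\bar k,v_j)$ under $\Sigma$ and scalar multiplication; one-dimensionality of each $E_j(k)$ follows because any two group elements carrying $\bar k$ to $k$ differ by a little-group element, which acts on the eigenvector $v_j$ by the phase $e^{i\theta h_j}$, and smoothness follows from a locally defined $L(k)$ (boost composed with a rotation) exactly as in your stage two. So your second stage is essentially the paper's whole proof, generalized slightly by the observation that on a pure-helicity fiber \emph{every} line is stabilizer-invariant, so any orthonormal basis can be transported. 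What your extra first stage buys is a cleaner conceptual separation: the Riesz-projection argument establishes globally, without choosing a reference point, that the helicity multiplicities are constant over the connected base and that the eigenbundles of $\chi$ are smooth equivariant subbundles --- a fact the paper obtains only implicitly (constancy of the helicity list follows there from transitivity of the action together with $[\chi,\Sigma]=0$). The cost is invoking spectral-projection machinery that the integer spectral gap makes available but that the direct orbit construction renders unnecessary. One shared caveat: both your stage two and the paper's proof need the spacetime translations (which stabilize every base point) to preserve the constructed lines in order to get full $\mathrm{ISO}^+(3,1)$-equivariance rather than just $\mathrm{SO}^+(3,1)$-equivariance; you assume they act by the phase $e^{i(\boldsymbol{k}\cdot\boldsymbol{\mathsf a}-|\boldsymbol{k}|\mathsf a^0)}$, which the paper verifies for $\gamma$ but only adds as an explicit hypothesis later (in its Theorem on irreducibility of the sectional representation), so this is an implicit assumption in the paper's own argument as well, not a defect specific to yours.
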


\begin{proof}
    Choose some $\bar k \in \mathcal{L}_+$. Let $(v_1,...,v_r)$ be a basis of the fiber $E_{\bar k}$ consisting of eigenvectors of $\chi$ with helicities $(h_1,...,h_r)$. Define the subset $E_j \seq E$ as the orbit of $(\bar{k},v_j)$ under the group action and scalar multiplication:
    \begin{equation}\label{eq:def_E_j}
        E_j \doteq \{c\Sigma(L)(\bar{k},v_j)|L\in \mathrm{SO}^+(3,1), c \in \Comp\}.
    \end{equation}
    We will show that each $E_j$ is a rank-$1$ subbundle of $E$ by the subbundle criterion. Partition $E_j$ as
    \begin{equation}
        E_j = \amalg_{k\in \mathcal{L}_+}E_j(k)
    \end{equation}
    where $E_j(k)$ is the subset of $E_j$ consisting of vectors at $k$. Each $E_j(k)$ is a vector space and it is nonempty since if $L(k)$ is any Lorentz transformation taking $\bar{k}$ to $k$, then $\Sigma(L(k))(\bar{k},v_j) \in E_{j}(k)$. Thus, $E_j(k)$ is at least one-dimensional. Suppose $(k,w_1)$ and $(k,w_2)$ are both in $E_j(k)$ and nonzero. Then
    \begin{align}
        (k,w_1) &= c_1 \Sigma(L_1)(\bar{k},v_j) \\ 
        (k,w_2) &= c_2 \Sigma(L_2)(\bar{k},v_j)
    \end{align}
    for some nonzero scalars $(c_1,c_2)$ and Lorentz transformations $(L_1, L_2)$. As $L_2^{-1}L_1$ is in the little group $\LittleGroup{\bar{k}}$ of $\bar{k}$,
    \begin{align}
        \Sigma(L_2^{-1}L_1)(\bar{k},v_j) = (\bar{k},e^{i\theta h_j}v_j)
    \end{align}
    for some $\theta$. Then
    \begin{align}
    \begin{split}
        (k,w_1) &= c_1\Sigma(L_2)\Sigma(L_2^{-1}L_1)(\bar k, v_j) \\
        &= c_1c_2^{-1}e^{i\theta h_j}(k,w_2)
    \end{split}
    \end{align}
    so $(k,w_1)$ and $(k,w_2)$ are linearly dependent. This shows that every $E_j(k)$ is one-dimensional. 
    
    Now, let $k_0 \in \mathcal{L}_+$ be arbitrary and $U$ be a small ball about $k_0$. Choose the radius of $U$ to be small enough that it does not enclose the origin. We construct a smooth function $L:U\rightarrow \mathrm{SO}^+(3,1)$ such that $L(k)k_0=k$. Note that we showed in Theorem \ref{thm:no_go} that it is not possible to construct such a function if the domain is all of $\mathcal{L}_+$. However, it is possible to construct such a function locally. Indeed, we can simply choose $L(k) = R(k)B(k)$ where $B(k)$ is the boost parallel to $\boldsymbol{k}_0$ taking $\boldsymbol{k}_0$ to $\boldsymbol{k}_0\frac{|\boldsymbol{k}|}{|\boldsymbol{k}_0|}$ and $R(k)$ is the unique rotation by angle $0 \leq \theta < \pi$ along the great circle connecting $\boldsymbol{k}_0\frac{|\boldsymbol{k}|}{|\boldsymbol{k}_0|}$ to $\boldsymbol{k}$. Let $(k_0,v)$ be a nonzero vector in $E_j(k_0)$. Then $\Sigma(L(k))(k_0,v)$ is a smooth section of $E$ over $U$ that span $E_j(k)$ for each $k \in U$. Thus, each $E_j$ is a line subbundle of $E$ by Lemma \ref{lm:subbundle_criterion}. They are unitary equivariant subbundles by their definitions in Eq.\,(\ref{eq:def_E_j}). Every vector $(k,w_1) \in E_1$ is an eigenvector of $\chi$ with helicity $h_j$ since by Theorem $\ref{thm:associated_little_group}$,
    \begin{align}
    \begin{split}
        \chi (k,w_1) &= \chi c_1 \Sigma(L_1)(\bar{k},v_j) \\
        &= c_1 \Sigma(L_1)\chi (\bar{k},v_j) \\&= h_j (k,w_1).
    \end{split}
    \end{align}
\end{proof}
In this bundle decomposition, it is possible for multiple line bundles $E_j$ to have the same helicity. The next result says that such representations are equivalent, showing that the irreducible bundle representations of $\mathrm{ISO}^+(3,1)$ are completely characterized by their helicity.

\begin{theorem}\label{thm:bundle_little_group}
    Suppose $\pi:E_1\rightarrow \mathcal{L}_+$ and $\pi_2:E_2\rightarrow \mathcal{L}_+$ are unitary $\mathrm{ISO}^+(3,1)$-equivariant line bundles with actions $\Sigma_1$ and $\Sigma_2$ and helicities $h_1$ and $h_2$. They are unitarily equivalent representations if and only if $h_1 = h_2$. 
\end{theorem}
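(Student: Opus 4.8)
The plan is to prove both implications, with the converse ($h_1=h_2\Rightarrow$ equivalence) carrying all the content. Throughout I fix a reference momentum $\bar k\in\mathcal{L}_+$; by Theorem~\ref{thm:bundle_decomp} each $E_j$ is a helicity eigenbundle, so every fiber $E_j(k)$ is one-dimensional and every vector in $E_j$ is an eigenvector of $\chi$ with eigenvalue $h_j$.

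For the forward direction, suppose $\Phi:E_1\to E_2$ is a unitary equivariant isomorphism. Being equivariant, $\Phi$ intertwines the one-parameter subgroups generated by $\boldsymbol{J}$, $\boldsymbol{P}$ and $H$, hence commutes with the helicity operator $\chi=H^{-1}\boldsymbol{P}\cdot\boldsymbol{J}$ of Eq.~(\ref{eq:helicity}). Taking any nonzero $v\in E_1(\bar k)$ we have $\chi v=h_1 v$, so $\chi\,\Phi(v)=\Phi(\chi v)=h_1\,\Phi(v)$; since every vector of $E_2$ has helicity $h_2$, this forces $h_1=h_2$.

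For the converse, assume $h_1=h_2$, fix unit vectors $v_1\in E_1(\bar k)$ and $v_2\in E_2(\bar k)$, and use that $\mathrm{ISO}^+(3,1)$ acts transitively on $\mathcal{L}_+$: since each fiber is a line, every element of $E_1$ has the form $c\,\Sigma_1(P)(\bar k,v_1)$ with $c\in\Comp$, $P\in\mathrm{ISO}^+(3,1)$. I define the candidate map by
\[
  \Phi\big(c\,\Sigma_1(P)(\bar k,v_1)\big)=c\,\Sigma_2(P)(\bar k,v_2).
\]
Once this is well defined, base-point preservation and full equivariance are automatic: both $\Sigma_1(P)$ and $\Sigma_2(P)$ cover the same base map $P\bar k$ (both base actions are the $4$-vector action), and for any $g$ one has $\Phi(\Sigma_1(g)\,c\Sigma_1(P)(\bar k,v_1))=\Phi(c\Sigma_1(gP)(\bar k,v_1))=c\Sigma_2(gP)(\bar k,v_2)=\Sigma_2(g)\Phi(\cdots)$. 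Unitarity is equally clear, since $|v_1|=|v_2|=1$ and the $\Sigma_j(P)$ are unitary, so $\Phi$ is fiberwise norm preserving. The entire theorem thus reduces to well-definedness.

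The crux, and the step I expect to be the main obstacle, is showing the map is independent of the representation $c\,\Sigma_1(P)(\bar k,v_1)$. If $c_1\Sigma_1(P_1)(\bar k,v_1)=c_2\Sigma_1(P_2)(\bar k,v_1)$, both lie over $P_1\bar k=P_2\bar k$, so $P_2^{-1}P_1$ stabilizes $\bar k$; since translations act trivially on the base, this stabilizer is $\{\mathsf{a}\circ\Lambda:\mathsf{a}\in\mathbb{R}^4,\ \Lambda\in\LittleGroup{\bar k}\}$ with $\LittleGroup{\bar k}\cong\mathrm{ISO}(2)$. Writing $P_2^{-1}P_1=\mathsf{a}\circ W^{\bar k}(\theta,\alpha,\beta)$, the restriction to $E_j(\bar k)$ factors as a product of scalars: the Lorentz little-group part acts by $e^{i\theta h_j}$ (Theorem~\ref{thm:associated_little_group}, the null rotations $A,B$ acting trivially), and the translation part acts by a unitary character $\mathsf{a}\mapsto e^{i\,\xi_j(\bar k)_\mu\mathsf{a}^\mu}$. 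Consistency therefore needs both $h_1=h_2$ (which we assume, and which cancels the rotational phase) and $\xi_1(\bar k)=\xi_2(\bar k)$. The matching of the translation characters is the delicate point: compatibility of the character with the Lorentz action forces $k\mapsto\xi_j(k)$ to transform as a $4$-vector, and invariance under $\LittleGroup{\bar k}$ forces $\xi_j(\bar k)$ to be the unique (up to scale) $4$-vector fixed by $\mathrm{ISO}(2)$, namely a multiple of $\bar k$; identifying the base point with the physical momentum (so that $P^\mu$ has eigenvalue $k^\mu$ on $E_j(k)$, exactly as computed for $\gamma$ preceding Theorem~\ref{thm:gamma_equivariant}) pins this multiple to $\xi_j(\bar k)=\bar k$ for both bundles. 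With $\xi_1(\bar k)=\xi_2(\bar k)$ and $h_1=h_2$, the two phases cancel identically, $\Phi$ is well defined, and by the remarks above it is the desired unitary equivariant isomorphism in the sense of Definition~\ref{def:equivariant_iso}.
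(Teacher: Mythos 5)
Your proof is correct and follows essentially the same strategy as the paper's: prove the forward direction by intertwining the helicity operator, and the converse by mapping the orbit of a reference vector $(\bar k,v_1)$ to that of $(\bar k,v_2)$ and reducing everything to well-definedness on the stabilizer of $\bar k$. The one substantive difference is that you are more careful than the paper at the crux: the stabilizer of $\bar k$ inside $\mathrm{ISO}^+(3,1)$ is $\mathbb{R}^4\rtimes \LittleGroup{\bar k}$, not just $\LittleGroup{\bar k}\cong\mathrm{ISO}(2)$, and the paper's proof silently treats $L_2^{-1}L_1$ as a pure Lorentz little-group element, ignoring the phase $e^{i\xi_j(\bar k)_\mu \mathsf{a}^\mu}$ contributed by spacetime translations. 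Your observation that consistency additionally requires the translation characters $\xi_1(\bar k)=\xi_2(\bar k)$ is a real point --- without the standing convention that $P^\mu$ acts as multiplication by $k^\mu$ (stated explicitly only in Theorem~\ref{thm:Hilbert_irrep}), helicity alone would not determine the equivalence class --- and your argument that equivariance forces $\xi_j(k)$ to be a fixed multiple of $k$ correctly isolates where that convention enters. So: same route, but your version closes a small gap the paper leaves open.
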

\begin{proof}
If $E_1$ and $E_2$ are equivalent representations, then there exists an isomorphism $g:E_1 \rightarrow E_2$ as in Definition \ref{def:equivariant_iso}. Let $\chi_1$ and $\chi_2$ be the helicity operators induced by $\Sigma_1$ and $\Sigma_2$. For $(k,v) \in E_1$, linearity gives
\begin{equation}
    g(e^{i\theta \chi_1 }(k,v)) = g(e^{ih_1 \theta}(k,v)) = e^{i h_1 \theta}g(k,v).
\end{equation}
By equivariance,
\begin{equation}
    g(e^{i\theta \chi_1}(k,v)) = e^{i \theta \chi_2}g(k,v) = e^{ih_2 \theta}g(k,v).
\end{equation}
Thus, $h_1 = h_2$.

Conversely, suppose $h_1 = h_2$. We will construct a unitary isomorphism of representations $g:E_1 \rightarrow E_2$. Fix some $(k_0,v_0) \in E_1$ and $(k_0,w_0) \in E_2$ such that $|v_0|=|w_0|\neq 0$. We define
\begin{equation}\label{eq:g_def}
    g(k_0,v_0) = (k_0,w_0),
\end{equation}
and extend this relation by equivariance and linearity. That is, for every $L \in \mathrm{ISO}^+(3,1)$ and $c \in \mathbb{C}$, define
\begin{align}
    g(\Sigma_1(L)(k_0,v_0)) &= \Sigma_2(L)(k_0,w_0), \label{eq:intertwine} \\
    g(k_0,c v_0) &= c (k_0,w_0). \label{eq:g_linear}
\end{align}
Any $(k,v) \in E_1$ can be expressed as
\begin{equation}\label{eq:kv0}
(k,v) =  c \Sigma_1(\Lambda)(k_0,v_0)   
\end{equation}
for some choice of $c,\Lambda$. Indeed, if one chooses a Lorentz transformation $\Lambda$ such that $\Lambda k_0 = k$, then $\Sigma(\Lambda)(k_0,v_0) \in E_1(k)$. Since $E_1$ is a line bundle, any other vector in $E_1(k)$ can be obtained by scalar multiplication. We must still show that $g$ is single-valued under this definition.
Suppose $\Sigma_1(L_1)(k_0,v_0) = \Sigma_1(L_2)(k_0,v_0)$. We need to show
\begin{equation}
    \Sigma_2(L_1)(k_0,w_0) = \Sigma_2(L_2)(k_0,w_0)
\end{equation}
or equivalently,
\begin{equation}\label{eq:well_defined_1}
    \Sigma_2(L_2^{-1}L_1)(k_0,w_0) = (k_0,w_0).
\end{equation}
Since $L_2^{-1}L_1$ is in the little group of $k_0$, it corresponds to an element $(\theta,\alpha, \beta)$ under the isomorphism of the little group with $\mathrm{ISO}(2)$. Furthermore, since $\Sigma_1(L_2^{-1}L_1)(k_0,v_0)=(k_0,v_0)$ either it corresponds to a pure translation $(0,\alpha,\beta)$ or else $h_1=h_2=0$. In either case, Eq.\,(\ref{eq:well_defined_1}) holds. Next, suppose 
\begin{equation}\label{eq:well_defined_2}
    \Sigma_1(L)(k_0,v_0) = (k_0,cv_0).
\end{equation}
We need to show that
\begin{equation}\label{eq:well_defined_3}
    \Sigma_2(L)(k_0,w_0) = (k_0,c w_0).
\end{equation}
Since $L$ is in the little group of $k_0$, it corresponds to some $(\theta,\alpha,\beta)$ in $\mathrm{ISO}(2)$, and $c = e^{i\theta h_1}$. Furthermore,
\begin{equation}
    \Sigma_2(L)(k_0,w_0) = (k_0,e^{i\theta h_2} w_0).
\end{equation}
Equation (\ref{eq:well_defined_3}) holds since $h_1 = h_2$, showing that $g$ is well-defined.

$g$ is a basepoint preserving unitary vector bundle isomorphism by its defining properties. All that remains to show is that $g$ is equivariant, that is,
\begin{equation}
    g(\Sigma_1(L)(k,v)) = \Sigma_2(L) g(k,v)
\end{equation}
for arbitrary $(k,v) \in E_1$ and $L \in \mathrm{ISO}^+(3,1)$. Choose some $c$ and $\Lambda$ such that Eq.\,(\ref{eq:kv0}) holds. Then, using Eqs.\,(\ref{eq:g_def})-(\ref{eq:g_linear}) we have
\begin{align}
\begin{split}
    g(\Sigma_1(L)(k,v)) &= cg(\Sigma_1(L\Lambda)(k_0,v_0)) \\
        &= c\Sigma_2(L \Lambda)(k_0,w_0) \\
        &= c\Sigma_2(L)\Sigma_2(\Lambda)(k_0,w_0) \\
        &= \Sigma_2(L)g(c \Sigma_1(\Lambda)(k_0,v_0)) \\
        &= \Sigma_2(L)g((k,v)).
\end{split}
\end{align}
Thus, $E_1$ and $E_2$ are equivalent representations.
\end{proof}

We can apply this vector bundle version of the little group method to decompose the photon bundle $\gamma$ into Poincar\'{e} invariant line bundles. First note that $\gamma_\pm$ have definite helicity.
\begin{proposition}\label{prop:RL_helicity}
    The $R$- and $L$-bundles $\gamma_\pm$ have definite helicities $\pm1$. 
\end{proposition}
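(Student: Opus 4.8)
The plan is to compute the helicity operator $\chi = \hat{\boldsymbol{k}}\cdot\boldsymbol{J}$ from Theorem~\ref{thm:associated_little_group} directly on the circularly polarized basis vectors and read off its eigenvalues. The essential input is the action of the rotation generators $\boldsymbol{J}$ on the electric field. The derivation preceding Theorem~\ref{thm:gamma_equivariant} shows that under a spatial rotation $R$ the bundle transforms by $\Sigma(R)(\boldsymbol{k},\boldsymbol{E}) = (R\boldsymbol{k},R\boldsymbol{E})$, so each fiber carries the ordinary spin-$1$ (vector) representation of the rotation subgroup. First I would extract the matrices $\boldsymbol{J}$ from the generator convention fixed earlier; tracking the signs carefully yields $(J_a)_{lm} = -i\epsilon_{alm}$, the standard vector representation.

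Next I would use these matrices to simplify the action of $\chi$ on an arbitrary transverse vector $\boldsymbol{v}$. A short index computation gives the compact identity
\begin{equation}
    \chi\boldsymbol{v} = (\hat{\boldsymbol{k}}\cdot\boldsymbol{J})\boldsymbol{v} = i\,\hat{\boldsymbol{k}}\times\boldsymbol{v}.
\end{equation}
I would then apply this to the circular basis. For $(\boldsymbol{e}_1,\boldsymbol{e}_2,\hat{\boldsymbol{k}})$ right-handed orthonormal one has $\hat{\boldsymbol{k}}\times\boldsymbol{e}_1 = \boldsymbol{e}_2$ and $\hat{\boldsymbol{k}}\times\boldsymbol{e}_2 = -\boldsymbol{e}_1$, so that
\begin{equation}
    \chi(\boldsymbol{e}_1 \pm i\boldsymbol{e}_2) = i\,\hat{\boldsymbol{k}}\times(\boldsymbol{e}_1 \pm i\boldsymbol{e}_2) = \pm(\boldsymbol{e}_1 \pm i\boldsymbol{e}_2).
\end{equation}
By Definition~\ref{def:circ_pol_sub} every element of $\gamma_\pm(\boldsymbol{k})$ is a scalar multiple of $\boldsymbol{e}_1 \pm i\boldsymbol{e}_2$, so $\chi$ acts as $\pm 1$ on all of $\gamma_\pm$, which is exactly the assertion that $\gamma_\pm$ carry definite helicities $\pm 1$.

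The main obstacle is not conceptual but a matter of sign bookkeeping: whether the eigenvalue on $\gamma_+$ comes out $+1$ or $-1$ depends on keeping three orientation conventions mutually consistent, namely the generator normalization used to define $\boldsymbol{J}$, the sense of the rotation generated by $\hat{\boldsymbol{k}}\cdot\boldsymbol{J}$, and the right-handedness stipulated in Definition~\ref{def:circ_pol_sub}. I would deliberately work infinitesimally with the explicit matrices $(J_a)_{lm} = -i\epsilon_{alm}$ rather than exponentiating a finite rotation, since the finite-rotation route forces one to disambiguate the sign of $\theta$ in $e^{i\theta\chi}$ against the physical sense of the little-group rotation $W^k(\theta,0,0)$, which is precisely where a spurious sign flip is most likely to enter.
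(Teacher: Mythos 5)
Your proposal is correct and is essentially the paper's own argument: both reduce to evaluating the rotation about $\hat{\boldsymbol{k}}$ on the circular basis $\boldsymbol{e}_1 \pm i\boldsymbol{e}_2$ within a single fiber, the only difference being that you work infinitesimally via $\chi\boldsymbol{v} = i\hat{\boldsymbol{k}}\times\boldsymbol{v}$ while the paper exponentiates to the finite rotation matrix $\begin{pmatrix}\cos\theta & \sin\theta\\ -\sin\theta & \cos\theta\end{pmatrix}$ and differentiates at $\theta=0$. Your sign conventions are consistent with the paper's (your generator $\chi$ exponentiates to exactly that matrix in the $(\boldsymbol{e}_1,\boldsymbol{e}_2)$ basis), so the eigenvalues $\pm1$ on $\gamma_\pm$ come out identically.
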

\begin{proof}
    An arbitrary vector $v_\pm \in \gamma_\pm$ can be written as $v_\pm = \alpha (\boldsymbol{e}_1 \pm i\boldsymbol{e}_2)$ where $(\boldsymbol{e}_1,\boldsymbol{e}_2,\boldsymbol{\hat k})$ is real, right-handed, and orthonormal. $R(\theta) = e^{i(\boldsymbol{k}\cdot \boldsymbol{J})}$ describes a passive rotation by $\theta$ about $\boldsymbol{\hat{k}}$. Using the $(\boldsymbol{e}_1,\boldsymbol{e}_2)$ basis,
    \begin{equation}
        R(\theta)v_\pm = \begin{pmatrix}
            \cos \theta && \sin \theta \\
            -\sin \theta && \cos \theta
        \end{pmatrix}
        \begin{pmatrix}
            1 \\
            \pm i
        \end{pmatrix}
         = e^{\pm i\theta}v_\pm.
    \end{equation}
    Differentiating with respect to $\theta$ gives
    \begin{equation}
        \chi v_\pm = (\boldsymbol{\hat k} \cdot \boldsymbol{J})v_\pm = \pm v_\pm.
    \end{equation}
\end{proof}
We now show that $\gamma_\pm$ are irreducible subrepresentations of the representation $\Sigma$ on $\gamma$, provided we restrict $\Sigma$ from $\mathrm{IO}^+(3,1)$ to $\mathrm{ISO}^{+}(3,1)$. This restriction is necessary since the parity operator $\mathsf{P}$ maps $\gamma_\pm$ to $\gamma_\mp$, and therefore the $R$- and $L$-bundles are not $\mathsf{P}$-symmetric.
\begin{theorem}\label{thm:RL_equivariant}
$\gamma_+$ and $\gamma_-$ are unitary irreducible vector bundle representations of $\mathrm{ISO}^{+}(3,1)$.
\end{theorem}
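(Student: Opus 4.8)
The plan is to identify $\gamma_+$ and $\gamma_-$ with the helicity eigenbundles of the operator $\chi$ and then read off all three required properties---equivariance, unitarity, and irreducibility---from results already established. The only step requiring genuine work is equivariance; unitarity and irreducibility will follow formally.

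First I would establish equivariance. By Proposition \ref{prop:RL_helicity}, every vector in $\gamma_\pm$ is an eigenvector of the helicity operator $\chi = \boldsymbol{\hat k}\cdot\boldsymbol{J}$ with eigenvalue $\pm 1$. Since the fibers of $\gamma$ are two-dimensional and $\gamma = \gamma_+ \oplus \gamma_-$, the $\pm 1$ eigenspaces of $\chi$ on each fiber are exactly $\gamma_\pm(\boldsymbol{k})$; that is, $\gamma_\pm$ are precisely the $\chi$-eigenbundles, each of multiplicity one. Now for $L \in \mathrm{ISO}^+(3,1)$ and $(k,v) \in \gamma_+$, the established fact that the $\mathrm{ISO}^+(3,1)$ action $\Sigma$ commutes with its associated little group action $\Pi_{(\theta,\alpha,\beta)}(k,v) = (k, e^{i\theta\chi}v)$ (Theorem \ref{thm:associated_little_group}) gives
\begin{equation}
    e^{i\theta\chi}\Sigma_L(k,v) = \Sigma_L\, e^{i\theta\chi}(k,v) = \Sigma_L (k, e^{i\theta}v) = e^{i\theta}\Sigma_L(k,v)
\end{equation}
for all $\theta$. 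Differentiating at $\theta = 0$ shows that $\Sigma_L(k,v)$ is again a helicity $+1$ eigenvector of $\chi$, hence lies in $\gamma_+$. Thus $\Sigma$ restricts to an action on $\gamma_+$, and by the identical argument on $\gamma_-$, so both are $\mathrm{ISO}^+(3,1)$-equivariant subbundles. It is essential to restrict from $\mathrm{IO}^+(3,1)$ to $\mathrm{ISO}^+(3,1)$ here, since the parity operator $\mathsf{P}$ interchanges the helicities $\pm 1$ and so maps $\gamma_\pm$ to $\gamma_\mp$. Alternatively, the same conclusion is immediate from Theorem \ref{thm:bundle_decomp}: that theorem decomposes $\gamma$ canonically into irreducible equivariant line subbundles of definite helicity, and because $\chi$ has the two distinct eigenvalues $\pm 1$, each of multiplicity one, this decomposition must coincide with $\gamma = \gamma_+ \oplus \gamma_-$.

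Next, unitarity is inherited. By Theorem \ref{thm:gamma_equivariant}, $\Sigma$ is unitary on $\gamma$ with respect to the Hermitian product (\ref{eq:Hermitian_product_E}). Since $\gamma_\pm$ are $\Sigma$-invariant subbundles carrying the restricted Hermitian structure, the restriction $\Sigma|_{\gamma_\pm}$ is fiberwise unitary, so the $\gamma_\pm$ are unitary bundle representations. Finally, irreducibility is automatic: the $\gamma_\pm$ are complex line bundles, and by the definition of a subrepresentation any proper subrepresentation would be a subbundle of nonzero rank strictly less than one, which is impossible. Hence $\gamma_\pm$ admit no proper subrepresentations and are irreducible.

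The main obstacle is the equivariance step, and it rests entirely on the commutation of the helicity operator $\chi$ with every Poincar\'{e} generator, together with the connectedness of $\mathrm{ISO}^+(3,1)$ so that this infinitesimal commutation lifts to commutation with the full group action. Everything downstream---unitarity from Theorem \ref{thm:gamma_equivariant} and irreducibility from the rank-one structure---is formal.
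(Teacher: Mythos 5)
Your proof is correct, and your primary route is genuinely different from the paper's. The paper disposes of this theorem in two lines: it invokes Theorem \ref{thm:bundle_decomp} to decompose $\gamma$ into two unitary irreducible equivariant line bundles of definite helicity, and then uses Proposition \ref{prop:RL_helicity} to identify those line bundles with $\gamma_\pm$ --- exactly the ``alternative'' you sketch at the end of your equivariance paragraph. Your main argument instead verifies equivariance directly: since $\gamma_\pm(\boldsymbol{k})$ are precisely the $\pm 1$ eigenspaces of $\chi$ on each two-dimensional fiber, and since $\chi$ commutes with the whole $\mathrm{ISO}^+(3,1)$ action (the theorem following Theorem \ref{thm:associated_little_group}, lifted from generators to the group by connectedness, as you correctly note), the action preserves each eigenbundle; unitarity and irreducibility then follow formally from Theorem \ref{thm:gamma_equivariant} and the rank-one structure. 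What your route buys is economy and transparency: it leans on the subbundle structure $\gamma = \gamma_+ \oplus \gamma_-$ already established in Section \ref{sec:TopologyRL} rather than re-deriving the subbundles from scratch via the orbit construction inside Theorem \ref{thm:bundle_decomp}, and it makes the mechanism explicit (helicity eigenbundles are invariant because helicity is a Casimir-like operator for the massless representation). What the paper's route buys is generality: Theorem \ref{thm:bundle_decomp} applies to any unitary $\mathrm{ISO}^+(3,1)$-equivariant bundle over $\mathcal{L}_+$ of any rank, so the photon case is a free corollary, and the same machinery also yields the classification by helicity in Theorem \ref{thm:bundle_little_group}. Both arguments ultimately rest on the same commutation of $\chi$ with the Poincar\'{e} generators, so neither is circular.
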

\begin{proof}
    By Theorem \ref{thm:bundle_decomp}, $\gamma$ decomposes as
    \begin{equation}
        \gamma = E_1 \oplus E_2
    \end{equation}
    into two unitary irreducible line bundle representations of $\mathrm{ISO}^+(3,1)$ with definite helicity. By proposition \ref{prop:RL_helicity}, these line bundles must be $\gamma_\pm$.
\end{proof}
This result shows that $R$- and $L$-photons are globally well-defined. In particular, this construction avoids the singularities that appear in the conventional little group method for massless particles.

\subsection{Vector bundle representations as particles}\label{subsec:sectional_reps}
We have shown that the solutions of Maxwell's equations fit together into a smooth vector bundle $\gamma$. Furthermore, $\gamma$ splits into two unitary irreducible bundle representations of the Poincar\'{e} group, $\gamma_+$ and $\gamma_-$. It is thus natural to consider these bundle representations to be particles. However, by the conventional definition, particles are unitary irreducible Hilbert space representations of the Poincar\'{e} group \cite{Wigner1939, Weinberg1995}. In this section we bridge the gap between these two viewpoints, proving that massless unitary irreducible bundle representations of the Poincar\'{e} group generate corresponding unitary irreducible Hilbert space representations, and can thus be considered particles under the standard definition. These Hilbert space representations are smooth, avoiding the singularities described in section \ref{subsec:Little_group_singularity} that occur in Wigner's little group construction. We note that this method of generating Hilbert space representations from bundle representations was described by Simms \cite{Simms1968}. Our work differs in that we use smooth bundles rather than topological bundles, which allows us to resolve the non-smoothness issues with massless representations. 

A representation of $G$ on the vector bundle $E$ naturally induces a vector space representation of $G$ on the infinite-dimensional vector space $\Gamma(E)$ of smooth sections of E. We refer to this as the sectional representation.
\begin{proposition}\label{prop:sectional_rep}
A $G$-equivariant vector bundle $(E,\Sigma)$ induces a vector space representation of $G$ on $\Gamma(E)$. Given $g\in G$ and a section $\psi:M\rightarrow E$, $g$ is represented by $\Sigma^s_g$, defined by
\begin{equation}\label{eq:sectional_action}
    [\Sigma^s_g \psi](m)=\Sigma_g\big[\psi(\tilde{\Sigma}_{g^{-1}}(m)) \big].
\end{equation}
\end{proposition}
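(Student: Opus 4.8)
The plan is to verify the three defining properties of a vector space representation: that each $\Sigma^s_g$ sends $\Gamma(E)$ into itself, that each $\Sigma^s_g$ is linear, and that $g \mapsto \Sigma^s_g$ is a group homomorphism. Linearity is immediate and I would dispatch it first: for fixed $m$, the restriction of $\Sigma_g$ to the fiber over $\tilde\Sigma_{g^{-1}}(m)$ is a linear map (as $\Sigma_g$ is a vector bundle isomorphism), so pulling scalars and sums through the defining formula (\ref{eq:sectional_action}) shows $\Sigma^s_g(a\psi_1 + b\psi_2) = a\,\Sigma^s_g\psi_1 + b\,\Sigma^s_g\psi_2$.

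The substantive point is to check that $\Sigma^s_g\psi$ is genuinely a smooth section, and this reduces to fiber bookkeeping. The value $\psi(\tilde\Sigma_{g^{-1}}(m))$ lies in the fiber $E_{\tilde\Sigma_{g^{-1}}(m)}$, and since $\Sigma_g$ carries $E_p$ to $E_{\tilde\Sigma_g(p)}$, the vector $\Sigma_g\big[\psi(\tilde\Sigma_{g^{-1}}(m))\big]$ lies in $E_{\tilde\Sigma_g\tilde\Sigma_{g^{-1}}(m)}$. Because $\tilde\Sigma$ is a $G$-action on $M$, one has $\tilde\Sigma_g\tilde\Sigma_{g^{-1}} = \tilde\Sigma_e = \mathrm{id}_M$, so this fiber is exactly $E_m$ and $\pi\circ(\Sigma^s_g\psi) = \mathrm{id}_M$. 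Smoothness then follows by observing that $\Sigma^s_g\psi$ is the composite $\Sigma_g\circ\psi\circ\tilde\Sigma_{g^{-1}}$ of smooth maps, namely the diffeomorphism $\tilde\Sigma_{g^{-1}}$ of $M$, the smooth section $\psi$, and the smooth bundle isomorphism $\Sigma_g$.

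The homomorphism property is the step I would treat most carefully, since it is where the placement of the inverse in (\ref{eq:sectional_action}) is essential to producing a left action rather than an anti-homomorphism. Expanding $\Sigma^s_{g_1}(\Sigma^s_{g_2}\psi)$ by applying the definition twice gives $\Sigma_{g_1}\Sigma_{g_2}\big[\psi(\tilde\Sigma_{g_2^{-1}}\tilde\Sigma_{g_1^{-1}}(m))\big]$. Collecting the bundle maps via $\Sigma_{g_1}\Sigma_{g_2} = \Sigma_{g_1 g_2}$ and the base diffeomorphisms via $\tilde\Sigma_{g_2^{-1}}\tilde\Sigma_{g_1^{-1}} = \tilde\Sigma_{(g_1 g_2)^{-1}}$ yields precisely $\Sigma^s_{g_1 g_2}\psi$, and the identity $e$ maps to the identity operator because $\Sigma_e$ and $\tilde\Sigma_e$ are identities. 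The only nonobvious content is this fiber-tracking and the correct order of composition of the inverses; the rest is formal verification against the definitions of equivariant vector bundle and group action.
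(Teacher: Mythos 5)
Your proposal is correct and follows essentially the same route as the paper's proof: linearity from $\Sigma_g$ being a bundle map, the homomorphism property by expanding the definition twice and using $\Sigma_{g_1}\Sigma_{g_2}=\Sigma_{g_1g_2}$ together with $\tilde\Sigma_{g_2^{-1}}\tilde\Sigma_{g_1^{-1}}=\tilde\Sigma_{(g_1g_2)^{-1}}$, and smoothness as a composite of smooth maps. Your explicit fiber bookkeeping showing $\pi\circ(\Sigma^s_g\psi)=\mathrm{id}_M$ is left implicit in the paper but is a welcome addition, not a different argument.
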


\begin{proof}
    The linearity of the action in $\Gamma(E)$ follows from the fact that $\Sigma_g$ is a bundle map. That the identity $e\in G$ acts trivially on $\Gamma(E)$ follows from $\Sigma_e = I$. If $g_1,g_2\in G$, then
    \begin{align}
    \begin{split}
        [\Sigma^s_{g_1 g_2}\psi](m) &=\Sigma_{g_1 g_2}\big[\psi(\tilde{\Sigma}_{(g_1 g_2)^{-1}}m)\big] \\
        &= \Sigma_{g_1}\Sigma_{g_2}\big[\psi(\tilde{\Sigma}_{g_2^{-1}} \tilde{\Sigma}_{g_1^{-1}}(m))\big] \\
        &= \Sigma_{g_1}\big[(\Sigma^s_{g_2}\psi)(\tilde{\Sigma}_{g_1^{-1}}(m)) \big] \\
        &= \big[\Sigma^s_{g_1}\big(\Sigma^s_{g_2}\psi\big) \big](m)
    \end{split}
    \end{align}
    showing that $\Sigma^s_{g_1 g_2}=\Sigma^s_{g_1}\Sigma^s_{g_2}$. Lastly, $\Sigma^s_g \psi$ is smooth since $\Sigma_g$, $\Sigma_{g^{-1}}$, and $\psi$ are smooth. 
\end{proof}
As an example which will be used in section \ref{subsec:SAM_OAM}, the $\mathrm{ISO}(2)$ little group action on $\gamma$ has a corresponding sectional representation on $\Gamma(\gamma)$, which can be simply expressed due to the splitting $\gamma = \gamma_+ \oplus \gamma_-$. A section $\boldsymbol{E}(\boldsymbol{k}) \in \Gamma(\gamma)$ can be uniquely written in terms of sections $\boldsymbol{E}_\pm$ of $\gamma_\pm$:
\begin{equation}\label{eq:E=E++E-}
    \boldsymbol{E}(\boldsymbol{k}) = \boldsymbol{E}_+(\boldsymbol{k}) + \boldsymbol{E}_-(\boldsymbol{k}).
\end{equation}
That is, there are unique projections $\mathcal{P}_\pm:\Gamma(\gamma) \rightarrow \Gamma(\gamma_\pm)$.
\begin{theorem}\label{thm:sectional_little_group_rep}
    Let $\boldsymbol{E} \in \Gamma(\gamma)$. The sectional representation corresponding to the canonical $\mathrm{ISO}(2)$-little group representation on $\gamma$ is given by
    \begin{equation}
        (\theta, \alpha,\beta)\boldsymbol{E} = e^{i\chi \theta}\boldsymbol{E} = e^{+i\theta}\boldsymbol{E}_+ + e^{-i\theta}\boldsymbol{E}_-. \label{eq:ISO2_sectional_action}
    \end{equation}
    Since the translations $\alpha,\beta$ act trivially, the little group action can be considered as an $\mathrm{SO}(2)\cong S^1$ action:
    \begin{equation}
        \theta \boldsymbol{E} = e^{i\chi \theta}\boldsymbol{E} =  e^{+i\theta}\boldsymbol{E}_+ + e^{-i\theta}\boldsymbol{E}_- \label{eq:SO2_sectional_action}
    \end{equation}
\end{theorem}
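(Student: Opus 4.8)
The plan is to obtain both displayed formulas directly from the sectional representation formula of Proposition \ref{prop:sectional_rep}, applied to the canonical little group action $\Pi$ furnished by Theorem \ref{thm:associated_little_group}, exploiting the fact that $\Pi$ is a \emph{stabilizing} representation.

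First I would recall from Theorem \ref{thm:associated_little_group} that the associated $\mathrm{ISO}(2)$ little group action on $\gamma$ is $\Pi_{(\theta,\alpha,\beta)}(\boldsymbol{k},\boldsymbol{E}) = (\boldsymbol{k}, e^{i\theta\chi}\boldsymbol{E})$, which by construction fixes every base point; that is, the induced base action $\tilde\Pi_{(\theta,\alpha,\beta)}$ is the identity on $\mathcal{L}_+$. Inserting $g=(\theta,\alpha,\beta)$ into the sectional formula $[\Sigma^s_g\psi](m)=\Sigma_g[\psi(\tilde\Sigma_{g^{-1}}(m))]$ of Proposition \ref{prop:sectional_rep}, the base-point shift $\tilde\Pi_{g^{-1}}(\boldsymbol{k})$ collapses to $\boldsymbol{k}$. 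Hence the sectional action reduces to fiberwise application, $[\Pi^s_{(\theta,\alpha,\beta)}\boldsymbol{E}](\boldsymbol{k}) = e^{i\theta\chi}\boldsymbol{E}(\boldsymbol{k})$, which is exactly the first equality in (\ref{eq:ISO2_sectional_action}). This already exhibits the fact that the $\alpha,\beta$ dependence has dropped out, since $\Pi$ itself does not depend on $\alpha,\beta$.

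Next I would establish the second equality using the decomposition $\gamma = \gamma_+ \oplus \gamma_-$ and the associated unique projections $\mathcal{P}_\pm$, writing $\boldsymbol{E} = \boldsymbol{E}_+ + \boldsymbol{E}_-$ with $\boldsymbol{E}_\pm = \mathcal{P}_\pm\boldsymbol{E}$. Because $\chi$ is a bundle endomorphism and each $\gamma_\pm$ is a $\chi$-eigenbundle with eigenvalue $\pm 1$ by Proposition \ref{prop:RL_helicity}, the exponential acts diagonally, $e^{i\theta\chi}\boldsymbol{E}_\pm = e^{\pm i\theta}\boldsymbol{E}_\pm$, pointwise in $\boldsymbol{k}$. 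Summing yields $e^{i\theta\chi}\boldsymbol{E} = e^{+i\theta}\boldsymbol{E}_+ + e^{-i\theta}\boldsymbol{E}_-$, completing (\ref{eq:ISO2_sectional_action}). Finally, since neither the fiber action nor the base action depends on $\alpha$ or $\beta$, the representation factors through the quotient $\mathrm{ISO}(2) \to \mathrm{SO}(2)$, and restricting to the single parameter $\theta$ gives the $\mathrm{SO}(2)\cong S^1$ action of (\ref{eq:SO2_sectional_action}).

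This argument is essentially an assembly of results already proved, so I expect no serious analytic obstacle; the one point requiring care is the interchange that lets me treat $e^{i\theta\chi}$ as acting eigenbundle-by-eigenbundle. I would justify this by noting that $\chi$ is smooth and fiber-preserving (Theorem \ref{thm:associated_little_group}), that the projections $\mathcal{P}_\pm$ commute with $\chi$ since $\gamma_\pm$ are $\chi$-invariant, and that Proposition \ref{prop:sectional_rep} already guarantees $\Pi^s_{(\theta,\alpha,\beta)}\boldsymbol{E}$ is a smooth section, so no convergence or well-definedness issues arise.
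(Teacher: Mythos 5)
Your proposal is correct and follows essentially the same route as the paper: the paper's proof likewise observes that little group elements fix base points, so the sectional action of Proposition \ref{prop:sectional_rep} collapses to fiberwise application of $e^{i\chi\theta}$, and then applies the helicity eigenvalue decomposition $\gamma=\gamma_+\oplus\gamma_-$ from Proposition \ref{prop:RL_helicity}. Your version merely spells out the intermediate justifications (stabilizing action, $\chi$-invariance of $\gamma_\pm$) more explicitly than the paper's two-line argument.
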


\begin{proof}
    Since little group elements act trivially on $k$, the sectional action defined in Eq.\,(\ref{eq:sectional_action}) is given by:
    \begin{align}
        [(\theta,\alpha,\beta)\boldsymbol{E}](\boldsymbol{k}) &= f\big((\theta,\alpha,\beta), \boldsymbol{k}\big)[\boldsymbol{E}(\boldsymbol{k})] = e^{i\chi \theta}\boldsymbol{E}(\boldsymbol{k}) \\
        &= e^{+i\theta}\boldsymbol{E}_+ + e^{-i\theta}\boldsymbol{E}_-.
    \end{align}
\end{proof}

Ideally, if $\Sigma$ is unitary, then the sectional representation $\Sigma^s$ on $\Gamma(E)$ would also be unitary. However, this would imply that the Hermitian structure on $E$ induces a Hermitian product on $\Gamma(E)$, which is not generally true. Suppose then that one additionally specifies a $G$-invariant volume form $d\xi$ on the base manifold $M$, and denote by $\Gamma(E,d\xi)$ the smooth sections which are 
$L^2$-normalizable with respect to the volume form $d\xi$. Then there is an induced Hermitian structure on $\Gamma(E,d\xi)$:
\begin{equation}\label{eq:Hermitian_product_int}
    \langle \psi_1, \psi_2 \rangle = \int_M \langle \psi_1(m),\psi_2(m)\rangle \, d\xi.
\end{equation}
In the present case, the photon bundle has the light cone as its base manifold which has a canonical Lorentz invariant volume form $d\xi = \frac{d^3k}{|\boldsymbol{k}|}$. We will assume this volume form whenever the base manifold of $E$ is $\Lightcone$.
\begin{proposition}\label{prop:unitary_sectional_rep}
    Suppose $\pi :E\rightarrow M$ is a $G$-equivariant vector bundle and $d\xi$ is a $G$-invariant volume form on $M$. If the bundle representation is unitary on $E$, then the sectional representation on $\Gamma(E,d\xi)$ is unitary with respect to the induced Hermitian product (\ref{eq:Hermitian_product_int}).
\end{proposition}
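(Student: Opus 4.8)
The plan is to verify directly that each $\Sigma^s_g$ preserves the Hermitian product (\ref{eq:Hermitian_product_int}), i.e. that $\langle \Sigma^s_g \psi_1, \Sigma^s_g \psi_2 \rangle = \langle \psi_1, \psi_2 \rangle$ for all $g \in G$ and all $\psi_1,\psi_2 \in \Gamma(E,d\xi)$. The argument factors cleanly into two independent ingredients: the fiberwise unitarity of the bundle action $\Sigma_g$, and the $G$-invariance of the volume form $d\xi$. These are exactly the two hypotheses, so the proof is essentially a matter of showing they combine in the expected way.

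First I would expand the left-hand side using the definition (\ref{eq:sectional_action}) of the sectional action. The key pointwise observation is that, at a fixed $m \in M$, both transformed sections are obtained by applying the \emph{same} fiber map $\Sigma_g$ to the vectors $\psi_1(\tilde\Sigma_{g^{-1}}(m))$ and $\psi_2(\tilde\Sigma_{g^{-1}}(m))$, which lie in the common fiber $E_{\tilde\Sigma_{g^{-1}}(m)}$. Since $\Sigma$ is assumed unitary, the restriction $\Sigma_g|_{E_{\tilde\Sigma_{g^{-1}}(m)}}$ is a unitary map onto $E_m$, so it preserves the fiber inner product. This cancels $\Sigma_g$ entirely and yields the pointwise identity
\begin{equation}
    \langle [\Sigma^s_g \psi_1](m), [\Sigma^s_g \psi_2](m) \rangle = \langle \psi_1(\tilde\Sigma_{g^{-1}}(m)), \psi_2(\tilde\Sigma_{g^{-1}}(m)) \rangle.
\end{equation}

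The remaining step is to integrate this identity over $M$ and perform the change of variables $m \mapsto \tilde\Sigma_{g^{-1}}(m)$. Because $\tilde\Sigma_{g^{-1}}$ is a diffeomorphism of $M$ and $d\xi$ is $G$-invariant, this substitution leaves the integral unchanged, delivering $\int_M \langle \psi_1(m'), \psi_2(m') \rangle \, d\xi = \langle \psi_1, \psi_2 \rangle$. The same computation applied with $\psi_1 = \psi_2 = \psi$ shows that $\Sigma^s_g$ preserves the $L^2$ norm, which confirms that $\Sigma^s_g$ maps $\Gamma(E,d\xi)$ into itself and so is genuinely a well-defined operator on the $L^2$ completion. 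I do not expect any serious obstacle here: the only points requiring care are bookkeeping of the direction of the diffeomorphism (so that the invariance of $d\xi$ is invoked correctly) and the observation that the two sections land in a common fiber so that a single unitary map governs both, rather than a genuine analytic difficulty.
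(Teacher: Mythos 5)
Your proof is correct and follows essentially the same route as the paper's: cancel $\Sigma_g$ pointwise using fiberwise unitarity, then change variables in the integral using the $G$-invariance of $d\xi$. The only addition is your explicit remark that norm preservation confirms $\Sigma^s_g$ maps $\Gamma(E,d\xi)$ into itself, which the paper leaves implicit.
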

\begin{proof}
For $\psi_1,\psi_2 \in \Gamma(E)$ and $g \in G$, we have
\begin{align}
\begin{split}
\langle \Sigma^s_g \psi_1 ,\Sigma^s_g \psi_2\rangle &= \int_M \langle (\Sigma^s_g \psi_1)(\boldsymbol{k}), (\Sigma^s_g \psi_2)(\boldsymbol{k})  \rangle d\xi \\
&= \int_M \langle \Sigma_g [\psi_1(g^{-1} \boldsymbol{k})], \Sigma_g [\psi_2(g^{-1} \boldsymbol{k})]  \rangle d\xi \\
&= \int_M \langle \psi_1(g^{-1} \boldsymbol{k}), \psi_2(g^{-1} \boldsymbol{k})  \rangle d\xi \\
&= \int_M \langle \psi_1(\boldsymbol{k}), \psi_2(\boldsymbol{k})\rangle g^*(d\xi) \\
&= \langle \psi_1, \psi_2 \rangle
\end{split}
\end{align}
where $g^*(d\xi)=d\xi$ follows from the $G$-invariance of $d\xi$.
\end{proof}

$\Gamma(E,d\xi)$ is not a Hilbert space but rather a pre-Hilbert space since it is not complete. In particular, a sequence of smooth sections can converge in the norm to a discontinuous section. This issue is typical fo $L^2$ spaces, and to remedy it, one can work with the Hilbert space completion of $\Gamma(E,d\xi)$, denoted by $L^2(E,d\xi)$, which is the closure of $\Gamma(E,d\xi)$ in the norm. A unitary sectional action $\Sigma^s$ on $\Gamma(E,d\xi)$ extends to a Hilbert space action on $L^2(E,d\xi)$ in the following way. If $g \in G$ and $\psi \in L^2(E,d\xi)$, then there is a Cauchy sequence $\psi_j$ in $\Gamma(E,d\xi)$ with $\psi_j \rightarrow \psi$. Since $\Sigma^s(g)$ acts unitarily on $\Gamma(E,d\xi)$, $\Sigma^s(g)\psi_j$ is also Cauchy in $\Gamma(E,d\xi)$, and thus converges in $L^2(E,d\xi)$. We can then define
\begin{equation}
    \Sigma^s(g)\psi \doteq \lim_{j\rightarrow \infty}\Sigma^s(g)\psi_j.
\end{equation}
Prop. \ref{prop:sectional_rep} shows this representation is smooth since it maps smooth sections to smooth sections. We have thus proved the following:
\begin{theorem}
    A unitary $G$-equivariant vector bundle $\pi:E\rightarrow M$ equipped with a $G$-invariant volume form $d\xi$ on $M$ induces a smooth unitary representation on $L^2(E,d\xi)$. In particular, the sectional representations of $\mathrm{IO}^+(3,1)$ on $L^2(\gamma,d\xi)$ and $\mathrm{ISO}^+(3,1)$ on $L^2(\gamma_\pm,\xi)$ are unitary.
\end{theorem}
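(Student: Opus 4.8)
The plan is to assemble the two preceding propositions and then push the resulting action from the dense pre-Hilbert space of smooth $L^2$ sections to its completion by a standard bounded-extension argument. First I would invoke Proposition \ref{prop:sectional_rep} to obtain the sectional representation $\Sigma^s$ of $G$ on $\Gamma(E)$, and observe that it restricts to an action on the subspace $\Gamma(E,d\xi)$ of $L^2$-normalizable smooth sections: by Proposition \ref{prop:unitary_sectional_rep} each $\Sigma^s_g$ preserves the induced Hermitian product (\ref{eq:Hermitian_product_int}), hence preserves the $L^2$ norm, so it maps normalizable sections to normalizable sections. Thus $\Sigma^s_g$ is a linear isometry of the pre-Hilbert space $\Gamma(E,d\xi)$ onto itself, with inverse $\Sigma^s_{g^{-1}}$.

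Next I would extend each $\Sigma^s_g$ to the Hilbert space completion $L^2(E,d\xi)$. Since $\Gamma(E,d\xi)$ is dense in $L^2(E,d\xi)$ by definition of the completion, and $\Sigma^s_g$ is bounded (isometric) on this dense subspace, the bounded linear transformation theorem yields a unique continuous extension to all of $L^2(E,d\xi)$; concretely, for $\psi = \lim_j \psi_j$ with $\psi_j \in \Gamma(E,d\xi)$ Cauchy, one sets $\Sigma^s_g \psi = \lim_j \Sigma^s_g \psi_j$, and the isometry property guarantees both that the limit exists and that it is independent of the approximating sequence. The extended operator is again isometric, and because $\Sigma^s_{g^{-1}}$ extends to a two-sided inverse it is surjective, hence unitary. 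The homomorphism property $\Sigma^s_{g_1 g_2} = \Sigma^s_{g_1}\Sigma^s_{g_2}$ holds on the dense subspace by Proposition \ref{prop:sectional_rep} and therefore persists on the completion by continuity, so $g \mapsto \Sigma^s_g$ is a unitary representation on $L^2(E,d\xi)$. For smoothness, I would note that $\Gamma(E,d\xi)$ is an invariant dense subspace on which, by Proposition \ref{prop:sectional_rep}, the action carries smooth sections to smooth sections; this is the sense in which the induced Hilbert space representation is smooth, in contrast to the conventional little-group construction in which Poincar\'{e} transformations mapped smooth wavefunctions to discontinuous ones.

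Finally I would specialize to the two stated cases. Unitarity of the underlying bundle actions was established in Theorems \ref{thm:gamma_equivariant} and \ref{thm:RL_equivariant} for $\gamma$ under $\mathrm{IO}^+(3,1)$ and for $\gamma_\pm$ under $\mathrm{ISO}^+(3,1)$, respectively, and the measure $d\xi = d^3k/|\boldsymbol{k}|$ is the Lorentz-invariant volume form on $\mathcal{L}_+$ (cf.\ Eq.\,(\ref{eq:delta_transform})), so the hypotheses of the general statement are met and the result applies directly.

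The main obstacle I anticipate is not any single hard estimate but rather the bookkeeping of the extension step: one must verify that the extension is well-defined independent of the Cauchy sequence, that isometry upgrades to unitarity via surjectivity of the inverse, and that the group law survives passage to the completion. Each of these is a routine consequence of $\Sigma^s_g$ being an isometric bijection of the dense subspace, but they are precisely the points where the argument could silently fail if $\Sigma^s_g$ did not preserve normalizability --- which is exactly what Proposition \ref{prop:unitary_sectional_rep} rules out.
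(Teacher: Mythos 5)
Your proposal is correct and follows essentially the same route as the paper: combine Propositions \ref{prop:sectional_rep} and \ref{prop:unitary_sectional_rep} to get a unitary action on the pre-Hilbert space $\Gamma(E,d\xi)$, extend to the completion $L^2(E,d\xi)$ via Cauchy sequences, and read off smoothness from the fact that smooth sections map to smooth sections. Your treatment is if anything slightly more careful than the paper's about well-definedness of the extension and persistence of the group law, but the argument is the same.
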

The massless representations constructed in the standard little group representations are representations on the space of $\Comp$-valued $L^2$ functions over $\mathcal{L}_+$ which is equivalent to the space of sections of the trivial line bundle on $\mathcal{L}_+$. This leads to non-smooth representations when $h\neq 0$. By considering representations on sections of nontrivial bundles, one can obtain smooth massless representations.

To complete this picture and prove that massless unitary irreducible bundle representation of $\mathrm{ISO}^+(3,1)$ are particles, it is necessary to show that the corresponding sectional representations are also irreducible:
\begin{theorem}\label{thm:Hilbert_irrep}
    Suppose $\Sigma$ is an irreducible unitary bundle representation of $\poincare$ on $\pi: E \rightarrow \Lightcone$ such that a spacetime translation by $\mathsf{a} \in \mathbb{R}^4$ acts on a vector $(k,v) \in E_k$ by $\Sigma_{\mathsf{a}}(k,v) = (k,e^{ik_\mu \mathsf{a}^\mu}v)$. Then the induced Hilbert space representation $\Sigma^s$ on $L^2(E,d\xi)$ is irreducible.
\end{theorem}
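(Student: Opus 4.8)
The plan is to prove irreducibility through Schur's lemma: since $\Sigma^s$ is unitary, it suffices to show that any bounded operator $T$ on $L^2(E,d\xi)$ commuting with every $\Sigma^s_g$ is a scalar multiple of the identity. First I would note that, by Theorem \ref{thm:bundle_decomp}, an irreducible $\poincare$-equivariant bundle must have rank one, so $E$ is a line bundle of some definite helicity $h$. The argument then proceeds in two stages: use the abelian translation subgroup to force $T$ to be a fiberwise multiplication operator, then use the transitive Lorentz action to force the multiplier to be constant.

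The key step exploits the translation subgroup. By hypothesis a translation $\mathsf{a} \in \Real^4$ fixes the base point and acts on the fiber by the phase $e^{ik_\mu \mathsf{a}^\mu}$, so the sectional action is multiplication, $[\Sigma^s_{\mathsf{a}}\psi](k) = e^{ik_\mu \mathsf{a}^\mu}\psi(k)$. This is a unitary representation of $\Real^4$, and by the Stone--Naimark--Ambrose--Godement (SNAG) theorem it is diagonalized by a projection-valued measure on the character group, supported on $\Lightcone$ since the phases $e^{ik_\mu \mathsf{a}^\mu}$ run over $k \in \Lightcone$. This realizes $L^2(E,d\xi)$ as the direct integral $\int^\oplus_{\Lightcone} E_k\, d\xi(k)$. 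Because $T$ commutes with all $\Sigma^s_{\mathsf{a}}$, it commutes with this projection-valued measure and is therefore decomposable, acting as a measurable field of operators $T_k : E_k \to E_k$. Since each fiber $E_k$ is one-dimensional, every $T_k$ is a scalar $\tau(k)$, so $T$ is multiplication by some $\tau \in L^\infty(\Lightcone)$.

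It remains to use the Lorentz part $\mathrm{SO}^+(3,1) \seq \poincare$ to force $\tau$ constant. Writing $[T\Sigma^s_\Lambda \psi](k) = [\Sigma^s_\Lambda T\psi](k)$ and using that the scalar $\tau$ commutes with the fiber map $\Sigma_\Lambda$ yields $\tau(\Lambda^{-1}k)\,\Sigma_\Lambda[\psi(\Lambda^{-1}k)] = \tau(k)\,\Sigma_\Lambda[\psi(\Lambda^{-1}k)]$ for all $\psi$, hence $\tau(\Lambda^{-1}k) = \tau(k)$ for almost every $k$ and each fixed $\Lambda$. Thus $\tau$ is a Lorentz-invariant element of $L^\infty(\Lightcone)$. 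Since $\mathrm{SO}^+(3,1)$ acts transitively on $\Lightcone$ and preserves the invariant volume form $d\xi$, the action is ergodic, so any invariant measurable function is constant almost everywhere. Therefore $\tau \equiv c$, giving $T = cI$, and by Schur's lemma $\Sigma^s$ is irreducible.

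I expect the decomposability step to be the main obstacle: one must rigorously pass from ``$T$ commutes with every translation phase'' to ``$T$ is a multiplication operator,'' which is precisely the statement that the commutant of the translation algebra is the algebra of decomposable operators over the spectral measure. The cleanest justification is the SNAG argument above together with the one-dimensionality of the fibers; equivalently, one shows that the von Neumann algebra generated by multiplication by the characters $\{e^{ik_\mu \mathsf{a}^\mu}\}$ is the full maximal abelian algebra $L^\infty(\Lightcone)$, whose commutant for a line bundle is itself. A secondary technicality is the ergodicity step: the invariance $\tau(\Lambda^{-1}k) = \tau(k)$ holds only modulo a $\Lambda$-dependent null set, so a standard Fubini argument is needed to upgrade almost-everywhere invariance under each group element to genuine invariance under the transitive action before concluding constancy.
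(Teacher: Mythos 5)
Your proof is correct, but it takes a genuinely different route from the paper's. The paper argues directly with vectors rather than with the commutant: it fixes a nonzero $\psi_0$ in an irreducible subrepresentation, forms the closed span $\bar{\mathcal{V}}$ of its Poincar\'{e} orbit, and shows any $\psi^\perp$ orthogonal to the orbit vanishes almost everywhere. There the translation subgroup enters exactly as in your argument, but packaged differently: orthogonality to $\Sigma^s(\mathsf{a}\circ L_{\boldsymbol{k}})\psi_0$ for all spatial $\mathsf{a}$ says that the Fourier transform of $\psi_\perp^*\cdot\psi_{\boldsymbol{k}}/|\boldsymbol{k}'|$ vanishes, hence the product vanishes a.e.; rank one (from Theorem \ref{thm:bundle_decomp}, which you also invoke) turns this into the dichotomy that at a.e.\ point either $\psi^\perp$ or the transported $\psi_0$ vanishes; and the transitive Lorentz action is then exploited through a Lebesgue-density-point argument (with a bi-Lipschitz lemma of Buczolich) rather than through ergodicity. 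So both proofs rest on the same three pillars --- rank one, injectivity of the Fourier transform on the translation characters, and transitivity of $\mathrm{SO}^+(3,1)$ on $\Lightcone$ --- but yours routes them through Schur's lemma, the maximal-abelian multiplication algebra, and ergodicity of the transitive action, which is the standard Mackey-machine argument and is shorter if one is willing to cite SNAG and the decomposable-operator theorem; the paper's version is more self-contained, avoids von Neumann algebra machinery, and yields the slightly stronger concrete statement that every nonzero $L^2$ section is cyclic. The two technicalities you flag (that the von Neumann algebra generated by the characters is all of $L^\infty(\Lightcone,d\xi)$, and the Fubini upgrade from per-element a.e.\ invariance of $\tau$ to genuine invariance) are real but standard and closable exactly as you indicate; they play the same role that the Lebesgue-density argument plays in the paper's proof, namely coping with the fact that $L^2$ sections need not be smooth (the paper isolates the smooth case separately as Theorem \ref{thm:simplified_irrep}).
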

This is harder to prove than the previous results in this section since it depends on specific properties on the Poincar\'{e} group. Indeed, it not generally true that unitary irreducible bundle representations of an arbitrary Lie group $G$ generate irreducible sectional representations. A counterexample can be found in elementary quantum mechanics. Consider the trivial line bundle $E = S^2 \times \Comp$ over $S^2$ with generic element $(\hat{\boldsymbol{k}},c) \in E$ and with the irreducible $\mathrm{SO}(3)$ action which acts by rotations on $\boldsymbol{k}$ and trivially on the fiber:
\begin{equation}
    R(\hat{\boldsymbol{k}},c) = (R\hat{\boldsymbol{k}},c).
\end{equation}
The sections $\Gamma(E)$ are just $\Comp$-valued functions on $S^2$, and the sectional representation is
\begin{equation}
    (Rf)(\hat{\boldsymbol{k}}) = f(R^{-1}\hat{\boldsymbol{k}}).
\end{equation}
This is the standard action of $\mathrm{SO}(3)$ on functions on $S^2$, and is reducible using spherical harmonics. For example, the constant functions are a one-dimensional subrepresentation. In fact, the sectional representation will always be reducible when $G$ is compact. This is because $L^2(E,d\xi)$ is an infinite-dimensional vector space meanwhile any unitary irreducible representation of a compact group is finite-dimensional (\cite{Johnson1976}, Thm. 3.9). However, this does not apply to non-compact groups, such as $\poincare$.

We first prove a simplified version of Theorem \ref{thm:Hilbert_irrep} assuming a smoothness condition. This allows for a more transparent proof and motivates the measure theoretic techniques used in the general proof. Given a representation of $\poincare$ on $L^2(E,d\xi)$, one can decompose it into irreducible Hilbert subrepresentations $\mathcal{H}_a$:
\begin{equation}\label{eq:Hilbert_decomp}
    L^2(E,d\xi) = \bigoplus_a \mathcal{H}_a.
\end{equation}
It is not clear \emph{a priori} that any $\mathcal{H}_a$ contains a smooth section other than the zero section; that is, there is the possibility that every nonzero smooth section is a linear combination of non-smooth elements from the $\mathcal{H}_a$. We call the representation normal if at least one of the $\mathcal{H}_a$ contains a smooth section which is not identically zero.

\begin{theorem}\label{thm:simplified_irrep}
    Theorem \ref{thm:Hilbert_irrep} holds under the additional hypothesis that the sectional representation $\Sigma^s$ is normal.
\end{theorem}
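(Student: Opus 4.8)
The plan is to combine the translation hypothesis with the transitivity of the Lorentz action on $\mathcal{L}_+$. Since $\Sigma$ is an irreducible bundle representation, Theorem \ref{thm:bundle_decomp} forces $E$ to be a \emph{line} bundle, so scalar functions act on sections simply; this is used decisively below. Decompose $L^2(E,d\xi) = \bigoplus_a \mathcal{H}_a$ as in Eq.\,(\ref{eq:Hilbert_decomp}). Normality supplies an irreducible summand $\mathcal{H}_0$ containing a nonzero \emph{smooth} section $\psi_0$. The goal is to show $\mathcal{H}_0 = L^2(E,d\xi)$, which collapses the decomposition to a single summand and proves irreducibility of $\Sigma^s$.

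First I would identify the sectional action of spacetime translations. Because translations act trivially on the base, Eq.\,(\ref{eq:sectional_action}) together with the hypothesis $\Sigma_{\mathsf{a}}(k,v) = (k,e^{ik_\mu \mathsf{a}^\mu}v)$ gives $[\Sigma^s_{\mathsf{a}}\psi](k) = e^{ik_\mu \mathsf{a}^\mu}\psi(k)$; that is, $\Sigma^s_{\mathsf{a}}$ is multiplication by the character $k \mapsto e^{ik_\mu \mathsf{a}^\mu}$. As $\mathsf{a}$ ranges over $\mathbb{R}^4$ these characters separate the points of $\mathcal{L}_+$ and are closed under conjugation, so (by the SNAG theorem, or by a Stone--Weierstrass argument) the von Neumann algebra they generate is the algebra of all multiplication operators $M_f$, $f \in L^\infty(\mathcal{L}_+)$. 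A closed invariant subspace is invariant under the von Neumann algebra generated by any operators leaving it invariant, so $f\psi_0 \in \mathcal{H}_0$ for every $f \in L^\infty(\mathcal{L}_+)$.

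Next I would localize. Since $\psi_0$ is smooth and nonzero, it is nowhere vanishing on some nonempty open set $V$. Given any $L^2$ section $\eta$ supported in a compact subset of $V$, the ratio $f \doteq \eta/\psi_0$ is a well-defined scalar function (here rank one is essential), and bounded $f$ recover $f\psi_0 = \eta$; approximating a general such $\eta$ by bounded sections and using the closedness of $\mathcal{H}_0$ shows that $\mathcal{H}_0$ contains every $L^2$ section supported in $V$. Finally I would spread this over the whole cone: for $g \in \mathrm{SO}^+(3,1)$ the unitary $\Sigma^s_g$ carries sections supported in $V$ to sections supported in $gV$, so by invariance $\mathcal{H}_0$ contains all $L^2$ sections supported in each $gV$. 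Transitivity of $\mathrm{SO}^+(3,1)$ on $\mathcal{L}_+$ makes $\{gV\}$ an open cover, and sections supported in the various $gV$ span a dense subspace (via a partition of unity), so $\mathcal{H}_0 = L^2(E,d\xi)$.

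The step I expect to be the main obstacle is the localization via translations, namely verifying that the translation characters generate the \emph{full} algebra of multiplication operators and controlling the attendant closures in the $L^2$ topology. This is the point at which spectral and measure theory enter, and it is precisely what the normality hypothesis tames: continuity of $\psi_0$ produces an honest open set $V$ on which one may divide by $\psi_0$, whereas without a smooth generating section one is forced to argue with measurable fields of operators and the ergodicity of the Lorentz action with respect to $d\xi$ — which is what makes the unrestricted Theorem \ref{thm:Hilbert_irrep} substantially harder.
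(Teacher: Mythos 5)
Your argument is correct, and it reaches the conclusion by a genuinely different route than the paper. The paper works with the orthogonal complement: for $\psi^\perp\perp\mathcal{H}$ it observes that $\langle \psi^\perp,\Sigma^s(\mathsf{a}\circ L_{\boldsymbol{k}})\psi_0\rangle$, as a function of the spatial translation $\boldsymbol{\mathsf{a}}$, is the Fourier transform of $\psi^\perp(\boldsymbol{k}')^*\cdot\psi_{\boldsymbol{k}}(\boldsymbol{k}')/|\boldsymbol{k}'|$; injectivity of the Fourier transform plus the rank-one pointwise dichotomy then forces $\psi^\perp=0$ a.e.\ near each $\boldsymbol{k}$, and varying $\boldsymbol{k}$ finishes. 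You instead stay inside $\mathcal{H}_0$ and use the dual form of the same Fourier-analytic fact: by SNAG the translation characters generate the full multiplication algebra $\{M_f: f\in L^\infty(\mathcal{L}_+)\}$, so $\mathcal{H}_0$ (being invariant under the unitaries and their adjoints, hence under the bicommutant) is an $L^\infty$-module; dividing by $\psi_0$ on an open set $V$ where it is nonvanishing, truncating, and then spreading by transitivity and a partition of unity gives $\mathcal{H}_0=L^2(E,d\xi)$. Both proofs rest on the same three pillars --- translations act as separating characters, irreducibility forces rank one so that the fiberwise statements are scalar, and Lorentz transitivity globalizes a local conclusion --- and both use smoothness of $\psi_0$ only to produce an honest open set of nonvanishing. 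Your version is more structural and makes the module-theoretic content explicit, at the cost of importing von Neumann algebra machinery; the paper's is more elementary and, importantly, is built to survive the removal of the normality hypothesis, where its fix is not ergodicity or measurable fields of operators as you anticipate, but the replacement of the open set $V$ by the set of Lebesgue density points of $\{\psi_0\neq 0\}$ together with the bi-Lipschitz invariance of density points under the Lorentz action. The only steps you should make explicit if writing this up are (i) that invariance under the group gives invariance under adjoints, so the commutant/bicommutant argument applies, and (ii) that $|\psi_0|$ is bounded below on compact subsets of $V$, so the truncated quotients $f_n\psi_0$ converge to $\eta$ in $L^2$; neither is a gap, just a line each.
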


\begin{proof}
    By assumption, $L^2(E,d\xi)$ has an irreducible subrepresentation $\mathcal{H}$ which contains a smooth section $\psi_0$ which is not the zero section. We will prove that $L^2(E,d\xi)=\mathcal{H}$. Consider the vector space $\mathcal{V}$ generated by the orbit of $\psi_0$:
    \begin{equation}\label{eq:orbit_space}
        \mathcal{V} = \{c[\Sigma^s(\Lambda)\psi_0](k) | c \in \Comp, \, \Lambda \in \mathrm{ISO}^+(3,1) \}.
    \end{equation}
    $\mathcal{V} \seq \mathcal{H}$ since $\mathcal{H}$ is invariant under the $\mathrm{ISO}^+(3,1)$ action. The closure of $\mathcal{V}$ in $L^2(E,\xi)$, denoted by $\bar{\mathcal{V}}$, is a closed subspace of $L^2(E,\xi)$ and thus a Hilbert space. Furthermore, $\bar{\mathcal{V}} \seq \mathcal{H}$ since $\mathcal{H}$ is complete. $\bar{\mathcal{V}}$ is also a representation of $\mathrm{ISO}^+(3,1)$ since if $c_j \Sigma(\Lambda_j) \psi_0 \rightarrow \bar{\psi}_0$ and $L \in \mathrm{ISO}^+(3,1)$, then $c_j \Sigma(L \Lambda_j)\psi_0 \rightarrow \Sigma(L)\bar{\psi}_0$ because the action $\Sigma$ is continuous. Since $\mathcal{H}$ is irreducible by assumption, $\mathcal{H}=\bar{\mathcal{V}}$. Decompose $ L^2(E,d\xi)$ as
    \begin{equation}
        L^2(E,d\xi) = \bar{\mathcal{V}} \oplus (\bar{\mathcal{V}})^\perp
    \end{equation}
    where $(\bar{\mathcal{V}})^\perp$ is the orthogonal compliment of $\bar{\mathcal{V}}$. The proof is complete if we show that $(\bar{\mathcal{V}})^\perp=\{0\}$. If $\psi^\perp(k) \in (\bar{\mathcal{V}})^\perp$, then
    \begin{equation}
        \langle \psi^\perp, \Sigma^s(\Lambda)\psi_0 \rangle = \int \frac{\psi^\perp(\boldsymbol{k'})^* \cdot [(\Sigma^s(\Lambda)\psi_0](\boldsymbol{k'})}{|\boldsymbol{k'}|} d\boldsymbol{k'}= 0
    \end{equation}
    for every $\Lambda \in \mathrm{ISO}^+(3,1)$. As $\psi_0(\boldsymbol{k})$ is smooth and not the zero section, there is some small ball $U \in \mathcal{L}_+$ containing a point $\boldsymbol{k}_0$ on which $\psi_0$ is nonvanishing. For each fixed $\boldsymbol{k} \in \Lightcone$, choose a homogeneous Lorentz transformation $L_{\boldsymbol{k}} \in \mathrm{SO}^+(3,1)$ such that $L_{\boldsymbol{k}}\boldsymbol{k}_0 = \boldsymbol{k}$. Note that no continuous choice of $L_{\boldsymbol{k}}$ exists, but this is irrelevant to the current argument. Consider the action of $\mathsf{a} \circ L_{\boldsymbol{k}}$ where $\mathsf{a}$ denotes an arbitrary spacetime translation in only the spatial dimensions $\mathsf{a}=(0,\boldsymbol{\mathsf{a}})$:
    \begin{align}\label{eq:F_k_defn}
    \begin{split}
        \mathcal{F}_{\boldsymbol{k}}(\boldsymbol{\mathsf{a}}) &\doteq \frac{1}{(2\pi)^3} \langle \psi^\perp, \Sigma^s(\mathsf{a} \circ L_{\boldsymbol{k}})\psi_0 \rangle \\
        &=\frac{1}{(2\pi)^3}\int e^{i \boldsymbol{k'} \cdot \boldsymbol{\mathsf{a}}}\frac{\psi^\perp(\boldsymbol{k'})^* \cdot [\Sigma^s(L_{\boldsymbol{k}})\psi_0](\boldsymbol{k'})}{|\boldsymbol{k'}|}d\boldsymbol{k}' \\
        &= 0.
    \end{split}
    \end{align}
    $\mathcal{F}_{\boldsymbol{k}}(\boldsymbol{\mathsf{a}})$ is the 3D Fourier transform of the function
    \begin{equation}\label{eq:f_k_defn}
        F_{\boldsymbol{k}}(\boldsymbol{k}') \doteq \frac{\psi_\perp(\boldsymbol{k'})^* \cdot \psi_{\boldsymbol{k}}(\boldsymbol{k'})}{|\boldsymbol{k'}|}.
    \end{equation}
    where
    \begin{equation}\label{eq:psi_k_defn}
        \psi_{\boldsymbol{k}} \doteq \Sigma^s(L_{\boldsymbol{k}})\psi_0.
    \end{equation}
    Since $\mathcal{F}_{\boldsymbol{k}}(\boldsymbol{a})=0$ for all $\boldsymbol{a}\in \mathbb{R}^3$, and since the Fourier transform is an isomorphism, $F_{\boldsymbol{k}}(\boldsymbol{k}') = 0$ almost everywhere. $E$ is irreducible and therefore a rank-$1$ vector bundle by Theorem \ref{thm:bundle_decomp}. Thus, if $F_{\boldsymbol{k}}(\boldsymbol{k}')=0$ for some $\boldsymbol{k}'$, either $\psi^\perp(\boldsymbol{k}') = 0$ or $\psi_{\boldsymbol{k}}(\boldsymbol{k'})=0$. By construction, $\psi_{\boldsymbol{k}}(\boldsymbol{k'})$ is nonzero for $\boldsymbol{k}'$ in a neighborhood of $\boldsymbol{k}$, and so $\psi^\perp$ must vanish almost everywhere in that neighborhood. By varying $\boldsymbol{k}$, we obtain that $\psi^\perp$ vanishes almost everywhere, and in particular is the zero section of $L^2(E,d\xi)$ since  $L^2$ space is defined up to equivalence almost everywhere. Thus, $(\bar{\mathcal{V}})^\perp = \{0\}$ and $L^2(E,d\xi)=\bar{\mathcal{V}}=\mathcal{H}$ is irreducible. 
\end{proof}

While the assumption that the sectional representation is normal in Theorem \ref{thm:simplified_irrep} appears reasonable, it is known that mathematical pathologies are common in quantum mechanics \cite{Folland2008,Hall2013}. It is thus important to show that the theorem holds in the absence of this smoothness condition.

Without assuming the representation is normal, $\psi_0$ cannot necessarily be chosen to be smooth. However, $L^2$ functions are, by definition, measurable \cite{Stein3_2005}, and we will use measurability as a substitute for smoothness. We will borrow standard results from measure theory and the theory of Hilbert spaces; references for these topics are \cite{Rudin1987,Mattila1999,Stein3_2005}. Let $\xi$ be the measure on the lightcone $\mathcal{L} \cong \pspace$ associated to the Lorentz invariant volume form $d\xi = \frac{d \boldsymbol{k}}{|\boldsymbol{k}|}$. That is, if $A \seq \pspace$ is (Lebesgue) measurable, then
\begin{equation}\label{eq:measure_def}
    \xi(A) = \int_A d\xi.
\end{equation}
$\xi$ can be extended to a measure on all of $\mathbb{R}^3$ by assigning $\xi(\{0\}) = 0$, which is consistent with Eq. (\ref{eq:measure_def}) since 
\begin{equation}
    \xi(\{0\}) = \lim_{r\rightarrow 0}\xi(B_r(\boldsymbol{0})) = \lim_{r\rightarrow 0}\int_{B_r(\boldsymbol{0})} \frac{d \boldsymbol{k}}{|\boldsymbol{k}|} = 0.
\end{equation}
Note that if $\lambda$ denotes the standard Lebesgue measure on $\mathbb{R}^3$, then $\xi(A)=0$ if and only if $\lambda(A) = 0$, and we can refer unambiguously to such sets as measure 0.

We will use a technical result of Buczolich \cite{Buczolich1992}, for which we recall a few definitions. $\boldsymbol{k} \in \mathbb{R}^3$ is a point of Lebesgue density of the measurable set $A\seq \mathbb{R}^3$ if
\begin{equation}\label{eq:Lebesgue_density}
    \lim_{r\rightarrow 0} \frac{\lambda\big(B_r(\boldsymbol{k}) \cap A\big)}{\lambda\big(B_r(\boldsymbol{k})\big)} = 1.
\end{equation}
A function $f:A\rightarrow B$ between two subsets of $\mathbb{R}^3$ is bi-Lipschitz if it is invertible and if both $f$ and $f^{-1}$ are Lipschitz.
\begin{lemma}[Buczolich, \cite{Buczolich1992}, Thm. 1]
    Suppose that $A$ and $B$ are measurable subsets of $\mathbb{R}^3$ and $f:A\rightarrow B$ is bi-Lipschitz. Then $f$ maps points of Lebesgue density of $A$ into points of Lebesgue density of $B$.
\end{lemma}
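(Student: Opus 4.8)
The goal is to show that if $\boldsymbol{k}_0$ is a point of Lebesgue density of $A$, then $\boldsymbol{u}_0 \doteq f(\boldsymbol{k}_0)$ is a point of Lebesgue density of $B$. Writing $L$ for a Lipschitz constant of $f$ and $L'$ for one of $f^{-1}$, the bi-Lipschitz hypothesis gives the two-sided bound $\tfrac{1}{L'}|\boldsymbol{x}-\boldsymbol{y}| \leq |f(\boldsymbol{x})-f(\boldsymbol{y})| \leq L|\boldsymbol{x}-\boldsymbol{y}|$, and hence the measure-distortion estimate $(L')^{-3}\lambda(S) \leq \lambda(f(S)) \leq L^3\lambda(S)$ for every measurable $S\subseteq A$ (a Lipschitz map with constant $c$ multiplies outer measure by at most $c^3$ in $\mathbb{R}^3$; the lower bound comes from applying this to $f^{-1}$). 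First I would record the elementary ``sandwich'' of preimages of balls: since $f$ and $f^{-1}$ scale distances by at most $L$ and $L'$, one has $B_{\rho/L}(\boldsymbol{k}_0)\cap A \subseteq f^{-1}(B_\rho(\boldsymbol{u}_0)) \subseteq B_{L'\rho}(\boldsymbol{k}_0)\cap A$, so the preimage of a small ball about $\boldsymbol{u}_0$ is squeezed between two balls about $\boldsymbol{k}_0$.

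The naive plan is to combine these two facts directly. Because $\boldsymbol{k}_0$ is a density point, $\lambda(B_s(\boldsymbol{k}_0)\setminus A)=o(s^3)$, so the sandwiched preimage fills its ball up to $o(\rho^3)$; pushing forward with the distortion bound then yields $\lambda(B_\rho(\boldsymbol{u}_0)\cap B) \geq (LL')^{-3}\lambda(B_\rho(\boldsymbol{u}_0))(1-o(1))$. \textbf{This is exactly where the difficulty lies}: the global constants $L,L'$ bound the lower density of $B$ at $\boldsymbol{u}_0$ only from below by $(LL')^{-3}$, which is strictly less than $1$ unless $f$ is an isometry. Crude bi-Lipschitz bookkeeping cannot distinguish ``comparable to full measure'' from ``density exactly $1$,'' so the main obstacle is to upgrade this fixed constant to $1$.

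The way I would close this gap is to replace the single global comparison by a scale-by-scale argument in which the relevant distortion is that of $f$ \emph{at the point $\boldsymbol{k}_0$} rather than its worst-case value. If $f$ happened to be differentiable at $\boldsymbol{k}_0$ with invertible derivative $T=Df(\boldsymbol{k}_0)$, the argument would finish cleanly: near $\boldsymbol{k}_0$ the map equals $T$ up to $o(|\boldsymbol{x}-\boldsymbol{k}_0|)$, the linear isomorphism $T$ sends balls to ellipsoids of bounded eccentricity, and density points are preserved under any linear isomorphism (one sandwiches $T^{-1}B_\rho(\boldsymbol{u}_0)$ between concentric balls and uses that the density-point condition is insensitive to the shape of a bounded-eccentricity shrinking family). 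So the plan is to extend $f$ to a Lipschitz map on $\mathbb{R}^3$ via Kirszbraun, invoke Rademacher's theorem to obtain differentiability with invertible derivative almost everywhere on $A$, and transfer the density-$1$ conclusion from those good points to $\boldsymbol{u}_0$.

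The hard part, and the reason the statement is a genuine theorem rather than a corollary of Rademacher, is that the given density point $\boldsymbol{k}_0$ need not itself be a point of differentiability, and the standard Whitney--Federer decomposition $A = Z\cup\bigcup_n A_n$ (with $Z$ null and $f$ agreeing with a $C^1$ diffeomorphism on each $A_n$) does not help directly: a density point of $A$ can fail to be a density point of every individual piece $A_n$, so one cannot simply import the linear argument. Overcoming this requires a quantitative multiscale estimate controlling $\lambda(B_\rho(\boldsymbol{u}_0)\setminus B)$ at \emph{all} small scales simultaneously, using the density of $A$ at $\boldsymbol{k}_0$ to show that the non-image ``holes'' of $f$ are negligible at every scale. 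This is the technical heart of the argument, and is precisely what Ref.~\cite{Buczolich1992} supplies.
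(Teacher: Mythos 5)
The paper does not actually prove this lemma: it is an external result quoted with a citation to Buczolich, and the surrounding text uses it as a black box (via Corollary \ref{cor:Lorentz_Lebesgue}). So there is no in-paper argument to compare yours against. Your proposal is likewise not a proof. Your diagnosis of the obstruction is accurate: the sandwich $B_{\rho/L}(\boldsymbol{k}_0)\cap A \subseteq f^{-1}(B_\rho(\boldsymbol{u}_0)) \subseteq B_{L'\rho}(\boldsymbol{k}_0)\cap A$ together with the measure-distortion bounds only forces the lower density of $B$ at $\boldsymbol{u}_0$ to exceed $(LL')^{-3}$, and no amount of global bi-Lipschitz bookkeeping upgrades that fixed constant to $1$. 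Your observation that the Kirszbraun--Rademacher--Federer route stalls is also correct: the given density point need not be a differentiability point of the Lipschitz extension, and a density point of $A$ need not be a density point of any single piece $A_n$ of the decomposition, so the clean linear-map argument cannot simply be transported.

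The gap is that at the decisive moment you write that the required multiscale estimate bounding $\lambda\bigl(B_\rho(\boldsymbol{u}_0)\setminus B\bigr)$ by $o(\rho^3)$ ``is precisely what Ref.~\cite{Buczolich1992} supplies'' --- that is, you defer the technical heart of the proof to the very theorem you are being asked to establish, which is circular as a proof attempt. What you have produced is a correct and useful explanation of \emph{why} the lemma is a genuine theorem rather than a two-line corollary of the Lebesgue density theorem, and your conclusion that one must invoke Buczolich's argument matches how the paper treats the statement. But if the task was to prove the lemma, the actual covering/multiscale argument is entirely absent.
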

\begin{corollary}\label{cor:Lorentz_Lebesgue}
    Suppose $A$ is a measurable subset of $\Lightcone\cong \pspace$ and consider $\Lambda \in \poincare$ as a function on $\Lightcone$. Then $\Lambda$ maps points of Lebesgue density of $A$ to points of Lebesgue density of $\Lambda(A)$.
\end{corollary}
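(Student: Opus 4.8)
The plan is to reduce the corollary to Buczolich's lemma by exhibiting the action of $\Lambda$ on $\Lightcone \cong \pspace$ as a \emph{locally} bi-Lipschitz map and then exploiting the fact that Lebesgue density is a purely local notion. First I would make the action explicit. A Poincar\'{e} transformation $\Lambda \in \poincare$ acts on the base manifold $\Lightcone$ by the $4$-vector action, and since spacetime translations act trivially on the base, the induced map depends only on the homogeneous part $\mathrm{SO}^+(3,1)$. Identifying $\Lightcone$ with $\pspace$ through $\boldsymbol{k}$, this gives a smooth diffeomorphism $\phi_\Lambda : \pspace \rightarrow \pspace$, with smooth inverse $\phi_{\Lambda^{-1}}$. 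In particular $\phi_\Lambda$ carries measurable sets to measurable sets, so $\Lambda(A) = \phi_\Lambda(A)$ is measurable and the statement is well posed.

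Second, I would localize. Let $\boldsymbol{k}_0$ be a point of Lebesgue density of $A$, and fix a closed ball $\bar{B} = \bar{B}_\rho(\boldsymbol{k}_0) \subset \pspace$ small enough to avoid the origin. On the compact set $\bar{B}$ the derivative of $\phi_\Lambda$ is continuous, hence bounded, so $\phi_\Lambda|_{\bar{B}}$ is Lipschitz; the same applies to $\phi_{\Lambda^{-1}}$ on the compact set $\phi_\Lambda(\bar{B})$. Therefore $\phi_\Lambda$ restricted to the open neighborhood $B = B_\rho(\boldsymbol{k}_0)$ is bi-Lipschitz onto its image, which is exactly the hypothesis required to invoke Buczolich's lemma.

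Third, I would transport the density statement across this bi-Lipschitz map. Because $B$ is a neighborhood of $\boldsymbol{k}_0$, one checks directly from the definition (\ref{eq:Lebesgue_density}) that $\boldsymbol{k}_0$ remains a point of density of the restricted set $A \cap B$: once $r$ is small enough that $B_r(\boldsymbol{k}_0) \subseteq B$, the two density quotients coincide. Applying Buczolich's lemma to the bi-Lipschitz map $\phi_\Lambda : A \cap B \rightarrow \phi_\Lambda(A \cap B)$ shows that $\phi_\Lambda(\boldsymbol{k}_0)$ is a point of density of $\phi_\Lambda(A \cap B)$. Finally, since $\phi_\Lambda(A \cap B) \subseteq \phi_\Lambda(A) = \Lambda(A)$ and the density quotient is monotone under inclusion (squeezed between the quotient of the subset and $1$), I conclude that $\phi_\Lambda(\boldsymbol{k}_0) = \Lambda(\boldsymbol{k}_0)$ is a point of density of $\Lambda(A)$.

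The main obstacle is the mismatch between the global hypothesis of Buczolich's lemma and the fact that $\phi_\Lambda$ is only a smooth diffeomorphism of the noncompact manifold $\pspace$, which need not be globally bi-Lipschitz since boosts stretch the lightcone without bound. The resolution, and the crux of the argument, is that Lebesgue density is defined by a limit $r \rightarrow 0$ and is therefore local, so it suffices to work on a compact neighborhood on which smoothness upgrades to the bi-Lipschitz property; the passage between $A$ and $A \cap B$, and between $\phi_\Lambda(A \cap B)$ and $\Lambda(A)$, is then handled by the elementary monotonicity of the density quotient under inclusion.
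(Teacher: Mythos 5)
Your proof is correct, but it takes a different route from the paper's. The paper proves the stronger statement that the action of $\Lambda$ on $\Lightcone \cong \pspace$ is \emph{globally} bi-Lipschitz: after reducing to translations, rotations, and boosts, it computes the Jacobian of a boost in the $\boldsymbol{k}$-parametrization, $\Lambda(\boldsymbol{k}) = \big(\gamma_{Lor}(k_x - v|\boldsymbol{k}|), k_y, k_z\big)$, and observes that its entries are uniformly bounded by $\gamma_{Lor}(1+|v|)$ for all $\boldsymbol{k}$ because $|k_x|/|\boldsymbol{k}| \leq 1$; the same bound applies to the inverse boost, so Buczolich's lemma applies directly to $\Lambda: A \rightarrow \Lambda(A)$ with no localization needed. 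You instead sidestep the global estimate entirely by exploiting the locality of Lebesgue density: restrict to a compact ball where smoothness automatically upgrades to bi-Lipschitz, apply Buczolich to $A \cap B$, and transfer the conclusion back via monotonicity of the density quotient under inclusion. Your localization steps (that $\boldsymbol{k}_0$ remains a density point of $A \cap B$, and that a density point of $\phi_\Lambda(A \cap B)$ is a density point of the larger set $\Lambda(A)$) are both sound. What each approach buys: the paper's argument is shorter for this specific group and records the mildly surprising fact that boosts are uniformly Lipschitz on the lightcone in the $\boldsymbol{k}$-coordinates (they are not, for instance, on the mass hyperboloid in the analogous parametrization); your argument is more robust and would apply verbatim to any smooth group action on the base, without needing any global derivative bound.
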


\begin{proof}
    By the lemma, it suffices to show that the action of $\Lambda$ is bi-Lipschitz. Furthermore, one can consider only the cases when $\Lambda$ is a pure translation, rotation, or boost. The first two are obviously bi-Lipschitz. For the latter case, without loss of generality, suppose that $\Lambda$ is a boost by velocity $v\hat{\boldsymbol{k}}_x$. $\Lambda$ acts on $\boldsymbol{k}=(k_x,k_y,k_z)$ by
    \begin{equation}
        \Lambda(\boldsymbol{k}) = \Big(\gamma_{Lor} (k_x- v|\boldsymbol{k}|),k_y,k_z\Big)
    \end{equation}
so the Jacobian of the transformation is
\begin{equation}
    |D\Lambda|(\boldsymbol{k}) = \gamma_{Lor}\Big(1-\frac{vk_x}{|\boldsymbol{k}|}\Big) \leq \gamma_{Lor}(1+|v|).
\end{equation}
Since the differential has a uniform bound for all $\boldsymbol{k}$, $\Lambda$ is Lipschitz. $\Lambda^{-1}$ is a boost by $-v\hat{\boldsymbol{k}}_x$, and thus $|D(\Lambda^{-1})|$ is also bounded above by $\gamma_{Lor}(1+|v|)$, so $\Lambda^{-1}$ is Lipschitz. Thus, $\Lambda$ is bi-Lipschitz.
\end{proof}
With these results in place, we can adapt the proof Theorem \ref{thm:simplified_irrep} to prove Theorem \ref{thm:Hilbert_irrep}.
\begin{proof}[Proof of Theorem \ref{thm:Hilbert_irrep}]
    Let $\mathcal{H}$ be an irreducible subrepresentation of $L^2(E,d\xi)$, and fix some section $\psi_0 \in \mathcal{H}$ which is not identically $0$, and define $\mathcal{V}$ as the vector space generated by its orbit as in Eq. (\ref{eq:orbit_space}). Again let $\bar{\mathcal{V}}$ be the Hilbert space completion of $\mathcal{V}$, $(\bar{\mathcal{V}})^\perp$ be its orthogonal complement, and $\psi^\perp \in (\mathcal{V})^\perp$. $\bar{\mathcal{V}}$ is a subrepresentation of $L^2(E,\xi)$ contained in $\mathcal{H}$, so $\bar{\mathcal{V}} = \mathcal{H}$. The strategy is again to show that $(\bar{\mathcal{V}})^\perp$ contains only the zero section. Fix some $\boldsymbol{k}_0 \in \Lightcone$ which will be specified later. For each $\boldsymbol{k} \in \Lightcone$, choose a homogeneous Lorentz transformation $L_{\boldsymbol{k}}$ such that $L_{\boldsymbol{k}}\boldsymbol{k}_0 = \boldsymbol{k}$. By precisely the same argument as in the proof of Theorem \ref{thm:simplified_irrep}, for each fixed $\boldsymbol{k}$, at almost every $\boldsymbol{k}'$ either
    \begin{align}\label{eq:psi_alternative}
       \psi_\perp(\boldsymbol{k}')=0 \; \text{ or }\;
       \psi_{\boldsymbol{k}}(\boldsymbol{k}')=0
   \end{align} 
    where
    \begin{equation}\label{eq:psi_k_defn_2}
        \psi_{\boldsymbol{k}} \doteq \Sigma^s(L_{\boldsymbol{k}})\psi_0.
    \end{equation}
    Define the subsets
    \begin{align}
        R_0 &\doteq \{\boldsymbol{k}' \in \Lightcone ,\, |\psi_0(\boldsymbol{k}')|\neq 0  \} \label{eq:R_0}\\
        R_{\boldsymbol{k}} &\doteq \{\boldsymbol{k}' \in \Lightcone ,\, |\psi_{\boldsymbol{k}}(\boldsymbol{k}')| \neq 0 \} = L_{\boldsymbol{k}}R_0 \label{eq:R_k}\\
        R_\perp &\doteq \{\boldsymbol{k}' \in \Lightcone ,\, |\psi_\perp(\boldsymbol{k}')|\neq 0  \} , \label{eq:R_perp}
    \end{align}
    $|\psi_0|$,$|\psi_{\boldsymbol{k}}|$ and $|\psi_\perp|$ are $L^2(d\xi)$ functions and thus measurable, so the sets $R_0$, $R_{\boldsymbol{k}}$ and $R_\perp$ are measurable. Eq. (\ref{eq:psi_alternative}) says that $R_{\boldsymbol{k}}$ and $R_\perp$ are almost disjoint for every $\boldsymbol{k}$:
    \begin{equation}\label{eq:Rk_Rperp}
        \lambda(R_\perp \cap R_{\boldsymbol{k}}) = 0.
    \end{equation}
    Let $Q_0$, $Q_{\boldsymbol{k}}$, and $Q_\perp$ be the sets of points of Lebesgue density of $R_0$, $R_{\boldsymbol{k}}$, and $R_\perp$. By the Lebesgue density theorem (\cite{Stein3_2005}, Cor. 1.5), a measurable set and its set of points of Lebesgue density differ at most by a set of measure $0$, so
    \begin{align}
        \lambda(R_0) &= \lambda(Q_0) \label{eq:R_0_density}\\
        \lambda(R_{\boldsymbol{k}}) &= \lambda(Q_{\boldsymbol{k}}) \\
        \lambda(R_\perp) &= \lambda(Q_\perp) \label{eq:Rperp_density}.
    \end{align}
    By assumption, $\psi_0$ is not the zero section in $L^2(E,d\xi)$, so $R_0$ has nonzero measure, and therefore so does $Q_0$. We have not yet specified how $\boldsymbol{k}_0$ was chosen; choose $\boldsymbol{k}_0 \in Q_0$. Then $\boldsymbol{k} \in Q_{\boldsymbol{k}}$ by Eq. (\ref{eq:R_k}) and Cor. \ref{cor:Lorentz_Lebesgue}.

    We next establish that
    \begin{equation}\label{eq:Qperp_Qk}
        Q_\perp \cap Q_{\boldsymbol{k}}=\varnothing
    \end{equation}
    for every $\boldsymbol{k}$. Indeed, if $\boldsymbol{k}' \in Q_\perp \cap Q_{\boldsymbol{k}}$, by the definition of Lebesgue density in Eq. (\ref{eq:Lebesgue_density}), there would exists a $\delta>0$ such that 
    \begin{align}
    \begin{split}
    \frac{1}{2}\xi(B_\delta(\boldsymbol{k}')) &< \xi(B_\delta(\boldsymbol{k}') \cap R_\perp), \\
    \frac{1}{2}\xi(B_\delta(\boldsymbol{k}')) &< \xi(B_\delta(\boldsymbol{k}') \cap R_{\boldsymbol{k}}).
    \end{split}
    \end{align}
    By Eq. (\ref{eq:Rk_Rperp}), the intersection of $B_\delta(\boldsymbol{k}')\cap R_\perp$ and $B_\delta(\boldsymbol{k}')\cap R_{\boldsymbol{k}}$ has measure 0. 
    One would then obtain the contradiction
    \begin{align}
    \begin{split}
        \lambda(B_\delta(\boldsymbol{k}')) &\geq \lambda \big(B_\delta(\boldsymbol{k}') \cap (R_\perp \cup R_{\boldsymbol{k}}) \big) \\
        &= \lambda(B_\delta(\boldsymbol{k}') \cap R_\perp) + \lambda(B_\delta(\boldsymbol{k}') \cap R_{\boldsymbol{k}} ) \\
        &> \lambda(B_\delta(\boldsymbol{k}')),
        \end{split}
    \end{align}
    establishing Eq. (\ref{eq:Qperp_Qk}). Thus, $\boldsymbol{k} \notin Q_\perp$ since $\boldsymbol{k} \in Q_{\boldsymbol{k}}$. This holds for every $\boldsymbol{k} \in \Lightcone$, so $Q_\perp = \varnothing$, and therefore $\lambda(R_\perp) = 0$ by Eq. (\ref{eq:Rperp_density}). Thus, $\psi_\perp$ is only nonzero on a set of measure $0$, and is therefore equal to the zero section in $L^2(E,d\xi)$. Then,
$(\bar{\mathcal{V}})^\perp = \{0 \}$, and
\begin{equation}
    L^2(E,d\xi) = \bar{\mathcal{V}} = \mathcal{H}
\end{equation}
is irreducible.
    
\end{proof}

\section{Applications} \label{sec:applications}
In this section we discuss additional applications of the vector bundle description of photons. In particular, we discuss applications to the spin Chern number of light, the quantization of the electromagnetic field, and the spin-orbital decomposition of photon angular momentum.
%%%%%%%%%%% Geometry of spin Chern number %%%%%%%%%%%%%5
\subsection{Geometry of the spin Chern number}
Theorem \ref{thm:RL_equivariant} shows that the splitting $\gamma = \gamma_+ \oplus \gamma_-$ makes sense geometrically as well as topologically. Another way to see the geometry inherent in $\gamma_\pm$ is to note that these bundles are also induced by the Berry curvature, another geometric structure. In particular, it has been shown that the $R$- and $L$-polarization states diagonalize the Berry curvature \cite{Bliokh2015}.

The geometric nature of $\gamma_\pm$ is important for understanding the so-called spin Chern number of light. The Chern classes $c_j$ are the characteristic classes of complex vector bundles, and as such, are topological invariants. Integrals of the Chern classes define the Chern numbers $C_j$ which are also topological invariants and commonly used in physics. Although the Chern numbers are typically calculated using a connection, which is a geometric quantity, they are true topological invariants and independent of the choice of connection \cite{Tu2017differential}. We have shown that $C_1(\gamma) = 0$ and $C_1(\gamma_\pm) = \mp 2$. In addition to the usual Chern number, a more mysterious quantity has been defined for light, the so-called spin Chern number which is physically related to the quantum spin hall effect \cite{Bliokh2015}. Letting $h_\pm = \pm 1$ denote the  
helicities of $\gamma_\pm$, the spin Chern number of light has been defined as
\begin{equation}
    C_{\mathrm{spin}} = h_+C_1(\gamma_+) + h_-C_1(\gamma_-) = 4.
\end{equation}
However, this definition clearly depends on decomposition $\gamma = \gamma_+ \oplus \gamma_-$ of the total photon bundle. From a topological standpoint, we showed in Theorem \ref{thm:infinte_subbundles} that there are an infinite number of possible splittings $\gamma = \ell_j \oplus \ell_{-j}$. The $\gamma_\pm$ decomposition is only preferred when one considers the geometry of $\gamma$ either via Poincar\'{e} symmetry or the Berry connection. Furthermore, the definition of $C_{\textrm{spin}}$ explicitly involves helicity, which is a geometric quantity. Thus, the spin Chern number is not a purely topological quantity, instead reflecting both topological and geometric properties of the bundle $\gamma$. As such, the spin Chern number of light may not be as robust as the Chern number against perturbations to the underlying Maxwell system, particularly if those perturbations alter the geometric properties of the system.

%%%%%%%%%%% Quantization via projection operators %%%%%%%
\subsection{Quantization via projection operators}
We return to the problem of quantizing the vector potential in the Coulomb gauage. If one uses the expansion (\ref{eq:vector_potential}) for $\boldsymbol{A}$, we argued that the polarization vectors $\boldsymbol{\epsilon}_j(\boldsymbol{k})$ cannot be linear. Since we showed that $\gamma$ is trivial and decomposes into two trivial bundles $\gamma = \tau_1 \oplus \tau_2$, it is possible to take $\boldsymbol{\epsilon}_j$ to be global nonvanishing sections of $\tau_j$, thus giving a smooth global choice of polarization vectors. However, there are issues with using the trivial bundles $\tau_j$ since they are not Lorentz invariant, that is, they are not representations of the Lorentz group. The splitting that makes sense both geometrically and topologically is $\gamma = \gamma_+ \oplus \gamma_-$. Indeed, the helicity labels two different types of photons, and from a technical standpoint $R$- and $L$-photons are different particles, although it is conventional to call them both photons \cite{Maggiore2005}. Thus, the $\boldsymbol{\epsilon}_j(\boldsymbol{k})$ would ideally be chosen to be $R$- and $L$-polarizations. The issue again though is that $\gamma_\pm$ are nontrivial and thus no consistent choice of such $\boldsymbol{\epsilon}_j$ exists. However, since $\gamma_\pm$ are well-defined vector bundles, the splitting $\gamma = \gamma_+ \oplus \gamma_-$ allows a section $A(\boldsymbol{k}) \in \Gamma(\gamma)$ to be uniquely written in terms of sections $A_\pm$ of $\gamma_\pm$:
\begin{equation}
    \boldsymbol{A}(\boldsymbol{k}) = A_+(\boldsymbol{k}) + A_-(\boldsymbol{k}).
\end{equation}
That is, there are unique projections $\mathcal{P}_\pm:\Gamma(\gamma) \rightarrow \Gamma(\gamma_\pm)$. Thus, instead of expanding the vector potential in a basis, we write it in terms of projections:
\begin{equation}
    \boldsymbol{\mathcal{A}}(x) = \int \sum_{\sigma=\pm} \frac{1}{\sqrt{2|\boldsymbol{k}|}} [e^{-ik_\mu x^\mu}A_\sigma(\boldsymbol{k}) +e^{i k_\mu x^\mu}A_\sigma^*(\boldsymbol{k})]\frac{d^3\boldsymbol{k}}{(2\pi)^3}.
\end{equation}
In this form all quantities are well-defined and smooth. We can then quantize the field via the usual scheme \cite{Tong2006}, by promoting the $A_\sigma(\boldsymbol{k})^*$ and $A_\sigma(\boldsymbol{k})$ to creation and annihilation operators:
\begin{align}
    [A_\sigma(\boldsymbol{k}_1),A_\eta(\boldsymbol{k}_2)] = [A^*_\sigma(\boldsymbol{k}_1),A^*_\eta(\boldsymbol{k}_2)]= 0 \\
    [A_\sigma(\boldsymbol{k}_1),A^*_\eta(\boldsymbol{k}_2)] = (2\pi)^3\delta^{\sigma \eta} \delta^{(3)}(\boldsymbol{k}_1 - \boldsymbol{k}_2).
\end{align}
Thus, by using projection operators instead of vector expansion, it is possible to quantize the electromagnetic field and obtain the standard QED theory without invoking discontinuous bases as in the standard approach. 

Note that since the Coulomb gauge itself is not Lorentz invariant, the bundle $A(\boldsymbol{k})$ is equal to the $\gamma$ bundle only in the chosen frame where the Coulomb gauge is imposed. To make the quantization Lorentz invariant, the two-dimensional bundles $A(\boldsymbol{k})$ in other frames need to be determined by the requirement of Lorentz invariance.

%%%%%%%%%%% Spin and oribtal angular momentum %%%%%%%%
\subsection{Spin and orbital angular momenta of light}\label{subsec:SAM_OAM}
The last application of our vector bundle methods is to an extended debate about the possibility of splitting photon angular momentum into spin and orbital parts \cite{Akhiezer1965, VanEnk1994,Bliokh2010, Bialynicki-Birula2011, Leader2013, Bliokh2015, Leader2016, Leader2019}:
\begin{equation}
    \boldsymbol{J} = \boldsymbol{J}_s + \boldsymbol{J}_o.
\end{equation}
Note that while photons are massless and thus technically possess helicity not spin, we use the established term spin angular momentum. We work in the sectional representation of the Poincar\'{e} group so that the angular momentum $\boldsymbol{J}$ is represented as an operator on the vector space $\Gamma(\gamma)$. We will show that $\boldsymbol{J}_s$ and $\boldsymbol{J}_o$ do not satisfy $\mathfrak{so}(3)$ commutation relations, and thus cannot properly be considered angular momentum operators. These nonstandard commutation relations have been found by others, although there is no consensus on the implications of them \cite{VanEnk1994, Bliokh2010, Leader2019}. We will use the vector bundle formalism to help explain the meaning of these peculiar commutation relations.

The angular momentum operator $\boldsymbol{J}$ for a section of the photon bundle is determined according to the fact that $\boldsymbol{E}(\boldsymbol{k})$ transforms as a 3-vector when $\boldsymbol{x}$ rotates in $\mathbb{R}^3$, as stated in Theorem \ref{thm:gamma_equivariant}. Direct calculation shows
\begin{align} \label{eq:J=S+L}
    \boldsymbol{J} &= \boldsymbol{S} + \boldsymbol{L}\\
     S_a &\doteq  -i \epsilon_{abc}\\
     \boldsymbol{L} &\doteq - i(\boldsymbol{k}\times \partial_{\boldsymbol{k}})
\end{align}
where $\boldsymbol{S}$ is a 3-vector of rank two tensors. Here, $\boldsymbol{J}$ and $\boldsymbol{L}$ resemble expressions of the total angular momentum and orbital angular momentum for a massive particle, and $\boldsymbol{S}$ assumes the form of the spin-1 operator \cite{Bliokh2010}. If $\boldsymbol{J}$ and $\boldsymbol{S}$ were well-defined operators on the vector space $\Gamma(\gamma)$, then Eq.\,(\ref{eq:J=S+L}) would furnish a split of the angular momentum into spin and orbital parts. The advantage of this decomposition is that $\boldsymbol{L}$ and $\boldsymbol{S}$ satisfy $\mathfrak{so}(3)$ commutation relations, and thus generate rotations as angular momentum operators should. The fundamental issue is that they are ill-defined for photons. The operators should act on sections $\boldsymbol{E}(\boldsymbol{k})$ of the photon bundle where $\boldsymbol{E}$ is embedded in $\Comp^3$. However, $\boldsymbol{L}$ and $\boldsymbol{S}$ generally give $\boldsymbol{E}(\boldsymbol{k})$ a nonzero component in the $\boldsymbol{k}$ direction, violating the transversality condition imposed by Gauss's law. That is to say, $\boldsymbol{L}(\boldsymbol{E})$ and $\boldsymbol{S}(\boldsymbol{E})$ are not sections of $\gamma$. To see the origin of this issue, we start from the derivation of $\boldsymbol{J}$ for a representation of $\mathrm{SO}(3)$ on 3D vector fields $\boldsymbol{\psi}(\boldsymbol{k})$ subject to no constraints. The action of a rotation $R \in \mathrm{SO}(3)$ is given by
\begin{equation}\label{eq:composite_action}
    \big[R \boldsymbol{\psi}\big](\boldsymbol{k}) = R\big[ \boldsymbol{\psi}(R^{-1}\boldsymbol{k})] \big] \doteq (F_{R} \circ \tilde{F}_{R^{-1}}\boldsymbol{\psi})(\boldsymbol{k})
\end{equation}
where $F$ and $\tilde{F}$ are operators on vectors and functions, respectively:
\begin{align}
    F_{R} \boldsymbol{v} &= R\boldsymbol{v} \\
    (\tilde{F}_{R}f)(\boldsymbol{k}) &= f(R\boldsymbol{k}).
\end{align}
Let $R_a(\theta)$ be a rotation by $\theta$ about the $a$-axis. The angular momentum operators $J_a$ are defined by the infinitesimal action of these rotations, and by the chain rule, split into two parts:
\begin{align}
    [J_a\boldsymbol{\psi}](\boldsymbol{k}) &= \frac{d}{d\theta}\Big|_{\theta=0} [R_a(\theta)\boldsymbol{\psi}](\boldsymbol{k}) \\
    &= \Big(\frac{d}{d\theta}\Big|_{\theta=0} F_{R_a(\theta)}\Big) \boldsymbol{\psi}(\boldsymbol{k}) + \Big(\frac{d}{d\theta}\Big|_{\theta=0} \tilde{F}_{R_a^{-1}(\theta)}\Big) \boldsymbol{\psi} (\boldsymbol{k}) \\
    &\doteq S_a\boldsymbol{\psi}(\boldsymbol{k}) + L_a\boldsymbol{\psi}(\boldsymbol{k}). \label{eq:JSL_section_split}
\end{align}

We see that the spin angular momentum, the $S_a\boldsymbol{\psi}(\boldsymbol{k})$ term on right-hand-side of Eq.\,(\ref{eq:JSL_section_split}), is associated with internal rotations, which only change the direction of $\boldsymbol{\psi}$ without acting on the momentum. It is inherently associated with a finite-dimensional representation as it arises from an action of $\mathrm{SO}(3)$ on $\mathbb{R}^3$. On the other hand, orbital angular momentum, the $L_a\boldsymbol{\psi}(\boldsymbol{k})$ term, is associated with an action on functions, and is thus associated with an infinite-dimensional representation of $\mathrm{SO}(3)$. 
The issue in applying this to the photon bundle is that due to the transversality condition, neither $F$ nor $\tilde F$ are well-defined as operators on sections of $\gamma$; only the composite action in Eq.\,(\ref{eq:composite_action}) is well-defined. That is, if $\boldsymbol{E}(\boldsymbol{k}) \in \Gamma(\gamma)$, neither $R[\boldsymbol{E}(\boldsymbol{k})]$ nor $\boldsymbol{E}(R\boldsymbol{k})$ are generally in $\Gamma(\gamma)$. As such, the split of angular momentum into spin and orbital parts as suggested by Eq.\,(\ref{eq:JSL_section_split}) does not apply for $\boldsymbol{\psi} \in \Gamma(\gamma)$. However, Eq.\,(\ref{eq:JSL_section_split}) itself is still valid for $\boldsymbol{\psi} \in \Gamma(\gamma)$, when the right-hand-side is viewed as a single operator.

The failure of splitting angular momentum into spin and orbital parts for photons, however, motivates another potential splitting. In the massive case, we saw that the spin angular momentum is associated with the symmetry group which does not change the momentum $\boldsymbol{k}$. This is precisely how the little group is defined---it is the subset of the Poincar\'{e} group which preserves the momentum. In the massless case, this produces the helicity operator $\chi = \boldsymbol{J}\cdot \boldsymbol{\hat{k}}$. The helicity is the $\boldsymbol{\hat{k}}$ component of the angular momentum $\boldsymbol{J}$, so it appears reasonable to define
\begin{align}
\boldsymbol{J}_s &= (\boldsymbol{J} \cdot \boldsymbol{\hat{k}}) \boldsymbol{\hat{k}} = \chi\boldsymbol{\hat{k}},  \\ \boldsymbol{J}_o &= \boldsymbol{J}_\perp = \boldsymbol{J} - \chi \boldsymbol{\hat{k}}.  
\end{align}
These are indeed well-defined vector operators since 
\begin{align}
    [J_a, J_{s,b}] = i\epsilon_{abc}J_{s,c}, \\
    [J_a, J_{o,b}] = i\epsilon_{abc}J_{o,c}
\end{align}
as one can show from Eqs.\,(\ref{eq:comm_1})-(\ref{eq:comm_4}). In the literature, $\boldsymbol{J}_s$ and $\boldsymbol{J}_o$ are referred to as ``spin angular momentum'' and ``orbital angular momentum'', respectively. This splitting of the photon angular momentum has been proposed based on a number of different arguments \cite{VanEnk1994,Bliokh2010,Bialynicki-Birula2011}. The issue with this splitting is that the $\boldsymbol{J}_s$ and $\boldsymbol{J}_o$ satisfy the peculiar commutation relations
\begin{align}
    [J_{s,a},J_{s,b}]&=0 \label{eq:Lie_1},\\ 
    [J_{o,a},J_{s,b}]&= i \epsilon_{abc}J_{s,c} \label{eq:Lie_2},\\
    [J_{o,a},J_{o,b}]&= i \epsilon_{abc}(J_{o,c}-J_{s,c}) \label{eq:Lie_3}.
\end{align}
In particular, these are not $\mathfrak{so}(3)$ commutation relations, meaning they do not generate rotations. Thus, $\boldsymbol{J}_s$ and $\boldsymbol{J}_o$ cannot be considered angular momenta in the usual sense. This conclusion was also reached by van Enk and Nienhuis \cite{VanEnk1994} and Leader and Lorc\'{e} \cite{Leader2019}. However, if these operators do not generate rotations, what do they generate? One can explicitly check from the commutation relations that the $J_{s,i}$ and $J_{o,i}$ satisfy the Jacobi identity, and therefore form a well-defined Lie algebra $\mathfrak{g}$.

We examine first the fact that the $J_{s,i}$ commute. There is a notable difference between the helicity $\chi = \boldsymbol{J} \cdot \hat{\boldsymbol{k}}$ and the operator $\boldsymbol{J}_{s} = \chi \hat{\boldsymbol{k}}$. Per Theorem \ref{thm:sectional_little_group_rep}, the helicity is associated with the little group action of $\mathrm{SO}(2)$ on $\Gamma(\gamma)$:
\begin{equation}
 \theta \boldsymbol{E}(\boldsymbol{k}) = e^{i\chi \theta}\boldsymbol{E}(\boldsymbol{k}) = e^{i\theta}\boldsymbol{E}_+(\boldsymbol{k}) + e^{-i\theta}\boldsymbol{E}_-(\boldsymbol{k}).
\end{equation}
In contrast, the ``spin angular momentum''  $\boldsymbol{J}_s$  is associated with an $\mathbb{R}^3$ action. If $\boldsymbol{v}\in \mathbb{R}^3$, then
\begin{equation}
    \boldsymbol{v} \boldsymbol{E}(\boldsymbol{k}) \doteq e^{i\chi (\hat{\boldsymbol{k}}\cdot \boldsymbol{v})}\boldsymbol{E}(\boldsymbol{k}) = e^{i(\hat{\boldsymbol{k}}\cdot \boldsymbol{v})}\boldsymbol{E}_+(\boldsymbol{k}) + e^{-i(\hat{\boldsymbol{k}}\cdot \boldsymbol{v})}\boldsymbol{E}_-(\boldsymbol{k}).
\end{equation}
Thus, the $J_{s,i}$ are associated with a translational symmetry of $\gamma$, explaining why they form a three-dimensional commuting Lie subalgebra of $\mathfrak{g}$. On the other hand, the $\boldsymbol{J}_o$ do not form a Lie subalgebra as seen by Eq.\,(\ref{eq:Lie_3}). Thus, $\boldsymbol{J}_o$ is not associated with any symmetry of the photon bundle. Since neither $\boldsymbol{J}_o$ nor $\boldsymbol{J}_s$ are related to rotational symmetries of the Maxwell system, they do not achieve a spin-orbital decomposition of the angular momentum. This traces back to the fact that photons are massless, and thus have helicity rather than spin. Indeed, it can be said that the ``spin angular momentum'' $\boldsymbol{J}_s$ is neither spin nor angular momentum. Instead, we have seen that $\boldsymbol{J}_s$ is associated with a helicity-induced translational symmetry of the photon system. We emphasize that even though $\boldsymbol{J}_s$ and $\boldsymbol{J}_o$ are not truly angular momenta, the splitting is well-defined and has proved useful in experimental optics; see Bliokh et al. \cite{Bliokh2014} and references therein. Thus, a proper understanding of the operators $\boldsymbol{J}_s$ and $\boldsymbol{J}_o$ is of both theoretical and experimental import in various applications \cite{ruiz2015first,ruiz2015lagrangian,ruiz2017extending,ruiz2017geometric,oancea2020gravitational,Fu2023}.

We note that discontinuous polarization bases have appeared in some treatments of the photon angular momentum. Bliokh \emph{et al.} \cite{Bliokh2010} write the electric field in the helicity basis
\begin{equation}
\boldsymbol{e}^\pm(\boldsymbol{k})=e^{\pm im\phi}\big(\boldsymbol{e}_\theta(\boldsymbol{k}) \pm i\boldsymbol{e}_\phi(\boldsymbol{k})\big),
\end{equation}
where $\boldsymbol{e}_\theta$ and $\boldsymbol{e}_\phi$ are the usual polar unit vectors and $m$ is an integer. This basis, however, has singularities at the poles, and is thus not globally well-defined. Similarly, I. Bialynicki-Birula and Z. Bialynicki-Birula \cite{Bialynicki-Birula2011} incorrectly assume that global polarization vectors for the $R$- and $L$-polarizations exist. An advantage of using the equivariant bundle formalism to discuss photon angular momentum is that there is no need to invoke discontinuous polarization bases. As shown in Eq.\,(\ref{eq:E=E++E-}), every photon wave function can be uniquely decomposed into the $R$- and $L$-components through the projection operators.

%%%%%%%%%%%%% Conclusion %%%%%%%%%%%%%%%%%%%
\section{Conclusion}

Despite the simplicity of the Maxwell system, it exhibits surprisingly rich and subtle topological behavior. While the total photon bundle is trivial, it has important topologically nontrivial subbundles such as the $R$- and $L$-bundles. This nontrivial topology traces back to the hole in momentum space at $\boldsymbol{k}=0$, accounting for the fact that photons are massless and have no rest frame. This nontrivial topology frequently obstructs the smoothness of constructions that work in topologically trivial cases, such as the little group construction on Hilbert space representations and the quantization via expansion in a polarization basis. We showed that vector bundle methods can be used to avoid these continuity issues. In particular, equivariant vector bundles have precisely the right structure to simultaneously study the topology and symmetry of waves. In the present case of photons, this formalism allows for versions of the little group construction and quantization of the electromagnetic field without encountering discontinuities. It also elucidated the symmetry issues that occur in attempts to separate photon angular momentum into spin and orbital parts. The equivariant bundle formalism is very general, and can be applied to any waves with arbitrary symmetry groups. As such, we believe that it could be a useful framework for the general study of topological properties of waves.

\begin{acknowledgments}
This work is supported by U.S. Department of Energy (DE-AC02-09CH11466).
\end{acknowledgments}

%%%%%%%%%%%%%% Bibliography %%%%%%%%%%%%%%%%%%%%

\bibliography{pt}

%apsrev4-2.bst 2019-01-14 (MD) hand-edited version of apsrev4-1.bst
%Control: key (0)
%Control: author (8) initials jnrlst
%Control: editor formatted (1) identically to author
%Control: production of article title (0) allowed
%Control: page (0) single
%Control: year (1) truncated
%Control: production of eprint (0) enabled
\begin{thebibliography}{60}%
\makeatletter
\providecommand \@ifxundefined [1]{%
 \@ifx{#1\undefined}
}%
\providecommand \@ifnum [1]{%
 \ifnum #1\expandafter \@firstoftwo
 \else \expandafter \@secondoftwo
 \fi
}%
\providecommand \@ifx [1]{%
 \ifx #1\expandafter \@firstoftwo
 \else \expandafter \@secondoftwo
 \fi
}%
\providecommand \natexlab [1]{#1}%
\providecommand \enquote  [1]{``#1''}%
\providecommand \bibnamefont  [1]{#1}%
\providecommand \bibfnamefont [1]{#1}%
\providecommand \citenamefont [1]{#1}%
\providecommand \href@noop [0]{\@secondoftwo}%
\providecommand \href [0]{\begingroup \@sanitize@url \@href}%
\providecommand \@href[1]{\@@startlink{#1}\@@href}%
\providecommand \@@href[1]{\endgroup#1\@@endlink}%
\providecommand \@sanitize@url [0]{\catcode `\\12\catcode `\$12\catcode `\&12\catcode `\#12\catcode `\^12\catcode `\_12\catcode `\%12\relax}%
\providecommand \@@startlink[1]{}%
\providecommand \@@endlink[0]{}%
\providecommand \url  [0]{\begingroup\@sanitize@url \@url }%
\providecommand \@url [1]{\endgroup\@href {#1}{\urlprefix }}%
\providecommand \urlprefix  [0]{URL }%
\providecommand \Eprint [0]{\href }%
\providecommand \doibase [0]{https://doi.org/}%
\providecommand \selectlanguage [0]{\@gobble}%
\providecommand \bibinfo  [0]{\@secondoftwo}%
\providecommand \bibfield  [0]{\@secondoftwo}%
\providecommand \translation [1]{[#1]}%
\providecommand \BibitemOpen [0]{}%
\providecommand \bibitemStop [0]{}%
\providecommand \bibitemNoStop [0]{.\EOS\space}%
\providecommand \EOS [0]{\spacefactor3000\relax}%
\providecommand \BibitemShut  [1]{\csname bibitem#1\endcsname}%
\let\auto@bib@innerbib\@empty
%</preamble>
\bibitem [{\citenamefont {Souslov}\ \emph {et~al.}(2017)\citenamefont {Souslov}, \citenamefont {{Van Zuiden}}, \citenamefont {Bartolo},\ and\ \citenamefont {Vitelli}}]{Souslov2017}%
  \BibitemOpen
  \bibfield  {author} {\bibinfo {author} {\bibfnamefont {A.}~\bibnamefont {Souslov}}, \bibinfo {author} {\bibfnamefont {B.~C.}\ \bibnamefont {{Van Zuiden}}}, \bibinfo {author} {\bibfnamefont {D.}~\bibnamefont {Bartolo}},\ and\ \bibinfo {author} {\bibfnamefont {V.}~\bibnamefont {Vitelli}},\ }\bibfield  {title} {\bibinfo {title} {{Topological sound in active-liquid metamaterials}},\ }\href {https://doi.org/10.1038/nphys4193} {\bibfield  {journal} {\bibinfo  {journal} {Nature Physics}\ }\textbf {\bibinfo {volume} {13}},\ \bibinfo {pages} {1091} (\bibinfo {year} {2017})}\BibitemShut {NoStop}%
\bibitem [{\citenamefont {Delplace}\ \emph {et~al.}(2017)\citenamefont {Delplace}, \citenamefont {Marston},\ and\ \citenamefont {Venaille}}]{Delplace2017}%
  \BibitemOpen
  \bibfield  {author} {\bibinfo {author} {\bibfnamefont {P.}~\bibnamefont {Delplace}}, \bibinfo {author} {\bibfnamefont {J.~B.}\ \bibnamefont {Marston}},\ and\ \bibinfo {author} {\bibfnamefont {A.}~\bibnamefont {Venaille}},\ }\bibfield  {title} {\bibinfo {title} {{Topological origin of equatorial waves}},\ }\href {https://doi.org/10.1126/science.aan8819} {\bibfield  {journal} {\bibinfo  {journal} {Science}\ }\textbf {\bibinfo {volume} {358}},\ \bibinfo {pages} {1075} (\bibinfo {year} {2017})}\BibitemShut {NoStop}%
\bibitem [{\citenamefont {Tauber}\ \emph {et~al.}(2019)\citenamefont {Tauber}, \citenamefont {Delplace},\ and\ \citenamefont {Venaille}}]{Tauber2019}%
  \BibitemOpen
  \bibfield  {author} {\bibinfo {author} {\bibfnamefont {C.}~\bibnamefont {Tauber}}, \bibinfo {author} {\bibfnamefont {P.}~\bibnamefont {Delplace}},\ and\ \bibinfo {author} {\bibfnamefont {A.}~\bibnamefont {Venaille}},\ }\bibfield  {title} {\bibinfo {title} {{A bulk-interface correspondence for equatorial waves}},\ }\href {https://doi.org/10.1017/jfm.2019.233} {\bibfield  {journal} {\bibinfo  {journal} {Journal of Fluid Mechanics}\ }\textbf {\bibinfo {volume} {868}},\ \bibinfo {pages} {R2} (\bibinfo {year} {2019})}\BibitemShut {NoStop}%
\bibitem [{\citenamefont {Souslov}\ \emph {et~al.}(2019)\citenamefont {Souslov}, \citenamefont {Dasbiswas}, \citenamefont {Fruchart}, \citenamefont {Vaikuntanathan},\ and\ \citenamefont {Vitelli}}]{Souslov2019}%
  \BibitemOpen
  \bibfield  {author} {\bibinfo {author} {\bibfnamefont {A.}~\bibnamefont {Souslov}}, \bibinfo {author} {\bibfnamefont {K.}~\bibnamefont {Dasbiswas}}, \bibinfo {author} {\bibfnamefont {M.}~\bibnamefont {Fruchart}}, \bibinfo {author} {\bibfnamefont {S.}~\bibnamefont {Vaikuntanathan}},\ and\ \bibinfo {author} {\bibfnamefont {V.}~\bibnamefont {Vitelli}},\ }\bibfield  {title} {\bibinfo {title} {{Topological Waves in Fluids with Odd Viscosity}},\ }\href {https://doi.org/10.1103/PhysRevLett.122.128001} {\bibfield  {journal} {\bibinfo  {journal} {Physical Review Letters}\ }\textbf {\bibinfo {volume} {122}},\ \bibinfo {pages} {128001} (\bibinfo {year} {2019})}\BibitemShut {NoStop}%
\bibitem [{\citenamefont {Perrot}\ \emph {et~al.}(2019)\citenamefont {Perrot}, \citenamefont {Delplace},\ and\ \citenamefont {Venaille}}]{Perrot2019}%
  \BibitemOpen
  \bibfield  {author} {\bibinfo {author} {\bibfnamefont {M.}~\bibnamefont {Perrot}}, \bibinfo {author} {\bibfnamefont {P.}~\bibnamefont {Delplace}},\ and\ \bibinfo {author} {\bibfnamefont {A.}~\bibnamefont {Venaille}},\ }\bibfield  {title} {\bibinfo {title} {{Topological transition in stratified fluids}},\ }\href {https://doi.org/10.1038/s41567-019-0561-1} {\bibfield  {journal} {\bibinfo  {journal} {Nature Physics}\ }\textbf {\bibinfo {volume} {15}},\ \bibinfo {pages} {781} (\bibinfo {year} {2019})}\BibitemShut {NoStop}%
\bibitem [{\citenamefont {Faure}(2022)}]{Faure2022}%
  \BibitemOpen
  \bibfield  {author} {\bibinfo {author} {\bibfnamefont {F.}~\bibnamefont {Faure}},\ }\href@noop {} {\bibinfo {title} {Manifestation of the topological index formula in quantum waves and geophysical waves}} (\bibinfo {year} {2022}),\ \Eprint {https://arxiv.org/abs/1901.10592} {arXiv:1901.10592 [math-ph]} \BibitemShut {NoStop}%
\bibitem [{\citenamefont {Yang}\ \emph {et~al.}(2016)\citenamefont {Yang}, \citenamefont {Lawrence}, \citenamefont {Gao}, \citenamefont {Guo},\ and\ \citenamefont {Zhang}}]{Yang2016}%
  \BibitemOpen
  \bibfield  {author} {\bibinfo {author} {\bibfnamefont {B.}~\bibnamefont {Yang}}, \bibinfo {author} {\bibfnamefont {M.}~\bibnamefont {Lawrence}}, \bibinfo {author} {\bibfnamefont {W.}~\bibnamefont {Gao}}, \bibinfo {author} {\bibfnamefont {Q.}~\bibnamefont {Guo}},\ and\ \bibinfo {author} {\bibfnamefont {S.}~\bibnamefont {Zhang}},\ }\bibfield  {title} {\bibinfo {title} {{One-way helical electromagnetic wave propagation supported by magnetized plasma}},\ }\href {https://doi.org/10.1038/srep21461} {\bibfield  {journal} {\bibinfo  {journal} {Scientific Reports}\ }\textbf {\bibinfo {volume} {6}},\ \bibinfo {pages} {21461} (\bibinfo {year} {2016})}\BibitemShut {NoStop}%
\bibitem [{\citenamefont {Gao}\ \emph {et~al.}(2016)\citenamefont {Gao}, \citenamefont {Yang}, \citenamefont {Lawrence}, \citenamefont {Fang}, \citenamefont {B{\'{e}}ri},\ and\ \citenamefont {Zhang}}]{Gao2016}%
  \BibitemOpen
  \bibfield  {author} {\bibinfo {author} {\bibfnamefont {W.}~\bibnamefont {Gao}}, \bibinfo {author} {\bibfnamefont {B.}~\bibnamefont {Yang}}, \bibinfo {author} {\bibfnamefont {M.}~\bibnamefont {Lawrence}}, \bibinfo {author} {\bibfnamefont {F.}~\bibnamefont {Fang}}, \bibinfo {author} {\bibfnamefont {B.}~\bibnamefont {B{\'{e}}ri}},\ and\ \bibinfo {author} {\bibfnamefont {S.}~\bibnamefont {Zhang}},\ }\bibfield  {title} {\bibinfo {title} {{Photonic Weyl degeneracies in magnetized plasma}},\ }\href {https://doi.org/10.1038/ncomms12435} {\bibfield  {journal} {\bibinfo  {journal} {Nature Communications}\ }\textbf {\bibinfo {volume} {7}},\ \bibinfo {pages} {12435} (\bibinfo {year} {2016})}\BibitemShut {NoStop}%
\bibitem [{\citenamefont {Parker}\ \emph {et~al.}(2020)\citenamefont {Parker}, \citenamefont {Marston}, \citenamefont {Tobias},\ and\ \citenamefont {Zhu}}]{Parker2020a}%
  \BibitemOpen
  \bibfield  {author} {\bibinfo {author} {\bibfnamefont {J.~B.}\ \bibnamefont {Parker}}, \bibinfo {author} {\bibfnamefont {J.~B.}\ \bibnamefont {Marston}}, \bibinfo {author} {\bibfnamefont {S.~M.}\ \bibnamefont {Tobias}},\ and\ \bibinfo {author} {\bibfnamefont {Z.}~\bibnamefont {Zhu}},\ }\bibfield  {title} {\bibinfo {title} {Topological gaseous plasmon polariton in realistic plasma},\ }\href {https://doi.org/10.1103/PhysRevLett.124.195001} {\bibfield  {journal} {\bibinfo  {journal} {Physical Review Letters}\ }\textbf {\bibinfo {volume} {124}},\ \bibinfo {pages} {195001} (\bibinfo {year} {2020})}\BibitemShut {NoStop}%
\bibitem [{\citenamefont {Fu}\ and\ \citenamefont {Qin}(2021)}]{Fu2021}%
  \BibitemOpen
  \bibfield  {author} {\bibinfo {author} {\bibfnamefont {Y.}~\bibnamefont {Fu}}\ and\ \bibinfo {author} {\bibfnamefont {H.}~\bibnamefont {Qin}},\ }\bibfield  {title} {\bibinfo {title} {{Topological phases and bulk-edge correspondence of magnetized cold plasmas}},\ }\href {https://doi.org/10.1038/s41467-021-24189-3} {\bibfield  {journal} {\bibinfo  {journal} {Nat. Commun.}\ }\textbf {\bibinfo {volume} {12}},\ \bibinfo {pages} {3924} (\bibinfo {year} {2021})}\BibitemShut {NoStop}%
\bibitem [{\citenamefont {Fu}\ and\ \citenamefont {Qin}(2022)}]{Fu2022}%
  \BibitemOpen
  \bibfield  {author} {\bibinfo {author} {\bibfnamefont {Y.}~\bibnamefont {Fu}}\ and\ \bibinfo {author} {\bibfnamefont {H.}~\bibnamefont {Qin}},\ }\bibfield  {title} {\bibinfo {title} {{The dispersion and propagation of topological Langmuir-cyclotron waves in cold magnetized plasmas}},\ }\href {https://doi.org/10.1017/S0022377822000629} {\bibfield  {journal} {\bibinfo  {journal} {Journal of Plasma Physics}\ }\textbf {\bibinfo {volume} {88}},\ \bibinfo {pages} {835880401} (\bibinfo {year} {2022})}\BibitemShut {NoStop}%
\bibitem [{\citenamefont {Qin}\ and\ \citenamefont {Fu}(2023)}]{Qin2023}%
  \BibitemOpen
  \bibfield  {author} {\bibinfo {author} {\bibfnamefont {H.}~\bibnamefont {Qin}}\ and\ \bibinfo {author} {\bibfnamefont {Y.}~\bibnamefont {Fu}},\ }\bibfield  {title} {\bibinfo {title} {{Topological Langmuir-cyclotron wave}},\ }\href {https://doi.org/10.1126/sciadv.add8041} {\bibfield  {journal} {\bibinfo  {journal} {Science Advances}\ }\textbf {\bibinfo {volume} {9}},\ \bibinfo {pages} {eadd8041} (\bibinfo {year} {2023})}\BibitemShut {NoStop}%
\bibitem [{\citenamefont {Hatcher}()}]{HatcherVBKT}%
  \BibitemOpen
  \bibfield  {author} {\bibinfo {author} {\bibfnamefont {A.}~\bibnamefont {Hatcher}},\ }\href {https://pi.math.cornell.edu/~hatcher/VBKT/VB.pdf} {\bibinfo {title} {Vector bundles and k-theory}},\ \bibinfo {howpublished} {unpublished}\BibitemShut {NoStop}%
\bibitem [{\citenamefont {Tong}(2006)}]{Tong2006}%
  \BibitemOpen
  \bibfield  {author} {\bibinfo {author} {\bibfnamefont {D.}~\bibnamefont {Tong}},\ }\href {https://www.damtp.cam.ac.uk/user/tong/qft.html} {\bibinfo {title} {Quantum field theory}} (\bibinfo {year} {2006})\BibitemShut {NoStop}%
\bibitem [{\citenamefont {Woit}(2017)}]{Woit2017}%
  \BibitemOpen
  \bibfield  {author} {\bibinfo {author} {\bibfnamefont {P.}~\bibnamefont {Woit}},\ }\href@noop {} {\emph {\bibinfo {title} {Quantum Theory, Groups and Representations: An Introduction}}}\ (\bibinfo  {publisher} {Springer},\ \bibinfo {year} {2017})\BibitemShut {NoStop}%
\bibitem [{\citenamefont {Frankel}(2011)}]{Frankel2011}%
  \BibitemOpen
  \bibfield  {author} {\bibinfo {author} {\bibfnamefont {T.}~\bibnamefont {Frankel}},\ }\href@noop {} {\emph {\bibinfo {title} {The Geometry of Physics: An Introduction}}}\ (\bibinfo  {publisher} {Cambridge University Press},\ \bibinfo {year} {2011})\BibitemShut {NoStop}%
\bibitem [{\citenamefont {Bliokh}\ \emph {et~al.}(2015)\citenamefont {Bliokh}, \citenamefont {Smirnova},\ and\ \citenamefont {Nori}}]{Bliokh2015}%
  \BibitemOpen
  \bibfield  {author} {\bibinfo {author} {\bibfnamefont {K.~Y.}\ \bibnamefont {Bliokh}}, \bibinfo {author} {\bibfnamefont {D.}~\bibnamefont {Smirnova}},\ and\ \bibinfo {author} {\bibfnamefont {F.}~\bibnamefont {Nori}},\ }\bibfield  {title} {\bibinfo {title} {{Quantum spin hall effect of light}},\ }\href {https://doi.org/10.1126/science.aaa9519} {\bibfield  {journal} {\bibinfo  {journal} {Science}\ }\textbf {\bibinfo {volume} {348}},\ \bibinfo {pages} {1448} (\bibinfo {year} {2015})}\BibitemShut {NoStop}%
\bibitem [{\citenamefont {Tu}(2017)}]{Tu2017differential}%
  \BibitemOpen
  \bibfield  {author} {\bibinfo {author} {\bibfnamefont {L.}~\bibnamefont {Tu}},\ }\href@noop {} {\emph {\bibinfo {title} {Differential Geometry: Connections, Curvature, and Characteristic Classes}}},\ Graduate Texts in Mathematics\ (\bibinfo  {publisher} {Springer International Publishing},\ \bibinfo {year} {2017})\BibitemShut {NoStop}%
\bibitem [{\citenamefont {Wigner}(1939)}]{Wigner1939}%
  \BibitemOpen
  \bibfield  {author} {\bibinfo {author} {\bibfnamefont {E.}~\bibnamefont {Wigner}},\ }\bibfield  {title} {\bibinfo {title} {On unitary representations of the inhomogeneous lorentz group},\ }\href {https://doi.org/https://doi.org/10.2307/1968551} {\bibfield  {journal} {\bibinfo  {journal} {Annals of Mathematics}\ }\textbf {\bibinfo {volume} {40}},\ \bibinfo {pages} {149} (\bibinfo {year} {1939})}\BibitemShut {NoStop}%
\bibitem [{\citenamefont {Weinberg}(1995)}]{Weinberg1995}%
  \BibitemOpen
  \bibfield  {author} {\bibinfo {author} {\bibfnamefont {S.}~\bibnamefont {Weinberg}},\ }\href {https://doi.org/10.1017/CBO9781139644167} {\emph {\bibinfo {title} {The Quantum Theory of Fields}}},\ Vol.~\bibinfo {volume} {1}\ (\bibinfo  {publisher} {Cambridge University Press},\ \bibinfo {year} {1995})\BibitemShut {NoStop}%
\bibitem [{\citenamefont {Flato}\ \emph {et~al.}(1983)\citenamefont {Flato}, \citenamefont {Sternheimer},\ and\ \citenamefont {Fronsdal}}]{Flato1983}%
  \BibitemOpen
  \bibfield  {author} {\bibinfo {author} {\bibfnamefont {M.}~\bibnamefont {Flato}}, \bibinfo {author} {\bibfnamefont {D.}~\bibnamefont {Sternheimer}},\ and\ \bibinfo {author} {\bibfnamefont {C.}~\bibnamefont {Fronsdal}},\ }\bibfield  {title} {\bibinfo {title} {{Difficulties with massless particles?}},\ }\href {https://doi.org/10.1007/BF01216186} {\bibfield  {journal} {\bibinfo  {journal} {Communications in Mathematical Physics}\ }\textbf {\bibinfo {volume} {90}},\ \bibinfo {pages} {563} (\bibinfo {year} {1983})}\BibitemShut {NoStop}%
\bibitem [{\citenamefont {Dragon}(2022)}]{Dragon2022}%
  \BibitemOpen
  \bibfield  {author} {\bibinfo {author} {\bibfnamefont {N.}~\bibnamefont {Dragon}},\ }\href@noop {} {\bibinfo {title} {Bundle structure of massless unitary representations of the poincar\'e group}} (\bibinfo {year} {2022}),\ \Eprint {https://arxiv.org/abs/2204.05282} {arXiv:2204.05282 [hep-th]} \BibitemShut {NoStop}%
\bibitem [{\citenamefont {Folland}(2008)}]{Folland2008}%
  \BibitemOpen
  \bibfield  {author} {\bibinfo {author} {\bibfnamefont {G.~B.}\ \bibnamefont {Folland}},\ }\href@noop {} {\emph {\bibinfo {title} {Quantum Field Theory: A Tourist Guide for Mathematicians}}},\ Mathematical surveys and monographs, vol. 149.\ (\bibinfo  {publisher} {American Mathematical Society},\ \bibinfo {address} {Providence, R.I},\ \bibinfo {year} {2008})\BibitemShut {NoStop}%
\bibitem [{\citenamefont {Akhiezer}\ and\ \citenamefont {Berestetskii}(1965)}]{Akhiezer1965}%
  \BibitemOpen
  \bibfield  {author} {\bibinfo {author} {\bibfnamefont {A.}~\bibnamefont {Akhiezer}}\ and\ \bibinfo {author} {\bibfnamefont {V.}~\bibnamefont {Berestetskii}},\ }\href@noop {} {\emph {\bibinfo {title} {Quantum Electrodynamics}}},\ Interscience monographs and texts in physics and astronomy, v. 11\ (\bibinfo  {publisher} {Interscience Publishers},\ \bibinfo {year} {1965})\BibitemShut {NoStop}%
\bibitem [{\citenamefont {van Enk}\ and\ \citenamefont {Nienhuis}(1994)}]{VanEnk1994}%
  \BibitemOpen
  \bibfield  {author} {\bibinfo {author} {\bibfnamefont {S.~J.}\ \bibnamefont {van Enk}}\ and\ \bibinfo {author} {\bibfnamefont {G.}~\bibnamefont {Nienhuis}},\ }\bibfield  {title} {\bibinfo {title} {{Spin and orbital angular momentum of photons}},\ }\href {https://doi.org/10.1209/0295-5075/25/7/004} {\bibfield  {journal} {\bibinfo  {journal} {Europhysics Letters}\ }\textbf {\bibinfo {volume} {25}},\ \bibinfo {pages} {497} (\bibinfo {year} {1994})}\BibitemShut {NoStop}%
\bibitem [{\citenamefont {Bliokh}\ \emph {et~al.}(2010)\citenamefont {Bliokh}, \citenamefont {Alonso}, \citenamefont {Ostrovskaya},\ and\ \citenamefont {Aiello}}]{Bliokh2010}%
  \BibitemOpen
  \bibfield  {author} {\bibinfo {author} {\bibfnamefont {K.~Y.}\ \bibnamefont {Bliokh}}, \bibinfo {author} {\bibfnamefont {M.~A.}\ \bibnamefont {Alonso}}, \bibinfo {author} {\bibfnamefont {E.~A.}\ \bibnamefont {Ostrovskaya}},\ and\ \bibinfo {author} {\bibfnamefont {A.}~\bibnamefont {Aiello}},\ }\bibfield  {title} {\bibinfo {title} {{Angular momenta and spin-orbit interaction of nonparaxial light in free space}},\ }\href {https://doi.org/10.1103/PhysRevA.82.063825} {\bibfield  {journal} {\bibinfo  {journal} {Phys. Rev. A - At. Mol. Opt. Phys.}\ }\textbf {\bibinfo {volume} {82}},\ \bibinfo {pages} {063825} (\bibinfo {year} {2010})}\BibitemShut {NoStop}%
\bibitem [{\citenamefont {Bialynicki-Birula}\ and\ \citenamefont {Bialynicka-Birula}(2011)}]{Bialynicki-Birula2011}%
  \BibitemOpen
  \bibfield  {author} {\bibinfo {author} {\bibfnamefont {I.}~\bibnamefont {Bialynicki-Birula}}\ and\ \bibinfo {author} {\bibfnamefont {Z.}~\bibnamefont {Bialynicka-Birula}},\ }\bibfield  {title} {\bibinfo {title} {{Canonical separation of angular momentum of light into its orbital and spin parts}},\ }\href {https://doi.org/10.1088/2040-8978/13/6/064014} {\bibfield  {journal} {\bibinfo  {journal} {Journal of Optics}\ }\textbf {\bibinfo {volume} {13}},\ \bibinfo {pages} {064014} (\bibinfo {year} {2011})}\BibitemShut {NoStop}%
\bibitem [{\citenamefont {Leader}(2013)}]{Leader2013}%
  \BibitemOpen
  \bibfield  {author} {\bibinfo {author} {\bibfnamefont {E.}~\bibnamefont {Leader}},\ }\bibfield  {title} {\bibinfo {title} {{The angular momentum controversy: What's it all about and does it matter?}},\ }\href {https://doi.org/10.1134/S1063779613060142} {\bibfield  {journal} {\bibinfo  {journal} {Physics of Particles and Nuclei}\ }\textbf {\bibinfo {volume} {44}},\ \bibinfo {pages} {926} (\bibinfo {year} {2013})}\BibitemShut {NoStop}%
\bibitem [{\citenamefont {Leader}(2016)}]{Leader2016}%
  \BibitemOpen
  \bibfield  {author} {\bibinfo {author} {\bibfnamefont {E.}~\bibnamefont {Leader}},\ }\bibfield  {title} {\bibinfo {title} {The photon angular momentum controversy: Resolution of a conflict between laser optics and particle physics},\ }\href {https://doi.org/https://doi.org/10.1016/j.physletb.2016.03.023} {\bibfield  {journal} {\bibinfo  {journal} {Physics Letters B}\ }\textbf {\bibinfo {volume} {756}},\ \bibinfo {pages} {303} (\bibinfo {year} {2016})}\BibitemShut {NoStop}%
\bibitem [{\citenamefont {Leader}\ and\ \citenamefont {Lorc{\'{e}}}(2019)}]{Leader2019}%
  \BibitemOpen
  \bibfield  {author} {\bibinfo {author} {\bibfnamefont {E.}~\bibnamefont {Leader}}\ and\ \bibinfo {author} {\bibfnamefont {C.}~\bibnamefont {Lorc{\'{e}}}},\ }\bibfield  {title} {\bibinfo {title} {Corrigendum to “the angular momentum controversy: What's it all about and does it matter?”},\ }\href {https://doi.org/10.1016/j.physrep.2019.01.006} {\bibfield  {journal} {\bibinfo  {journal} {Physics Reports}\ }\textbf {\bibinfo {volume} {802}},\ \bibinfo {pages} {23} (\bibinfo {year} {2019})}\BibitemShut {NoStop}%
\bibitem [{\citenamefont {Gupta}(1950)}]{Gupta1950}%
  \BibitemOpen
  \bibfield  {author} {\bibinfo {author} {\bibfnamefont {S.~N.}\ \bibnamefont {Gupta}},\ }\bibfield  {title} {\bibinfo {title} {Theory of longitudinal photons in quantum electrodynamics},\ }\href {https://doi.org/10.1088/0370-1298/63/7/301} {\bibfield  {journal} {\bibinfo  {journal} {Proceedings of the Physical Society. Section A}\ }\textbf {\bibinfo {volume} {63}},\ \bibinfo {pages} {681} (\bibinfo {year} {1950})}\BibitemShut {NoStop}%
\bibitem [{\citenamefont {Staruszkiewicz}(1973)}]{Staruszkiewicz1973a}%
  \BibitemOpen
  \bibfield  {author} {\bibinfo {author} {\bibfnamefont {A.}~\bibnamefont {Staruszkiewicz}},\ }\bibfield  {title} {\bibinfo {title} {{On affine properties of the light cone and their application in the quantum electrodynamics}},\ }\href@noop {} {\bibfield  {journal} {\bibinfo  {journal} {Acta Phys. Polon. B}\ }\textbf {\bibinfo {volume} {4}},\ \bibinfo {pages} {57} (\bibinfo {year} {1973})}\BibitemShut {NoStop}%
\bibitem [{\citenamefont {Milnor}\ and\ \citenamefont {Stasheff}(1974)}]{Milnor1974}%
  \BibitemOpen
  \bibfield  {author} {\bibinfo {author} {\bibfnamefont {J.~W.}\ \bibnamefont {Milnor}}\ and\ \bibinfo {author} {\bibfnamefont {J.~D.}\ \bibnamefont {Stasheff}},\ }\href {http://www.jstor.org/stable/j.ctt1b7x751} {\emph {\bibinfo {title} {Characteristic Classes}}}\ (\bibinfo  {publisher} {Princeton University Press},\ \bibinfo {year} {1974})\BibitemShut {NoStop}%
\bibitem [{\citenamefont {McDuff}\ and\ \citenamefont {Salamon}(2017)}]{McDuff2017}%
  \BibitemOpen
  \bibfield  {author} {\bibinfo {author} {\bibfnamefont {D.}~\bibnamefont {McDuff}}\ and\ \bibinfo {author} {\bibfnamefont {D.}~\bibnamefont {Salamon}},\ }\href {https://doi.org/10.1093/oso/9780198794899.001.0001} {\emph {\bibinfo {title} {{Introduction to Symplectic Topology}}}}\ (\bibinfo  {publisher} {Oxford University Press},\ \bibinfo {year} {2017})\BibitemShut {NoStop}%
\bibitem [{\citenamefont {Bialynicki-Birula}(1996)}]{Bialynicki1996}%
  \BibitemOpen
  \bibfield  {author} {\bibinfo {author} {\bibfnamefont {I.}~\bibnamefont {Bialynicki-Birula}},\ }\bibfield  {title} {\bibinfo {title} {V photon wave function},\ }\href {https://doi.org/10.1016/S0079-6638(08)70316-0} {\bibfield  {journal} {\bibinfo  {journal} {Progress in Optics}\ }\textbf {\bibinfo {volume} {36}},\ \bibinfo {pages} {245} (\bibinfo {year} {1996})}\BibitemShut {NoStop}%
\bibitem [{\citenamefont {Gilkey}\ \emph {et~al.}(2006)\citenamefont {Gilkey}, \citenamefont {Ivanova},\ and\ \citenamefont {Nik\v{c}evi\'{c}}}]{Encyc2006}%
  \BibitemOpen
  \bibfield  {author} {\bibinfo {author} {\bibfnamefont {P.~B.}\ \bibnamefont {Gilkey}}, \bibinfo {author} {\bibfnamefont {R.}~\bibnamefont {Ivanova}},\ and\ \bibinfo {author} {\bibfnamefont {S.}~\bibnamefont {Nik\v{c}evi\'{c}}},\ }\href@noop {} {\emph {\bibinfo {title} {Encyclopedia of Mathematical Physics}}},\ edited by\ \bibinfo {editor} {\bibfnamefont {J.~P.}\ \bibnamefont {Fran\c{c}oise}}, \bibinfo {editor} {\bibfnamefont {G.~L.}\ \bibnamefont {Naber}},\ and\ \bibinfo {editor} {\bibfnamefont {S.~T.}\ \bibnamefont {Tsou}},\ Vol.~\bibinfo {volume} {A}\ (\bibinfo  {publisher} {Elsevier Academic Press},\ \bibinfo {year} {2006})\ pp.\ \bibinfo {pages} {488--495}\BibitemShut {NoStop}%
\bibitem [{\citenamefont {Bott}\ and\ \citenamefont {Tu}(2013)}]{Bott2013}%
  \BibitemOpen
  \bibfield  {author} {\bibinfo {author} {\bibfnamefont {R.}~\bibnamefont {Bott}}\ and\ \bibinfo {author} {\bibfnamefont {L.}~\bibnamefont {Tu}},\ }\href {https://books.google.com/books?id=COuPBAAAQBAJ} {\emph {\bibinfo {title} {Differential Forms in Algebraic Topology}}},\ Graduate Texts in Mathematics\ (\bibinfo  {publisher} {Springer New York},\ \bibinfo {year} {2013})\BibitemShut {NoStop}%
\bibitem [{\citenamefont {Mackey}(1952)}]{Mackey1952}%
  \BibitemOpen
  \bibfield  {author} {\bibinfo {author} {\bibfnamefont {G.~W.}\ \bibnamefont {Mackey}},\ }\bibfield  {title} {\bibinfo {title} {Induced representations of locally compact groups {I}},\ }\href {https://doi.org/10.2307/1969423} {\bibfield  {journal} {\bibinfo  {journal} {Annals of Mathematics}\ }\textbf {\bibinfo {volume} {55}},\ \bibinfo {pages} {101} (\bibinfo {year} {1952})}\BibitemShut {NoStop}%
\bibitem [{\citenamefont {Mackey}(1953)}]{Mackey1953}%
  \BibitemOpen
  \bibfield  {author} {\bibinfo {author} {\bibfnamefont {G.~W.}\ \bibnamefont {Mackey}},\ }\bibfield  {title} {\bibinfo {title} {Induced representations of locally compact groups {II}: The frobenius reciprocity theorem},\ }\href {https://doi.org/10.2307/1969786} {\bibfield  {journal} {\bibinfo  {journal} {Annals of Mathematics}\ }\textbf {\bibinfo {volume} {58}},\ \bibinfo {pages} {193} (\bibinfo {year} {1953})}\BibitemShut {NoStop}%
\bibitem [{\citenamefont {Mackey}(1968)}]{Mackey1968}%
  \BibitemOpen
  \bibfield  {author} {\bibinfo {author} {\bibfnamefont {G.}~\bibnamefont {Mackey}},\ }\href@noop {} {\emph {\bibinfo {title} {Induced Representations of Groups and Quantum Mechanics}}},\ Math Lecture Notes Series\ (\bibinfo  {publisher} {W. A. Benjamin, Inc. and Editore Boringhieri},\ \bibinfo {year} {1968})\BibitemShut {NoStop}%
\bibitem [{\citenamefont {Simms}(1968)}]{Simms1968}%
  \BibitemOpen
  \bibfield  {author} {\bibinfo {author} {\bibfnamefont {D.}~\bibnamefont {Simms}},\ }\href {https://doi.org/10.1007/BFb0069914} {\emph {\bibinfo {title} {Lie Groups and Quantum Mechanics}}},\ Lecture Notes in Mathematics\ (\bibinfo  {publisher} {Springer Berlin, Heidelberg},\ \bibinfo {year} {1968})\BibitemShut {NoStop}%
\bibitem [{\citenamefont {Bargmann}(1954)}]{Bargmann1954}%
  \BibitemOpen
  \bibfield  {author} {\bibinfo {author} {\bibfnamefont {V.}~\bibnamefont {Bargmann}},\ }\bibfield  {title} {\bibinfo {title} {On unitary ray representations of continuous groups},\ }\href {https://doi.org/10.2307/1969831} {\bibfield  {journal} {\bibinfo  {journal} {Annals of Mathematics}\ }\textbf {\bibinfo {volume} {59}},\ \bibinfo {pages} {1} (\bibinfo {year} {1954})}\BibitemShut {NoStop}%
\bibitem [{\citenamefont {Jackson}(1999)}]{Jackson1999}%
  \BibitemOpen
  \bibfield  {author} {\bibinfo {author} {\bibfnamefont {J.~D.}\ \bibnamefont {Jackson}},\ }\href@noop {} {\emph {\bibinfo {title} {Classical Electrodynamics}}},\ \bibinfo {edition} {3rd}\ ed.\ (\bibinfo  {publisher} {Wiley},\ \bibinfo {address} {New York, {NY}},\ \bibinfo {year} {1999})\BibitemShut {NoStop}%
\bibitem [{\citenamefont {Schechter}(1996)}]{Schechter1996}%
  \BibitemOpen
  \bibfield  {author} {\bibinfo {author} {\bibfnamefont {E.}~\bibnamefont {Schechter}},\ }\href@noop {} {\emph {\bibinfo {title} {Handbook of Analysis and Its Foundations}}}\ (\bibinfo  {publisher} {Elsevier Science},\ \bibinfo {year} {1996})\BibitemShut {NoStop}%
\bibitem [{\citenamefont {Lee}(2012)}]{lee_smooth_manifolds}%
  \BibitemOpen
  \bibfield  {author} {\bibinfo {author} {\bibfnamefont {J.}~\bibnamefont {Lee}},\ }\href@noop {} {\emph {\bibinfo {title} {Introduction to Smooth Manifolds}}},\ Graduate Texts in Mathematics\ (\bibinfo  {publisher} {Springer New York},\ \bibinfo {year} {2012})\BibitemShut {NoStop}%
\bibitem [{\citenamefont {Maggiore}(2005)}]{Maggiore2005}%
  \BibitemOpen
  \bibfield  {author} {\bibinfo {author} {\bibfnamefont {M.}~\bibnamefont {Maggiore}},\ }\href {https://cds.cern.ch/record/845116} {\emph {\bibinfo {title} {{A Modern Introduction to Quantum Field Theory}}}},\ Oxford master series in statistical, computational, and theoretical physics\ (\bibinfo  {publisher} {Oxford Univ.},\ \bibinfo {address} {Oxford},\ \bibinfo {year} {2005})\BibitemShut {NoStop}%
\bibitem [{\citenamefont {Schwarz}(1971)}]{Schwarz1971}%
  \BibitemOpen
  \bibfield  {author} {\bibinfo {author} {\bibfnamefont {F.}~\bibnamefont {Schwarz}},\ }\bibfield  {title} {\bibinfo {title} {Classification of the irreducible representations of the groups {$IO(n)$}},\ }\href {http://www.numdam.org/item/AIHPA_1971__15_1_15_0/} {\bibfield  {journal} {\bibinfo  {journal} {Annales de l'Institut Henri Poincar\'e. Section A, Physique Th\'eorique}\ }\textbf {\bibinfo {volume} {15}},\ \bibinfo {pages} {15} (\bibinfo {year} {1971})}\BibitemShut {NoStop}%
\bibitem [{\citenamefont {Johnson}(1976)}]{Johnson1976}%
  \BibitemOpen
  \bibfield  {author} {\bibinfo {author} {\bibfnamefont {R.~A.}\ \bibnamefont {Johnson}},\ }\bibfield  {title} {\bibinfo {title} {Representations of compact groups on topological vector spaces: Some remarks},\ }\href {https://doi.org/10.2307/2041678} {\bibfield  {journal} {\bibinfo  {journal} {Proceedings of the American Mathematical Society}\ }\textbf {\bibinfo {volume} {61}},\ \bibinfo {pages} {131} (\bibinfo {year} {1976})}\BibitemShut {NoStop}%
\bibitem [{\citenamefont {Hall}(2013)}]{Hall2013}%
  \BibitemOpen
  \bibfield  {author} {\bibinfo {author} {\bibfnamefont {B.~C.}\ \bibnamefont {Hall}},\ }\href@noop {} {\emph {\bibinfo {title} {Quantum Theory for Mathematicians}}},\ \bibinfo {series} {Graduate Texts in Mathematics}\ No.\ \bibinfo {number} {267}\ (\bibinfo  {publisher} {Springer New York},\ \bibinfo {year} {2013})\BibitemShut {NoStop}%
\bibitem [{\citenamefont {Stein}\ and\ \citenamefont {Shakarchi}(2005)}]{Stein3_2005}%
  \BibitemOpen
  \bibfield  {author} {\bibinfo {author} {\bibfnamefont {E.~M.}\ \bibnamefont {Stein}}\ and\ \bibinfo {author} {\bibfnamefont {R.}~\bibnamefont {Shakarchi}},\ }\href@noop {} {\emph {\bibinfo {title} {{Real analysis: measure theory, integration, and Hilbert spaces}}}},\ Princeton lectures in analysis\ (\bibinfo  {publisher} {Princeton Univ. Press},\ \bibinfo {address} {Princeton, NJ},\ \bibinfo {year} {2005})\BibitemShut {NoStop}%
\bibitem [{\citenamefont {Rudin}(1987)}]{Rudin1987}%
  \BibitemOpen
  \bibfield  {author} {\bibinfo {author} {\bibfnamefont {W.}~\bibnamefont {Rudin}},\ }\href@noop {} {\emph {\bibinfo {title} {Real and Complex Analysis}}},\ \bibinfo {edition} {3rd}\ ed.\ (\bibinfo  {publisher} {McGraw-Hill, Inc.},\ \bibinfo {year} {1987})\BibitemShut {NoStop}%
\bibitem [{\citenamefont {Mattila}(1999)}]{Mattila1999}%
  \BibitemOpen
  \bibfield  {author} {\bibinfo {author} {\bibfnamefont {P.}~\bibnamefont {Mattila}},\ }\href@noop {} {\emph {\bibinfo {title} {Geometry of Sets and Measures in Euclidean Spaces: Fractals and Rectifiability}}},\ Cambridge Studies in Advanced Mathematics\ (\bibinfo  {publisher} {Cambridge University Press},\ \bibinfo {year} {1999})\BibitemShut {NoStop}%
\bibitem [{\citenamefont {Buczolich}(1992)}]{Buczolich1992}%
  \BibitemOpen
  \bibfield  {author} {\bibinfo {author} {\bibfnamefont {Z.}~\bibnamefont {Buczolich}},\ }\bibfield  {title} {\bibinfo {title} {Density points and {bi-Lipschitz} functions in {$\mathbf{R}^m$}},\ }\href {https://doi.org/10.2307/2159293} {\bibfield  {journal} {\bibinfo  {journal} {Proceedings of the American Mathematical Society}\ }\textbf {\bibinfo {volume} {116}},\ \bibinfo {pages} {53} (\bibinfo {year} {1992})}\BibitemShut {NoStop}%
\bibitem [{\citenamefont {Bliokh}\ \emph {et~al.}(2014)\citenamefont {Bliokh}, \citenamefont {Dressel},\ and\ \citenamefont {Nori}}]{Bliokh2014}%
  \BibitemOpen
  \bibfield  {author} {\bibinfo {author} {\bibfnamefont {K.~Y.}\ \bibnamefont {Bliokh}}, \bibinfo {author} {\bibfnamefont {J.}~\bibnamefont {Dressel}},\ and\ \bibinfo {author} {\bibfnamefont {F.}~\bibnamefont {Nori}},\ }\bibfield  {title} {\bibinfo {title} {Conservation of the spin and orbital angular momenta in electromagnetism},\ }\href {https://doi.org/10.1088/1367-2630/16/9/093037} {\bibfield  {journal} {\bibinfo  {journal} {New Journal of Physics}\ }\textbf {\bibinfo {volume} {16}},\ \bibinfo {pages} {093037} (\bibinfo {year} {2014})}\BibitemShut {NoStop}%
\bibitem [{\citenamefont {Ruiz}\ and\ \citenamefont {Dodin}(2015{\natexlab{a}})}]{ruiz2015first}%
  \BibitemOpen
  \bibfield  {author} {\bibinfo {author} {\bibfnamefont {D.~E.}\ \bibnamefont {Ruiz}}\ and\ \bibinfo {author} {\bibfnamefont {I.~Y.}\ \bibnamefont {Dodin}},\ }\bibfield  {title} {\bibinfo {title} {First-principles variational formulation of polarization effects in geometrical optics},\ }\href {https://doi.org/10.1103/PhysRevA.92.043805} {\bibfield  {journal} {\bibinfo  {journal} {Phys. Rev. A}\ }\textbf {\bibinfo {volume} {92}},\ \bibinfo {pages} {043805} (\bibinfo {year} {2015}{\natexlab{a}})}\BibitemShut {NoStop}%
\bibitem [{\citenamefont {Ruiz}\ and\ \citenamefont {Dodin}(2015{\natexlab{b}})}]{ruiz2015lagrangian}%
  \BibitemOpen
  \bibfield  {author} {\bibinfo {author} {\bibfnamefont {D.~E.}\ \bibnamefont {Ruiz}}\ and\ \bibinfo {author} {\bibfnamefont {I.~Y.}\ \bibnamefont {Dodin}},\ }\bibfield  {title} {\bibinfo {title} {Lagrangian geometrical optics of nonadiabatic vector waves and spin particles},\ }\href {https://doi.org/10.1016/j.physleta.2015.07.038} {\bibfield  {journal} {\bibinfo  {journal} {Phys. Lett. A}\ }\textbf {\bibinfo {volume} {379}},\ \bibinfo {pages} {2337} (\bibinfo {year} {2015}{\natexlab{b}})}\BibitemShut {NoStop}%
\bibitem [{\citenamefont {Ruiz}\ and\ \citenamefont {Dodin}(2017)}]{ruiz2017extending}%
  \BibitemOpen
  \bibfield  {author} {\bibinfo {author} {\bibfnamefont {D.~E.}\ \bibnamefont {Ruiz}}\ and\ \bibinfo {author} {\bibfnamefont {I.~Y.}\ \bibnamefont {Dodin}},\ }\bibfield  {title} {\bibinfo {title} {Extending geometrical optics: A lagrangian theory for vector waves},\ }\href {https://doi.org/10.1063/1.4977537} {\bibfield  {journal} {\bibinfo  {journal} {Phys. Plasmas}\ }\textbf {\bibinfo {volume} {24}},\ \bibinfo {pages} {055704} (\bibinfo {year} {2017})}\BibitemShut {NoStop}%
\bibitem [{\citenamefont {Ruiz}(2017)}]{ruiz2017geometric}%
  \BibitemOpen
  \bibfield  {author} {\bibinfo {author} {\bibfnamefont {D.~E.}\ \bibnamefont {Ruiz}},\ }\emph {\bibinfo {title} {A geometric theory of waves and its applications to plasma physics}},\ \href {https://www.proquest.com/docview/1984564123} {Ph.D. thesis},\ \bibinfo  {school} {Princeton University} (\bibinfo {year} {2017})\BibitemShut {NoStop}%
\bibitem [{\citenamefont {Oancea}\ \emph {et~al.}(2020)\citenamefont {Oancea}, \citenamefont {Joudioux}, \citenamefont {Dodin}, \citenamefont {Ruiz}, \citenamefont {Paganini},\ and\ \citenamefont {Andersson}}]{oancea2020gravitational}%
  \BibitemOpen
  \bibfield  {author} {\bibinfo {author} {\bibfnamefont {M.~A.}\ \bibnamefont {Oancea}}, \bibinfo {author} {\bibfnamefont {J.}~\bibnamefont {Joudioux}}, \bibinfo {author} {\bibfnamefont {I.~Y.}\ \bibnamefont {Dodin}}, \bibinfo {author} {\bibfnamefont {D.~E.}\ \bibnamefont {Ruiz}}, \bibinfo {author} {\bibfnamefont {C.~F.}\ \bibnamefont {Paganini}},\ and\ \bibinfo {author} {\bibfnamefont {L.}~\bibnamefont {Andersson}},\ }\bibfield  {title} {\bibinfo {title} {{Gravitational spin Hall effect of light}},\ }\href {https://doi.org/10.1103/PhysRevD.102.024075} {\bibfield  {journal} {\bibinfo  {journal} {Phys. Rev. D}\ }\textbf {\bibinfo {volume} {102}},\ \bibinfo {pages} {024075} (\bibinfo {year} {2020})}\BibitemShut {NoStop}%
\bibitem [{\citenamefont {Fu}\ \emph {et~al.}(2023)\citenamefont {Fu}, \citenamefont {Dodin},\ and\ \citenamefont {Qin}}]{Fu2023}%
  \BibitemOpen
  \bibfield  {author} {\bibinfo {author} {\bibfnamefont {Y.}~\bibnamefont {Fu}}, \bibinfo {author} {\bibfnamefont {I.~Y.}\ \bibnamefont {Dodin}},\ and\ \bibinfo {author} {\bibfnamefont {H.}~\bibnamefont {Qin}},\ }\bibfield  {title} {\bibinfo {title} {{Spin Hall effect of radiofrequency waves in magnetized plasmas}},\ }\href {https://doi.org/10.1103/PhysRevE.107.055210} {\bibfield  {journal} {\bibinfo  {journal} {Physical Review E}\ }\textbf {\bibinfo {volume} {107}},\ \bibinfo {pages} {055210} (\bibinfo {year} {2023})}\BibitemShut {NoStop}%
\end{thebibliography}%
\end{document}